\pgfplotsset{width=5cm,compat=1.10}
\newcommand{\N}{\mathbb{N}}
\newcommand{\Z}{\mathbb{Z}}
\newcommand{\Q}{\mathbb{Q}}
\newcommand{\vect}[1]{\mathbf{#1}}
\definecolor{niceredbright}{HTML}{bd0310}
\definecolor{nicebluebright}{HTML}{197b9b}
\definecolor{nicered}{HTML}{7f0a13}
\definecolor{niceblue}{HTML}{104354}
\definecolor{nicegreen}{HTML}{217516}
\definecolor{nicepurple}{HTML}{884bab}
\definecolor{nicebg}{HTML}{f6f0e4}
\definecolor{niceredlight}{HTML}{c9888d}
\definecolor{nicebluelight}{HTML}{78a4b8}
\definecolor{nicegreenlight}{HTML}{76de68}
\definecolor{nicepurplelight}{HTML}{bc87db}
\DeclareMathOperator{\interior}{int}
\DeclareMathOperator{\FO}{FO}
\DeclareMathOperator{\Fill}{Fill}
\DeclareMathOperator{\Dir}{dir}
\DeclareMathOperator{\dir}{\Dir}
\DeclareMathOperator{\lin}{lin}
\DeclareMathOperator{\invariant}{distance}
\DeclareMathOperator{\Reach}{Reach}
\DeclareMathOperator{\VectorSpace}{VectSp}
\DeclareMathOperator{\Vectorspace}{\VectorSpace}
\title{Geometry of Reachability Sets of Vector Addition Systems} 
\author{Roland Guttenberg}{Technical University of Munich, Germany}{guttenbe@in.tum.de}{0000-0001-6140-6707}{}
\author{Mikhail Raskin}{LaBRI, University of Bordeaux, France}{mikhail.raskin@u-bordeaux.fr}{0000-0002-6660-5673}{}
\author{Javier Esparza}{Technical University of Munich, Germany}{esparza@in.tum.de}{0000-0001-9862-4919}{}
\authorrunning{Roland Guttenberg, Mikhail Raskin, Javier Esparza} 
\keywords{Vector Addition System, Petri net, Reachability Set, Almost hybridlinear, Partition, Geometry} 
\begin{document}

\maketitle

\begin{abstract}
Vector Addition Systems (VAS), aka Petri nets, are a popular model of concurrency. The reachability set of a VAS is the set of configurations reachable from the initial configuration. Leroux has studied the geometric properties of VAS reachability sets, and used them to derive decision procedures for important analysis problems. In this paper we continue the geometric study of reachability sets. We show that every reachability set admits a finite decomposition into disjoint almost hybridlinear sets enjoying nice geometric properties. Further, we prove that the decomposition of the reachability set of a given VAS is effectively computable. As a corollary, we derive a new proof of Hauschildt's 1990 result showing the decidability of the question whether the reachability set of a given VAS is semilinear. As a second corollary, we prove that the complement of a reachability set, if it is infinite, always contains an infinite linear set.
\end{abstract}

\section{Introduction}


Vector Addition Systems (VAS), also known as Petri nets, are a popular model of concurrent systems.
The VAS reachability problem consists of deciding if a target configuration of a VAS is reachable from some initial configuration. It was proved decidable in the 1980s \cite{Mayr81, Kosaraju82}, but its complexity (Ackermann-complete) could only be determined recently \cite{CzerwinskiLLLM19, CzerwinskiO21, Leroux21}. 

The \emph{reachability set} of a VAS is the set of all configurations reachable from the initial configuration. Configurations are tuples of natural numbers, and so the reachability set of a VAS is a subset of $\N^n$ for some $n$ called the \emph{dimension} of the VAS. Results on the geometric properties of reachability sets have led to new algorithms in the past. For example, in \cite{Leroux12} it was shown that every configuration outside the reachability set $\vect{R}$ of a VAS is separated from $\vect{R}$ by a semilinear inductive invariant. This immediately leads to an algorithm for the reachability problem consisting of two semi-algorithms, one enumerating all possible paths to certify reachability, and one enumerating all semilinear sets and checking if they are separating inductive invariants. Another example is \cite{Leroux13}, where it was shown that semilinear reachability sets are flatable. The result led to an algorithm for deciding whether a semilinear set is included in or equal to the reachability set of a given VAS. 

The separability and flatability results of \cite{Leroux12,Leroux13} are proven not only for VAS reachability sets, but for arbitrary semilinear \emph{Petri sets}, a larger class with a geometric definition introduced in \cite{Leroux12}. So, in particular, \cite{Leroux13} is an investigation into the geometric structure of semilinear Petri sets. In this paper we study the structure of the \emph{non-semilinear} Petri sets. We introduce hybridization, or, equivalently, the class of \emph{almost hybridlinear} sets, a generalization of the hybridlinear sets introduced by Ginsburg and Spanier \cite{ginsburg1963bounded} and further studied by Chistikov and Haase \cite{ChistikovH16}. We prove the following decomposition:

\begin{restatable}{theorem}{TheoremFinalPartition}
Let \(\vect{X}\) be a Petri set. For every semilinear set \(\vect{S}\) there exists a partition \(\vect{S}=\vect{S}_1 \cup \dots \cup \vect{S}_k\) into pairwise disjoint full linear sets such that for all \(i \in \{1, \ldots, k \}\) either \(\vect{X} \cap \vect{S}_i=\emptyset\), \(\vect{S}_i \subseteq \vect{X}\) or \(\vect{X} \cap \vect{S}_i\) is irreducible with hybridization $\vect{S}_i$. Further, if \(\vect{X}\) is the reachability set of a VAS, then the partition is computable. \label{TheoremFinalPartition}
\end{restatable}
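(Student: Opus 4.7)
The plan is to obtain the partition by a refinement procedure: starting from an arbitrary decomposition of $\vect{S}$ into full linear pieces, any piece that satisfies none of the three required conditions is replaced by a strictly finer decomposition. Well-foundedness of an appropriate measure on full linear sets then ensures termination of the procedure.

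For the initial step, I use the classical fact that every semilinear set is a finite disjoint union of full linear sets to fix a starting partition $\vect{S} = \vect{S}_1 \cup \cdots \cup \vect{S}_k$. For each piece $\vect{S}_i$ I test the three conditions. If $\vect{X} \cap \vect{S}_i = \emptyset$ or $\vect{S}_i \subseteq \vect{X}$, the piece is acceptable and is kept as is. Otherwise $\vect{X} \cap \vect{S}_i$ is a proper nonempty subset of $\vect{S}_i$; since $\vect{X}$ is a Petri set and $\vect{S}_i$ is semilinear, the intersection is again a Petri set, and I compute its hybridization $\vect{H}_i$. If $\vect{X} \cap \vect{S}_i$ is irreducible and $\vect{H}_i = \vect{S}_i$, the piece already meets case~(3) of the theorem. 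Otherwise a refinement is forced: either $\vect{H}_i$ is strictly contained in $\vect{S}_i$, in which case I split $\vect{S}_i$ into $\vect{S}_i \cap \vect{H}_i$ and $\vect{S}_i \setminus \vect{H}_i$; or $\vect{X} \cap \vect{S}_i$ decomposes nontrivially into irreducible components with hybridizations $\vect{H}_i^{(1)}, \dots, \vect{H}_i^{(m)}$, along which $\vect{S}_i$ is split. Each resulting piece is then rewritten as a disjoint union of full linear sets before recursing.

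The central obstacle is showing that this process terminates. I would use a lexicographic measure $(\dim(\vect{S}_i), \rho(\vect{S}_i))$ on each full linear piece, where the first coordinate is the dimension of its affine hull and the second is the index of the lattice spanned by its periods inside the ambient integer lattice of that affine hull. Splitting off $\vect{S}_i \cap \vect{H}_i$ produces a piece of strictly smaller dimension whenever $\vect{H}_i$ does not coincide with $\vect{S}_i$ on the level of affine hulls, and otherwise yields pieces with strictly refined period lattices, hence strictly smaller $\rho$. The residual $\vect{S}_i \setminus \vect{H}_i$ and its analogue for the irreducible-decomposition case are handled in the same way after redecomposing them into full linear sets. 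Making this measure decrease simultaneously in every branch of the split, and arguing that only finitely many hybridizations can appear along any chain of the recursion, is the technical core of the argument.

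For the computability claim when $\vect{X}$ is the reachability set of a VAS, I need each primitive step to be effective. Emptiness of $\vect{X} \cap \vect{S}_i$ reduces to VAS reachability with semilinear targets, which is decidable. Inclusion $\vect{S}_i \subseteq \vect{X}$ is decidable by the semilinear inclusion result of Leroux~\cite{Leroux13}. The algorithmically most delicate parts are computing the hybridization $\vect{H}_i$ and the irreducible decomposition of $\vect{X} \cap \vect{S}_i$; the plan is to realize both by combining the Presburger description of $\vect{S}_i$ with finitely many reachability queries used to decide membership of candidate generators in $\vect{X}$. The effectiveness of these two operations, together with the termination argument above, is where I expect the main technical work of the paper to be concentrated.
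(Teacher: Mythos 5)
There is a genuine gap: your refinement loop presupposes exactly the objects whose existence and computability constitute the paper's main technical content. You treat ``the hybridization $\vect{H}_i$ of $\vect{X}\cap\vect{S}_i$'' and its ``decomposition into irreducible components'' as canonical, computable data attached to any Petri-set piece, but neither notion is well defined: a Petri set need not admit any hybridization at all (the right of Figure \ref{FigureIntuitionAlmostHybridlinear} gives an almost semilinear set whose only compatible periodic set is $\{\vect{0}\}$), a weak hybridization need not yield a true one (Remark \ref{RemarkHybridizationRepresentationDependent}), and irreducibility is only defined \emph{relative to} a given hybridization --- there is no decomposition into irreducible parts to split along. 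The paper's proof works in the opposite direction: it first \emph{manufactures} pieces on which hybridizations exist, by writing $\vect{X}\cap\vect{S}$ as a finite union of almost linear sets $\vect{b}_i+\vect{P}_i$ (this is where the Petri-set hypothesis enters), taking the full linear sets $\vect{b}_i+\Fill(\vect{P}_i)$ as hybridizations, and intersecting with all choices of these sets or linear pieces of their complements; the closure properties of hybridizations under such intersections (Proposition \ref{PropositionPropertiesOfHybridization}, resting on Proposition \ref{PropositionIntersectionPeriodicSets}) guarantee that every full-dimensional piece inherits a weak hybridization equal to itself, full-dimensionality being essential, and only lower-dimensional pieces are recursed on. A second pass (Proposition \ref{PropositionFullLinearAlmostHybridlinearPartition}) upgrades weak to true hybridizations by splitting base points into congruence classes modulo $\vect{P}-\vect{P}$. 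Your termination measure is also unsubstantiated: the residual $\vect{S}_i\setminus\vect{H}_i$ can have the same dimension and affine hull as $\vect{S}_i$, and you yourself flag the decrease argument as open. The paper instead gets a clean dimension drop from Lemma \ref{LemmaRemovingInnerConeReducesDimension}, because the set carved out of a reducible piece $\vect{S}_i=\vect{c}+\vect{Q}$ is a translate $\vect{x}+\vect{Q}$ of the \emph{same} period set, not an arbitrary sub-hybridization.

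The computability claim has the same problem in algorithmic form. Deciding whether a piece is irreducible, and computing hybridizations, cannot be done by ``finitely many reachability queries on candidate generators'': these are asymptotic properties of the smooth periodic sets $\vect{P}_i$ (their fills and direction cones), and no finite set of membership tests determines them. The paper obtains the hybridizations from the KLMST decomposition (Proposition \ref{PropositionPropertiesOfHybridization}(5)), decides reducibility via the complete-extraction characterization together with Hauschildt-style computation of the direction cones (Theorem \ref{TheoremReducibilityIsDecidable}), and finds the witness $\vect{x}$ with $\vect{x}+\vect{Q}\subseteq\vect{X}$ using flatability of semilinear subsets of reachability sets. Without these ingredients --- or substitutes for them --- your proposal is a restatement of the goal rather than a proof.
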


Defining strong hybridization and irreducibility is beyond the scope of this introduction; in fact, they will be introduced in Section \ref{Sec:NewPeriodicityProperty} and \ref{Sec:PetriSetsAndDimension} of this paper. However, we can already explain two properties of the irreducible sets with a strong hybridization which, combined with Theorem \ref{TheoremFinalPartition}, have important consequences. 

Firstly, irreducible sets with hybridization are always non-semilinear. This leads to a simple algorithm for deciding whether the reachability set $\vect{X} \subseteq \N^d$ of a given VAS of dimension $d$ is semilinear. 
Let $\vect{S}:=\N^d$ and compute the partition $\vect{S}_1 \cup \dots \cup \vect{S}_k$ of Theorem \ref{TheoremFinalPartition}. For every $1 \leq i \leq k$, check whether $\vect{X} \cap \vect{S}_i=\emptyset$ or $\vect{S}_i \subseteq \vect{X}$ hold\footnote{It is well known that the first question can be reduced to the VAS reachability problem, and the second is decidable by the flatability results mentioned before.}. If this is the case for all $i$, then let \(J\) be the set of indices \(i\), where \(\vect{S}_i \subseteq \vect{X}\) holds. We have $\bigcup_{i \in J} \vect{S}_i = \vect{X}  \cap \vect{S} = \vect{X}$, and so, since $\vect{S}_1, \ldots, \vect{S}_k$ are linear,  $\vect{X}$ is semilinear. Otherwise, by Theorem \ref{TheoremFinalPartition} there exists an $i$ such that $\vect{X} \cap \vect{S}_i$ is irreducible with hybridization $\vect{S}_i$, and hence non-semilinear. Since semilinear sets are closed under intersection, $\vect{X}$ is not semilinear. 

The decidability of the semilinearity of VAS reachability sets was first proved by Hauschildt \cite{Hauschildt90}, and in fact we arrive at essentially the same algorithm. However, we provide a simpler correctness proof and a clear geometric intuition. Further, our theorem holds for arbitrary Petri sets, a larger class than VAS reachability sets. 


Secondly, if a set \(\vect{X}\) is irreducible with hybridization \(\vect{S}\), then there are infinitely many points in the boundary \(\partial \vect{S}\) of \(\vect{S}\) that do not belong to $\vect{S}$, i.e., \ \(|\partial \vect{S} \setminus \vect{X}|=\infty\). This allows to prove that if \(\vect{S} \setminus \vect{X}\) is infinite, then \(\vect{S} \setminus \vect{X}\) contains an infinite linear set, which was left as a conjecture in \cite{JancarLS19}. Namely the proof is now a simple induction on the dimension of the semilinear set \(\vect{S}\): If \(\vect{S} \setminus \vect{X}\) is infinite, then some \(\vect{S}_i \setminus \vect{X}\) is infinite. If for this \(i\), we have \(\vect{X} \cap \vect{S}_i=\emptyset\) or \(\vect{S}_i \subseteq \vect{X}\), then \(\vect{S}_i \setminus \vect{X}\) is semilinear and hence contains an infinite line. Otherwise we have that \(|\partial \vect{S}_i \setminus \vect{X}|=\infty\), and hence by induction \(\partial \vect{S}_i \setminus \vect{X}\) contains an infinite line. This corollary is a first step towards understanding the complements of VAS reachability sets, for which little is known. 

The sections of the paper follow the structure of the main theorem. Section \ref{Sec:Preliminaries} contains preliminaries. Section \ref{Sec:DirectionsOfAPeriodicSet} introduces smooth sets, preparing for the introduction of strong hybridization and Petri sets in Section \ref{Sec:NewPeriodicityProperty}. Section \ref{Sec:PetriSetsAndDimension} introduces irreducibility and proves Theorem \ref{TheoremFinalPartition}. Section \ref{SectionFinalPartitionAndCorollaries} proves the corollaries of Theorem \ref{TheoremFinalPartition}.

\section{Preliminaries} \label{Sec:Preliminaries}


We let $\N, \mathbb{Z}, \mathbb{Q}, \mathbb{Q}_{\geq 0}$ denote the natural, integer, and (non-negative) rational numbers.

Furthermore, we use uppercase letters except \(A\) for sets, with \(A\) being used for matrices. We use boldface for vectors and sets of vectors. We denote the cardinality of a set \(\vect{X}\) as \(|\vect{X}|\).

Given sets \(\vect{X},\vect{Y} \subseteq \mathbb{Q}^n, Z \subseteq \mathbb{Q}\), we write \(\vect{X}+\vect{Y}:=\{\vect{x}+\vect{y} \mid \vect{x} \in \vect{X}, \vect{y} \in \vect{Y}\}\) and \(Z \cdot \vect{X}:=\{\lambda \cdot \vect{x} \mid \lambda \in Z, \vect{x} \in \vect{X}\}\). By identifying elements \(\vect{x}\in \mathbb{Q}^n\) with \(\{\vect{x}\}\), we define \(\vect{x}+\vect{X}:=\{\vect{x}\}+\vect{X}\), and similarly \(\lambda \cdot \vect{X}:=\{\lambda\} \cdot \vect{X}\) for \(\lambda \in \mathbb{Q}\). We denote by \(\vect{X}^C\) the complement of \(\vect{X}\). On \(\mathbb{Q}^n\), we consider the usual Euclidean norm and its generated topology. We denote the closure of a set \(\vect{X}\) in this topology by \(\overline{\vect{X}}\).

A \emph{vector space} \(\vect{V} \subseteq \mathbb{Q}^n\) is a set such that \(\vect{0} \in \vect{V}\), \(\vect{V}+\vect{V} \subseteq \vect{V}\) and \(\Q \cdot \vect{V} \subseteq \vect{V}\). Given a set \(\vect{F} \subseteq \mathbb{Q}^n\), the vector space generated by \(\vect{F}\) is the smallest vector space containing \(\vect{F}\). Every vector space \(\vect{V}\) is \emph{finitely generated} (f.g.), i.e.\ there exists a finite set \(\vect{F} \subseteq \Q^n\) generating \(\vect{V}\). Furthermore, it can also be expressed as \(\{\vect{x}\in \Q^n \mid A \vect{x}=0\}\) for some integer matrix \(A\). 

\subsection{Cones, lattices, and periodic sets}
A set \(\vect{C} \subseteq \mathbb{Q}^n\) is a \emph{cone} if \(\vect{0}\in \vect{C}\), \(\vect{C}+\vect{C} \subseteq \vect{C}\) and \(\Q_{>0}\vect{C} \subseteq \vect{C}\). Given a set \(\vect{F} \subseteq \mathbb{Q}^n\), the cone generated by \(\vect{F}\) is the smallest cone containing \(\vect{F}\). If \(\vect{C}\) is a cone, then \(\vect{C}-\vect{C}\) is the vector space generated by \(\vect{C}\). Not every cone is finitely generated (f.g.). Instead, we have:

\begin{lemma}{\cite[Corollary 7.1a]{LinearProgramming}}
Let \(\vect{C} \subseteq \Q^n\) be a cone. Then \(\vect{C}\) is finitely generated if and only if \(\vect{C}=\{\vect{x} \in \vect{C}-\vect{C} \mid A \vect{x} \geq \vect{0}\}\) for some integer matrix \(A\). \label{LemmaFinitelyGeneratedCones}
\end{lemma}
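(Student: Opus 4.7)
The statement is the classical Minkowski--Weyl theorem, and the plan is to prove both implications by Fourier--Motzkin elimination, with the converse handled by the double description construction. For the forward direction, if $\vect{C}$ is generated by $\vect{v}_1, \dots, \vect{v}_k$, then $\vect{C}$ equals the projection onto $\Q^n$ of the polyhedron $\{(\vect{x}, \lambda) \in \Q^{n+k} : \vect{x} = V\lambda,\, \lambda \geq \vect{0}\}$, where $V$ is the matrix with columns $\vect{v}_i$. Iterated Fourier--Motzkin elimination of the $\lambda_i$ produces a finite rational system $A'\vect{x} \geq \vect{b}'$ whose solution set is exactly $\vect{C}$; since $\vect{0} \in \vect{C}$ and $\vect{C}$ is closed under positive scaling, $\vect{b}' = \vect{0}$. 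Restricting to $\vect{V} := \vect{C} - \vect{C}$ drops the inequalities that must hold as equalities (these jointly define $\vect{V}$), and clearing denominators yields an integer matrix $A$ with $\vect{C} = \{\vect{x} \in \vect{V} : A\vect{x} \geq \vect{0}\}$.

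For the converse, I would induct on the number of rows $m$ of $A$. The base case $m = 0$ is immediate: $\vect{V}$ is finitely generated as a cone by any vector-space basis of $\vect{V}$ together with the negatives of those basis vectors. For the inductive step, assume $\vect{C}_{m-1}$, defined by the first $m-1$ rows of $A$, is finitely generated by $\vect{v}_1, \dots, \vect{v}_k$, and let $\vect{a}$ denote the $m$-th row. Then $\vect{C} = \vect{C}_{m-1} \cap \{\vect{x} : \vect{a}^T \vect{x} \geq 0\}$ is generated by the classical double-description set: all $\vect{v}_i$ with $\vect{a}^T \vect{v}_i \geq 0$, together with the mixtures $(\vect{a}^T \vect{v}_i)\vect{v}_j - (\vect{a}^T \vect{v}_j)\vect{v}_i$ for each pair $(i,j)$ satisfying $\vect{a}^T \vect{v}_i > 0 > \vect{a}^T \vect{v}_j$.

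The main obstacle is verifying that these mixtures suffice to represent every $\vect{x} \in \vect{C}$. Writing $\vect{x} = \sum \lambda_i \vect{v}_i$ with $\lambda_i \geq 0$, the hypothesis $\vect{a}^T \vect{x} \geq 0$ allows one to iteratively trade a pair of positive coefficients on generators with opposite signs of $\vect{a}^T \vect{v}$ for a nonnegative coefficient on the corresponding mixture, strictly reducing the total weight on generators with negative $\vect{a}^T \vect{v}$ at each step. A finite termination argument then produces the desired nonnegative representation. Since the full details are standard and laid out in the cited Schrijver chapter, no further elaboration is needed here.
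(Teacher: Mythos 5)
Your proposal is correct in substance, but note that the paper does not prove this lemma at all: it is imported as a black box from Schrijver (Corollary~7.1a, the Farkas--Minkowski--Weyl theorem), so there is no in-paper argument to compare against. What you have written is the standard proof of that classical result --- Fourier--Motzkin elimination for ``finitely generated $\Rightarrow$ polyhedral'' and the double description step, by induction on the number of inequalities, for the converse --- which is close to (though not identical with) Schrijver's own development via the fundamental theorem of linear inequalities. Two small points of precision. First, in the forward direction the homogeneity is automatic: eliminating $\lambda$ from the homogeneous system $\vect{x}=V\lambda$, $\lambda\geq\vect{0}$ only ever produces homogeneous inequalities, so there is no constant vector $\vect{b}'$ to argue away; likewise the restriction to $\vect{C}-\vect{C}$ is free of charge, since $\vect{C}\subseteq\vect{C}-\vect{C}$ means any matrix with $\vect{C}=\{\vect{x}\in\Q^n \mid A\vect{x}\geq\vect{0}\}$ already witnesses the stated form, and dropping implicit equalities is unnecessary. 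Second, your stated termination measure for the exchange argument (``the total weight on generators with negative $\vect{a}^T\vect{v}$ strictly decreases'') does not by itself guarantee termination, since it is a rational quantity that could decrease indefinitely; the correct observation is that each trade, with the mixing coefficient chosen as the minimum of the two admissible ratios, zeroes out at least one of the two original coefficients, so the number of nonzero coefficients on the original generators strictly decreases. With that measure the induction closes, and one also checks (as you implicitly do) that at termination any remaining weight on generators with $\vect{a}^T\vect{v}<0$ is forced to vanish because $\vect{a}^T\vect{x}\geq 0$. These are presentation-level fixes; the approach itself is sound and proves exactly the cited statement.
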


In particular, finitely generated cones are closed. The \emph{interior} of a finitely generated cone \(\vect{C}\) is the set \(\interior(\vect{C})=\{\vect{x} \in \vect{C}-\vect{C} \mid A \vect{x} > \vect{0}\}\), where \(A\) is a matrix as above. The boundary of the cone is \(\partial(\vect{C}):=\overline{\vect{C}} \setminus \interior(\vect{C})\).  It is well known that the boundary of a cone is a a finite union of lower dimensional cones, called facets\cite{LinearProgramming}. In fact, there is a defining matrix \(A\) such that the facets are exactly the sets of solutions obtained by changing one of the inequalities of $A \vect{x} \geq 0$ into an equality. For example, the left part of Figure \ref{Fig:ExampleFullLinearSet} shows the cone \(\{(x,y) \mid x-y \geq 0, y \geq 0\}\). Its facets are the sets \(\{(x,y)\mid x-y=0, y \geq 0\}\) and \(\{(x,y)\mid x \geq y, y =0\}\) (shown as black lines in the picture), and their union is the boundary of the cone. 


A cone \(\vect{C}\) is \emph{definable} if it is definable in \(\FO(\mathbb{Q}, +, \geq)\).  A cone \(\vect{C}\) is definable if{}f  \(\vect{C} \setminus \{\vect{0}\}=\{\vect{x} \in \vect{C}-\vect{C} \mid A_1 \vect{x} >\vect{0}, A_2 \vect{x}\geq \vect{0}\}\) for some integer matrices \(A_1, A_2\). In this case the closure \(\overline{\vect{C}}\) is finitely generated. Intuitively, changing an equation from from \(\geq 0\) to \(>0\) removes a facet. Removing all facets yields \(\interior(\vect{C})\).

\begin{figure}[h!]
\noindent%
\begin{minipage}{4.5cm}
\begin{tikzpicture}
\begin{axis}[
    axis lines = left,
    xlabel = { },
    ylabel = { },
    xmin=0, xmax=10,
    ymin=0, ymax=12,
    xtick={0,2,4,6,8,10},
    ytick={0,2,4,6,8,10,12},
    ymajorgrids=true,
    xmajorgrids=true,
    thick,
    smooth,
    no markers,
]

\addplot+[
    name path=A,
    color=red,
    ]
    coordinates {
    (0,0)(10,10)
    };

\addplot+[
    color=red,
    name path=B,
    ]
    coordinates {
    (0,0)(10,0)
    };
    
\addplot[
    color=black,
    very thick,
]
coordinates {
(0,0)(10,10)
};

\addplot[
    color=black,
    very thick,
]
coordinates {
(0,0.1)(10,0.1)
};
    
\addplot[
    color=red!40,
]
fill between[of=A and B];

\addplot[
    fill=blue,
    fill opacity=0.5,
    only marks,
    ]
    coordinates {
    (0,0)(0,2)(0,4)(0,6)(0,8)(0,10)(0,12)(2,0)(2,2)(2,4)(2,6)(2,8)(2,10)(2,12)(4,0)(4,2)(4,4)(4,6)(4,8)(4,10)(4,12)(6,0)(6,2)(6,4)(6,6)(6,8)(6,10)(6,12)(8,0)(8,2)(8,4)(8,6)(8,8)(8,10)(8,12)(10,0)(10,2)(10,4)(10,6)(10,8)(10,10)(10,12)(12,0)(12,2)(12,4)(12,6)(12,8)(12,10)(12,12)
    };
    
\end{axis}
\end{tikzpicture}
\end{minipage}%
\begin{minipage}{4.5cm}
\begin{tikzpicture}
\begin{axis}[
    axis lines = left,
    xlabel = { },
    ylabel = { },
    xmin=0, xmax=5,
    ymin=0, ymax=15,
    xtick={0,1,2,3,4,5},
    ytick={0,3,6,9,12,15},
    ymajorgrids=true,
    xmajorgrids=true,
    thick,
    smooth,
    no markers,
]

\addplot[
    color=blue,
]
{3 * x};


\addplot+[
    color=red,
    name path=A,
    domain=2:5, 
]
{3};

\addplot+[
    color=red,
    name path=B,
    domain=2:5, 
]
{3*x-3};

\addplot[
    color=red!40,
]
fill between[of=A and B];

\addplot[
    fill=blue,
    fill opacity=0.5,
    only marks,
    ]
    coordinates {
    (0,0)(1,0)(1,2)(1,3)(2,0)(2,2)(2,3)(2,4)(2,5)(2,6)(3,0)(3,2)(3,3)(3,4)(3,5)(3,6)(3,7)(3,8)(3,9)(4,0)(4,2)(4,3)(4,4)(4,5)(4,6)(4,7)(4,8)(4,9)(4,10)(4,11)(4,12)(5,0)(5,2)(5,3)(5,4)(5,5)(5,6)(5,7)(5,8)(5,9)(5,10)(5,11)(5,12)(5,13)(5,14)(5,15)
    };

\end{axis}
\end{tikzpicture}
\end{minipage}
\quad
\begin{minipage}{6cm}
\begin{tikzpicture}[-triangle 60,auto, thick]
\tikzstyle{every state}=[rectangle,thick,draw=blue!75,fill=blue!20,minimum 
     size=6mm,text=black,minimum width=6mm]
\newcommand*{\distancesubx}{1.5cm}
\newcommand*{\distancesuby}{1.5cm}
\newcommand*{\distancelabel}{0.07cm}

\node[state] (A0) at (0,\distancesuby) {Per. \(\vect{P}\)};
\node[state] (B0) at (\distancesubx,0) {Cone \(\vect{C}\)};
\node[state] (B1) at (\distancesubx,\distancesuby) {\(\cap\)};
\node[state] (B2) at (\distancesubx,2*\distancesuby) {Lat. \(\vect{L}\)};
\node[state] (C0) at (2*\distancesubx,\distancesuby) {\(\VectorSpace\)};
\path (A0) edge[] node[below left=\distancelabel] {\(\Q_{\geq 0}\vect{P}\)} (B0);
\path (A0) edge[] node[above left=\distancelabel] {\(\vect{P}-\vect{P}\)} (B2);
\path (B0) edge[] node[below right=\distancelabel] {\(\vect{C}-\vect{C}\)} (C0);
\path (B2) edge[] node[above right=\distancelabel] {\(\Q_{\geq 0}\vect{L}\)} (C0);
\path (B0) edge[] (B1);
\path (B2) edge[] (B1);
\path (B1) edge[] (A0);
\end{tikzpicture}
\end{minipage}
\caption{\textit{Left}: The cone generated by \(\{(1,1), (1,0)\}\) is shown in red, with its boundary in black. The lattice \((2,0)\mathbb{Z}+(0,2)\mathbb{Z}\) is the set of of blue dots. Their intersection is the periodic set \(\{(2,0), (2,2)\}^{\ast}\). \newline
\textit{Middle}: The periodic set \(\vect{P}=\{(1,0),(1,2),(1,3)\}^{\ast}\) is shown in blue. Intuitively, the set $\{(1,1), (2,1), (3, 1), \ldots \}$ is a ``hole'' of $\vect{P}$. Inside \(\vect{P}\) we find the red area $(2, 3) + \vect{P}$, whose blue points do not intersect the hole, i.e., \((2,3)+\Fill(\vect{P})\subseteq \vect{P}\). \newline
\textit{Right}: Graph comparing the classes of sets defined in Section \ref{Sec:Preliminaries}.}
\label{Fig:ExampleFullLinearSet}
\end{figure}

A set \(\vect{L} \subseteq \mathbb{Z}^n\) is a \emph{lattice} if \(\vect{L}+\vect{L} \subseteq \vect{L}, -\vect{L} \subseteq \vect{L}\) and \(\vect{0}\in \vect{L}\). For any finite set \(\vect{F}=\{\vect{x}_1, \dots, \vect{x}_s\} \subseteq \N^n\), the lattice generated by \(\vect{F}\) is \(\Z \vect{x}_1 + \dots + \Z \vect{x}_s\). Every lattice is finitely generated, and even has a generating set linearly independent over \(\Q\).

A set \(\vect{P} \subseteq \mathbb{N}^n\) is a \emph{periodic set} if \(\vect{P} + \vect{P} \subseteq \vect{P}\) and \(\vect{0} \in \vect{P}\). For any set \(\vect{F} \subseteq \N^n\), the periodic set \(\vect{F}^{\ast}\) generated by \(\vect{F}\) is the smallest periodic set containing \(\vect{F}\). We have \(\vect{F}^{\ast}=\{\vect{p}_1+\dots+\vect{p}_r \mid r\in \N, \vect{p}_i \in \vect{F} \text{ for all } i\}\). A periodic set \(\vect{P}\) is \emph{finitely generated} if \(\vect{P}=\vect{F}^{\ast}\) for some finite set \(\vect{F}\). Finitely generated periodic sets are characterized as follows:

\begin{lemma}{\cite[Lemma V.5]{Leroux13}}
Let \(\vect{P} \subseteq \N^n\) be a periodic set. Then \(\vect{P}\) is finitely generated as a periodic set if and only if \(\mathbb{Q}_{\geq 0}\vect{P}\) is finitely generated as a cone. \label{LemmaCharactizeFinitelyGeneratedPeriodicSets}
\end{lemma}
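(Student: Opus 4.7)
The forward direction is immediate: if \(\vect{P} = \vect{F}^{\ast}\) with \(\vect{F}\) finite, then every element of \(\vect{P}\) is a non-negative integer combination of the elements of \(\vect{F}\), so \(\Q_{\geq 0} \vect{P}\) coincides with the cone generated by \(\vect{F}\), which is finitely generated.

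For the converse, suppose \(\vect{C} := \Q_{\geq 0} \vect{P}\) is a finitely generated cone. Since \(\vect{P}\) is a periodic set, \(\Q_{\geq 0}\vect{P}\) is exactly the union of the rays \(\Q_{\geq 0} \vect{p}\) for \(\vect{p} \in \vect{P}\); hence any finite generating set of \(\vect{C}\) can be replaced, ray by ray, by a finite \(\vect{F} \subseteq \vect{P}\) with \(\Q_{\geq 0}\vect{F} = \vect{C}\). Set \(\vect{Q} := \vect{F}^{\ast}\), a finitely generated periodic set sitting inside \(\vect{P}\). It then suffices to exhibit a finite \(\vect{B} \subseteq \vect{P}\) with \(\vect{P} \subseteq \vect{B} + \vect{Q}\), because then \(\vect{P} = (\vect{B} \cup \vect{F})^{\ast}\).

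To construct \(\vect{B}\), I would endow \(\vect{P}\) with the preorder \(\vect{p} \preceq \vect{p}'\) iff \(\vect{p}' - \vect{p} \in \vect{Q}\) and take \(\vect{B}\) to be the set of \(\preceq\)-minimal elements. The real work is to prove \(\vect{B}\) is finite, i.e.\ that \(\preceq\) is a well-quasi-order. The plan is to combine Dickson's lemma with Gordan's lemma: the latter, applied to the rational polyhedral cone \(\vect{C}\), gives a finite Hilbert basis \(\vect{H}\) with \(\vect{C} \cap \N^n = \vect{H}^{\ast}\), and hence \(\vect{P} \subseteq \vect{H}^{\ast}\). Each \(\vect{h} \in \vect{H}\) lies in \(\vect{C} = \Q_{\geq 0}\vect{F}\), so a common \(N \in \Nplus\) exists with \(N\vect{h} \in \vect{Q}\) for every \(\vect{h} \in \vect{H}\). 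I would then lift \(\vect{P}\) to \(\N^{|\vect{H}|}\) along the surjection \((n_{\vect{h}})_{\vect{h}} \mapsto \sum_{\vect{h}} n_{\vect{h}} \vect{h}\), apply Dickson's lemma to the preimage \(\pi^{-1}(\vect{P})\) to extract finitely many coordinatewise-minimal coefficient vectors, group these vectors by their residues modulo \(N\), and push the resulting finite collection down to \(\vect{P}\); the choice of \(N\) ensures that within one residue class, coordinatewise dominance of coefficients translates into \(\preceq\)-dominance in \(\vect{P}\) (since the ``extra'' coefficients are all multiples of \(N\) and therefore produce elements of \(\vect{Q}\)).

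The main obstacle is that expansions in the Hilbert basis \(\vect{H}\) are generally \emph{not} unique, so both the lift \(\pi^{-1}\) and the residue modulo \(N\) need to be handled carefully: a naive choice of representative may give two \(\preceq\)-comparable elements of \(\vect{P}\) distinct coordinatewise-minimal preimages in \(\N^{|\vect{H}|}\), wrecking the wqo argument. I expect to resolve this either by fixing a canonical lift (e.g.\ lexicographically minimal coefficient vector) before reducing, or by observing that the entire preimage \(\pi^{-1}(\vect{P})\) is itself a subset of \(\N^{|\vect{H}|}\) to which Dickson applies directly, so that absorbing the non-uniqueness into the finite residue data still yields a finite antichain in \(\vect{P}\). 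Once the finiteness of \(\vect{B}\) is established, the identity \(\vect{P} = (\vect{B} \cup \vect{F})^{\ast}\) follows, completing the proof.
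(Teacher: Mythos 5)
The paper does not actually prove this lemma---it imports it wholesale from Leroux (\cite[Lemma~V.5]{Leroux13})---so there is no internal proof to compare against; your argument has to stand on its own, and in essence it does. The skeleton is sound: every element of \(\Q_{\geq 0}\vect{P}\) is a nonnegative multiple of a single element of \(\vect{P}\) (since \(\vect{P}\) is closed under addition), so a finite \(\vect{F}\subseteq\vect{P}\) with \(\Q_{\geq 0}\vect{F}=\Q_{\geq 0}\vect{P}\) exists; with \(\vect{Q}=\vect{F}^{\ast}\) it suffices to produce a finite \(\vect{B}\subseteq\vect{P}\) with \(\vect{P}\subseteq\vect{B}+\vect{Q}\), and then \(\vect{P}=(\vect{B}\cup\vect{F})^{\ast}\); Gordan's lemma gives \(\vect{P}\subseteq\vect{H}^{\ast}\) and a uniform \(N\) with \(N\vect{h}\in\vect{Q}\) for all \(\vect{h}\in\vect{H}\). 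Two remarks. First, the obstacle you flag (non-uniqueness of Hilbert-basis expansions) is not a real one: no canonical lift is needed. For each \(\vect{p}\in\vect{P}\) pick an \emph{arbitrary} preimage \(u\in\pi^{-1}(\vect{p})\); the argument only uses the one-way implication that if \(u\le v\) componentwise and \(u\equiv v\pmod N\) componentwise, then \(\pi(v)-\pi(u)=\sum_{\vect{h}}(v_{\vect{h}}-u_{\vect{h}})\vect{h}\in\vect{Q}\), because each coefficient difference is a nonnegative multiple of \(N\) and \(N\vect{h}\in\vect{Q}\). So your second proposed resolution is exactly right and the lexicographic canonical lift is unnecessary. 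Second, tighten the order of operations in the Dickson step: taking the componentwise-minimal elements of \(\pi^{-1}(\vect{P})\) first and only afterwards ``grouping by residues'' does not suffice, since the minimal element below a given \(u\) need not be congruent to \(u\) modulo \(N\). You must take minimal elements with respect to the refined order (componentwise \(\le\) together with congruence mod \(N\)), i.e.\ apply Dickson separately inside each of the finitely many residue classes; this refined order is still a well-quasi-order, its set \(M\) of minimal elements in \(\pi^{-1}(\vect{P})\) is finite, \(\vect{B}:=\pi(M)\subseteq\vect{P}\) is finite, every \(\vect{p}\in\vect{P}\) lies in \(\vect{B}+\vect{Q}\), and the identity \(\vect{P}=(\vect{B}\cup\vect{F})^{\ast}\) follows. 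With that routine adjustment the proof is complete, and it is in the same well-quasi-order spirit as the ordering \(\vect{x}\le_{\vect{Q}}\vect{y}\iff\vect{y}-\vect{x}\in\vect{Q}\) that the paper later borrows from the same source in the proof of Lemma~\ref{LemmaFillOutTheBoundary}.
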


Any set generates a lattice, a cone and a vector space. In the case of periodic sets these have simple formulas; namely \(\vect{P}-\vect{P}\), as well as \(\mathbb{Q}_{\geq 0}\vect{P}\) and \(\VectorSpace(\vect{P}):=\mathbb{Q}_{\geq 0}(\vect{P}-\vect{P})=\mathbb{Q}_{\geq 0}\vect{P} - \mathbb{Q}_{\geq 0}\vect{P}\) respectively. These are also depicted in the right of Figure \ref{Fig:ExampleFullLinearSet}. On the other hand, if \(\vect{C}\) is a cone and \(\vect{L}\) is a lattice, then \(\vect{C} \cap \vect{L}\) is a periodic set. We will consider periodic sets of this form in more depth in Section \ref{SectionDefinitionFull}.

\subsection{Dimension}

The \emph{dimension} of a vector space defined as its number of generators is a well-known concept. It can be extended to arbitrary subsets of \(\mathbb{Q}^n\) as follows.

\begin{definition}{\cite{Leroux11, Leroux12}}
Let \(\vect{X} \subseteq \mathbb{Q}^n\). The \emph{dimension} of \(\vect{X}\), denoted \(\dim(\vect{X})\), is the smallest natural number \(k\) such that there exist finitely many vector spaces \(\vect{V}_i \subseteq \mathbb{Q}^n\) with \(\dim(\vect{V}_i)\leq k\) and vectors \(\vect{b}_i \in \mathbb{Q}^n\) such that \(\vect{X} \subseteq \bigcup_{i=1}^r \vect{b}_i + \vect{V}_i\). 
\end{definition}

This dimension function has the following properties.

\begin{restatable}{lemma}{BasicDimensionProperties}
Let \(\vect{X}, \vect{X}' \subseteq \mathbb{Q}^n, \vect{b}\in \mathbb{Q}^n\). Then \(\dim(\vect{X})=\dim(\vect{b}+\vect{X})\) and 
\(\dim(\vect{X} \cup \vect{X}')=\max \{\dim(\vect{X}), \dim(\vect{X}')\}\). Further, if \(\vect{X} \subseteq \vect{X}'\), then \(\dim(\vect{X}) \leq \dim(\vect{X}')\). \label{BasicDimensionProperties}
\end{restatable}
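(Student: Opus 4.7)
The plan is to prove each of the three claims directly from the definition, using the fact that $\dim(\vect{X})$ is characterized as the minimum $k \in \N$ (note the minimum exists, since $k = n$ always works by taking the single vector space $\Q^n$) such that $\vect{X}$ can be covered by finitely many affine subspaces of dimension at most $k$.

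For translation invariance, I would observe that translation is a bijection on coverings: if $\vect{X} \subseteq \bigcup_{i=1}^r (\vect{b}_i + \vect{V}_i)$ with $\dim(\vect{V}_i) \leq k$, then $\vect{b} + \vect{X} \subseteq \bigcup_{i=1}^r ((\vect{b}+\vect{b}_i) + \vect{V}_i)$ is a covering of the same dimension bound; this gives $\dim(\vect{b}+\vect{X}) \leq \dim(\vect{X})$, and applying the same argument with $-\vect{b}$ to $\vect{b}+\vect{X}$ gives the reverse inequality.

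For monotonicity, any covering of $\vect{X}'$ by affine subspaces of dimension at most $k$ is automatically a covering of every subset $\vect{X} \subseteq \vect{X}'$, so $\dim(\vect{X}) \leq \dim(\vect{X}')$. For the union property, the $\geq$ direction is immediate from monotonicity applied to $\vect{X}, \vect{X}' \subseteq \vect{X} \cup \vect{X}'$. For the $\leq$ direction, let $k = \max\{\dim(\vect{X}), \dim(\vect{X}')\}$, pick coverings $\vect{X} \subseteq \bigcup_{i=1}^r (\vect{b}_i + \vect{V}_i)$ and $\vect{X}' \subseteq \bigcup_{j=1}^{r'} (\vect{b}'_j + \vect{V}'_j)$ with all vector spaces of dimension at most $k$, and take their union as a covering of $\vect{X} \cup \vect{X}'$, yielding $\dim(\vect{X} \cup \vect{X}') \leq k$.

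No step poses a real obstacle; the whole lemma is essentially bookkeeping on the defining covering property, and the only subtlety worth spelling out is that the minimum in the definition of dimension is attained (which is why the three equalities and the inequality above can be manipulated by exhibiting explicit optimal coverings).
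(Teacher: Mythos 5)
Your proposal is correct and follows essentially the same route as the paper's proof: shift coverings for translation invariance, combine coverings for the union upper bound, and note that coverings of a superset cover the subset (the paper merely phrases monotonicity via the union property rather than directly, which is an immaterial reordering).
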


\begin{lemma}{\cite[Lemma 5.3]{Leroux11}}
Let \(\vect{P}\) be periodic. Then \(\dim(\vect{P})=\dim(\Vectorspace(\vect{P}))\). \label{LemmaFromJerome}
\end{lemma}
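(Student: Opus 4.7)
The plan is to prove the two inequalities separately. The inequality $\dim(\vect{P}) \leq \dim(\Vectorspace(\vect{P}))$ is immediate from the definition: since $\vect{0} \in \vect{P}$, every $\vect{p} \in \vect{P}$ equals $1 \cdot \vect{p} - 0 \cdot \vect{0} \in \Q_{\geq 0}\vect{P} - \Q_{\geq 0}\vect{P} = \Vectorspace(\vect{P})$, so $\vect{P} \subseteq \vect{0} + \Vectorspace(\vect{P})$, and the dimension definition applied with $r=1$ gives the bound.

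For the harder inequality $\dim(\Vectorspace(\vect{P})) \leq \dim(\vect{P})$, suppose $\dim(\vect{P}) = k$, so there exist vector spaces $\vect{V}_1, \ldots, \vect{V}_r$ of dimension at most $k$ and vectors $\vect{b}_1, \ldots, \vect{b}_r$ with $\vect{P} \subseteq \bigcup_{i=1}^r (\vect{b}_i + \vect{V}_i)$. I would first eliminate the translations: for any $\vect{p}\in\vect{P}$, periodicity yields $n\vect{p} \in \vect{P}$ for every $n \in \N$, so by the pigeonhole principle some $\vect{b}_i+\vect{V}_i$ contains $n_1 \vect{p}$ and $n_2 \vect{p}$ with $n_1 < n_2$; subtracting gives $(n_2-n_1)\vect{p} \in \vect{V}_i$, hence $\vect{p} \in \vect{V}_i$ because $\vect{V}_i$ is a $\Q$-vector space. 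This shows $\vect{P} \subseteq \bigcup_{i=1}^r \vect{V}_i$.

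Next I would upgrade this to $\vect{P} \subseteq \vect{V}_{i^*}$ for some single $i^*$, using a Petri-flavoured analogue of the fact that a $\Q$-vector space is never a finite union of proper subspaces. Take $r$ minimal with $\vect{P} \subseteq \vect{V}_1 \cup \ldots \cup \vect{V}_r$; if $r \geq 2$, pick $\vect{p}_1 \in (\vect{P} \cap \vect{V}_1) \setminus \bigcup_{j \neq 1}\vect{V}_j$ and $\vect{p}_2 \in (\vect{P} \cap \vect{V}_2) \setminus \bigcup_{j \neq 2}\vect{V}_j$. Then $\{\vect{p}_1 + n \vect{p}_2 : n \in \N\} \subseteq \vect{P}$ is infinite, so by pigeonhole some $\vect{V}_i$ contains two distinct elements $\vect{p}_1+n\vect{p}_2$ and $\vect{p}_1+n'\vect{p}_2$; subtracting yields $\vect{p}_2 \in \vect{V}_i$, forcing $i = 2$ by our choice of $\vect{p}_2$, and then $\vect{p}_1 = (\vect{p}_1 + n\vect{p}_2) - n\vect{p}_2 \in \vect{V}_2$, contradicting the choice of $\vect{p}_1$. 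Once $\vect{P} \subseteq \vect{V}_{i^*}$ is established, $\vect{V}_{i^*}$ is a vector space containing $\vect{P}$, so $\Vectorspace(\vect{P}) \subseteq \vect{V}_{i^*}$, and therefore $\dim(\Vectorspace(\vect{P})) \leq \dim(\vect{V}_{i^*}) \leq k$.

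The main obstacle is the single-subspace step: although $\vect{P}$ is not a vector space, the combined use of closure under addition and the infinitude of $\N$ is what replaces the classical infinite-field argument. Everything else, including the basic monotonicity and translation invariance of $\dim$, follows from the general properties of the dimension function recorded earlier in the preliminaries.
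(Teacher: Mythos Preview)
Your proof is correct. The paper itself does not supply a proof of this lemma; it simply cites \cite[Lemma~5.3]{Leroux11}, so there is no in-paper argument to compare against. Your two-step approach---first absorbing the translations $\vect{b}_i$ by pigeonholing on the multiples $n\vect{p}$, then collapsing the union $\bigcup_i \vect{V}_i$ to a single $\vect{V}_{i^*}$ via the $\vect{p}_1 + n\vect{p}_2$ trick---is the standard way this result is established, and each step goes through cleanly (note in particular that $\vect{p}_2 \neq \vect{0}$ is guaranteed since $\vect{p}_2 \notin \vect{V}_1 \ni \vect{0}$, so the pigeonhole argument on $\{\vect{p}_1 + n\vect{p}_2\}$ is valid).
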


Lemma \ref{LemmaFromJerome} for example shows that the lattice and the cone depicted in the left of Figure \ref{Fig:ExampleFullLinearSet}, as well as the periodic set obtained as intersection have dimension \(2\), because all of them generate the vector space \(\mathbb{Q}^2\).

\subsection{Finitely generated vs. full periodic sets} \label{SectionDefinitionFull}

A set \(\vect{L}\) is \emph{linear} if \(\vect{L}=\vect{b}+\vect{P}\) with \(\vect{b}\in \N^n\) and \(\vect{P} \subseteq \N^n\) a finitely generated periodic set. A set \(\vect{S}\) is \emph{semilinear} if it is a finite union of linear sets. The semilinear sets coincide with the sets definable via formulas \(\varphi \in \FO(\mathbb{N}, +, \geq)\), also called Presburger Arithmetic. This is the usual definition of a linear set in theoretical computer science, however, we will work with a slightly smaller class of linear sets, which we call full linear sets. As shown for example in \cite{Woods15}, working with this smaller class does not change the class of semilinear sets: A set \(\vect{S}\) is semilinear if and only if it is a finite union of full linear sets,\ i.e. linear sets \(\vect{b}+\vect{P}\) where \(\vect{P}\) is not only finitely generated, but even full, as in the following definition.

\begin{definition}
A periodic \(\vect{P}\) is \emph{full} if \(\vect{P}=\vect{C} \cap \vect{L}\), where \(\vect{C}\) is a f.g. cone and \(\vect{L}\) a lattice.%
\end{definition}

Full linear sets have even been used as the main definition of linear set in the literature before, for example in \cite{NguyenP18}. Furthermore, while not directly defined, this class was also utilized in \cite{Leroux12, Leroux13} as well. For an example of a finitely generated periodic set which is not full, consider the middle of Figure \ref{Fig:ExampleFullLinearSet}.

There is another equivalent definition of full periodic sets, which uses an overapproximation of a periodic set we call \(\Fill(\vect{P})\). This overapproximation was first introduced in \cite{Leroux11} with the terminology \(\lin(\vect{P})\). However, we avoid this terminology because in \cite{Leroux12, Leroux13}, the same author used the same notation with a slightly different meaning.

\begin{definition}
\label{def:fill}
Let $\vect{P}$ be a periodic set. The \emph{fill} of $\vect{P}$ is the set \(\Fill(\vect{P}):=(\vect{P}-\vect{P}) \cap \overline{\mathbb{Q}_{\geq 0}\vect{P}}\).%
\end{definition}

Intuitively, we overapproximate \(\vect{P}\) via the intersection of the obvious lattice and cone. The reason for using the closure of $\mathbb{Q}_{\geq 0}\vect{P}$ instead of the cone $\mathbb{Q}_{\geq 0}\vect{P}$ itself is Lemma \ref{LemmaCharactizeFinitelyGeneratedPeriodicSets}: If the cone is not closed, then the periodic set, in our case \(\Fill(\vect{P})\), is not finitely generated. If \(\vect{P}\) was already finitely generated, the definitions coincide.

\begin{restatable}{lemma}{LemmaEquivalentDefinitionOfFull}
A periodic set \(\vect{P}\) is full if and only if \(\overline{\Q_{\geq 0}\vect{P}}\) is a f.g. cone and \(\vect{P}=\Fill(\vect{P})\). \label{LemmaEquivalentDefinitionOfFull}
\end{restatable}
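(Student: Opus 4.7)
The plan is to prove the two directions of the equivalence separately, in each case reducing to elementary manipulations with cones and lattices.

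For the forward direction, I will assume $\vect{P} = \vect{C} \cap \vect{L}$ with $\vect{C}$ a f.g.\ cone and $\vect{L}$ a lattice. The key intermediate step is the identity $\Q_{\geq 0}\vect{P} = \vect{C} \cap (\Q \vect{L})$, where $\Q \vect{L}$ denotes the vector space generated by $\vect{L}$. The inclusion $\subseteq$ is immediate from $\vect{P} \subseteq \vect{C}$ and $\vect{P} \subseteq \vect{L}$. For $\supseteq$, given $\vect{x} \in \vect{C} \cap (\Q\vect{L})$, I will clear denominators in the representation of $\vect{x}$ as a rational combination of lattice generators to find $N \in \Nplus$ with $N\vect{x} \in \vect{L}$; then $N\vect{x} \in \vect{C} \cap \vect{L} = \vect{P}$, so $\vect{x} = (1/N)(N\vect{x}) \in \Q_{\geq 0}\vect{P}$. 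Since the intersection of a f.g.\ cone with a vector space is again a f.g.\ cone (by the Farkas-style characterization of Lemma \ref{LemmaFinitelyGeneratedCones}), this set is already closed, so $\overline{\Q_{\geq 0}\vect{P}} = \Q_{\geq 0}\vect{P}$ is f.g. For the equality $\vect{P} = \Fill(\vect{P})$, the inclusion $\vect{P} \subseteq \Fill(\vect{P})$ is immediate from Definition \ref{def:fill}. Conversely, any $\vect{x} \in \Fill(\vect{P}) = (\vect{P} - \vect{P}) \cap \overline{\Q_{\geq 0}\vect{P}}$ lies in $\vect{L}$ (since $\vect{P} \subseteq \vect{L}$ and lattices are closed under subtraction) and in $\vect{C}$ (since $\vect{C}$ is closed and contains $\Q_{\geq 0}\vect{P}$), hence in $\vect{C} \cap \vect{L} = \vect{P}$.

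For the backward direction, I will exhibit the witnessing cone and lattice directly: set $\vect{C} := \overline{\Q_{\geq 0}\vect{P}}$ and $\vect{L} := \vect{P} - \vect{P}$. The former is a f.g.\ cone by hypothesis. The latter is easily verified to be a lattice: it contains $\vect{0}$, it is closed under negation by swapping the two sides of the subtraction, and it is closed under addition by using $\vect{P} + \vect{P} \subseteq \vect{P}$ componentwise. The definition of $\Fill$ together with the hypothesis $\vect{P} = \Fill(\vect{P})$ then yield $\vect{C} \cap \vect{L} = \overline{\Q_{\geq 0}\vect{P}} \cap (\vect{P} - \vect{P}) = \Fill(\vect{P}) = \vect{P}$, so $\vect{P}$ is full.

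The only nontrivial step is the identification $\Q_{\geq 0}\vect{P} = \vect{C} \cap (\Q\vect{L})$ in the forward direction, which relies on the observation that every rational point in the $\Q$-span of a lattice can be rescaled by a positive integer into the lattice itself, together with the closure of f.g.\ cones under intersection with vector spaces. An alternative route would be to argue via a Gordan-type lemma that $\vect{C} \cap \vect{L}$ is finitely generated as a periodic set, and then invoke Lemma \ref{LemmaCharactizeFinitelyGeneratedPeriodicSets} to deduce that $\Q_{\geq 0}\vect{P}$ is f.g. Beyond this point, everything is a direct unfolding of the definitions of $\Fill$, cones, and lattices.
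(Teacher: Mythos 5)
Your proof is correct and follows essentially the same route as the paper: the backward direction is just the observation that $\Fill(\vect{P})$ is by definition a f.g.\ cone intersected with a lattice, and the forward direction hinges on the same key identity (your $\Q_{\geq 0}\vect{P}=\vect{C}\cap\Q\vect{L}$ coincides with the paper's $\vect{C}\cap\Q_{\geq 0}\vect{L}$, since a lattice generates a vector space as a cone), proved by the same denominator-clearing argument, followed by the same unfolding of the definition of $\Fill$.
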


By Lemma \ref{LemmaCharactizeFinitelyGeneratedPeriodicSets}, full periodic sets are finitely generated: Namely, their cone \(\Q_{\geq 0}\vect{P}\) equals \(\vect{C} \cap \Q_{\geq 0}\vect{L}\), which as intersection of f.g. cones is finitely generated by Lemma \ref{LemmaFinitelyGeneratedCones}.

Let us conclude this subsection with the main advantage of full linear over linear sets.

\begin{restatable}{lemma}{LemmaFullThenLessPeriods}
Let \(\vect{P},\vect{Q}\) periodic, \(\vect{P}\) full, \(\vect{b},\vect{c} \in \Q^n\) such that \(\vect{c}+\vect{Q} \subseteq \vect{b}+\vect{P}\). Then \(\vect{Q} \subseteq \vect{P}\).\label{LemmaFullThenLessPeriods}%
\end{restatable}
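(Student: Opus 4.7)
The plan is to exploit the characterization of fullness provided by Lemma~\ref{LemmaEquivalentDefinitionOfFull}: since $\vect{P}$ is full, we have $\vect{P}=\Fill(\vect{P})=(\vect{P}-\vect{P})\cap \overline{\Q_{\geq 0}\vect{P}}$. So for an arbitrary $\vect{q}\in \vect{Q}$, it suffices to verify the two membership conditions $\vect{q}\in \vect{P}-\vect{P}$ and $\vect{q}\in \overline{\Q_{\geq 0}\vect{P}}$ separately.

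First I would show $\vect{q}\in \vect{P}-\vect{P}$. Since $\vect{Q}$ is periodic we have $\vect{0}\in \vect{Q}$, so $\vect{c}=\vect{c}+\vect{0}\in \vect{c}+\vect{Q}\subseteq \vect{b}+\vect{P}$, which gives $\vect{c}-\vect{b}\in \vect{P}$. Applying the same hypothesis to $\vect{q}$ itself yields $\vect{c}+\vect{q}-\vect{b}\in \vect{P}$. Subtracting these two elements of $\vect{P}$ produces $\vect{q}\in \vect{P}-\vect{P}$.

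Next I would show $\vect{q}\in \overline{\Q_{\geq 0}\vect{P}}$. Using periodicity of $\vect{Q}$ more strongly, we have $n\vect{q}\in \vect{Q}$ for every $n\in \Nplus$, hence $\vect{c}+n\vect{q}-\vect{b}\in \vect{P}$. Dividing by $n$ produces a sequence in $\Q_{\geq 0}\vect{P}$, namely
\[
\frac{1}{n}\bigl(\vect{c}+n\vect{q}-\vect{b}\bigr)=\vect{q}+\frac{1}{n}(\vect{c}-\vect{b}),
\]
which converges to $\vect{q}$ in the Euclidean norm as $n\to \infty$. Therefore $\vect{q}$ lies in the closure $\overline{\Q_{\geq 0}\vect{P}}$.

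Combining the two inclusions, $\vect{q}\in (\vect{P}-\vect{P})\cap \overline{\Q_{\geq 0}\vect{P}}=\Fill(\vect{P})=\vect{P}$, which proves $\vect{Q}\subseteq \vect{P}$. I do not foresee a genuine obstacle here; the only nontrivial ingredient is the characterization of fullness via $\Fill$, after which the argument is essentially the remark that an arithmetic progression contained in a full linear set forces both a difference in the group and a tangent direction in the cone.
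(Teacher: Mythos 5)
Your proof is correct and follows the paper's route: reduce via Lemma~\ref{LemmaEquivalentDefinitionOfFull} to showing \(\vect{Q}\subseteq \vect{P}-\vect{P}\) and \(\vect{Q}\subseteq \overline{\Q_{\geq 0}\vect{P}}\), with the lattice part argued identically. The only (harmless) difference is the closure step: you exhibit the convergent sequence \(\vect{q}+\tfrac{1}{n}(\vect{c}-\vect{b})\in \Q_{\geq 0}\vect{P}\) directly, whereas the paper evaluates the rows of a defining matrix of \(\overline{\Q_{\geq 0}\vect{P}}\) on \(\vect{c}+\lambda\vect{q}-\vect{b}\); both encode the same asymptotic argument, yours without invoking the half-space description from Lemma~\ref{LemmaFinitelyGeneratedCones}.
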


\begin{proof}
Since \(\vect{P}\) is full, by Lemma \ref{LemmaEquivalentDefinitionOfFull} it is sufficient to prove \(\vect{Q} \subseteq \vect{P}-\vect{P}\) and \(\vect{Q} \subseteq \overline{\mathbb{Q}_{\geq 0}\vect{P}}\). 

To prove \(\vect{Q} \subseteq \vect{P}-\vect{P}\), observe that \(\vect{Q}=(\vect{c}+\vect{Q}) - \vect{c}  \subseteq (\vect{b}+\vect{P})-(\vect{b}+\vect{P})=\vect{P}-\vect{P}\). 

To prove \(\vect{Q} \subseteq \overline{\mathbb{Q}_{\geq 0}\vect{P}}\), write \(\overline{\Q_{\geq 0}\vect{P}}=\{\vect{x} \in \VectorSpace(\vect{P}) \mid A \vect{x} \geq 0\}\) for a matrix \(A\), as in Lemma \ref{LemmaFinitelyGeneratedCones}. Let \(A_k\) be the \(k\)-th row of \(A\). It suffices to show \(A_k \vect{x} \geq 0\) for all \(\vect{x} \in \vect{Q}\). If we had \(A_k \vect{x} <0\), then \(A_k (\vect{c}+\lambda \vect{x}) < A_k \vect{b}\) for large enough \(\lambda\), contradicting \(\vect{c}+\vect{Q} \subseteq \vect{b}+\vect{P}\).
\end{proof}

Observe that if we replace full by finitely generated, then the lemma does not hold: Choose \(\vect{P}\) as the periodic set in the middle of Figure \ref{Fig:ExampleFullLinearSet}, then \((2,3)+\{(1,1)\}^{\ast} \subseteq \vect{P}\), and the property is violated, since \((1,1) \not \in \vect{P}\). 

Another advantage is that many proofs simplify in the full case. The following such case will be a cornerstone of our main algorithm:

\begin{restatable}{lemma}{LemmaRemovingInnerConeReducesDimension}
\cite[Corollary D.3]{Leroux13} Let \(\vect{P}\) be a finitely generated periodic set. For every \(\vect{x} \in \vect{P}\) the set \(\vect{S}:=\vect{P} \setminus (\vect{x} + \vect{P})\) is semilinear and satisfies \(\dim(\vect{S}) < \dim(\vect{P})\). \label{LemmaRemovingInnerConeReducesDimension}
\end{restatable}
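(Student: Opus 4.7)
The plan is to prove semilinearity and the dimension bound separately. Semilinearity is quick: since $\vect{P}$ is finitely generated, $\vect{P}$ is semilinear and so is the translate $\vect{x}+\vect{P}$; by the classical closure of semilinear sets under Boolean operations (Ginsburg--Spanier), $\vect{S} = \vect{P} \setminus (\vect{x}+\vect{P})$ is semilinear.

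For the dimension bound, I would establish the following key observation: no two distinct points of $\vect{S}$ can differ by a non-zero integer multiple of $\vect{x}$. To see this, suppose $\vect{s},\vect{s}+k\vect{x} \in \vect{S}$ with $k\in\Z_{>0}$ (the case $k<0$ is symmetric by swapping the two points). Because $\vect{x} \in \vect{P}$ and $\vect{P}$ is a periodic set, we have $(k-1)\vect{x} \in \vect{P}$, hence $\vect{s}+(k-1)\vect{x} \in \vect{P}$. But then $\vect{s}+k\vect{x} = \vect{x} + (\vect{s}+(k-1)\vect{x}) \in \vect{x}+\vect{P}$, contradicting $\vect{s}+k\vect{x} \in \vect{S}$.

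Next I would exploit semilinearity to write $\vect{S} = \bigcup_{j=1}^{r}(\vect{b}_j + \vect{P}_j)$ with each $\vect{P}_j$ a finitely generated periodic set. Since any two points of $\vect{b}_j+\vect{P}_j$ differ by an element of $\vect{P}_j-\vect{P}_j$, the observation above forces $(\vect{P}_j-\vect{P}_j)\cap \Z\vect{x}=\{\vect{0}\}$. A short clearing-denominators step upgrades this to $\VectorSpace(\vect{P}_j)\cap \Q\vect{x}=\{\vect{0}\}$: any $\vect{y}$ in the intersection admits an integer $M$ with $M\vect{y}$ simultaneously in the lattice $\vect{P}_j-\vect{P}_j$ and in $\Z\vect{x}$, forcing $M\vect{y}=\vect{0}$.

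Thus $\vect{x} \notin \VectorSpace(\vect{P}_j)$, so $\VectorSpace(\vect{P}_j) \oplus \Q\vect{x} \subseteq \VectorSpace(\vect{P})$, and by Lemma \ref{LemmaFromJerome} we get $\dim(\vect{P}_j)=\dim(\VectorSpace(\vect{P}_j)) \leq \dim(\VectorSpace(\vect{P}))-1 = \dim(\vect{P})-1$. Combining over $j$ via Lemma \ref{BasicDimensionProperties} yields $\dim(\vect{S}) \leq \dim(\vect{P})-1 < \dim(\vect{P})$. The only delicate step is the key observation itself; the rest is routine linear algebra over $\vect{L}=\vect{P}-\vect{P}$ and $\VectorSpace(\vect{P})$. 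I expect the mild subtlety to be the passage from $(\vect{P}_j-\vect{P}_j)\cap \Z\vect{x}=\{\vect{0}\}$ to $\VectorSpace(\vect{P}_j)\cap \Q\vect{x}=\{\vect{0}\}$, which handles the case where $\vect{x}$ is not primitive in the ambient lattice.
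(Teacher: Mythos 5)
Your proof is correct, and it takes a genuinely different route from the paper's. The paper argues geometrically and only for full \(\vect{P}\) (which it notes suffices for its purposes): writing \(\Q_{\geq 0}\vect{P}=\{\vect{y} \in \VectorSpace(\vect{P}) \mid A\vect{y}\geq \vect{0}\}\), it first shows \(\vect{P} \cap (\vect{x}+\Q_{\geq 0}\vect{P}) \subseteq \vect{x}+\Fill(\vect{P})=\vect{x}+\vect{P}\), so every \(\vect{y}\in\vect{S}\) satisfies \(0 \leq A_i\vect{y} \leq A_i\vect{x}\) for some row \(A_i\); integrality of \(A_i\vect{y}\) then traps \(\vect{S}\) in finitely many codimension-one affine slices \(\{\vect{y} \mid A_i\vect{y}=j\}\), giving the dimension drop. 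You instead decompose the (already semilinear) set \(\vect{S}\) into linear components \(\vect{b}_j+\vect{P}_j\) and show that no period lattice \(\vect{P}_j-\vect{P}_j\) can contain a nonzero integer multiple of \(\vect{x}\), because \(\vect{s}\in\vect{S}\) and \(k\geq 1\) force \(\vect{s}+k\vect{x}\in\vect{x}+\vect{P}\); clearing denominators gives \(\VectorSpace(\vect{P}_j)\cap\Q\vect{x}=\{\vect{0}\}\), and Lemmas \ref{BasicDimensionProperties} and \ref{LemmaFromJerome} yield \(\dim(\vect{S})\leq\dim(\vect{P})-1\). Your version has the advantage of proving the lemma as stated for \emph{every} finitely generated \(\vect{P}\), with no reduction to the full case and no appeal to the inequality description of the cone (Lemma \ref{LemmaFinitelyGeneratedCones}); the only step you leave implicit, \(\VectorSpace(\vect{P}_j)\subseteq\VectorSpace(\vect{P})\), follows immediately from \(\vect{b}_j+\vect{P}_j\subseteq\vect{P}\), hence \(\vect{P}_j-\vect{P}_j\subseteq\vect{P}-\vect{P}\), and is worth spelling out. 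What the paper's argument buys in exchange is an explicit, quantitative covering of \(\vect{S}\) by hyperplane sections determined by \(A\) and \(\vect{x}\), which is exactly the geometric picture (``removing the inner cone leaves lower-dimensional slices'') invoked elsewhere in the text. Both treatments silently ignore the degenerate case \(\vect{x}=\vect{0}\), where \(\vect{S}=\emptyset\), so that is not held against you.
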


To prove this, first show that \(\vect{P}\) contains \(\vect{v}+\Fill(\vect{P})\), as in the middle of Figure \ref{Fig:ExampleFullLinearSet}, and reduce to the case of full periodic \(\vect{P}\). For full \(\vect{P}\) it is geometrically clear; for example removing the red cone in the middle of Figure \ref{Fig:ExampleFullLinearSet} from the set, we are left with a finite union of lines.

\section{Smooth Periodic Sets} \label{Sec:DirectionsOfAPeriodicSet}


Not all periodic sets we need in the paper are finitely generated, but they are smooth, a class 
introduced by Leroux in \cite{Leroux13}.  Intuitively, a smooth set $\vect{P}$ is ``close'' to being finitely generated, in the sense that $\Fill(\vect{P})$ is finitely generated. This result (very similar to a result of \cite{Leroux11}) is proven in  
Section \ref{subsec:SmoothImpliesFinitelyGenerated}. In the rest of the section we show that smooth sets satisfying a novel condition are closed under intersection and enjoy good properties (Proposition \ref{PropositionIntersectionPeriodicSets}).

We first reintroduce the set of directions of a periodic set. 

\begin{definition}{\cite{Leroux13}}
\label{def:directions}
Let \( \vect{P} \) be a periodic set.  A vector \(\vect{d} \in \Q^n\) is a \emph{direction} of  \(\vect{P}\) if there exists \(m \in \N_{>0}\) and a point \(\vect{x}\) such that \(\vect{x}+\N \cdot m \vect{d} \subseteq \vect{P}\), i.e. some line in direction \(\vect{d}\) is fully contained in \(\vect{P}\). The set of directions of \(\vect{P}\) is denoted \(\dir(\vect{P})\).
\end{definition}

We can now define smooth periodic sets.

\begin{definition}{\cite{Leroux13}}
\label{def:smooth}
Let \( \vect{P} \) be a periodic set.
\begin{itemize}
\item \(\vect{P}\) is \emph{asymptotically definable} if \(\dir(\vect{P})\) is a definable cone, i.e. \(\dir(\vect{P}) \setminus \{\mathbf{0}\}=\{\vect{x} \in \VectorSpace(\vect{P}) \mid A_1 \vect{x}>0, A_2 \vect{x}\geq 0\}\) for some integer matrices \(A_1, A_2\).

\item \(\vect{P}\) is \emph{well-directed} if every sequence \((\vect{p}_m)_{m \in \N}\) of vectors \(\vect{p}_m \in \vect{P}\) has an infinite subsequence \((\vect{p}_{m_k})_{k \in \N}\) such that \(\vect{p}_{m_k} - \vect{p}_{m_j} \in \dir(\vect{P})\) for all \(k \geq j\). 

\item \(\vect{P}\) is \emph{smooth} if it is asymptotically definable and well-directed.
\end{itemize}
\end{definition}

\begin{figure}[h!]
\begin{minipage}{4.5cm}
\begin{tikzpicture}
\begin{axis}[
    axis lines = left,
    xlabel = { },
    ylabel = { },
    xmin=0, xmax=8,
    ymin=0, ymax=8,
    xtick={0,2,4,6,8},
    ytick={0,2,4,6,8},
    ymajorgrids=true,
    xmajorgrids=true,
    thick,
    smooth,
    no markers,
]

\addplot[
    fill=blue,
    fill opacity=0.5,
    only marks,
    ]
    coordinates {
    (0,0)(1,1)(1,2)(1,3)(1,4)(1,5)(1,6)(1,7)(1,8)(2,1)(2,2)(2,3)(2,4)(2,5)(2,6)(2,7)(2,8)(3,1)(3,2)(3,3)(3,4)(3,5)(3,6)(3,7)(3,8)(4,1)(4,2)(4,3)(4,4)(4,5)(4,6)(4,7)(4,8)(5,1)(5,2)(5,3)(5,4)(5,5)(5,6)(5,7)(5,8)(6,1)(6,2)(6,3)(6,4)(6,5)(6,6)(6,7)(6,8)(7,1)(7,2)(7,3)(7,4)(7,5)(7,6)(7,7)(7,8)(8,1)(8,2)(8,3)(8,4)(8,5)(8,6)(8,7)(8,8)
    };

\end{axis}
\end{tikzpicture}
\end{minipage}%
\begin{minipage}{4.5cm}
\begin{tikzpicture}

\begin{axis}[
    axis lines = left,
    xlabel = { },
    ylabel = { },
    xmin=0, xmax=4,
    ymin=0, ymax=16,
    xtick={0,1,2,3,4},
    ytick={0,4,9,16},
    ymajorgrids=true,
    xmajorgrids=true,
    thick,
    smooth,
    no markers,
]
    
    
\addplot[
    fill=blue,
    fill opacity=0.5,
    only marks,
    ]
    coordinates {
    (0,0)(1,0)(1,1)(2,0)(2,1)(2,2)(2,3)(2,4)(3,0)(3,1)(3,2)(3,3)(3,4)(3,5)(3,6)(3,7)(3,8)(3,9)(4,0)(4,1)(4,2)(4,3)(4,4)(4,5)(4,6)(4,7)(4,8)(4,9)(4,10)(4,11)(4,12)(4,13)(4,14)(4,15)(4,16)
    };
    
\addplot[
    name path=A,
    domain=0:4,
    color=red,
]
{x^2};

\end{axis}
\end{tikzpicture}
\end{minipage}%
\begin{minipage}{4.5cm}
\begin{tikzpicture}
\begin{axis}[
    xlabel={ },
    ylabel={ },
    xmin=0, xmax=3,
    ymin=0, ymax=16,
    xtick={0,1,2,3},
    ytick={0,4,8,16},
    ymajorgrids=true,
    xmajorgrids=true,
    smooth,
    no markers,
]

    
\addplot+[
    name path=A,
    domain=0:3, 
    very thick,
    color=blue,
    ]
    {2*2^x};
    
\addplot+[
    name path=B,
    domain=0:3, 
    color=black,
    ]
    {0};
    
\addplot+[
    name path=CA,
    color=red,
    very thick,
    no marks,
    ]
    coordinates {
    (0,0)(3,10)
    };
    
\addplot+[
    name path=CB,
    color=red,
    very thick,
    no marks,
    ]
    coordinates {
    (1,0)(3,12.5)
    };
    
\addplot+[
    name path=CC,
    color=red,
    very thick,
    no marks,
    ]
    coordinates {
    (2,0)(3,14)
    };

\addplot[blue!40] fill between[of=CA and B];
\addplot[blue!40] fill between[of=CB and B];
\addplot[blue!40] fill between[of=CC and B];
    
\end{axis}
\end{tikzpicture}
\end{minipage}%

\caption{\textit{Left and middle}: The periodic sets \(\vect{P}=\{(0,0)\} \cup \N_{>0}^2 \) and \(\vect{P}=\{(x,y)\in \N^2 \mid y \leq x^2\}\) respectively.
Neither is finitely generated, but both are smooth with \(\Fill(\vect{P})=\N^2\). \newline
\textit{Right}: Underapproximation of \(\{(x,y)\mid y \leq 2^{x+1}\}\) via a union of three cones. The starting points are respectively \((0,0), (1,0)\) and \((2,0)\).}\label{FigureExamplesNonFinitelyGeneratedPeriodicSet}
\end{figure}
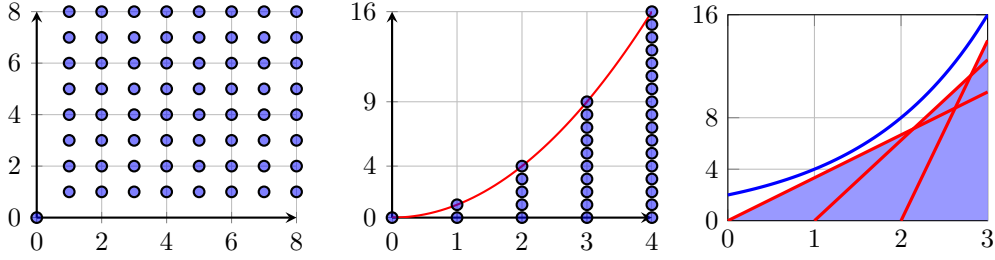

Figure \ref{FigureExamplesNonFinitelyGeneratedPeriodicSet} shows two examples of smooth periodic sets that are not finitely generated. 

\begin{example}
\label{ex:smooth}
Examples of non-smooth sets are \(\vect{P}_1=\{(x,y) \mid x \geq \sqrt{2} y\}\)  and \(\vect{P}_2=(\{(0,1)\} \cup \{(2^m,1) \mid m \in \N\})^{\ast}=\{(x,n) \in \N^2 \mid x \text{ has at most } n \text{ bits set to }1\text{ in the binary representation.}\}\). $\vect{P}_1$  is not asymptotically definable, because defining \(\dir(\vect{P})\) requires irrationals, while $\vect{P}_2$ is not well-directed (see observation 2 below). 
\end{example}

Intuitively, the ``boundaries'' of a smooth periodic set in two dimensions are either straight lines or function graphs ``curving outward'',  as in the example on the right of Figure \ref{FigureExamplesNonFinitelyGeneratedPeriodicSet}. 

We make a few observations:
\begin{enumerate}
\item The set \(\dir(\vect{P})\) is a cone. Indeed, if two lines in different directions \(\vect{d}\) and \(\vect{d}'\) are contained in \(\vect{P}\), then by periodicity \(\vect{P}\) also contains a \(\vect{d},\vect{d}'\) plane, and so \(\vect{P}\) contains a line in every direction between \(\vect{d}\) and \(\vect{d}'\). 
\item The most important case of Definition \ref{def:smooth} is when the \(\vect{p}_m\) are all on the same infinite line \(\vect{x}+\vect{d} \cdot \N\). Then the definition equivalently states that \(\vect{d} \in \dir(\vect{P})\), i.e. some infinite line in direction \(\vect{d}\) is contained in \(\vect{P}\). This makes sets where points are ``too scarce'' non-smooth. For instance, the set $\vect{P}_2$ of Example \ref{ex:smooth} contains infinitely many points on a horizontal line, but no full horizontal line, which would correspond to an arithmetic progression.
\end{enumerate}

\subsection{Fills of Smooth Sets are Finitely Generated}
\label{subsec:SmoothImpliesFinitelyGenerated}

We show that, while a smooth periodic set $\vect{P}$ may not be finitely generated, the set $\Fill(\vect{P})$ always is. We start with the following lemma.

\begin{restatable}{lemma}{LemmaDirectionsContainInterior}
Let \(\vect{P}\) be a periodic set. Then \(\interior(\overline{\mathbb{Q}_{\geq 0}\vect{P}}) \subseteq \mathbb{Q}_{\geq 0}\vect{P} \subseteq \dir(\vect{P}) \subseteq \overline{\mathbb{Q}_{\geq 0}\vect{P}}.\)

In particular, all these sets have the same closure. \label{LemmaDirectionsContainInterior}
\end{restatable}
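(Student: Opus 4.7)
The plan is to establish the three inclusions of the chain separately and then deduce the equality of closures. The middle and right inclusions are straightforward from the definitions; the first one is where the real work lies.

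For $\Q_{\geq 0}\vect{P} \subseteq \dir(\vect{P})$, I would write a nonzero $\vect{d} \in \Q_{\geq 0}\vect{P}$ as $\vect{d} = (a/b)\vect{p}$ with $\vect{p} \in \vect{P}$, $a \in \N$, $b \in \N_{>0}$, and then set $m := b$ and $\vect{x} := \vect{0}$, so that $\vect{x} + \N \cdot m\vect{d} = \N \cdot a\vect{p} \subseteq \vect{P}$ by periodicity; the case $\vect{d} = \vect{0}$ is immediate. For $\dir(\vect{P}) \subseteq \overline{\Q_{\geq 0}\vect{P}}$, given witnesses $\vect{x}, m$ for $\vect{d} \in \dir(\vect{P})$, the points
\[
\frac{\vect{x} + k m \vect{d}}{k m} \;=\; \vect{d} + \frac{\vect{x}}{k m}
\]
lie in $\Q_{\geq 0}\vect{P}$ for every $k \in \N_{>0}$ and converge to $\vect{d}$ as $k \to \infty$.

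The first inclusion $\interior(\overline{\Q_{\geq 0}\vect{P}}) \subseteq \Q_{\geq 0}\vect{P}$ is the main obstacle. I would interpret $\interior$ as the topological interior relative to $\VectorSpace(\vect{P})$ and invoke the classical convex-geometry fact that $\interior(\overline{C}) = \interior(C)$ for any convex set $C$ in a finite-dimensional space. Since $\Q_{\geq 0}\vect{P}$ is a convex cone (non-negative rational combinations of $\vect{P}$-elements reduce to a single scaled $\vect{P}$-element via a common denominator and periodicity), this immediately gives $\interior(\overline{\Q_{\geq 0}\vect{P}}) = \interior(\Q_{\geq 0}\vect{P}) \subseteq \Q_{\geq 0}\vect{P}$. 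The one subtle point is that we live in $\Q$-topology rather than $\R$-topology, but the textbook proof — extending the line from a fixed $\vect{c} \in \interior(\Q_{\geq 0}\vect{P})$ through a candidate $\vect{d} \in \interior(\overline{\Q_{\geq 0}\vect{P}})$ past $\vect{d}$ to some $\vect{d}' \in \overline{\Q_{\geq 0}\vect{P}}$, approximating $\vect{d}'$ by a sequence in $\Q_{\geq 0}\vect{P}$, and observing that convex combinations of a ball around $\vect{c}$ with these approximants produce a $\Q$-ball inside $\Q_{\geq 0}\vect{P}$ — carries over verbatim because every scaling used is rational. Nonemptiness of $\interior(\Q_{\geq 0}\vect{P})$, needed to start this argument, is witnessed by $\vect{c} := \vect{p}_1 + \cdots + \vect{p}_s$ for any $\vect{p}_1, \ldots, \vect{p}_s \in \vect{P}$ spanning $\VectorSpace(\vect{P})$ (expanding $\vect{v} \in \VectorSpace(\vect{P})$ in the basis makes the coefficients of $\vect{c} + \epsilon \vect{v}$ strictly positive for small rational $\epsilon$).

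For the final claim that all four sets share the same closure, the three inclusions formally give $\overline{\Q_{\geq 0}\vect{P}} \subseteq \overline{\dir(\vect{P})} \subseteq \overline{\Q_{\geq 0}\vect{P}}$, making these three closures coincide, and the classical identity $\overline{\interior(\overline{C})} = \overline{C}$ for convex $C$ with nonempty relative interior handles the fourth.
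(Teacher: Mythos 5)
Your proposal is correct and follows essentially the same route as the paper: the middle and right inclusions are proved by the same scaling and limit arguments (\(\N\cdot m\vect{d}\subseteq\vect{P}\), and \(\vect{d}+\vect{x}/(km)\to\vect{d}\)), and the key first inclusion rests on the same convexity-plus-density reasoning, which the paper phrases as ``any interior point of the closure is surrounded by points of \(\mathbb{Q}_{\geq 0}\vect{P}\)'' while you phrase it via the classical identity \(\interior(\overline{\vect{C}})=\interior(\vect{C})\) with the line-extension argument and an explicit witness for nonemptiness of the relative interior. Your extra care about the \(\Q\)-versus-\(\R\) topology and the relative interior is a welcome elaboration but not a different method.
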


\begin{proof}
Let \(\vect{x} \in \interior(\overline{\vect{C}})\), where \(\vect{C}:=\mathbb{Q}_{\geq 0}\vect{P}\). Then there exists \(\varepsilon >0\) such that the open ball \(B(\vect{x},\varepsilon)\) of radius \(\varepsilon\) around \(\vect{x}\) is contained in \(\overline{\vect{C}}\) by definition of interior. Hence for every \(\vect{y}\in B(\vect{x},\frac{\varepsilon}{2})\), there exists \(f(\vect{y}) \in B(\vect{y}, \frac{\varepsilon}{4}) \cap \vect{C}\) by definition of closure. We have surrounded \(\vect{x}\) by points \(f(\vect{y})\in \vect{C}\), hence by convexity of \(\vect{C}\) we have \(\vect{x}\in \vect{C}\).

Let \(\vect{d} \in \mathbb{Q}_{\geq 0}\vect{P}\). Then there exists \(m \in \N\) such that \(m \vect{d} \in \vect{P}\), in particular \(\N \cdot m \vect{d} \subseteq \vect{P}\).

Let \(\vect{d}\in \dir(\vect{P})\). Then by replacing \(\vect{d}\) by a multiple \(m \vect{d}\), there exists \(\vect{x}\) such that \(\vect{x}+ \N \cdot \vect{d} \subseteq \vect{P}\). We define the sequence \((\vect{x}_m)_{m \in \N}\) via \(\vect{x}_m:=\frac{1}{m}(\vect{x}+m \cdot \vect{d}) \in \mathbb{Q}_{\geq 0}\vect{P}\), and observe that its limit is \(\vect{d}\), i.e. \(\vect{d}\in \overline{\mathbb{Q}_{\geq 0}\vect{P}}\).
\end{proof}

\begin{example}
The set on the left of Figure \ref{FigureExamplesNonFinitelyGeneratedPeriodicSet} satisfies $\interior(\overline{\mathbb{Q}_{\geq 0}\vect{P}}) = \mathbb{Q}_{\geq 0}\vect{P} \subsetneq \dir(\vect{P}) = \overline{\mathbb{Q}_{\geq 0}\vect{P}}$. 
Indeed, $\interior(\overline{\mathbb{Q}_{\geq 0}\vect{P}})$ contains every direction except north and east, but they both belong to \(\dir(\vect{P})\). 
The middle set satisfies $\interior(\overline{\mathbb{Q}_{\geq 0}\vect{P}}) \subsetneq \mathbb{Q}_{\geq 0}\vect{P} = \dir(\vect{P}) \subsetneq \overline{\mathbb{Q}_{\geq 0}\vect{P}}$, since $\interior(\overline{\mathbb{Q}_{\geq 0}\vect{P}})$ contains neither north nor east, $\dir(\vect{P})$ contains east, and $\overline{\mathbb{Q}_{\geq 0}\vect{P}}$ contains both.
\end{example}

We are now ready to reprove the result:

\begin{proposition}{\cite[Lemma 5.1]{Leroux11}}
\label{prop:Fillisfinitelygenerated}
Let \(\vect{P}\) be smooth. Then \(\Fill(\vect{P})\) is full and hence f.g.
\end{proposition}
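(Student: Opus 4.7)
The plan is to verify the two conditions in Lemma \ref{LemmaEquivalentDefinitionOfFull}: (a) $\overline{\mathbb{Q}_{\geq 0}\Fill(\vect{P})}$ is a finitely generated cone, and (b) $\Fill(\Fill(\vect{P})) = \Fill(\vect{P})$. Before either step, I would note in passing that $\Fill(\vect{P})$ is itself a periodic set, since $\vect{P} - \vect{P}$ is closed under addition and contains $\vect{0}$, $\overline{\mathbb{Q}_{\geq 0}\vect{P}}$ is a cone, and $(\vect{P}-\vect{P}) \cap \overline{\mathbb{Q}_{\geq 0}\vect{P}} \subseteq \mathbb{Z}^n \cap \mathbb{Q}_{\geq 0}^n = \mathbb{N}^n$.

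To establish (a), I would first show by a sandwich argument that $\overline{\mathbb{Q}_{\geq 0}\Fill(\vect{P})} = \overline{\mathbb{Q}_{\geq 0}\vect{P}}$. The inclusion $\vect{P} \subseteq \Fill(\vect{P})$ yields one direction after taking closures; for the other, $\Fill(\vect{P}) \subseteq \overline{\mathbb{Q}_{\geq 0}\vect{P}}$ combined with the fact that the closed convex cone $\overline{\mathbb{Q}_{\geq 0}\vect{P}}$ is already stable under scaling by $\mathbb{Q}_{\geq 0}$ gives $\mathbb{Q}_{\geq 0}\Fill(\vect{P}) \subseteq \overline{\mathbb{Q}_{\geq 0}\vect{P}}$. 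Asymptotic definability of $\vect{P}$ then finishes the step: since $\dir(\vect{P})$ is a definable cone, the remark following the definition of definability ensures that $\overline{\dir(\vect{P})}$ is f.g., and Lemma \ref{LemmaDirectionsContainInterior} identifies this closure with $\overline{\mathbb{Q}_{\geq 0}\vect{P}}$.

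To establish (b), I would unfold $\Fill(\Fill(\vect{P})) = (\Fill(\vect{P}) - \Fill(\vect{P})) \cap \overline{\mathbb{Q}_{\geq 0}\Fill(\vect{P})}$ via Definition \ref{def:fill} and reduce each factor to the corresponding object for $\vect{P}$. The cone factor is handled by (a). For the group factor, I would observe that $\vect{P} - \vect{P}$ is an abelian subgroup of $\Z^n$ (being closed under addition and negation), so the sandwich $\vect{P} \subseteq \Fill(\vect{P}) \subseteq \vect{P} - \vect{P}$ forces $\Fill(\vect{P}) - \Fill(\vect{P}) = \vect{P} - \vect{P}$. Combining these identities yields $\Fill(\Fill(\vect{P})) = (\vect{P} - \vect{P}) \cap \overline{\mathbb{Q}_{\geq 0}\vect{P}} = \Fill(\vect{P})$, which is (b).

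The main conceptual observation is that well-directedness is not needed here: asymptotic definability alone carries the entire argument, after which the proof is purely formal manipulation of the definition of $\Fill$. The one place I expect to slow down is the cone sandwich in (a), where the stability of the closed cone $\overline{\mathbb{Q}_{\geq 0}\vect{P}}$ under nonnegative rational scaling is the only nontrivial fact used to equate the two closures.
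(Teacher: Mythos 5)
Your proof is correct, and its core coincides with the paper's: asymptotic definability of \(\vect{P}\) plus Lemma \ref{LemmaDirectionsContainInterior} gives \(\overline{\mathbb{Q}_{\geq 0}\vect{P}}=\overline{\dir(\vect{P})}\), which is a finitely generated cone, and well-directedness is indeed never used. The only difference is packaging: you route through the characterization of Lemma \ref{LemmaEquivalentDefinitionOfFull}, verifying \(\overline{\mathbb{Q}_{\geq 0}\Fill(\vect{P})}=\overline{\mathbb{Q}_{\geq 0}\vect{P}}\) and \(\Fill(\Fill(\vect{P}))=\Fill(\vect{P})\), whereas the paper concludes directly from the definition of full, since \(\Fill(\vect{P})=(\vect{P}-\vect{P})\cap\overline{\mathbb{Q}_{\geq 0}\vect{P}}\) is then visibly the intersection of a lattice and a f.g.\ cone --- so your extra verifications (essentially the content of Lemma \ref{LemmaFullAfterFill}) are correct but unnecessary.
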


\begin{proof}
Since \(\vect{P}\) is smooth, \(\dir(\vect{P})\) is definable by definition. By Lemma \ref{LemmaDirectionsContainInterior} we have \(\overline{\mathbb{Q}_{\geq 0}\vect{P}}=\overline{\dir(\vect{P})}\). So $\overline{\mathbb{Q}_{\geq 0}\vect{P}}$ is the closure of a definable cone, and hence finitely generated by Lemma \ref{LemmaFinitelyGeneratedCones}. Hence \(\vect{P}=\Fill(\vect{P})\) is the intersection of a f.g.\ cone and a lattice, and hence full.
\end{proof}

\subsection{Underapproximating Periodic Sets} \label{SectionUnderapproximationPeriodicSets}

In Section \ref{subsec:SmoothImpliesFinitelyGenerated} we have seen that smooth periodic sets can be overapproximated by full linear sets in a natural way. Let us combine this with an underapproximation, mainly to provide a formal basis for the boundary function intuition above.

\begin{proposition} \cite[Lemma F.1]{Leroux13}
Let \(\vect{P}\) be a periodic set. Let \(\vect{F} \subseteq \mathbb{Q}^n\) finite. 

\(\vect{F} \subseteq (\vect{P}-\vect{P}) \cap \dir(\vect{P})\) if and only if there exists \(\vect{x}\) such that \(\vect{x}+\vect{F}^{\ast} \subseteq \vect{P}\). \label{PropositionAlmostPeriodicityPeriodicSet}
\end{proposition}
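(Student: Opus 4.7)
The backward direction is routine. If $\vect{x} + \vect{F}^* \subseteq \vect{P}$, then for each $\vect{f} \in \vect{F}$, using $\vect{0} \in \vect{F}^*$ and $\vect{f} \in \vect{F}^*$, both $\vect{x}$ and $\vect{x} + \vect{f}$ lie in $\vect{P}$, giving $\vect{f} = (\vect{x}+\vect{f}) - \vect{x} \in \vect{P} - \vect{P}$. Moreover $\vect{x} + \N \cdot \vect{f} \subseteq \vect{x} + \vect{F}^* \subseteq \vect{P}$, which yields $\vect{f} \in \dir(\vect{P})$ directly from Definition \ref{def:directions} (taking $m=1$).

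For the forward direction, write $\vect{F} = \{\vect{f}_1, \ldots, \vect{f}_r\}$. Since each $\vect{f}_i \in \dir(\vect{P})$, I pick positive integers $m_i$ and vectors $\vect{y}_i$ with $\vect{y}_i + \N \cdot m_i \vect{f}_i \subseteq \vect{P}$. Setting $m := \mathrm{lcm}(m_1, \ldots, m_r)$, it follows that $\vect{y}_i + \N \cdot m \vect{f}_i \subseteq \vect{P}$ for every $i$, since $m\N \subseteq m_i \N$. Since each $\vect{f}_i \in \vect{P} - \vect{P}$, I also pick $\vect{p}_i, \vect{q}_i \in \vect{P}$ with $\vect{f}_i = \vect{p}_i - \vect{q}_i$, and define
\[
\vect{x} \;:=\; \sum_{i=1}^r \vect{y}_i \;+\; (m-1) \sum_{i=1}^r \vect{q}_i.
\]
Given an arbitrary $\vect{f} = \sum_i a_i \vect{f}_i \in \vect{F}^*$ with $a_i \in \N$, I perform division with remainder $a_i = m b_i + c_i$, where $0 \leq c_i < m$, and compute
\[
\vect{x} + \vect{f} \;=\; \sum_{i=1}^r \bigl(\vect{y}_i + b_i \cdot m \vect{f}_i\bigr) \;+\; \sum_{i=1}^r (m-1-c_i)\, \vect{q}_i \;+\; \sum_{i=1}^r c_i \,\vect{p}_i.
\]
Each summand $\vect{y}_i + b_i m \vect{f}_i$ lies in $\vect{P}$ by construction, while the remaining terms are non-negative integer combinations of the $\vect{p}_i, \vect{q}_i \in \vect{P}$ (using $m-1-c_i \geq 0$). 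Closure of $\vect{P}$ under addition then gives $\vect{x} + \vect{f} \in \vect{P}$, as required.

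The main step, and essentially the only non-routine one, is the split $a_i = m b_i + c_i$: the two hypotheses $\vect{F} \subseteq \dir(\vect{P})$ and $\vect{F} \subseteq \vect{P} - \vect{P}$ are both used, but in very different regimes. The direction hypothesis handles the unbounded part $b_i m \vect{f}_i$ by anchoring to the infinite rays $\vect{y}_i + \N \cdot m \vect{f}_i$, whereas the $\vect{P}-\vect{P}$ hypothesis handles the bounded remainders $c_i \vect{f}_i$ via the buffer $(m-1) \sum_i \vect{q}_i$, which is large enough to absorb the negative contribution $-c_i \vect{q}_i$ for every $c_i \in \{0, \ldots, m-1\}$. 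Without both hypotheses the construction collapses, which is consistent with the fact that neither containment alone implies the existence of such an $\vect{x}$.
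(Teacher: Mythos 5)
Your proof is correct: the backward direction is immediate, and in the forward direction the decomposition $a_i = mb_i + c_i$ together with the buffer $(m-1)\sum_i \vect{q}_i$ makes every term of $\vect{x}+\vect{f}$ a sum of elements of $\vect{P}$, so periodicity closes the argument. The paper itself does not prove this proposition but imports it from Leroux (Lemma F.1 of the cited work), and your construction is essentially the standard argument used there — anchor points on the rays witnessing each direction, plus a finite buffer coming from the lattice membership $\vect{F}\subseteq\vect{P}-\vect{P}$ to absorb the bounded remainders — so nothing further is needed.
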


%
%

Now consider any finitely generated cone \(\vect{C} \subseteq \dir(\vect{P})\). Then \(\vect{C} \cap (\vect{P}-\vect{P})\) is full and hence finitely generated by some set \(\vect{F}\). By applying Proposition \ref{PropositionAlmostPeriodicityPeriodicSet}, we obtain a vector \(\vect{x}_{\vect{C}} \in \vect{P}\) such that \(\vect{x}_{\vect{C}} + (\vect{C} \cap (\vect{P}-\vect{P})) \subseteq \vect{P}\). This should be viewed as follows: Interpret the lattice \(\vect{P}-\vect{P}\) as the set of ``candidates'' for being in \(\vect{P}\). Namely, since \(\vect{x}_{\vect{C}} \in \vect{P}\), a vector \(\vect{x}_{\vect{C}}+ \vect{v}\) can only be in \(\vect{P}\) if \(\vect{v} \in \vect{P}-\vect{P}\). Then \(\vect{x}_{\vect{C}}+(\vect{C} \cap (\vect{P}-\vect{P})) \subseteq \vect{P}\) shows that every candidate in the given shifted cone (base point non-zero, so strictly speaking not a cone according to our definition) is actually in \(\vect{P}\). Repeating this process for larger and larger cones \(\vect{C}\), we obtain an underapproximation of \(\vect{P}\) of the form \(\bigcup_{f.g.\  \vect{C}} (\vect{x}_{\vect{C}} + \vect{C}) \cap (\vect{P}-\vect{P})\). The union of wider and wider shifted cones intuitively has a convex function as upper and a concave function as lower bound, as shown in the right of Figure \ref{FigureExamplesNonFinitelyGeneratedPeriodicSet}.

Observe that this lower bound did not use smoothness, in general this might hence be a strict underapproximation, as shown in the right of Figure \ref{FigureExamplesNonFinitelyGeneratedPeriodicSet}.

\subsection{Intersection of Smooth Sets} \label{SectionIntersectionPeriodicSets}
We would like smooth sets to be closed under intersection. Further, we would like that the fill of an intersection of smooth sets is the intersection of the fills. However, this does not hold in general. The following is a counterexample. 

\begin{example}
Define \(\vect{P}:=\{\vect{0}\} \cup \N_{>0}^2\), see left of Figure \ref{FigureExamplesNonFinitelyGeneratedPeriodicSet}, and \(\vect{P}'=\{(0,1)\}^{\ast}\), the \(y\)-axis. We have 
\(\{ \mathbf{0}\}=\dir(\vect{P} \cap \vect{P}') \subsetneq \dir(\vect{P}) \cap \dir(\vect{P}')\). Also, \(\{\vect{0}\}=\Fill(\vect{P} \cap \vect{P}') \subsetneq \Fill(\vect{P}) \cap \Fill(\vect{P}')=\vect{P}'\).
\end{example}

Fortunately, we can prove (see the Appendix): Smooth sets $\vect{P}, \vect{P}'$ such that $\Fill(\vect{P})$, $\Fill(\vect{P}')$, and $\Fill(\vect{P}) \cap \Fill(\vect{P}')$ have the same dimension behave well under intersection.



\begin{restatable}{proposition}{PropositionIntersectionPeriodicSets}
Let \(\vect{P},\vect{P}'\) be smooth periodic sets such that

\(\dim(\Fill(\vect{P}) \cap \Fill(\vect{P}'))=\dim(\Fill(\vect{P}))=\dim(\Fill(\vect{P}')). \text{ Then }\)

\begin{enumerate}
\item \(\dim(\vect{P} \cap \vect{P}')=\dim(\vect{P})=\dim(\vect{P}')\).
\item \(\dir(\vect{P} \cap \vect{P}')=\dir(\vect{P}) \cap \dir(\vect{P}')\).
\item \(\Fill(\vect{P} \cap \vect{P}')=\Fill(\vect{P}) \cap \Fill(\vect{P}')\).
\item \(\vect{P} \cap \vect{P}'\) is smooth.
\end{enumerate} \label{PropositionIntersectionPeriodicSets}
\end{restatable}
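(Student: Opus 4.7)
The proof is organized around one structural lemma: under the dimension hypothesis, there exist a basepoint $\vect{y}$ and a finitely generated cone $\vect{C} \subseteq \dir(\vect{P}) \cap \dir(\vect{P}')$ with $\vect{C}-\vect{C} = \Vectorspace(\vect{P})$ such that $\vect{y} + \vect{C} \cap \vect{L} \subseteq \vect{P} \cap \vect{P}'$, where $\vect{L} := (\vect{P}-\vect{P}) \cap (\vect{P}'-\vect{P}')$. As a preliminary step, Lemma \ref{LemmaFromJerome} combined with the dimension hypothesis forces $\Vectorspace(\vect{P}) = \Vectorspace(\vect{P}') = \Vectorspace(\Fill(\vect{P}) \cap \Fill(\vect{P}'))$; write $\vect{V}$ for this common vector space. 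Consequently $\vect{L}$ is a full-rank sublattice of $\vect{V}$, the intersection $\dir(\vect{P}) \cap \dir(\vect{P}')$ is a definable cone of full dimension in $\vect{V}$, and I can choose a finitely generated cone $\vect{C}$ inside it with $\vect{C}-\vect{C}=\vect{V}$ (containing any prescribed direction when needed).

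The main obstacle is proving the structural lemma, specifically the step of merging two independently-chosen basepoints into a common one. Let $\vect{F}$ be a finite generator set of the full periodic set $\vect{C} \cap \vect{L}$. Since $\vect{F} \subseteq \dir(\vect{P}) \cap (\vect{P}-\vect{P})$ and analogously for $\vect{P}'$, Proposition \ref{PropositionAlmostPeriodicityPeriodicSet} yields separate basepoints $\vect{x}_1 \in \vect{P}$ and $\vect{x}_2 \in \vect{P}'$ with $\vect{x}_i + \vect{C}\cap\vect{L}$ contained in the respective set. To align them I choose $k \geq 1$ such that $k(\vect{x}_2-\vect{x}_1) \in \vect{L}$, which exists because $\vect{x}_2-\vect{x}_1 \in \vect{V}$ and $\vect{L}$ has full rank in $\vect{V}$; periodicity of each of $\vect{P}$ and $\vect{P}'$ lifts the containments to the scaled basepoints $k\vect{x}_i$. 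Because $\vect{C}$ is full-dimensional in $\vect{V}$ and $\vect{L}$ is full-rank in $\vect{V}$, a standard check gives $(\vect{C}\cap\vect{L}) - (\vect{C}\cap\vect{L}) = \vect{L}$, so I can pick $\vect{w}_1,\vect{w}_2 \in \vect{C}\cap\vect{L}$ with $\vect{w}_1 - \vect{w}_2 = k(\vect{x}_2 - \vect{x}_1)$ and set $\vect{y} := k\vect{x}_1 + \vect{w}_1 = k\vect{x}_2 + \vect{w}_2$. Then $\vect{y} + \vect{C}\cap\vect{L}$ is contained in $\vect{P}$ via the first expression for $\vect{y}$ and in $\vect{P}'$ via the second, as required.

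Granted the structural lemma, all four conclusions follow. For (1), $\vect{y} + \vect{C}\cap\vect{L}$ has dimension $\dim \vect{V} = \dim \vect{P} = \dim \vect{P}'$ and lies inside $\vect{P} \cap \vect{P}' \subseteq \vect{P}$, pinning all three sets at that dimension. The $\subseteq$ inclusion of (3) is routine from $(\vect{P}\cap\vect{P}')-(\vect{P}\cap\vect{P}') \subseteq \vect{L}$ and monotonicity of $\overline{\mathbb{Q}_{\geq 0}(\cdot)}$; for the reverse, a point $\vect{v} \in \Fill(\vect{P}) \cap \Fill(\vect{P}')$ sits in $\vect{L}$, which the structural lemma identifies with a subset of $(\vect{P}\cap\vect{P}') - (\vect{P}\cap\vect{P}')$, and in $\overline{\mathbb{Q}_{\geq 0}\vect{P}} \cap \overline{\mathbb{Q}_{\geq 0}\vect{P}'}$, which equals $\overline{\mathbb{Q}_{\geq 0}(\vect{P}\cap\vect{P}')}$ by Lemma \ref{LemmaDirectionsContainInterior} combined with (2). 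For (2), only $\supseteq$ requires work: given $\vect{d} \in \dir(\vect{P})\cap\dir(\vect{P}')$, I enlarge a small interior cone of $\dir(\vect{P})\cap\dir(\vect{P}')$ by adjoining $\vect{d}$ as an extra generator to obtain a cone $\vect{C}$ satisfying the structural lemma's hypotheses with $\vect{d} \in \vect{C}$; a positive integer multiple of $\vect{d}$ then lies in $\vect{C}\cap\vect{L}$, producing an infinite line in direction $\vect{d}$ inside $\vect{y} + \vect{C}\cap\vect{L} \subseteq \vect{P}\cap\vect{P}'$. Finally (4) splits into two parts: asymptotic definability of $\vect{P}\cap\vect{P}'$ follows from (2) together with definability of each $\dir(\vect{P}^{(\prime)})$, while well-directedness is obtained by a two-stage subsequence extraction, applying well-directedness of $\vect{P}$ and then of $\vect{P}'$ to produce a subsequence whose differences lie in $\dir(\vect{P})\cap\dir(\vect{P}') = \dir(\vect{P}\cap\vect{P}')$.
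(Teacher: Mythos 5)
Your proposal is correct, but it is organized quite differently from the paper's proof. You derive everything from a single structural lemma: a common shifted full periodic witness \(\vect{y}+(\vect{C}\cap\vect{L})\subseteq \vect{P}\cap\vect{P}'\), where \(\vect{C}\subseteq\dir(\vect{P})\cap\dir(\vect{P}')\) is a full-dimensional finitely generated cone and \(\vect{L}=(\vect{P}-\vect{P})\cap(\vect{P}'-\vect{P}')\); you merge the two basepoints coming from Proposition \ref{PropositionAlmostPeriodicityPeriodicSet} by scaling so that \(k(\vect{x}_2-\vect{x}_1)\in\vect{L}\) and splitting this lattice vector as a difference of two elements of \(\vect{C}\cap\vect{L}\). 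The paper instead proves the four items separately: item (1) by a dimension chain through the interiors of the closed cones, item (2) by the auxiliary fact that for periodic \(\vect{P}\cap\vect{P}'\subseteq\vect{P}\) of equal dimension every point of \(\vect{P}\) can be pushed into \(\vect{P}\cap\vect{P}'\) (Lemma \ref{LemmaSameDimensionPeriodicSets}), and item (3) by separate lemmas for the lattice and the closed cone. Your route buys a more unified argument, and your lattice half of (3) (via \((\vect{C}\cap\vect{L})-(\vect{C}\cap\vect{L})=\vect{L}\)) is arguably cleaner than the paper's; the paper's route avoids constructing the witness and keeps each step elementary.

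Two steps in your sketch are true but under-justified relative to what you cite. First, the full-dimensionality of \(\dir(\vect{P})\cap\dir(\vect{P}')\) does not follow merely from the hypothesis, since \(\Fill(\vect{P})\not\subseteq\dir(\vect{P})\) in general; you need \(\interior(\overline{\Q_{\geq 0}\vect{P}})\cap\interior(\overline{\Q_{\geq 0}\vect{P}'})\subseteq\dir(\vect{P})\cap\dir(\vect{P}')\) (Lemma \ref{LemmaDirectionsContainInterior}) together with the fact that the boundaries are lower-dimensional, which is exactly the paper's argument for item (1). Second, in item (3) the identity \(\overline{\Q_{\geq 0}\vect{P}}\cap\overline{\Q_{\geq 0}\vect{P}'}=\overline{\Q_{\geq 0}(\vect{P}\cap\vect{P}')}\) is not a consequence of Lemma \ref{LemmaDirectionsContainInterior} plus (2) alone: these reduce it to \(\overline{\dir(\vect{P})}\cap\overline{\dir(\vect{P}')}\subseteq\overline{\dir(\vect{P})\cap\dir(\vect{P}')}\), which fails for general cones and must be justified using that both cones are convex, definable and that their intersection is full-dimensional (a standard convexity fact, or the paper's matrix argument in Lemma \ref{LemmaConesIntersectCorrectly}). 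With these two points spelled out, your proof is complete.
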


\section{Petri sets and Hybridizations} \label{Sec:NewPeriodicityProperty}


We introduce the remaining classes of sets used in our main result: Petri sets and sets admitting a hybridization.
Petri sets were introduced in \cite{Leroux11,Leroux12,Leroux13}. Hybridizations are a novel notion, and 
play a fundamental role in our main result. 

\subsection{Petri sets}

Leroux introduced almost semilinear sets and developed their theory in \cite{Leroux12,Leroux13}. Intuitively, 
they generalize semilinear sets by replacing linear sets with smooth periodic sets.

\begin{definition}{\cite{Leroux12,Leroux13}}
A set \(\vect{X}\) is \emph{almost linear} if \(\vect{X}=\vect{b}+\vect{P}\), where \(\vect{b}\in \mathbb{N}^n\) and \(\vect{P}\) is a smooth periodic set, and \emph{almost semilinear} if it is a finite union of almost linear sets.
\end{definition}


It was shown in \cite{Leroux12,Leroux13} that VAS reachability sets are almost semilinear. However,
it is easy to find almost semilinear sets that are not reachability sets of any VAS. Intuitively, the definition of a smooth periodic set only restricts the ``asymptotic behavior'' of the set, which can be ``simple'' even if the set itself is very ``complex''.

\begin{example}
Let \(\vect{Y} \subseteq \N_{>0}\) be any set. Then \(\vect{P}:=\{(0,0)\} \cup (\{1\} \times \vect{Y}) \cup \N_{>1}^2\) is a smooth periodic set;  indeed, \(\vect{P}\) contains a line in every direction, and is thus well-directed and asymptotically definable.
So \(\vect{P}\)  is almost semilinear. 
 \label{ExampleUglyPeriodicSet}
\end{example}

A way to eliminate at least some of these sets is to require that every intersection of the set with a semilinear set is still almost semilinear, a property enjoyed by all VAS reachability sets. For instance, assume that
in Example \ref{ExampleUglyPeriodicSet} the set \(\vect{Y}\) is not almost semilinear. Since the intersection of $\vect{P}$ and the linear set \((1,0)+(0,1) \cdot \N\) is equal to \(\vect{Y}\), we can eliminate $\vect{P}$. This idea leads to the notion of a Petri set.

\begin{definition}{\cite{Leroux12,Leroux13}}
A set \(\vect{X}\) is called a \emph{Petri set} if every intersection \(\vect{X}\cap \vect{S}\) with a semilinear set \(\vect{S}\) is almost semilinear. 
\end{definition}

All smooth periodic sets shown so far are also Petri sets. To see that the positive examples are indeed Petri sets we can use the following strong theorem from \cite{Leroux13}.

\begin{theorem}
\cite[Theorem IX.1]{Leroux13} Reachability sets of VAS are Petri sets. \label{TheoremVASPetriSets}
\end{theorem}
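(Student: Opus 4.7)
The plan is to reduce the Petri-set property of reachability sets to the prior theorem of \cite{Leroux12} that every VAS reachability set is itself almost semilinear, by showing that the class of VAS reachability sets is essentially closed under intersection with semilinear sets.

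Given a VAS $\mathcal{V}$ with reachability set $\vect{X}\subseteq\N^d$ and a semilinear set $\vect{S}$, I would first decompose $\vect{S}=\bigcup_{i=1}^r \vect{L}_i$ into finitely many full linear sets $\vect{L}_i=\vect{b}_i+\vect{P}_i$ with $\vect{P}_i$ finitely generated, say by $\vect{q}_{i,1},\ldots,\vect{q}_{i,k_i}$. Since almost semilinear sets are closed under finite unions, it suffices to prove that $\vect{X}\cap \vect{L}_i$ is almost semilinear for each $i$.

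Fix a linear set $\vect{L}=\vect{b}+\vect{P}$ with generators $\vect{q}_1,\ldots,\vect{q}_k$ of $\vect{P}$. The key step is to construct an auxiliary VAS $\mathcal{V}'$ of higher dimension, together with a coordinate projection $\pi$, such that $\pi(\Reach(\mathcal{V}'))=\vect{X}\cap \vect{L}$. The standard construction extends $\mathcal{V}$ with fresh ``certificate'' counters and transitions which nondeterministically accumulate, on a shadow copy of the main counters, an element $\vect{b}+\sum_j n_j \vect{q}_j$ of $\vect{L}$, and couple it to the main counters in such a way that only when the shadow copy matches the current main configuration can the certificate phase complete. Once $\mathcal{V}'$ is in hand, I invoke the almost-semilinearity of $\Reach(\mathcal{V}')$ from \cite{Leroux12}, and conclude by showing that almost semilinearity descends under coordinate projection: the image of an $\FO(\Q,+,\geq)$-definable cone under a rational projection remains definable, yielding asymptotic definability, and well-directedness descends by lifting a sequence in the projected set to one in the original smooth set and applying Definition \ref{def:smooth} before projecting back.

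The main obstacle is the invocation of the almost-semilinearity theorem for VAS reachability sets, whose proof rests on the KLMST decomposition and constitutes the technical heart of Leroux's framework. A subtler point is that VAS transitions are monotonic, so the coupling in the certificate phase cannot literally check equality of two counters; one instead uses well-known tricks based on weak zero-tests and auxiliary ``flag'' counters that enforce a phase discipline on the run, combined with the monotonicity of reachability, to ensure that only correctly certified configurations contribute to $\pi(\Reach(\mathcal{V}'))$ and that no element of $\vect{X}\cap \vect{L}$ is missed.
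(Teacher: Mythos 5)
There is a genuine gap, and it sits exactly where you wave it away. Your whole reduction hinges on building a plain VAS $\mathcal{V}'$ with $\pi(\Reach(\mathcal{V}'))=\vect{X}\cap\vect{L}$ \emph{exactly}, but VAS have no zero tests and no equality tests, and the ``well-known tricks'' (weak zero-tests, flag counters, monotonicity) only work when the question is reachability of a \emph{single designated configuration}: one arranges that the target is reachable if{}f the run was honest. Here you need to carve out a subset of the whole reachability set, and the reachability set of $\mathcal{V}'$ unavoidably contains every intermediate and every ``cheating'' configuration --- runs that never enter the certificate phase, or enter it and stop before the shadow counters are exhausted. After projecting away the auxiliary coordinates you cannot distinguish these from the certified ones, so generically $\pi(\Reach(\mathcal{V}'))\supsetneq \vect{X}\cap\vect{L}$ (in the naive two-copy construction the projection already contains all of $\vect{X}$). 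No gadget repairs this without a genuine zero test, so the reduction to ``reachability sets are almost semilinear'' does not go through. A second, smaller gap: your projection step only shows that $\pi(\dir(\vect{P}))$ is definable, whereas asymptotic definability of $\pi(\vect{P})$ requires definability of $\dir(\pi(\vect{P}))$, and you only get the inclusion $\pi(\dir(\vect{P}))\subseteq\dir(\pi(\vect{P}))$, which can be strict; well-directedness does descend by your lifting argument, but the definability half is unproven.

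For comparison: the paper does not prove this statement at all --- it is imported as \cite[Theorem IX.1]{Leroux13}. In Leroux's proof the intersection with a semilinear set is not handled by a VAS-to-VAS reduction; it is absorbed into the KLMST decomposition itself (perfect marked graph transition sequences carry semilinear constraints on their configurations), and the smooth periodic sets arise from the production relations of the perfect components. So the semilinear filtering happens \emph{inside} the decomposition machinery, which is precisely what avoids the equality-test obstruction that breaks your construction. If you want a self-contained argument, you would have to re-run that machinery; the theorem cannot be obtained as a black-box corollary of almost semilinearity of reachability sets plus closure under projection.
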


Many sets of the form $\{ (x, y) \mid  y \leq f(x)\}$ for convex \(f\), or $\{ (x, y) \mid  y \geq f(x)\}$ for concave \(f\), and boolean combinations thereof, are VAS reachability sets, and hence Petri sets.

\subsection{Hybridizations}


Given a Petri set $\vect{X} \subseteq \N^n$, it would be very useful to be able to partition $\N^n$ into finitely many semilinear regions $\vect{S}_1, \ldots, \vect{S}_k$ such that the sets $\vect{S_i} \cap \vect{X}$ have a simpler structure. 
In particular, we would like $\vect{S_i} \ \cap \vect{X}$ to be almost linear. Unfortunately, for some Petri sets no such partition exists
(example is in Appendix \ref{SectionNoAlmostLinearPartition}). We replace almost linearity by a slightly weaker notion for which the partition always exists: having a  hybridization
(Definition \ref{def:hybridization}). 



\begin{figure}[t]
\begin{minipage}{4.6cm}
\begin{tikzpicture}
\begin{axis}[
    xlabel={ },
    ylabel={ },
    xmin=0, xmax=4,
    ymin=0, ymax=16,
    xtick={0,1,2,3,4},
    ytick={0,4,9,16},
    ymajorgrids=true,
    xmajorgrids=true,
]

\addplot+[
    name path=AA,
    color=black,
    no marks,
    domain=0:4,
    ]
    {x^2+1};
    
\addplot+[
    name path=BA,
    color=black,
    no marks,
    ]
    coordinates {
    (0,1)(4,1)
    };
    
\addplot[
    color=blue!40,
]
fill between[of=AA and BA];

\addplot+[
    name path=AB,
    color=black,
    no marks,
    domain=1.5:4,
    ]
    {(x-1.5)^2+3};
    
\addplot+[
    name path=BB,
    color=black,
    no marks,
    ]
    coordinates {
    (1.5,3)(4,3)
    };
    
\addplot[
    color=red!40,
    domain=1.5:4,
]
fill between[of=AB and BB];

%
%
    
\end{axis}
\end{tikzpicture}
\end{minipage}%
\begin{minipage}{4.6cm}
\begin{tikzpicture}
\begin{axis}[
    xlabel={ },
    ylabel={ },
    xmin=0, xmax=12,
    ymin=0, ymax=12,
    xtick={0,2,4,6,8,10,12},
    ytick={0,2,4,6,8,10,12},
    ymajorgrids=true,
    xmajorgrids=true,
]

\addplot[
    fill=blue,
    fill opacity=0.5,
    only marks,
    ]
    coordinates {
    (0,0)(0,3)(0,4)(0,5)(0,6)(0,7)(0,8)(0,9)(0,10)(0,11)(0,12)(1,4)(1,5)(1,6)(1,7)(1,8)(1,9)(1,10)(1,11)(1,12)(2,5)(2,6)(2,7)(2,8)(2,9)(2,10)(2,11)(2,12)(3,5)(3,6)(3,7)(3,8)(3,9)(3,10)(3,11)(3,12)(4,6)(4,7)(4,8)(4,9)(4,10)(4,11)(4,12)(5,6)(5,7)(5,8)(5,9)(5,10)(5,11)(5,12)(6,6)(6,7)(6,8)(6,9)(6,10)(6,11)(6,12)(7,6)(7,7)(7,8)(7,9)(7,10)(7,11)(7,12)(8,7)(8,8)(8,9)(8,10)(8,11)(8,12)(9,7)(9,8)(9,9)(9,10)(9,11)(9,12)(10,7)(10,8)(10,9)(10,10)(10,11)(10,12)(11,7)(11,8)(11,9)(11,10)(11,11)(11,12)(12,7)(12,8)(12,9)(12,10)(12,11)(12,12)
    };
    
\addplot[
    fill=green,
    fill opacity=0.5,
    only marks,
    ]
    coordinates {
    (1,0)(1,1)(2,0)(2,1)(2,2)(2,3)(2,4)(3,0)(3,1)(3,2)(3,3)(3,4)(4,0)(4,1)(4,2)(4,3)(4,4)(4,5)(5,0)(5,1)(5,2)(5,3)(5,4)(5,5)(6,0)(6,1)(6,2)(6,3)(6,4)(6,5)(7,0)(7,1)(7,2)(7,3)(7,4)(7,5)(8,0)(8,1)(8,2)(8,3)(8,4)(8,5)(8,6)(9,0)(9,1)(9,2)(9,3)(9,4)(9,5)(9,6)(10,0)(10,1)(10,2)(10,3)(10,4)(10,5)(10,6)(11,0)(11,1)(11,2)(11,3)(11,4)(11,5)(11,6)(12,0)(12,1)(12,2)(12,3)(12,4)(12,5)(12,6)
    };
    
\addplot[
    domain=0:12,
    color=red,
]
{log2(x+1)+3};

\addplot[
    domain=0:12,
    color=red,
]
{x^2};
    
\end{axis}
\end{tikzpicture}
\end{minipage}
\begin{minipage}{4.6cm}
\begin{tikzpicture}
\begin{axis}[
    xlabel={ },
    ylabel={ },
    xmin=0, xmax=12,
    ymin=0, ymax=12,
    xtick={0,2,4,6,8,10,12},
    ytick={0,2,4,6,8,10,12},
    ymajorgrids=true,
    xmajorgrids=true,
]

\addplot[
    fill=blue,
    fill opacity=0.5,
    only marks,
    ]
    coordinates {
    (1,1)(2,2)(3,2)(3,3)(4,3)(4,4)(5,3)(5,4)(5,5)(6,3)(6,4)(6,5)(6,6)(7,3)(7,4)(7,5)(7,6)(7,7)(8,4)(8,5)(8,6)(8,7)(8,8)(9,4)(9,5)(9,6)(9,7)(9,8)(9,9)(10,4)(10,5)(10,6)(10,7)(10,8)(10,9)(10,10)(11,4)(11,5)(11,6)(11,7)(11,8)(11,9)(11,10)(11,11)(12,4)(12,5)(12,6)(12,7)(12,8)(12,9)(12,10)(12,11)(12,12)
    };
    
\addplot[
    fill=green,
    fill opacity=0.5,
    only marks,
    ]
    coordinates {
    (0,0)(1,0)(2,0)(3,0)(4,0)(5,0)(6,0)(7,0)(8,0)(9,0)(10,0)(11,0)(12,0)
    };
    
\addplot[
    color=red,
    ]
    coordinates {
    (0,0)(12,12)
    };
    
\end{axis}
\end{tikzpicture}
\end{minipage}

\caption{
\textit{Left}: An almost linear set $\vect{X} = \vect{b} + \vect{P}$ with \(\vect{b}=(0,1)\) and \(\vect{P}=\{(x,y) \mid y \leq x^2\}\) (in blue). The property \(\vect{X}+\vect{P}\subseteq \vect{X}\) implies that  the ``translation'' of $\vect{X}$ to \emph{any} point in the set (shown in brown for a particular point) is included in the set. \label{FigureIntuitionAlmostHybridlinear} \newline
\textit{Middle}: The two smooth periodic sets \(\vect{P}_1:=\{(x,y) \mid y \geq \log_2(x+1)+3\} \cup \{(0,0)\}\) in blue and \(\vect{P}_2:=\{(x,y) \mid y \leq x^2\}\) in green. Their union is almost hybridlinear, but not almost linear. \newline
\textit{Right}: The smooth periodic sets \(\vect{P}_1:=\{(x,y)\mid x \geq y \geq \log_2(x+1)\}\) and \(\vect{P}_2:=\{(1,0)\}^{\ast}\). The union \(\vect{X}\) does not have a hybridization, since \(\vect{P}=\{(0,0)\}\) is the only possibility to fulfill \(\vect{X}+\vect{P} \subseteq \vect{X}\).}
\end{figure}

A set is almost linear if there exists a vector $\vect{b}$ and a smooth periodic set $\vect{P}$ such that $\vect{X} = \vect{b} + \vect{P}$. The following definition is equivalent: There exists a vector $\vect{b}$ and a smooth periodic set $\vect{P}$ such that \(\vect{b} \in \vect{X}\) and \(\vect{X}+\vect{P} \subseteq \vect{X} \subseteq \vect{b} + \vect{P}\).

We weaken this condition by requiring only the existence of a vector $\vect{b}$ and a smooth periodic set $\vect{P}$ such that \(\vect{X}+\vect{P} \subseteq \vect{X} \subseteq \vect{b} + \Fill(\vect{P})\).

That is, we drop the condition $\vect{b} \in \vect{X}$, and replace $\vect{P}$ on the right by the possibly larger set $\Fill(\vect{P})$. (For example, the periodic sets on the left of Figure \ref{FigureIntuitionAlmostHybridlinear} as well as in the middle satisfy \(\Fill(\vect{P})=\N^2\)). We then call the set $\vect{b} + \Fill(\vect{P})$ a hybridization of $\vect{X}$. The formal definition is as follows, where for technical reasons we also introduce weak hybridizations. 

\begin{definition}
\label{def:hybridization}
Let \(\vect{X} \subseteq \mathbb{N}^n\) be non-empty. A set $\vect{H}$ is a \emph{weak hybridization} of $\vect{X}$ if there exists a finite set $\vect{B} \subseteq \N^n$ and a smooth periodic set $\vect{P}$ such that \(\vect{H}=\vect{B}+ \Fill(\vect{P})\) and \(\vect{X} + \vect{P} \subseteq \vect{X} \subseteq \vect{H}$. If \(\vect{B}=\{\vect{b}\}\), then \(\vect{H}\) is a \emph{hybridization} of $\vect{X}$. 
\end{definition}

\begin{remark}
A full linear weak hybridization does not guarantee existence of a hybridization. For example \(\vect{X}=1+3 \N \cup 2+3\N\) has weak hybridization \(\vect{H}=\{0,1,2\}+3\N = \N\). However, since \(\vect{X}\) does not contain any points congruent to \(0\) modulo \(3\), any periodic set \(\vect{P}\) fulfilling \(\vect{X}+\vect{P} \subseteq \vect{X}\) has to fulfill \(\vect{P} \subseteq 3\N\). Hence \(\vect{B}\) cannot be chosen as a singleton. \label{RemarkHybridizationRepresentationDependent}
\end{remark}

It follows from this definition that almost linear sets have hybridizations. The reason for the name (weak) hybridization is that the set $\vect{H}$ is always hybridlinear, a notion introduced in \cite{ginsburg1963bounded} by Ginsburg and Spanier and later studied in \cite{ChistikovH16} by Chistikov and Haase. We recall the definition for future reference.

\begin{definition}
A set \(\vect{H} \subseteq \N^n\) is \emph{hybridlinear} if \(\vect{H}= \vect{B} + \vect{P}\) for some finite set \(\vect{B}\) and some finitely generated periodic set \(\vect{P}\subseteq \N^n\).
\end{definition}

We end this section with a characterization of the sets that admit weak hybridizations. 

\begin{definition}
\label{def:almosthybridlinear}
A non-empty set \(\vect{X} \subseteq \N^n\) is \emph{almost hybridlinear} if there exist \(\vect{b}_1, \dots, \vect{b}_r \in \mathbb{N}^n\) and smooth \(\vect{P}_1, \dots, \vect{P}_r\) with \(\vect{X}=\bigcup_{i=1}^r \vect{b}_i + \vect{P}_i\), such that \(\Fill(\vect{P}_i)=\Fill(\vect{P}_j)\) for all \(i,j\). 
\end{definition}


\begin{restatable}{theorem}{TheoremEquivalentAlmostHybridlinearCondition}
A non-empty Petri set \(\vect{X} \subseteq \N^n\) is almost hybridlinear if and only if it has a weak hybridization. 
\label{TheoremEquivalentAlmostHybridlinearCondition}
\end{restatable}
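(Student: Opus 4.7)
The plan is to prove the two directions of the equivalence separately; the forward implication is a short consequence of Proposition~\ref{PropositionIntersectionPeriodicSets}, while the backward implication is the substantive one and crucially uses the Petri assumption.

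For the forward direction, I start from \(\vect{X} = \bigcup_{i=1}^r (\vect{b}_i + \vect{P}_i)\) with \(\Fill(\vect{P}_i) = \Fill(\vect{P}_j)\) for all \(i,j\). The natural candidate for a weak hybridization is \(\vect{H} := \vect{B} + \Fill(\vect{P})\) with \(\vect{B} := \{\vect{b}_1,\dots,\vect{b}_r\}\) and \(\vect{P} := \bigcap_{i=1}^r \vect{P}_i\). Since all fills coincide, every pairwise intersection of fills has the same dimension, so Proposition~\ref{PropositionIntersectionPeriodicSets} applies inductively and yields that \(\vect{P}\) is smooth with \(\Fill(\vect{P}) = \Fill(\vect{P}_1)\). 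The inclusion \(\vect{X} + \vect{P} \subseteq \vect{X}\) then follows because adding \(\vect{p} \in \vect{P} \subseteq \vect{P}_i\) to a point of \(\vect{b}_i + \vect{P}_i\) stays in that piece, and \(\vect{X} \subseteq \vect{H}\) is immediate from \(\vect{P}_i \subseteq \Fill(\vect{P}_i) = \Fill(\vect{P})\). Note that this direction does not use the Petri property.

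For the backward direction, let \(\vect{H} = \vect{B} + \Fill(\vect{P})\) be a weak hybridization with \(\vect{B} = \{\vect{b}_1,\dots,\vect{b}_r\}\). I set \(\vect{X}_i := \vect{X} \cap (\vect{b}_i + \Fill(\vect{P}))\). Because \(\Fill(\vect{P})\) is full and hence finitely generated by Proposition~\ref{prop:Fillisfinitelygenerated}, the set \(\vect{b}_i + \Fill(\vect{P})\) is linear, and the Petri assumption makes \(\vect{X}_i\) almost semilinear; I write \(\vect{X}_i = \bigcup_j (\vect{c}_{i,j} + \vect{Q}_{i,j})\) with each \(\vect{Q}_{i,j}\) smooth. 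Applying Lemma~\ref{LemmaFullThenLessPeriods} to \(\vect{c}_{i,j} + \vect{Q}_{i,j} \subseteq \vect{b}_i + \Fill(\vect{P})\) gives \(\vect{Q}_{i,j} \subseteq \Fill(\vect{P})\), but the fills \(\Fill(\vect{Q}_{i,j})\) may still be strictly smaller than \(\Fill(\vect{P})\), which would prevent the decomposition from already being almost hybridlinear.

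To normalize the fills I replace each \(\vect{Q}_{i,j}\) by \(\vect{Q}'_{i,j} := \vect{Q}_{i,j} + \vect{P}\). The identity \(\vect{X} = \bigcup_{i,j} (\vect{c}_{i,j} + \vect{Q}'_{i,j})\) is preserved: each enlarged piece lies in \(\vect{X}\) by \(\vect{X} + \vect{P} \subseteq \vect{X}\), and contains the original piece since \(\vect{0} \in \vect{P}\). From \(\vect{P} \subseteq \vect{Q}'_{i,j} \subseteq \Fill(\vect{P}) + \vect{P} \subseteq \Fill(\vect{P})\) one easily reads off \(\Fill(\vect{Q}'_{i,j}) = \Fill(\vect{P})\), so all new fills coincide. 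The main obstacle is then verifying that \(\vect{Q}'_{i,j}\) is itself smooth. For well-directedness I use a two-stage Ramsey extraction: given a sequence \(\vect{r}_m = \vect{q}_m + \vect{p}_m \in \vect{Q}'_{i,j}\), first thin using well-directedness of \(\vect{Q}_{i,j}\) so that \(\vect{q}\)-differences lie in \(\dir(\vect{Q}_{i,j})\), then thin again using well-directedness of \(\vect{P}\) so that \(\vect{p}\)-differences lie in \(\dir(\vect{P})\); the \(\vect{r}\)-differences then lie in \(\dir(\vect{Q}_{i,j}) + \dir(\vect{P}) \subseteq \dir(\vect{Q}'_{i,j})\). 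Asymptotic definability follows because \(\dir(\vect{Q}_{i,j}) \subseteq \overline{\dir(\vect{P})}\) (from \(\vect{Q}_{i,j} \subseteq \Fill(\vect{P})\) together with Lemma~\ref{LemmaDirectionsContainInterior}), so \(\dir(\vect{Q}'_{i,j}) = \dir(\vect{Q}_{i,j}) + \dir(\vect{P})\) is sandwiched between the definable cones \(\dir(\vect{P})\) and \(\overline{\dir(\vect{P})}\) and inherits a definable description by quantifier elimination in \(\FO(\Q,+,\geq)\), completing the proof.
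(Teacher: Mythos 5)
Your overall route is the paper's: the forward direction intersects the \(\vect{P}_i\) and applies Proposition~\ref{PropositionIntersectionPeriodicSets}, and the backward direction intersects \(\vect{X}\) with each \(\vect{b}_i+\Fill(\vect{P})\), invokes the Petri property, and repairs the fills by adding \(\vect{P}\) to every smooth piece; your computation \(\vect{P}\subseteq\vect{Q}'_{i,j}\subseteq\Fill(\vect{P})\) together with Lemma~\ref{LemmaFullAfterFill} correctly gives \(\Fill(\vect{Q}'_{i,j})=\Fill(\vect{P})\). The gap is in the smoothness of \(\vect{Q}'_{i,j}=\vect{Q}_{i,j}+\vect{P}\), exactly where the paper invokes Lemma~\ref{LemmaSumPreservesSmooth}. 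Your well-directedness argument is fine, but for asymptotic definability you assert \(\dir(\vect{Q}'_{i,j})=\dir(\vect{Q}_{i,j})+\dir(\vect{P})\) without proof: the inclusion \(\supseteq\) is easy, but \(\subseteq\) is the nontrivial content of that lemma and needs its own two-stage extraction (decompose the points of a line \(\vect{x}+\N\cdot\lambda\vect{d}\subseteq\vect{Q}_{i,j}+\vect{P}\) as \(\vect{q}_m+\vect{p}_m\) and thin twice using well-directedness of both summands, as in the appendix proof of that lemma); you only ran this extraction to get well-directedness, not the direction equality.

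Moreover, the justification you do offer---that \(\dir(\vect{Q}'_{i,j})\) is sandwiched between \(\dir(\vect{P})\) and \(\overline{\dir(\vect{P})}\) and therefore inherits definability---is not a valid inference: a cone squeezed between a definable cone and its closure need not be definable. For instance, take \(\dir(\vect{P})=\{\vect{0}\}\cup\interior(\Q_{\geq 0}^3)\); the cone \(\{\vect{0}\}\cup\interior(\Q_{\geq 0}^3)\cup\{(x,y,0)\mid x\geq\sqrt{2}\,y\geq 0\}\) lies between it and its closure \(\Q_{\geq 0}^3\), but it is not definable in \(\FO(\Q,+,\geq)\). What does work is the quantifier-elimination half of your sentence: once the equality \(\dir(\vect{Q}'_{i,j})=\dir(\vect{Q}_{i,j})+\dir(\vect{P})\) is actually established, the sum of the two definable cones \(\dir(\vect{Q}_{i,j})\) and \(\dir(\vect{P})\) is definable via the existential formula for the sum, so the sandwich is not needed at all. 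The fix is therefore to prove that equality by the two-stage argument (or simply cite Lemma~\ref{LemmaSumPreservesSmooth}); with that, your proof coincides with the paper's.
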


This theorem helps  to find examples of non-trivial hybridizations (i.e. not of type \(\vect{P}\) has hybridization \(\Fill(\vect{P})\)).  For example \([(0,1)+\vect{P}_1] \cup [(0,6)+\vect{P}_2]\) for \(\vect{P}_1=\{(x,y) \in \N^2 \mid y \leq x^2\}\) and \(\vect{P}_2=\{(x,y) \in \N^2 \mid y \geq \log_2(x+1)\}\) has weak hybridization \(\N^2\), since \(\Fill(\vect{P}_1)=\Fill(\vect{P}_2)=\N^2\). This is very similar to the middle of Figure \ref{FigureIntuitionAlmostHybridlinear}. On the other hand, in the right of Figure \ref{FigureIntuitionAlmostHybridlinear} the smooth periodic sets barely intersect, and then the union is usually not almost hybridlinear.

\section{Proof of Theorem \ref{TheoremFinalPartition}} \label{Sec:PetriSetsAndDimension}


In this section we prove Theorem \ref{TheoremFinalPartition}. The algorithm and its proof will refine the partition in three steps, respectively described in Section \ref{subsec:ArbitraryPartition}, Section \ref{subsec:FullLinearPartition} and Section \ref{SectionReducibility}: During the first two steps the sets $\vect{X} \cap \vect{S}_i$ are not required to be irreducible, and in addition after the first step, the \(\vect{S}_i\) are allowed to be weak hybridizations instead of hybridizations.

\subsection{Existence of a Hybridlinear Partition}
\label{subsec:ArbitraryPartition}

We collect five important properties of (weak) hybridizations in Proposition \ref{PropositionPropertiesOfHybridization}. Then, we use these properties to formulate a procedure for producing a partition \(\vect{S}=\vect{S}_1 \cup \dots \cup \vect{S}_k\) of sets, not necessarily full linear, satisfying the properties of Theorem \ref{TheoremFinalPartition} except for irreducibility. The procedure is described in Figure \ref{fig:Partition}. It is effective for VAS reachability sets, but not in general. 

We start by reminding that the class of hybridlinear sets is closed under intersection. 

\begin{restatable}{lemma}{LemmaIntersectionLinearSets}
\cite[Lemma 7.8]{Leroux09} Let \(\vect{b}_1+\vect{Q}_1\) and \(\vect{b}_2+\vect{Q}_2\) be linear sets. Then \((\vect{b}_1+\vect{Q}_1) \cap (\vect{b}_2+\vect{Q}_2)=\vect{B}+(\vect{Q}_1 \cap \vect{Q}_2)\) for some finite \(\vect{B}\). \label{LemmaIntersectionLinearSets}
\end{restatable}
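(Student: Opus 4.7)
The plan is to reduce the geometric intersection to a question about nonnegative integer solutions of a linear Diophantine equation, and then invoke Dickson's lemma. Let $\vect{Q}_1 = \{\vect{p}_1^{(1)}, \dots, \vect{p}_{m_1}^{(1)}\}^\ast$ and $\vect{Q}_2 = \{\vect{p}_1^{(2)}, \dots, \vect{p}_{m_2}^{(2)}\}^\ast$ be finitely generated periodic sets. A vector $\vect{x}$ lies in the intersection iff there exist $(\vect{a},\vect{c}) \in \N^{m_1} \times \N^{m_2}$ with $\vect{b}_1 + \sum_i a_i \vect{p}_i^{(1)} = \vect{x} = \vect{b}_2 + \sum_j c_j \vect{p}_j^{(2)}$. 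Thus the intersection is the image, under the map $\Phi(\vect{a},\vect{c}) := \vect{b}_1 + \sum_i a_i \vect{p}_i^{(1)}$, of the set
\[ E := \bigSetBuilder{(\vect{a},\vect{c}) \in \N^{m_1+m_2}}{\sum\nolimits_i a_i \vect{p}_i^{(1)} - \sum\nolimits_j c_j \vect{p}_j^{(2)} = \vect{b}_2 - \vect{b}_1}. \]
If $E = \emptyset$ we take $\vect{B}=\emptyset$ and are done, so suppose $E \ne \emptyset$.

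Next I would apply Dickson's lemma: the componentwise-minimal elements $M \subseteq E$ form a finite set. For any $(\vect{a},\vect{c}) \in E$, pick some $(\vect{a}',\vect{c}') \in M$ with $(\vect{a}',\vect{c}') \le (\vect{a},\vect{c})$; then $(\vect{a}-\vect{a}',\vect{c}-\vect{c}') \in \N^{m_1+m_2}$ satisfies the homogeneous equation $\sum_i (a_i-a'_i)\vect{p}_i^{(1)} = \sum_j (c_j-c'_j)\vect{p}_j^{(2)}$. Writing $E_0$ for the set of such homogeneous solutions, we obtain the decomposition $E = M + E_0$.

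The last step is to translate this back to the geometric side. Define $\vect{B} := \Phi(M)$; this is finite. Under $\Phi$, the image of any homogeneous solution $(\vect{a}',\vect{c}') \in E_0$ equals $\sum_i a'_i \vect{p}_i^{(1)} = \sum_j c'_j \vect{p}_j^{(2)}$, which lies in $\vect{Q}_1 \cap \vect{Q}_2$; conversely every element of $\vect{Q}_1 \cap \vect{Q}_2$ arises this way by definition of the generating sets. Therefore
\[ (\vect{b}_1 + \vect{Q}_1) \cap (\vect{b}_2 + \vect{Q}_2) \;=\; \Phi(E) \;=\; \Phi(M) + \Phi(E_0) \;=\; \vect{B} + (\vect{Q}_1 \cap \vect{Q}_2), \]
which is the desired equality. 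The converse inclusion $\vect{B} + (\vect{Q}_1 \cap \vect{Q}_2) \subseteq (\vect{b}_1 + \vect{Q}_1) \cap (\vect{b}_2 + \vect{Q}_2)$ is immediate: elements of $\vect{B}$ lie in the intersection (they are images of elements of $E$), and adding an element of $\vect{Q}_1 \cap \vect{Q}_2$ keeps us inside both $\vect{b}_i + \vect{Q}_i$.

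There is no serious obstacle here: once one observes that the intersection equation is a linear Diophantine system in nonnegative variables, Dickson's lemma does all the work. The only mild subtlety is to keep track of the two parametrizations $\vect{a}$ and $\vect{c}$ of the same point simultaneously, which is handled cleanly by coupling them into a single solution tuple $(\vect{a},\vect{c})$.
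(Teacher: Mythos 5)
Your proof is correct: the paper does not prove this lemma itself (it cites Leroux's Lemma 7.8), and your argument — coupling the two parametrizations into one nonnegative Diophantine system, taking the finitely many minimal solutions via Dickson's lemma for \(\vect{B}\), and observing that the homogeneous solutions map exactly onto \(\vect{Q}_1 \cap \vect{Q}_2\) — is the standard argument behind that citation. The only blemish is notational: \(\Phi\) applied to a homogeneous solution formally includes the offset \(\vect{b}_1\), so the final decomposition should use the linear part of \(\Phi\) on \(E_0\), which is clearly what you intend.
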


\begin{proposition}\label{PropositionPropertiesOfHybridization}
The following statements hold:
\begin{enumerate}
\item[1)] If \(\vect{H}\) is a weak hybridization of \(\vect{X}\), then \(\dim(\vect{X})=\dim(\vect{H})\).
\item[2)] If \(\vect{H}\) is a weak hybridization of \(\vect{X}\) and \(\vect{L}=\vect{b}+\vect{Q}\) full linear s.t. \(\dim(\vect{H} \cap \vect{L})=\dim(\vect{H})=\dim(\vect{L})\), then \(\vect{H} \cap \vect{L}\) is a weak hybridization for \(\vect{X} \cap \vect{L}\), or \(\vect{X} \cap \vect{L}\) is empty.
\item[3)] If \(\vect{H}\) is a (weak) hybridization for both \(\vect{X}_1\) and \(\vect{X}_2\), then \(\vect{H}\) is a (weak) hybridization for \(\vect{X}_1 \cup \vect{X}_2\).
\item[4)] For every Petri set \(\vect{X}\) and semilinear \(\vect{S}\) there is a partition \(\vect{X} \cap \vect{S}=\vect{X}_1 \cup \dots \cup \vect{X}_r\) of $\vect{X} \cap \vect{S}$ such that every \(\vect{X}_i\) has a (true) hybridization \(\vect{L}_i\).
\item[5)] If $\vect{X}$ is the reachability set of a VAS, then the set $\{\vect{L}_1, \ldots, \vect{L}_r\}$ of hybridizations of part 4) is computable.
\end{enumerate}
\end{proposition}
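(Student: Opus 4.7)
The plan is to handle the five claims in order, since (2) and (3) use (1) and the construction from (4) feeds (5). For (1), writing $\vect{H} = \vect{B} + \Fill(\vect{P})$ with $\vect{P}$ smooth, Lemma \ref{BasicDimensionProperties} (finite unions and translates preserve dimension) and Lemma \ref{LemmaFromJerome} give $\dim(\vect{H}) = \dim(\Fill(\vect{P})) = \dim(\vect{P})$; monotonicity yields $\dim(\vect{X}) \leq \dim(\vect{H})$, while fixing any $\vect{x} \in \vect{X}$ the inclusion $\vect{x} + \vect{P} \subseteq \vect{X}$ gives the reverse inequality. For (2), write $\vect{L} = \vect{b} + \vect{Q}$ with $\vect{Q}$ full (hence smooth), apply Lemma \ref{LemmaIntersectionLinearSets} to each pair $(\vect{b}_i + \Fill(\vect{P}), \vect{b} + \vect{Q})$, and union to get $\vect{H} \cap \vect{L} = \vect{B}' + (\Fill(\vect{P}) \cap \vect{Q})$ for some finite $\vect{B}'$. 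The dimension hypothesis translates into $\dim(\Fill(\vect{P}) \cap \vect{Q}) = \dim(\Fill(\vect{P})) = \dim(\vect{Q})$, so Proposition \ref{PropositionIntersectionPeriodicSets} produces a smooth $\vect{P} \cap \vect{Q}$ with $\Fill(\vect{P} \cap \vect{Q}) = \Fill(\vect{P}) \cap \vect{Q}$. The periodicity $(\vect{X} \cap \vect{L}) + (\vect{P} \cap \vect{Q}) \subseteq (\vect{X} + \vect{P}) \cap (\vect{L} + \vect{Q}) \subseteq \vect{X} \cap \vect{L}$, using $\vect{L} + \vect{Q} = \vect{L}$, completes the weak hybridization data whenever $\vect{X} \cap \vect{L}$ is non-empty.

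Item (3) is the main obstacle, because the representation of $\vect{H}$ as $\vect{B} + \Fill(\vect{P})$ is not unique and the two hypothesized hybridizations may use genuinely different periodic sets $\vect{P}_1, \vect{P}_2$. I would take $\vect{P}_3 := \vect{P}_1 \cap \vect{P}_2$; the inclusion $(\vect{X}_1 \cup \vect{X}_2) + \vect{P}_3 \subseteq \vect{X}_1 \cup \vect{X}_2$ is immediate. The heart of the argument is to show $\vect{H} = \vect{B}_3 + \Fill(\vect{P}_3)$ for some finite $\vect{B}_3$. By (1) applied to both hybridizations, $\dim(\Fill(\vect{P}_j)) = \dim(\vect{H})$ for $j=1,2$, while the third dimension $\dim(\Fill(\vect{P}_1) \cap \Fill(\vect{P}_2)) = \dim(\vect{H})$ I would obtain by fixing $\vect{b}_1 \in \vect{B}_1$ and decomposing $\vect{b}_1 + \Fill(\vect{P}_1) = \bigcup_{\vect{b}_2 \in \vect{B}_2}(\vect{b}_1 + \Fill(\vect{P}_1)) \cap (\vect{b}_2 + \Fill(\vect{P}_2))$: the left side has dimension $\dim(\vect{H})$, some term on the right must reach it, and Lemma \ref{LemmaIntersectionLinearSets} rewrites each term as a finite shift of $\Fill(\vect{P}_1) \cap \Fill(\vect{P}_2)$. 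Proposition \ref{PropositionIntersectionPeriodicSets} then makes $\vect{P}_3$ smooth with $\Fill(\vect{P}_3) = \Fill(\vect{P}_1) \cap \Fill(\vect{P}_2)$, and re-running the same rewriting across every coset of $\Fill(\vect{P}_1)$ that covers $\vect{H}$ assembles a finite $\vect{B}_3$. For the true-hybridization case the base points collapse: $\vect{B}_j = \{\vect{b}_j\}$ and $\vect{b}_1 \in \vect{H} = \vect{b}_2 + \Fill(\vect{P}_2) \subseteq \vect{b}_2 + \N^n$ give $\vect{b}_1 \geq \vect{b}_2$ componentwise, and symmetrically, so $\vect{b}_1 = \vect{b}_2$ and $\Fill(\vect{P}_1) = \Fill(\vect{P}_2) = \Fill(\vect{P}_3)$; the common singleton base then works.

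Item (4) follows from the Petri property: $\vect{X} \cap \vect{S}$ is almost semilinear, hence a finite union of almost linear pieces $\vect{b}_i + \vect{P}_i$ with smooth $\vect{P}_i$, and each such piece has the true hybridization $\vect{b}_i + \Fill(\vect{P}_i)$ directly from the definition. A genuine disjoint partition is then obtained by standard set-theoretic refinement of this cover, using (2) and (3) to maintain a hybridization on each component. For (5), when $\vect{X}$ is a VAS reachability set the almost semilinear decomposition of $\vect{X} \cap \vect{S}$ is effectively computable via the constructive content of Theorem \ref{TheoremVASPetriSets} from \cite{Leroux13}; given generators of each $\vect{P}_i$ the lattice $\vect{P}_i - \vect{P}_i$ and the closed cone $\overline{\Q_{\geq 0}\vect{P}_i}$ are computable, producing $\Fill(\vect{P}_i)$ and hence each $\vect{L}_i$.
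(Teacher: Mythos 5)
Your treatment of parts 1)--3) is essentially the paper's own argument: part 1) via Lemmas \ref{BasicDimensionProperties} and \ref{LemmaFromJerome} together with $\vect{x}+\vect{P}\subseteq\vect{X}$, part 2) via Lemma \ref{LemmaIntersectionLinearSets} plus Proposition \ref{PropositionIntersectionPeriodicSets}, and part 3) via $\vect{P}_1\cap\vect{P}_2$ after rewriting $\vect{H}$ as a finite set of shifts of $\Fill(\vect{P}_1)\cap\Fill(\vect{P}_2)$ (the paper does this as $\vect{H}=\vect{H}\cap\vect{H}$); you additionally make the dimension hypothesis of Proposition \ref{PropositionIntersectionPeriodicSets} explicit and spell out the singleton-base case for true hybridizations, both of which are fine. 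In part 4) your core step --- the almost semilinear decomposition of $\vect{X}\cap\vect{S}$ into pieces $\vect{b}_i+\vect{P}_i$, each with true hybridization $\vect{b}_i+\Fill(\vect{P}_i)$ --- is exactly the paper's proof. Note, however, that the paper's ``partition'' is just this finite decomposition; no disjointness is claimed or ever used. Your extra disjointification step is therefore unnecessary, and as justified it would not work: parts 2) and 3) give you intersections with full linear sets and unions sharing a hybridization, not set differences $\vect{X}_i\setminus\vect{X}_j$, so ``standard set-theoretic refinement'' does not obviously preserve the existence of hybridizations.

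The genuine gap is part 5). You claim computability ``via the constructive content'' of Theorem \ref{TheoremVASPetriSets} and then argue that, given generators of each $\vect{P}_i$, one can compute $\vect{P}_i-\vect{P}_i$ and $\overline{\Q_{\geq 0}\vect{P}_i}$ and hence $\vect{L}_i$. But the smooth periodic sets $\vect{P}_i$ are in general \emph{not} finitely generated --- this is the whole point, since $\vect{X}$ may be non-semilinear --- so there are no generators to be given, and Theorem \ref{TheoremVASPetriSets} is a purely existential statement that does not by itself yield an algorithm producing the $\vect{L}_i$. Effectiveness is precisely the non-trivial content of 5), and the paper obtains it through the KLMST decomposition: the reachability set is the union of the reachability sets of finitely many perfect MGTSs, and for each perfect MGTS one can effectively construct a system of linear equations whose solution set is a hybridization of the corresponding piece, which moreover can be taken full linear by \cite[Lemma 5.1]{Leroux09}. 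Without invoking this (or an equivalent effective decomposition result), your part 5) does not establish computability.
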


\begin{proof}
For proofs 1) and 2), write \(\vect{H}:=\vect{B}+\Fill(\vect{P})\), where \(\vect{P}\) is smooth and \(\vect{X}+\vect{P} \subseteq \vect{X}\).

\smallskip\noindent 1): This follows from the properties of dimension in Lemmas \ref{BasicDimensionProperties} and \ref{LemmaFromJerome}. In particular, \(\dim(\vect{P})=\dim(\vect{V})\), where \(\vect{V}\) is the vector space generated by \(\vect{P}\), also implies \(\dim(\vect{P})=\dim(\Fill(\vect{P}))\). Hence \(\vect{X} \subseteq \vect{H}\) implies \(\dim(\vect{X}) \leq \dim(\vect{P})\). Since \(\vect{X}\) is non-empty, \(\vect{X}+\vect{P} \subseteq \vect{X}\) implies \(\dim(\vect{X}) \geq \dim(\vect{P})\).

\smallskip\noindent 2): By Lemma \ref{LemmaIntersectionLinearSets}, \(\vect{H} \cap \vect{L}=\vect{F}+(\Fill(\vect{P}) \cap \vect{Q})\) for some finite set \(\vect{F}\). By Proposition \ref{PropositionIntersectionPeriodicSets}, we have that \(\vect{P} \cap \vect{Q}\) is smooth and \(\Fill(\vect{P} \cap \vect{Q})=\Fill(\vect{P}) \cap \Fill(\vect{Q})=\Fill(\vect{P}) \cap \vect{Q}\). We have \(\vect{X} \cap \vect{L} \subseteq \vect{H} \cap \vect{L}\). We also have \((\vect{X} \cap \vect{L})+(\vect{P} \cap \vect{Q}) \subseteq \vect{X} + \vect{P} \subseteq \vect{X}\) and \((\vect{X} \cap \vect{L})+ (\vect{P} \cap \vect{Q}) \subseteq \vect{L}+\vect{Q} \subseteq \vect{L}\), hence \(\vect{H} \cap \vect{L}\) is a weak hybridization of \(\vect{X} \cap \vect{L}\).

\smallskip\noindent 3): Write \(\vect{B}_1+\Fill(\vect{P}_1)=\vect{H}=\vect{B}_2+\Fill(\vect{P}_2)\), where \(\vect{P}_1\) for \(\vect{X}_1\) and \(\vect{P}_2\) for \(\vect{X}_2\) are as in the definition of weak hybridization. By Lemma \ref{LemmaIntersectionLinearSets}, we have \(\vect{H}=\vect{H} \cap \vect{H}=\vect{F}+[\Fill(\vect{P}_1) \cap \Fill(\vect{P}_2)]\) for some finite set \(\vect{F}\). Define \(\vect{P}:=\vect{P}_1 \cap \vect{P}_2\) and \(\vect{X}:=\vect{X}_1 \cup \vect{X}_2\). By Proposition \ref{PropositionIntersectionPeriodicSets}, \(\vect{P}\) is smooth and \(\Fill(\vect{P})=\Fill(\vect{P}_1) \cap \Fill(\vect{P}_2)\). We also have \(\vect{X}+\vect{P} \subseteq \vect{X}\).

\smallskip\noindent 4): Since \(\vect{X}\) is a Petri set, \(\vect{X} \cap \vect{S}\) is almost semilinear, and can hence be written as \(\vect{X}=\bigcup_{i=1}^r \vect{b}_i + \vect{P}_i\) for smooth periodic sets \(\vect{P}_i \subseteq \N^n\) and points \(\vect{b}_i \in \N^n\). Every \(\vect{X}_i:=\vect{b}_i+\vect{P}_i\) is by definition almost hybridlinear with hybridization \(\vect{b}_i+\Fill(\vect{P}_i)\), which is a full linear set.

\smallskip\noindent 5): 4) can be computed using the Kosaraju-Lambert-Mayr-Sacerdote-Tenney (KLMST) decomposition \cite{Kosaraju82, Lambert92, Leroux09, LerouxS19}. The KLMST decomposition constructs a finite set of VASS-like objects, called perfect marked graph transition sequences or perfect MGTSs, such that the set of reachable configurations of the VAS is the union of the sets of reachable configurations of the perfect MGTSs. Further, for every perfect MGTS one can effectively construct a set of linear equations satisfying the following property: the set of solutions of the equation system is a hybridization of the set of reachable configurations of the perfect MGTS. The set of solutions of a system of linear equations is always hybridlinear. Moreover, for the systems derived from MGTSs one can show that the set has a full linear hybridization (e.g. \cite[Lemma 5.1]{Leroux09}). This gives us the desired hybridizations \(\vect{L}_1, \dots, \vect{L}_r\). \footnote{While Hauschildt already used the KLMST decomposition in \cite{Hauschildt90} in 1990, it took until 2019 \cite{LerouxS15, LerouxS19} to fully understand the theoretical aspects behind the algorithm and its complexity of Ackermann.}
\end{proof}

\begin{figure}[t]
\rule{\textwidth}{0.1cm}
Partition$(\vect{X}, \vect{S})$. Input: Petri set \(\vect{X}\) and semilinear set \(\vect{S}\):

\medskip

1) If \(\vect{S}\) is empty, return \(\vect{S}\). If $\vect{S}$ is not full, compute a partition $\vect{S}_1, \ldots, \vect{S}_r$ of $\vect{S}$ into full linear sets, return $\bigcup_{i=1}^r  \text{Partition}(\vect{X}, \vect{S}_i)$ and stop.

\smallskip
 
Otherwise, compute the set $\mathcal{L} =\{\vect{L}_1, \ldots, \vect{L}_r\}$ of hybridizations of the partition \(\vect{X}_1 \cup \dots \cup \vect{X}_r\) of $\vect{X} \cap \vect{S}$ given by Proposition \ref{PropositionPropertiesOfHybridization}(4), and move to step 2).
 
\smallskip
 
\emph{Remark: This step is not effective for arbitrary Petri sets, but it is effective for VAS reachability sets by Proposition \ref{PropositionPropertiesOfHybridization}(5).}

\smallskip

If $r=0$, i.e., if $\vect{X} \cap \vect{S}$ is empty, then return $\vect{S}$ and stop. Otherwise, move to step 2).

\medskip

2) For every $\vect{L}_i\in \mathcal{L}$ compute a decomposition $\mathcal{K}_i$ of $\vect{L}_i^C \cap \vect{S}$ into full linear sets, where $\vect{L}_i^C$ is the complement of $\vect{L}_i$, and move to step 3).

\medskip

3) Let $\mathcal{M}$ be the set of tuples  
$(\vect{M}_1, \ldots, \vect{M}_r) \in \left( \{ \vect{L}_1\} \cup \mathcal{K}_1 \right)  \times \cdots \times \left( \{ \vect{L}_r\} \cup \mathcal{K}_r \right).$

For every $M \in \mathcal{M}$, let $\vect{S}_{M} := \vect{S} \cap \vect{M}_1 \cap \cdots \cap \vect{M}_r$.
 
 \smallskip
 
\emph{Remark: $\{\vect{S}_{M} \mid M \in \mathcal{M} \}$ is a partition of $\vect{S}$.}

 \smallskip
 
For every $M \in \mathcal{M}$, define $P_{M}$ as follows: If $\dim(\vect{S}_{M}) < \dim(\vect{S})$, then $P_{M} := \text{Partition}(\vect{X}, \vect{S}_{M})$, otherwise $P_{M} := \{\vect{S}_{M}\}$. Output $\bigcup_{M \in \mathcal{M}} P_{M}$. 

\rule{\textwidth}{0.1cm}
\caption{The procedure Partition$(\vect{X}, \vect{S})$.}
\label{fig:Partition}
\end{figure}

\begin{proposition}
Let $\vect{X}$ be a Petri set and let $\vect{S}$ be a semilinear set. 
Partition$(\vect{X}, \vect{S})$ produces a partition \(\vect{S}=\vect{S}_1 \cup \dots \cup \vect{S}_k\) into pairwise disjoint hybridlinear sets (not necessarily full linear) such that for every \(i\) the set \(\vect{X} \cap \vect{S}_i\) is either empty or has weak hybridization \(\vect{S}_i\). Further, if $\vect{X}$ is the reachability set of a VAS, then the partition is computable. \label{PropositionAlmostHybridlinearPartition}
\end{proposition}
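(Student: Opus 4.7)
The plan is to proceed by strong induction on $\dim(\vect{S})$, mirroring the procedure. The base case (empty $\vect{S}$) is trivial, and the case where $\vect{S}$ is not full linear is handled by step~1 itself: decompose $\vect{S}$ into pairwise disjoint full linear sets (by the standard unambiguous representation of semilinear sets) and recurse on each piece, unioning the outputs. So I may assume $\vect{S}$ is full linear. Step~1 then produces, via Proposition~\ref{PropositionPropertiesOfHybridization}(4), a partition $\vect{X} \cap \vect{S} = \bigcup_{i=1}^r \vect{X}_i$ together with full linear hybridizations $\vect{L}_i$ of the~$\vect{X}_i$ (effective for VAS reachability sets by Proposition~\ref{PropositionPropertiesOfHybridization}(5)). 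Step~2 partitions each semilinear $\vect{L}_i^C \cap \vect{S}$ into pairwise disjoint full linear sets $\mathcal{K}_i$, and step~3 forms $\vect{S}_M := \vect{S} \cap \vect{M}_1 \cap \cdots \cap \vect{M}_r$ for each $M \in \mathcal{M}$. Since for every coordinate $i$ the family $\{\vect{L}_i \cap \vect{S}\} \cup \mathcal{K}_i$ is a disjoint cover of $\vect{S}$, the family $\{\vect{S}_M\}_{M \in \mathcal{M}}$ is a pairwise disjoint partition of $\vect{S}$, and each $\vect{S}_M$ is hybridlinear by iterated application of Lemma~\ref{LemmaIntersectionLinearSets}.

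It then remains to verify the weak hybridization claim on each piece. For $\vect{S}_M$ with $\dim(\vect{S}_M) < \dim(\vect{S})$ the procedure recurses, and the induction hypothesis supplies a disjoint hybridlinear partition of $\vect{S}_M$ with the desired property. So I focus on an $M$ with $\dim(\vect{S}_M) = \dim(\vect{S})$, assuming $\vect{X} \cap \vect{S}_M \neq \emptyset$ (the empty case being trivial). Set $I := \{i : \vect{M}_i = \vect{L}_i\}$. For $i \notin I$ I observe that $\vect{X}_i \cap \vect{S}_M \subseteq \vect{X}_i \cap \vect{L}_i^C = \emptyset$, because $\vect{X}_i \subseteq \vect{L}_i$; hence $\vect{X} \cap \vect{S}_M = \bigcup_{i \in I}(\vect{X}_i \cap \vect{S}_M)$. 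For each $i \in I$ with $\vect{X}_i \cap \vect{S}_M \neq \emptyset$, I iterate Proposition~\ref{PropositionPropertiesOfHybridization}(2), starting from the weak hybridization $\vect{L}_i$ of $\vect{X}_i$ and intersecting in sequence with the full linear sets $\vect{S}, \vect{M}_1, \ldots, \vect{M}_r$. The terminal intersection on the $\vect{H}$-side equals $\vect{L}_i \cap \vect{S}_M = \vect{S}_M$ (using $\vect{S}_M \subseteq \vect{L}_i$ for $i \in I$), while on the $\vect{X}$-side it collapses to $\vect{X}_i \cap \vect{S}_M$ (using $\vect{X}_i \subseteq \vect{S}$). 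Combining across $i \in I$ via Proposition~\ref{PropositionPropertiesOfHybridization}(3) then yields that $\vect{S}_M$ is a weak hybridization of $\vect{X} \cap \vect{S}_M$.

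The delicate point, and the main obstacle I expect, is verifying the dimension hypothesis $\dim(\vect{H} \cap \vect{L}) = \dim(\vect{H}) = \dim(\vect{L})$ of Proposition~\ref{PropositionPropertiesOfHybridization}(2) at every stage of the iteration. The argument is a squeeze: on one hand $\vect{S}_M$ is contained in every intermediate intersection (since $\vect{S}_M \subseteq \vect{L}_i$ for $i \in I$ and $\vect{S}_M \subseteq \vect{S} \cap \vect{M}_1 \cap \cdots \cap \vect{M}_r$), so each has dimension at least $\dim(\vect{S})$; on the other hand every intermediate set sits inside $\vect{L}_i$, whose dimension is $\dim(\vect{X}_i) \leq \dim(\vect{S})$ by Proposition~\ref{PropositionPropertiesOfHybridization}(1), forcing equality with $\dim(\vect{S})$. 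The same squeeze shows that each intersected $\vect{M}_k$ has dimension $\dim(\vect{S})$: the containment $\vect{S}_M \subseteq \vect{M}_k$ gives the lower bound, and either $\vect{M}_k \subseteq \vect{S}$ (for $\vect{M}_k \in \mathcal{K}_k$) or the squeeze through $\vect{L}_k$ (for $\vect{M}_k = \vect{L}_k$) gives the upper bound. Termination of the recursion is immediate from the strict decrease of $\dim(\vect{S})$, and computability for VAS reachability sets follows from Proposition~\ref{PropositionPropertiesOfHybridization}(5) in step~1 together with standard effective operations on semilinear sets (intersection, complement, decomposition into disjoint full linear pieces) in steps~2--3.
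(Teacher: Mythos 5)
Your proposal is correct and follows essentially the same route as the paper's own proof: it mirrors the procedure, establishes via the same dimension squeeze that every intersected set $\vect{M}_k$ (and each intermediate intersection) has dimension $\dim(\vect{S})$, iterates Proposition~\ref{PropositionPropertiesOfHybridization}(2) to get $\vect{S}_M$ as a weak hybridization of each $\vect{X}_i \cap \vect{S}_M$ for the coordinates where $\vect{L}_i$ was chosen, discards the others as empty, and combines with Proposition~\ref{PropositionPropertiesOfHybridization}(3). The only differences are presentational (explicit strong induction on $\dim(\vect{S})$ and a slightly more detailed bookkeeping of the intermediate dimension conditions), not mathematical.
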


\begin{proof}
The procedure is depicted in Figure \ref{fig:Partition}, in addition we give an intuitive description of it: In Step 1) we first partition \(\vect{S}\) into full linear sets and consider them separately. So assume that $\vect{S}$ is a full linear set. The procedure uses Proposition \ref{PropositionPropertiesOfHybridization}(5) to compute a set of full linear hybridizations $\vect{L}_1, \ldots, \vect{L}_r$ of a partition $\vect{X}_1 \cup \cdots \cup \vect{X}_r$ of $\vect{X} \cap \vect{S}$. Step 2) considers all possible sets obtained by picking for each $i \in \{1, \ldots, r\}$ either the set $\vect{L}_i$ or a linear set of its complement (its complement is semilinear, and so a finite union of linear sets), and intersecting all of them. The procedure adds all the sets having full dimension to the output partition, and does a recursive call on the others.

Every step can be performed: The set $\mathcal{L}$ of Step 1 exists by Proposition \ref{PropositionPropertiesOfHybridization}(4). To check the dimension of a semilinear set \(\vect{S}=\bigcup_{j=1}^r \vect{b}_j+\vect{F}_j^{\ast}\), which is needed in step 3), we use Lemma \ref{LemmaFromJerome} to obtain that for \(\vect{F}_j^{\ast}\) this is simply the rank of the generator matrix, and by Lemma \ref{BasicDimensionProperties} we have \(\dim(\vect{S})=\max_j \dim(\vect{F}_j^{\ast})\).

\smallskip\noindent \textit{Termination:} Partition(\(\vect{X},\vect{S}\)) only performs a recursive call if \(\vect{S}\) is not a full linear set or on semilinear sets \(\vect{S}'\) with \(\dim(\vect{S}')<\dim(\vect{S})\), hence recursion depth is at most \(2\dim(\vect{S})+1\) and termination immediate.

\smallskip\noindent \textit{Correctness:} The proof obligation for correctness is that for every \(M=(\vect{M}_1, \dots, \vect{M}_r) \in \mathcal{M}\), where \(S_M\) fulfills \(\dim(\vect{S}_M)=\dim(\vect{S})\), \(\vect{X} \cap \vect{S}_M\) is either empty or has \(\vect{S}_M\) as weak hybridization. Therefore fix such \(M\). 

\begin{claim}
\(\dim(\vect{M}_j)=\dim(\vect{S})\) for all \(j\).
\end{claim}

\begin{proof}[Proof of Claim]\(\geq \dim(\vect{S})\) follows since all these sets contain \(\vect{S}_M\), which fulfills \(\dim(\vect{S}_M) = \dim(\vect{S})\). For the other direction, to prove ``\(\leq\)'' for \(j\) where we choose \(\vect{L}_j\) we have \(\dim(\vect{S}) \geq \dim(\vect{X} \cap \vect{S})= \max_j \dim(\vect{L}_j)\) by Proposition \ref{PropositionPropertiesOfHybridization}. For other \(j\) we use \(\vect{L}_j^C \cap \vect{S} \subseteq \vect{S}\).
\end{proof}

The claim allows us to use Proposition \ref{PropositionPropertiesOfHybridization}(2). Let \(\vect{X}_j\) be such that \(\vect{X} \cap \vect{S}=\bigcup_{j=1}^r \vect{X}_j\) and \(\vect{X}_j\) has hybridization \(\vect{L}_j\). By applying Proposition \ref{PropositionPropertiesOfHybridization}(2) enough times, for every \(j\) with \(\vect{M}_j=\vect{L}_j\), we obtain that \(\vect{X}_j \cap \vect{S}_M\) has weak hybridization \(\vect{S}_M\). This does not depend on \(j\) because intersecting with \(\vect{L}_j\) twice does not change the set. For all other \(j\) we have \(\vect{X}_j \cap \vect{S}_M=\emptyset\), since we intersect with the complement of an overapproximation. Hence \(\vect{X} \cap \vect{S}_M=\bigcup_{j, \vect{M}_j=\vect{L}_j} (\vect{X}_j \cap \vect{S}_M)\) has weak hybridization \(\vect{S}_M\) by Proposition \ref{PropositionPropertiesOfHybridization}(3), or is empty if we never chose \(\vect{M}_j=\vect{L}_j\).
\end{proof}

\subsection{Existence of a Full Linear Partition} 
\label{subsec:FullLinearPartition}

We show that Proposition \ref{PropositionAlmostHybridlinearPartition} can be strengthened to make the sets $\vect{S}_i$ not only hybridlinear, but even full linear, in a way that the sets \(\vect{S}_i\) are actually (true) hybridizations.

\begin{proposition}
Let \(\vect{X}\) be a Petri set. For every semilinear set \(\vect{S}\) there exists a partition \(\vect{S}=\vect{S}_1 \cup \dots \cup \vect{S}_k\) of $\vect{S}$ into pairwise disjoint full linear sets such that for every \(i\) the set \(X \cap \vect{S}_i\) is either empty or has hybridization \(\vect{S}_i\). Further, if $\vect{X}$ is the reachability set of a VAS, then the partition is computable. \label{PropositionFullLinearAlmostHybridlinearPartition}
\end{proposition}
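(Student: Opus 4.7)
The plan is to induct on $\dim(\vect{S})$, refining the hybridlinear partition from Proposition \ref{PropositionAlmostHybridlinearPartition}. The base $\dim(\vect{S})=0$ is immediate: every singleton is a full linear set with trivial periodic part $\{\vect{0}\}$, which is its own hybridization when it lies in $\vect{X}$.

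For the inductive step, first apply Proposition \ref{PropositionAlmostHybridlinearPartition} to obtain a hybridlinear partition $\vect{S}=\vect{S}_1 \cup \cdots \cup \vect{S}_k$. Pieces $\vect{S}_i$ with $\vect{X} \cap \vect{S}_i = \emptyset$ are refined into pairwise disjoint full linear sets by any standard semilinear procedure (each inherits the empty-intersection condition). For pieces where $\vect{X} \cap \vect{S}_i$ has weak hybridization $\vect{S}_i = \vect{B}_i + \Fill(\vect{P}_i)$, I perform a \emph{which-translate} refinement: for each non-empty $T \subseteq \vect{B}_i$, let
\[
\vect{S}_i^T := \bigcap_{\vect{b} \in T}\bigl(\vect{b}+\Fill(\vect{P}_i)\bigr) \;\cap\; \bigcap_{\vect{b} \in \vect{B}_i\setminus T}\bigl(\vect{b}+\Fill(\vect{P}_i)\bigr)^C.
\]
These partition $\vect{S}_i$ into semilinear sets, with each $\vect{S}_i^T$ contained in every $\vect{b}+\Fill(\vect{P}_i)$ for $\vect{b} \in T$. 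Then decompose each $\vect{S}_i^T$ into pairwise disjoint full linear sets $\vect{L}=\vect{c}+\vect{Q}$ by standard semilinear decomposition.

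For each such full linear piece $\vect{L}$, Lemma \ref{LemmaFullThenLessPeriods} applied to $\vect{L}\subseteq \vect{b}+\Fill(\vect{P}_i)$ (which is permitted since $\Fill(\vect{P}_i)$ is full by Proposition \ref{prop:Fillisfinitelygenerated}) forces $\vect{Q} \subseteq \Fill(\vect{P}_i)$. If $\dim(\vect{L})=\dim(\vect{S}_i)$, set $\vect{P}':= \vect{P}_i \cap \vect{Q}$: the dimension hypothesis of Proposition \ref{PropositionIntersectionPeriodicSets} is met because $\vect{Q}\subseteq \Fill(\vect{P}_i)$ forces $\Fill(\vect{P}_i)\cap \vect{Q}=\vect{Q}$ (of the correct dimension), so $\vect{P}'$ is smooth with $\Fill(\vect{P}')=\vect{Q}$. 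Since $\vect{P}'\subseteq \vect{P}_i$ and $\vect{X}\cap \vect{L} \subseteq \vect{X}\cap \vect{S}_i$, the hybridization property of $\vect{P}_i$ yields $(\vect{X}\cap \vect{L})+\vect{P}'\subseteq \vect{X}$, and $\vect{L}+\vect{Q}=\vect{L}$ keeps the image inside $\vect{L}$; thus $\vect{L}=\{\vect{c}\}+\Fill(\vect{P}')$ is a true hybridization of $\vect{X}\cap \vect{L}$ (or $\vect{X}\cap \vect{L}$ is empty). If instead $\dim(\vect{L})<\dim(\vect{S}_i)\le \dim(\vect{S})$, invoke the induction hypothesis on $(\vect{X},\vect{L})$. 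Assembling all outputs yields the desired partition of $\vect{S}$.

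The main obstacle is upgrading the multiple-offset weak hybridization from Proposition \ref{PropositionAlmostHybridlinearPartition} to single-offset (true) hybridizations on each full linear refinement; the $T$-partition is precisely what enables this, confining every full linear piece to a single translate $\vect{b}+\Fill(\vect{P}_i)$ so that Lemma \ref{LemmaFullThenLessPeriods} delivers $\vect{Q}\subseteq \Fill(\vect{P}_i)$ and the construction $\vect{P}'=\vect{P}_i\cap \vect{Q}$ works. Computability for VAS reachability sets follows because Proposition \ref{PropositionAlmostHybridlinearPartition} is effective, and all remaining ingredients---complements and intersections of semilinear sets, their decomposition into pairwise disjoint full linear sets, and dimension computation via Lemma \ref{LemmaFromJerome}---are effective.
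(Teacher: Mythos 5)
Your proposal is correct, but it follows a genuinely different route from the paper's. The paper's proof runs a subroutine on each hybridlinear piece $\vect{S}_i=\{\vect{c}_1,\dots,\vect{c}_r\}+\Fill(\vect{P})$: it groups the base points by congruence modulo the lattice $\vect{P}-\vect{P}$, peels off one full-dimensional translate $\vect{c}_j+\Fill(\vect{P})$ per class (certified as a true hybridization via Proposition \ref{PropositionPropertiesOfHybridization}(2), whose intersection representation is single-based), and recurses on the remainder, whose dimension drops by the common-dominating-point argument together with Lemma \ref{LemmaRemovingInnerConeReducesDimension}. You instead refine each piece by which translates of $\Fill(\vect{P}_i)$ contain a point (the $T$-cells), cut into arbitrary pairwise disjoint full linear cells $\vect{L}=\vect{c}+\vect{Q}$, and certify every full-dimensional cell directly by building the smooth witness $\vect{P}'=\vect{P}_i\cap\vect{Q}$, using Lemma \ref{LemmaFullThenLessPeriods} to get $\vect{Q}\subseteq\Fill(\vect{P}_i)$ and Proposition \ref{PropositionIntersectionPeriodicSets} to get smoothness and $\Fill(\vect{P}')=\vect{Q}$; lower-dimensional cells are handled by induction on dimension with a fresh call to Proposition \ref{PropositionAlmostHybridlinearPartition}. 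What each buys: the paper keeps the full-dimensional output pieces as translates of a single fill, needs only the mutual Main/Subroutine recursion, but pays with Lemma \ref{LemmaRemovingInnerConeReducesDimension} for the dimension drop; your version gets the dimension drop for free from the case split and argues single-basedness by explicit construction rather than by inspecting the proof of Proposition \ref{PropositionPropertiesOfHybridization}(2), at the cost of exponentially many $T$-cells, additional invocations of Proposition \ref{PropositionAlmostHybridlinearPartition} on lower-dimensional cells, and an emptiness test $\vect{X}\cap\vect{S}_i=\emptyset$ to decide which branch to take (decidable for VAS reachability sets via the reachability problem, as the paper notes, so this is only a small point you should state explicitly). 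Your implicit uses that full (finitely generated) periodic sets are smooth and that semilinear sets can be partitioned into disjoint full linear sets match assumptions the paper itself makes in Proposition \ref{PropositionPropertiesOfHybridization}(2) and in the procedure of Figure \ref{fig:Partition}, so they are not gaps.
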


\begin{proof} 
The main algorithm uses a subroutine with the same inputs and outputs as itself, but with the promise that \(\vect{X} \cap \vect{S}\) has weak hybridization \(\vect{S}\). We first describe the main algorithm, and then the subroutine.

Main algorithm: First apply Proposition \ref{PropositionAlmostHybridlinearPartition} to obtain a partition 
$\vect{S} = \vect{S}_1 \cup \cdots \cup \vect{S}_k$ into hybridlinear sets otherwise satisfying the conditions. Output \(\bigcup_{i=1}^k \text{Subroutine}(\vect{X}, \vect{S}_i)\).

Subroutine: If \(\vect{S}\) is already full linear, return \(\vect{S}\). Otherwise write \(\vect{S}=\{\vect{c}_1, \dots, \vect{c}_r\}+\Fill(\vect{P})\). Let \(j \sim k \iff \vect{c}_j-\vect{c}_k \in \Fill(\vect{P})-\Fill(\vect{P})=\vect{P}-\vect{P}\). Compute a system \(R\) of representatives for \(\sim\). For every \(i \in R\), define \(\vect{S}_i:=\vect{c}_i+\Fill(\vect{P})\). Define \(\vect{S}':=\vect{S} \setminus \bigcup_{i \in R} \vect{S}_i\) and output \(\{\vect{S}_i \mid i \in R\} \cup \text{MainAlgorithm}(\vect{X}, \vect{S}')\).

Termination: We prove that recursion depth \(\leq 2\dim(\vect{S})+1\) by proving \(\dim(\vect{S}') < \dim(\vect{S})\) in the subroutine. For every equivalence class \(C\) of \(\sim\), there exists \(\vect{c} \in \Z^n\) such that \(\vect{c_j}-\vect{c} \in \vect{P}\) for all \(j \in C\). To see this, fix some \(i \in C\), and write \(\vect{c}_j-\vect{c}_i=\vect{p}_j-\vect{p}_j' \in \vect{P}-\vect{P}\). Choose \(\vect{c}:=\vect{c}_i - \sum_{j \in C} \vect{p}_j'\).

Then \(\bigcup_{j \in C} \vect{c}_j+\Fill(\vect{P}) \subseteq \vect{c}+\Fill(\vect{P})\), and hence using Lemma \ref{LemmaRemovingInnerConeReducesDimension} we obtain 

\(\dim(\bigcup_{j \in C} \vect{c}_j+\Fill(\vect{P}) \setminus \vect{S}_i)\leq \dim(\vect{c}+\Fill(\vect{P}) \setminus \vect{c}_i+\Fill(\vect{P}))< \dim(\Fill(\vect{P})).\)

Correctness: The main algorithm is clearly correct if the subroutine is. In the subroutine, we have \(\vect{S}_i \cap \vect{S}_j=\emptyset\) since \(i \not \sim j\) for \(i,j \in R\). All \(\vect{S}_i\) are full linear by definition. Furthermore, \(\vect{X} \cap \vect{S}_i\) has weak hybridization \(\vect{H} \cap \vect{S}_i=\vect{S}_i\) by Proposition \ref{PropositionPropertiesOfHybridization}(2). To obtain that the hybridization is not weak, observe that Proposition \ref{PropositionPropertiesOfHybridization}(2) specifically shows that the intersection of the representations, which is the full linear representation of \(\vect{S}_i\), is a weak hybridization.
\end{proof}

\subsection{Reducibility of almost hybridlinear Sets} \label{SectionReducibility}

The final ingredient of our main result is reducibility. We name it after its counterpart in Hauschildt's PhD thesis \cite{Hauschildt90}.

\begin{definition}
A set \(\vect{X}\) with hybridization \(\vect{c}+\Fill(\vect{P})\) is \emph{reducible} if there exists \(\vect{x}\) such that \(\vect{x}+\Fill(\vect{P}) \subseteq \vect{X}\).
\end{definition}

In other words, \(\vect{X}\) is reducible if every large enough point of its hybridization is already in \(\vect{X}\). Observe that this does not follow from hybridization, as \(\Fill(\vect{P})\) is larger than \(\vect{P}\). Our usual examples of sets with hybridization are smooth periodic sets, these also illustrate reducibility: The set in the left of Figure \ref{FigureExamplesNonFinitelyGeneratedPeriodicSet} is reducible, while the middle is not. Another example of hybridization was in the middle of Figure \ref{FigureIntuitionAlmostHybridlinear}, this set is also reducible. In fact, whenever \(\vect{X}=\vect{b}+\vect{P}\), \(\vect{X}\) is reducible if and only if \(\dir(\vect{P})=\overline{\Q_{\geq 0}\vect{P}}\). Namely, use Proposition \ref{PropositionAlmostPeriodicityPeriodicSet} with \(\vect{F}\) the generators of \(\Fill(\vect{P})\). For other sets \(\vect{X}\), write \(\vect{X}=\bigcup_{i=1}^r \vect{b}_i + \vect{P}_i\) as almost hybridlinear set. Whether it is reducible again only depends on the cones \(\dir(\vect{P}_i)\), for a proof see the appendix. Since matrices for the definable cones \(\dir(\vect{P}_i)\) can in the case of VAS be determined using KLMST-decomposition \cite{Hauschildt90}, we obtain the following.

\begin{restatable}{theorem}{TheoremReducibilityIsDecidable}
\cite[even without promise]{Hauschildt90} The following problem is decidable.

Input: Reachability set \(\vect{R}\), represented via the transitions of the VASS, full linear set \(\vect{S}\). 

Promise: \(\vect{R} \cap \vect{S}\) has hybridization \(\vect{S}\).

Output: Is \(\vect{R} \cap \vect{S}\) reducible? \label{TheoremReducibilityIsDecidable}
\end{restatable}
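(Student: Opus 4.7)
The plan is to reduce reducibility of $\vect{R} \cap \vect{S}$ to a decidable cone-containment question about the definable direction cones of an almost hybridlinear decomposition. Writing $\vect{S} = \vect{c} + (\vect{L} \cap \vect{C})$ with $\vect{L}$ a lattice and $\vect{C}$ an f.g.\ cone, the promise together with Theorem \ref{TheoremEquivalentAlmostHybridlinearCondition} gives an almost hybridlinear decomposition
\[
\vect{R} \cap \vect{S} \;=\; \bigcup_{i=1}^{r} (\vect{b}_i + \vect{P}_i)
\]
with every $\Fill(\vect{P}_i) = \vect{L} \cap \vect{C}$; in particular $\overline{\Q_{\geq 0}\vect{P}_i} = \vect{C}$ for each $i$.

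The claim I would prove is that $\vect{R} \cap \vect{S}$ is reducible if and only if $\vect{C} \subseteq \bigcup_{i=1}^{r} \dir(\vect{P}_i)$. For the forward direction, fix $\vect{x}$ with $\vect{x} + (\vect{L} \cap \vect{C}) \subseteq \vect{R} \cap \vect{S}$ and any $\vect{d} \in \vect{C}$. Choosing $m \in \N_{>0}$ with $m\vect{d} \in \vect{L}$, the ray $\vect{x} + \N \cdot m\vect{d}$ lies in $\vect{x} + (\vect{L} \cap \vect{C})$, so by pigeonhole an infinite sub-ray sits in some $\vect{b}_i + \vect{P}_i$; well-directedness of $\vect{P}_i$ then extracts a further infinite subsequence whose pairwise differences, all positive multiples of $m\vect{d}$, lie in $\dir(\vect{P}_i)$, hence $\vect{d} \in \dir(\vect{P}_i)$. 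For the reverse direction, I would cover $\vect{C}$ by finitely many f.g.\ sub-cones $\vect{C}_1, \ldots, \vect{C}_s$, each contained in some $\dir(\vect{P}_{i(j)})$ (obtained by splitting along the defining hyperplanes of the $\dir(\vect{P}_i)$ and recursing on boundary facets in lower dimension). Then Proposition \ref{PropositionAlmostPeriodicityPeriodicSet} applied to generators of each full periodic set $\vect{C}_j \cap \vect{L}$ yields shifts $\vect{y}_j$ with $\vect{y}_j + (\vect{C}_j \cap \vect{L}) \subseteq \vect{P}_{i(j)}$, and a single base point $\vect{x}$ simultaneously lying in each $\vect{b}_{i(j)} + \vect{y}_j + \vect{P}_{i(j)}$ produces $\vect{x} + (\vect{L} \cap \vect{C}) \subseteq \vect{R} \cap \vect{S}$ by periodicity.

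Once the claim is established, decidability follows from Hauschildt's KLMST-based construction: for a VAS reachability set, each smooth piece $\vect{P}_i$ of the almost hybridlinear decomposition corresponds to a perfect MGTS, and KLMST effectively produces integer matrices $A_{1,i}, A_{2,i}$ such that $\dir(\vect{P}_i) \setminus \{\mathbf{0}\} = \{\vect{v} \in \VectorSpace(\vect{P}_i) : A_{1,i}\vect{v} > 0,\ A_{2,i}\vect{v} \geq 0\}$. The cone containment $\vect{C} \subseteq \bigcup_i \dir(\vect{P}_i)$ is then a universally quantified sentence over linear (in)equalities in FO($\Q, +, \geq$), hence decidable by quantifier elimination.

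I expect the reverse direction of the claim to be the main obstacle. First, producing a cover of $\vect{C}$ by f.g.\ sub-cones that truly sit inside the open-faced cones $\dir(\vect{P}_i)$ rather than merely in their closures requires careful treatment of boundary facets, which I plan to manage by induction on dimension. Second, one must extract a single base point $\vect{x}$ working simultaneously for all sub-cones $\vect{C}_j$; this amounts to showing that $\bigcap_j (\vect{b}_{i(j)} + \vect{y}_j + \vect{P}_{i(j)})$ is non-empty, which should follow from the common fill $\vect{L} \cap \vect{C}$ of the $\vect{P}_i$'s together with Proposition \ref{PropositionIntersectionPeriodicSets}, but needs some care to make rigorous.
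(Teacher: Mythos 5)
The fatal gap is your claimed characterization ``\(\vect{R}\cap\vect{S}\) is reducible iff \(\vect{C}\subseteq\bigcup_{i=1}^r\dir(\vect{P}_i)\)''. Only the forward direction is true; the reverse direction is false, and the obstacle you flagged (covering \(\vect{C}\) by f.g.\ sub-cones that sit inside the non-closed cones \(\dir(\vect{P}_i)\)) is not a technicality to be handled by induction on facets --- such a cover simply need not exist. The paper's own Appendix \ref{SectionNoAlmostLinearPartition} gives the counterexample: \(\vect{X}=\vect{P}_1\cup\vect{P}_2\) with \(\vect{P}_1=\{(x,y,z)\in\N^3\mid z\le xy\}\) and \(\vect{P}_2=\{(x,y,z)\in\N^3\mid y\le 2^x,\ x\le 2^y,\ x\le 2^z\}\) has hybridization \(\N^3\) and \(\dir(\vect{P}_1)\cup\dir(\vect{P}_2)=\Q_{\ge 0}^3\), yet it is not reducible (the points \((n,2^{2^n},2^{2^{2^n}})\) are missing). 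Concretely, the directions \((0,\varepsilon,1)\) lie only in \(\dir(\vect{P}_1)\) while their limit \((0,0,1)\) lies only in \(\dir(\vect{P}_2)\), so any finitely generated (hence closed) cone inside \(\dir(\vect{P}_1)\) containing them all would contain \((0,0,1)\), a contradiction. The correct criterion (Theorem \ref{TheoremEquivalentReducibilityCondition}) is strictly stronger than covering: the family \(\{\dir(\vect{P}_i)\}\) must admit a \emph{complete extraction}, i.e.\ f.g.\ cones \(\vect{C}_i\subseteq\dir(\vect{P}_i)\) with \(\bigcup_i\vect{C}_i=\bigcup_i\dir(\vect{P}_i)\). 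This also breaks your decidability step: the condition is no longer a single cone-containment sentence of \(\FO(\Q,+,\ge)\); the paper decides existence of a complete extraction via Leroux's characterization (Lemma \ref{CharOfCompleteExtraction}) using two semi-algorithms, one searching for an extraction and one searching for a witness violating the characterization.

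A secondary gap: you assume the KLMST decomposition effectively yields an almost hybridlinear decomposition with all fills equal to \(\vect{L}\cap\vect{C}\) together with the matrices defining the \(\dir(\vect{P}_i)\). Theorem \ref{TheoremEquivalentAlmostHybridlinearCondition} only gives existence; the components produced by KLMST generally have differing fills. The paper repairs this effectively by intersecting with a smooth \(\vect{P}'\) having \(\Fill(\vect{P}')=\vect{Q}\) and \(\dir(\vect{P}')=\interior(\Q_{\ge 0}\vect{Q})\) (Lemma \ref{LemmaCanRequireInteriorAsDirections}), i.e.\ by adding \(\interior(\vect{Q})\) to each computed cone, and then invoking Lemma \ref{LemmaSumPreservesSmooth} and Theorem \ref{TheoremEquivalentReducibilityCondition}. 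Your machinery for the base-point extraction (your last worry) is essentially sound and matches the paper's Propositions \ref{PropositionModuloIssuesRemoved} and \ref{PropositionOnePointWithCombinedPeriodicityPower}, but it cannot rescue the argument without replacing the covering condition by the complete-extraction condition.
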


We can now prove our main result. 

\TheoremFinalPartition*

\begin{proof}

Step 1: Use Proposition \ref{PropositionFullLinearAlmostHybridlinearPartition} to compute a partition \(\vect{S}=\vect{S}_1 \cup \dots \cup \vect{S}_k\) into full linear sets such that \(\vect{X} \cap \vect{S}_i\) has hybridization \(\vect{S}_i\) if it is non-empty. For every set \(\vect{S}_i\) with \(\vect{X} \cap \vect{S}_i \neq \emptyset\) do Step 2.

Step 2: Decide whether \(\vect{X} \cap \vect{S}_i\) is reducible using Theorem \ref{TheoremReducibilityIsDecidable}. If irreducible, output \(\vect{S}_i\). Otherwise, there exists \(\vect{x}\) such that \(\vect{x}+\vect{Q} \subseteq \vect{X} \cap \vect{S}_i\), where \(\vect{S}_i=\vect{c}+\vect{Q}\). Find such an \(\vect{x}\), add \(\vect{x}+\vect{Q} \subseteq \vect{X}\) to the final partition and do a recursive call on \(\vect{S}_i \setminus (\vect{x}+\vect{Q})\).

Termination: We claim that we only perform recursion on \(\vect{S}'\) with \(\dim(\vect{S}')<\dim(\vect{S})\). To see this, take \(\vect{S}_i=\vect{c}+\vect{Q}\) such that \(\vect{X} \cap \vect{S}_i\) is reducible. We have \(\dim(\vect{S}_i \setminus \vect{x}+\vect{Q})= \dim(\vect{c}+\vect{Q} \setminus \vect{x}+\vect{Q}) <\dim(\vect{Q})\) by Lemma \ref{LemmaRemovingInnerConeReducesDimension}, wherefore the recursion uses a lower dimensional set, and termination follows from bounded recursion depth.

Correctness: Follows from correctness of Proposition \ref{PropositionFullLinearAlmostHybridlinearPartition}.

The partition is computable for VAS: We have to be able to find \(\vect{x}\) with \(\vect{x}+\vect{Q} \subseteq \vect{X}\) given the promise that such an \(\vect{x}\) exists. This is possible since containment of semilinear sets in reachability sets is decidable by \cite{Leroux13} using flatability.
\end{proof}

\section{Corollaries of Theorem \ref{TheoremFinalPartition}} \label{SectionFinalPartitionAndCorollaries}


\subsection{VAS semilinearity is decidable}
\label{subsec:SemilinearityIsDecidbale}

We reprove that the semilinearity problem for VAS is decidable. We start with a lemma, whose full proof is in the appendix.

\begin{restatable}{lemma}{LemmaSemilinearThenReducible}
Let \(\vect{X}\) be a semilinear Petri set with hybridization \(\vect{c}+\vect{Q}\). Then \(\vect{X}\) is reducible.\label{LemmaSemilinearThenReducible}
\end{restatable}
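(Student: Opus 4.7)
The plan is to reduce the reducibility question to a cone-covering condition on the components of a semilinear decomposition of $\vect{X}$, and then to verify this condition using the hybridization. I rely on the cone-based reducibility characterization referenced in the paper just before Theorem \ref{TheoremReducibilityIsDecidable}: for an almost hybridlinear set $\vect{X}=\bigcup_{i=1}^r(\vect{b}_i+\vect{P}_i)$ with common fill $\vect{Q}$, reducibility depends only on the cones $\dir(\vect{P}_i)$ and, concretely, holds precisely when $\bigcup_i \dir(\vect{P}_i)\supseteq \overline{\Q_{\geq 0}\vect{Q}}$.

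Since $\vect{X}$ is semilinear, I write $\vect{X}=\bigcup_{i=1}^r(\vect{b}_i+\vect{P}_i)$ as a finite union of full linear sets with each $\vect{P}_i$ full and finitely generated, so $\dir(\vect{P}_i)=\Q_{\geq 0}\vect{P}_i=\overline{\Q_{\geq 0}\vect{P}_i}$ is a closed finitely generated cone. From $\vect{b}_i+\vect{P}_i\subseteq \vect{c}+\vect{Q}$ and fullness of $\vect{Q}$, Lemma \ref{LemmaFullThenLessPeriods} yields $\vect{P}_i\subseteq \vect{Q}$, hence $\overline{\Q_{\geq 0}\vect{P}_i}\subseteq \overline{\Q_{\geq 0}\vect{Q}}$. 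For the reverse inclusion I use the hybridization: let $\vect{P}$ be the smooth periodic set with $\vect{X}+\vect{P}\subseteq \vect{X}$ and $\Fill(\vect{P})=\vect{Q}$, and fix any $\vect{x}_0\in \vect{X}$. Then for every $\vect{p}\in \vect{P}$ the progression $\vect{x}_0+\N\vect{p}$ lies in $\vect{X}$, so $\vect{p}\in \dir(\vect{X})$. A pigeonhole argument along any such line (an infinite semilinear subset of $\Z$ always contains an infinite arithmetic progression) shows $\dir(\vect{X})=\bigcup_i \dir(\vect{P}_i)$. Since a finite union of closed cones is closed, $\overline{\Q_{\geq 0}\vect{P}}\subseteq \bigcup_i \dir(\vect{P}_i)$; and $\Fill(\vect{P})=\vect{Q}$ together with Lemma \ref{LemmaDirectionsContainInterior} gives $\overline{\Q_{\geq 0}\vect{P}}=\overline{\Q_{\geq 0}\vect{Q}}$. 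Combining the two containments, $\bigcup_i \dir(\vect{P}_i)=\overline{\Q_{\geq 0}\vect{Q}}$, and the cone-based criterion then yields reducibility.

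The main obstacle is justifying the precise cone-based reducibility criterion sketched in the appendix. The single-component case, namely that $\vect{b}+\vect{P}$ is reducible iff $\dir(\vect{P})=\overline{\Q_{\geq 0}\vect{P}}$, is immediate from Proposition \ref{PropositionAlmostPeriodicityPeriodicSet} applied to a generating set of $\Fill(\vect{P})$. Extending this to the multi-component almost hybridlinear case requires patching together such underapproximations across different components and then using Lemma \ref{LemmaFullThenLessPeriods} together with the common hybridization $\vect{b}_i+\Fill(\vect{P}_i)$ to ensure that some resulting translate of $\vect{Q}$ lands inside a single $\vect{b}_i+\vect{P}_i$. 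Once this characterization is in hand, the two cone inclusions above close the argument.
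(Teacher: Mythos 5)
Your computation of the cones is essentially the paper's: the inclusion \(\Q_{\geq 0}\vect{P}_i\subseteq\overline{\Q_{\geq 0}\vect{Q}}\) via Lemma \ref{LemmaFullThenLessPeriods}, and the reverse covering \(\overline{\Q_{\geq 0}\vect{P}}\subseteq\bigcup_i\Q_{\geq 0}\vect{P}_i\) via pigeonhole along progressions \(\vect{x}_0+\N\vect{p}\) plus closedness of a finite union of f.g.\ cones, are exactly Claims 1 and 2 of the paper's proof. The gap is in the step that turns this covering into reducibility. The criterion you invoke --- ``for an almost hybridlinear set, reducibility holds precisely when \(\bigcup_i\dir(\vect{P}_i)\supseteq\overline{\Q_{\geq 0}\vect{Q}}\)'' --- is not what the paper asserts and is false in general: the correct characterization (Theorem \ref{TheoremEquivalentReducibilityCondition}) is that the cones \(\dir(\vect{P}_i)\) admit a \emph{complete extraction}, and Appendix \ref{SectionNoAlmostLinearPartition} gives an explicit three-dimensional Petri set with \(\dir(\vect{P}_1)\cup\dir(\vect{P}_2)=\Q_{\geq 0}^3\) that is nevertheless irreducible, i.e.\ a direct counterexample to the covering criterion you rely on. Moreover, Theorem \ref{TheoremEquivalentReducibilityCondition} only applies to representations in which all components share the same fill, and your semilinear decomposition into full linear sets \(\vect{b}_i+\vect{P}_i\) will typically not satisfy \(\Fill(\vect{P}_1)=\dots=\Fill(\vect{P}_r)=\vect{Q}\); this is precisely the obstruction the paper flags before it replaces each \(\vect{P}_i\) by \(\vect{P}_i+\vect{P}\) (same fill \(\Fill(\vect{P})\), and \(\Q_{\geq 0}\vect{P}_i\subseteq\dir(\vect{P}_i+\vect{P})\)), after which the closed cones \(\Q_{\geq 0}\vect{P}_i\) themselves form a complete extraction.

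Your sketch for closing this gap also aims at the wrong target: you say the patching should ensure ``some translate of \(\vect{Q}\) lands inside a single \(\vect{b}_i+\vect{P}_i\)''. Reducibility only requires \(\vect{x}+\vect{Q}\subseteq\vect{X}\), and a translate inside a single component is in general impossible (e.g.\ \(\N^2\) written as the union of the two full linear half-planes \(\{x\geq y\}\) and \(\{y\geq x\}\) is reducible, but no translate of \(\N^2\) lies in either component). The correct patching is the one behind the ``\(\Leftarrow\)'' direction of Theorem \ref{TheoremEquivalentReducibilityCondition}: obtain \(\vect{x}_i+\vect{F}_i^{\ast}\subseteq\vect{b}_i+\vect{P}_i\) from Proposition \ref{PropositionAlmostPeriodicityPeriodicSet}, remove the scaling factor with Proposition \ref{PropositionModuloIssuesRemoved} (which uses \(\vect{X}+\vect{P}\subseteq\vect{X}\)), and merge the base points with Proposition \ref{PropositionOnePointWithCombinedPeriodicityPower}, landing in the union \(\vect{X}\), not in one piece. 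So your argument is reparable --- because here the direction cones are closed and finitely generated, covering does yield a complete extraction once the representation is corrected --- but as written the key deduction rests on an incorrect criterion and omits the common-fill step that makes the paper's appeal to Theorem \ref{TheoremEquivalentReducibilityCondition} legitimate.
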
%

\begin{proof}[Proof idea] 
The hybridization describes all ``limit directions'', with the problematic ones being for example ``north'' in case of the parabola \(\{(x,y) \mid y \leq x^2\}\), which is a limit but not actually a direction. If \(\vect{X}\) is semilinear though, then the steepness can only increase finitely often, namely when changing to a different linear component, and all limit directions are actually also directions. Using this for generators of \(\Fill(\vect{P})\) we find \(\vect{x}+\Fill(\vect{P}) \subseteq \vect{X}\).
\end{proof}

\begin{corollary}
\cite{Hauschildt90} The following problem is decidable.

Input: Reachability set \(\vect{R}\) of VAS, semilinear \(\vect{S}\).

Output: Is \(\vect{R} \cap \vect{S}\) semilinear?
\end{corollary}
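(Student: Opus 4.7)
The plan is to combine Theorem \ref{TheoremFinalPartition} with Lemma \ref{LemmaSemilinearThenReducible}, essentially executing the algorithm already sketched in the introduction of the paper. First, I would invoke Theorem \ref{TheoremFinalPartition} on the input pair \((\vect{R},\vect{S})\) to obtain a computable partition \(\vect{S} = \vect{S}_1 \cup \dots \cup \vect{S}_k\) into pairwise disjoint full linear sets, such that for each \(i\) exactly one of the following holds: (a) \(\vect{R} \cap \vect{S}_i = \emptyset\), (b) \(\vect{S}_i \subseteq \vect{R}\), or (c) \(\vect{R} \cap \vect{S}_i\) is irreducible with hybridization \(\vect{S}_i\). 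This applies because VAS reachability sets are Petri sets by Theorem \ref{TheoremVASPetriSets}.

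Next, for each \(i\) I would decide which of (a), (b), (c) holds. Case (a) reduces to the VAS reachability problem applied to each linear generator configuration of \(\vect{S}_i\), since \(\vect{R} \cap \vect{S}_i = \emptyset\) means no configuration of the form \(\vect{b}+\vect{p}\) in \(\vect{S}_i\) is reachable; this is decidable. Case (b) is decidable by the flatability results of \cite{Leroux13}, which handle exactly the question whether a given semilinear set is contained in a reachability set. If neither (a) nor (b) holds, we are in case (c) by Theorem \ref{TheoremFinalPartition}.

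The semilinearity test then is: answer \emph{yes} iff every \(i\) falls into case (a) or (b). Correctness in the ``yes'' direction is immediate, since in that case \(\vect{R}\cap\vect{S} = \bigcup_{i : (b)} \vect{S}_i\) is a finite union of full linear sets. For the ``no'' direction, suppose some \(i\) falls into case (c). Then \(\vect{R}\cap\vect{S}_i\) is irreducible and has hybridization \(\vect{S}_i\), so by Lemma \ref{LemmaSemilinearThenReducible}, \(\vect{R}\cap \vect{S}_i\) is not semilinear (any semilinear set with this hybridization would be reducible). Since \(\vect{R} \cap \vect{S}_i = (\vect{R}\cap \vect{S}) \cap \vect{S}_i\) and the class of semilinear sets is closed under intersection, if \(\vect{R}\cap\vect{S}\) were semilinear then so would be \(\vect{R}\cap\vect{S}_i\), a contradiction.

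The main obstacle, conceptually, is already absorbed into the cited machinery: computability of the partition relies on the KLMST decomposition through Proposition \ref{PropositionPropertiesOfHybridization}(5), and the decidability of cases (a) and (b) relies on VAS reachability and flatability respectively. The genuinely new ingredient beyond reusing previous results is the ``irreducible \(\Rightarrow\) non-semilinear'' direction, which is isolated in Lemma \ref{LemmaSemilinearThenReducible}; once that is in hand, the proof is a short wrap-up as above.
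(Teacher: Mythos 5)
Your proposal is correct and follows essentially the same route as the paper: compute the partition of Theorem \ref{TheoremFinalPartition}, decide cases (a) and (b) via reachability and flatability, and use Lemma \ref{LemmaSemilinearThenReducible} together with closure of semilinear sets under intersection to rule out semilinearity when case (c) occurs. The only cosmetic imprecision is the phrase ``applied to each linear generator configuration'' for case (a) — emptiness of \(\vect{R}\cap\vect{S}_i\) is decided by a single standard reduction of ``is some configuration of a linear set reachable'' to the VAS reachability problem, not by infinitely many individual reachability queries — which matches the footnote-level treatment the paper itself gives.
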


\begin{proof}
As also mentioned in the introduction, the algorithm computes the partition of Theorem \ref{TheoremFinalPartition} and checks whether the third case does not occur.

Correctness: If \(\vect{R} \cap \vect{S}\) is semilinear, then in particular \(\vect{R} \cap \vect{S}_i\) is semilinear for every part \(\vect{S}_i\) of the partition. By Lemma \ref{LemmaSemilinearThenReducible}, \(\vect{R} \cap \vect{S}_i\) cannot be irreducible, and so either $\vect{R} \cap \vect{S}_i= \emptyset$ or $\vect{S}_i \subseteq \vect{R}$ for all \(i\). 

On the other hand, if only the cases \(\vect{R} \cap \vect{S}_i=\emptyset\) and \(\vect{S}_i \subseteq \vect{R}\) occur, then the $\vect{S}_i$ such that $\vect{S}_i \subseteq \vect{R}$ form a semilinear representation.
\end{proof}

\subsection{On the Complement of a VAS Reachability Set} \label{SubsectionInfiniteLineCorollary}

We show that if the complement of a VAS reachability set is infinite, then it contains an infinite linear set. The main part of the argument was already depicted in the middle of Figure \ref{FigureIntuitionAlmostHybridlinear}: If \(\vect{X}\) contains enough of the boundary, then it is reducible. 

We hence need to formalize the notion of boundary and interior also for full linear sets. If \(\vect{L}=\vect{b}+\vect{Q}\) is a full linear set, then \(\interior(\vect{L}):=\vect{b}+(\vect{Q} \cap \interior(\Q_{\geq 0}\vect{Q}))\) is the interior of \(\vect{L}\) and \(\partial(\vect{L}):=\vect{b}+(\vect{Q} \cap \partial(\Q_{\geq 0}\vect{Q}))\) is the boundary of \(\vect{L}\), both are inherited from the cone. These sets are both semilinear, as can be seen by using the definition expressible via \(\varphi \in \FO(\N, +, \geq)\), i.e. Presburger Arithmetic. Remember that we consider definable cones, i.e. cones expressible in \(\FO(\Q,+,\geq)\). In the appendix, we prove the following proposition, formalizing the first part of the proof.

\begin{restatable}{proposition}{PropositionBoundaryImpliesCone}
Let \(\vect{X}\) be a set with hybridization \(\vect{c}+\Fill(\vect{P})\). Assume that \(|\partial(\vect{c}+\Fill(\vect{P})) \setminus \vect{X}|<\infty\). Then \(\vect{X}\) is reducible. \label{PropositionBoundaryImpliesCone}
\end{restatable}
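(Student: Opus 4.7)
My approach exploits the fact that the hybridization $\vect{X}+\vect{P}\subseteq\vect{X}$ forces the complement $\vect{M}:=\vect{H}\setminus\vect{X}$ (where $\vect{H}:=\vect{c}+\Fill(\vect{P})$) to propagate backwards along $\vect{P}$: by contrapositive, $\vect{x}+\vect{p}\in\vect{M}$ with $\vect{p}\in\vect{P}$ implies $\vect{x}\in\vect{M}$. Iterating, any interior missing point traces back via $\vect{P}$-steps to a point that cannot be pushed back further without leaving $\vect{H}$, i.e.\ to a missing boundary point, so $\vect{M}\subseteq\vect{B}+\vect{P}$, where $\vect{B}:=\partial(\vect{H})\cap\vect{M}$ is finite by hypothesis. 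Thus the missing set is localised around finitely many boundary origins.

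Given this localization, I plan to produce $\vect{x}_0\in\vect{X}$ such that $\vect{x}_0+\Fill(\vect{P})$ avoids every missing trail, which directly gives $\vect{x}_0+\Fill(\vect{P})\subseteq\vect{X}$. To build $\vect{x}_0$, pick any starting point $\vect{y}\in\vect{X}$ together with a vector $\vect{p}_0\in\vect{P}\cap\interior(\Q_{\geq 0}\vect{P})$, which exists by combining Lemma~\ref{LemmaDirectionsContainInterior} with additive closure of $\vect{P}$, and set $\vect{x}_0:=\vect{y}+K\vect{p}_0\in\vect{X}$ for a large integer $K$, justified by the hybridization. Using the facet description $\overline{\Q_{\geq 0}\vect{P}}=\{\vect{v}\mid A\vect{v}\geq 0\}$ from Lemma~\ref{LemmaFinitelyGeneratedCones}, each coordinate $A_k(\vect{x}_0-\vect{c})$ grows linearly with $K$ by the interiority of $\vect{p}_0$, while $A_k(\vect{b}-\vect{c})$ is uniformly bounded over the finite set $\vect{B}$. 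So for sufficiently large $K$, $\vect{x}_0$ is strictly in the interior of $\vect{H}$ in every facet direction relative to every $\vect{b}\in\vect{B}$.

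The main obstacle is then to translate this strict interiority into an actual disjointness $(\vect{x}_0+\Fill(\vect{P}))\cap\vect{M}=\emptyset$. The key fact one needs is that the missing trail within $\vect{b}+\vect{P}$ is not all of $\vect{b}+\vect{P}$ but only a sub-region pinned to the facet of $\overline{\Q_{\geq 0}\vect{P}}$ on which $\vect{b}$ sits: any point $\vect{b}+\vect{p}$ of $\vect{b}+\vect{P}$ that is not on that facet can be rewritten as $\vect{x}'+\vect{p}'$ for some $\vect{x}'\in\vect{X}$ deeper inside $\vect{H}$, hence lies in $\vect{X}$ by the hybridization. Then any $\vect{x}_0+\vect{f}$ with $\vect{f}\in\Fill(\vect{P})$ and $\vect{x}_0$ strictly interior lies off that facet, so it cannot be a missing point. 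Making this precise, by identifying the shape of the trails and verifying the disjointness uniformly across all $\vect{b}\in\vect{B}$, is where I expect the detailed combinatorial work to lie, possibly leveraging Proposition~\ref{PropositionIntersectionPeriodicSets} for the cone intersections. Once done, $\vect{x}_0+\Fill(\vect{P})\subseteq\vect{X}$ establishes reducibility of $\vect{X}$.
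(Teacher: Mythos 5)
Your opening localization step already fails. From \(\vect{X}+\vect{P}\subseteq\vect{X}\) you correctly get that \(\vect{M}\) is closed under subtracting elements of \(\vect{P}\) while staying in \(\vect{H}\), but the trace-back need not reach \(\partial(\vect{H})\): it can get stuck at an interior point, because \(\vect{P}\) — unlike \(\Fill(\vect{P})\) — need not contain any small vectors. Concretely, take \(\vect{P}=\{\vect{0}\}\cup\bigl((10,10)+\N^2\bigr)\), which is smooth with \(\Fill(\vect{P})=\N^2\), and \(\vect{X}=\N^2\setminus\bigl(\{5\}\times\{5,6,7,\dots\}\bigr)\). Then \(\vect{X}+\vect{P}\subseteq\vect{X}\) (every nonzero \(\vect{p}\in\vect{P}\) has first coordinate \(\geq 10\)), \(\vect{X}\) has hybridization \(\N^2\), and the entire boundary of \(\N^2\) lies in \(\vect{X}\), so \(\vect{B}=\emptyset\); yet \(\vect{M}\) is an infinite set of interior points, so \(\vect{M}\subseteq\vect{B}+\vect{P}\) is false, and so is your later claim that points of \(\vect{H}\) off the pinning facet automatically decompose as \(\vect{x}'+\vect{p}'\) with \(\vect{x}'\in\vect{X}\) and hence lie in \(\vect{X}\). (The conclusion of the proposition of course still holds here, e.g. \((6,0)+\N^2\subseteq\vect{X}\).)

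The deeper problem is that your construction of \(\vect{x}_0=\vect{y}+K\vect{p}_0\) only uses a single point of \(\vect{X}\) together with forward closure under \(\vect{P}\); this can only ever yield \(\vect{x}_0+\vect{P}\subseteq\vect{X}\), never \(\vect{x}_0+\Fill(\vect{P})\subseteq\vect{X}\), when \(\dir(\vect{P})\) omits boundary directions of the cone (think of \(\vect{P}=\{(x,y)\mid y\leq x^2\}\), which has no vertical direction). The points of \(\vect{x}_0+\Fill(\vect{P})\) outside \(\vect{x}_0+\vect{P}\) must be covered using the part of \(\partial(\vect{H})\) that \emph{is} in \(\vect{X}\), and your argument never exploits this positive half of the hypothesis — you only use finiteness of \(\vect{B}\) to bound \(A_k(\vect{b}-\vect{c})\). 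This is exactly where the paper's proof does its work: it shifts every facet \(\vect{G}_i\) of \(\Fill(\vect{P})\) into \(\vect{X}\) (possible since only finitely many boundary points are missing), merges these base points via Proposition \ref{PropositionOnePointWithCombinedPeriodicityPower} to get \(\vect{x}'+\partial(\Fill(\vect{P}))\subseteq\vect{X}\), writes \(\Fill(\vect{P})\subseteq\partial(\Fill(\vect{P}))+\vect{F}^{\ast}\) with \(\vect{F}\) finite and interior (Lemma \ref{LemmaFillOutTheBoundary}), and finally obtains \(\vect{d}+\vect{F}^{\ast}\subseteq\vect{P}\) from Proposition \ref{PropositionAlmostPeriodicityPeriodicSet} because interior vectors are directions. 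Some argument of this shape (boundary base points inside \(\vect{X}\) plus an interior finitely generated periodic part embedded in \(\vect{P}\)) is unavoidable, and your proposal defers precisely this content to ``detailed combinatorial work'' while resting on the false localization above, so the gap is essential rather than technical.
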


The proof of Proposition \ref{PropositionBoundaryImpliesCone} is illustrated in the above figure. The main difficulty is defining a ``wide enough'' cone \(\vect{C}\), then Proposition \ref{PropositionAlmostPeriodicityPeriodicSet} applied to \(\vect{C} \cap (\vect{P}-\vect{P})\) does the rest.

\begin{figure}[t]
\begin{minipage}{4.4cm}
\begin{tikzpicture}
\begin{axis}[
    xlabel={ },
    ylabel={ },
    xmin=0, xmax=8,
    ymin=0, ymax=8,
    xtick={0,2,4,6,8},
    ytick={0,2,4,6,8},
    ymajorgrids=true,
    xmajorgrids=true,
    thick,
    smooth,
    no markers,
]

\addplot[
    fill=blue,
    fill opacity=0.5,
    only marks,
    ]
    coordinates {
    (2,0)(4,0)(6,0)(8,0)(0,2)(0,4)(0,6)(0,8)
    };
    
\addplot[
    fill=red,
    fill opacity=0.5,
    only marks,
    ]
    coordinates {
    (0,0)
    };

\addplot[
    color=black,
    thick,
    ->,
    no marks,
    ]
    coordinates {
    (0.1,2.1)(1,3)
    };

\addplot+[
    name path=AB,
    color=black,
    domain=1:8,
    no marks,
]
{2*x+1};

\addplot+[
    name path=BB,
    color=black,
    domain=1:8,
    no marks,
]
{x/2 + 2.5};

\addplot[
    color=blue!40,
]
fill between[of=AB and BB];

\addplot[
    color=black,
    thick,
    ->,
    no marks,
    ]
    coordinates {
    (0.1,4.1)(1,5)
    };

\addplot+[
    name path=AC,
    color=black,
    domain=1:8,
    no marks,
]
{2*x+3};

\addplot+[
    name path=BC,
    color=black,
    domain=1:8,
    no marks,
]
{x/2 + 4.5};

\addplot[
    color=blue!40,
]
fill between[of=AC and BC];

\addplot[
    color=black,
    thick,
    ->,
    no marks,
    ]
    coordinates {
    (0.1,6.1)(1,7)
    };

\addplot+[
    name path=AD,
    color=black,
    domain=1:8,
    no marks,
]
{2*x+5};

\addplot+[
    name path=BD,
    color=black,
    domain=1:8,
    no marks,
]
{x/2 + 6.5};

\addplot[
    color=blue!40,
]
fill between[of=AD and BD];

\addplot[
    color=black,
    thick,
    ->,
    no marks,
    ]
    coordinates {
    (2.1,0.1)(3,1)
    };

\addplot+[
    name path=AE,
    color=black,
    domain=3:8,
    no marks,
]
{2*x-5};

\addplot+[
    name path=BE,
    color=black,
    domain=3:8,
    no marks,
]
{x/2 -0.5};

\addplot[
    color=blue!40,
]
fill between[of=AE and BE];

\addplot[
    color=black,
    thick,
    ->,
    no marks,
    ]
    coordinates {
    (4.1,0.1)(5,1)
    };

\addplot+[
    name path=AF,
    color=black,
    domain=5:8,
    no marks,
]
{2*x-9};

\addplot+[
    name path=BF,
    color=black,
    domain=5:8,
    no marks,
]
{x/2 -1.5};

\addplot[
    color=blue!40,
]
fill between[of=AF and BF];

\addplot[
    color=black,
    thick,
    ->,
    no marks,
    ]
    coordinates {
    (6.1,0.1)(7,1)
    };

\addplot+[
    name path=AG,
    color=black,
    domain=7:8,
    no marks,
]
{2*x-13};

\addplot+[
    name path=BG,
    color=black,
    domain=7:8,
    no marks,
]
{x/2 -2.5};

\addplot[
    color=blue!40,
]
fill between[of=AG and BG];

\addplot[
    color=black,
    thick,
    ->,
    no marks,
    ]
    coordinates {
    (0.1,0.1)(1,1)
    };

\addplot+[
    name path=AA,
    color=black,
    domain=1:8,
    no marks,
]
{2*x-1};

\addplot+[
    name path=BA,
    color=black,
    domain=1:8,
    no marks,
]
{x/2 + 0.5};

\addplot[
    color=red!40,
]
fill between[of=AA and BA];
\end{axis}
\end{tikzpicture}
\end{minipage}%
\begin{minipage}{9cm}
This refers to the figure on the left. 

Let \(\vect{C}\) be the cone generated by \((2,1)\) and \((1,2)\) and assume that \(\vect{X}+[(1,1)+\vect{C}] \subseteq \vect{X}\) holds. Then \((0,0) \in \vect{X}\) implies that the whole red shifted cone is in \(\vect{X}\). Importantly, we obtain a similar shifted cone for \emph{every} point \(\vect{x}' \in \vect{X}\). Hence if \(\partial \N^2 \subseteq \vect{X}\), then almost all of \(\N^2\) is contained in \(\vect{X}\). \label{FigureFinalArgument}
\end{minipage}
\end{figure}

\begin{corollary}
Let \(\vect{X}\) be a Petri set. Let \(\vect{S}\) be a semilinear set such that \(\vect{S} \setminus \vect{X}\) is infinite. Then $\vect{S} \setminus \vect{X}$ contains an infinite linear set. \label{CorollaryNonreachableLine}
\end{corollary}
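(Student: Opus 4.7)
The plan is to induct on $\dim(\vect{S})$, using Theorem \ref{TheoremFinalPartition} as the decomposition tool and Proposition \ref{PropositionBoundaryImpliesCone} (in contrapositive form) to reduce the irreducible case to a lower-dimensional instance. The base case $\dim(\vect{S}) = 0$ is vacuous, since then $\vect{S}$ is finite and $\vect{S} \setminus \vect{X}$ cannot be infinite.

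For the inductive step, I would apply Theorem \ref{TheoremFinalPartition} to obtain a partition $\vect{S} = \vect{S}_1 \cup \dots \cup \vect{S}_k$ into pairwise disjoint full linear sets, each satisfying one of the three alternatives: (a) $\vect{X} \cap \vect{S}_i = \emptyset$, (b) $\vect{S}_i \subseteq \vect{X}$, or (c) $\vect{X} \cap \vect{S}_i$ is irreducible with hybridization $\vect{S}_i$. Since $\vect{S} \setminus \vect{X} = \bigcup_i (\vect{S}_i \setminus \vect{X})$ is infinite, some $\vect{S}_i \setminus \vect{X}$ is infinite. Case (b) is then immediately ruled out because it would force $\vect{S}_i \setminus \vect{X} = \emptyset$.

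In case (a), one has $\vect{S}_i \setminus \vect{X} = \vect{S}_i$, which is an infinite full linear set and hence contains an infinite linear set (indeed it is one). In case (c), the contrapositive of Proposition \ref{PropositionBoundaryImpliesCone} gives $|\partial \vect{S}_i \setminus \vect{X}| = \infty$. Recall from Section \ref{SubsectionInfiniteLineCorollary} that $\partial \vect{S}_i$ is semilinear, and it is a finite union of full linear sets built from the facets of the cone of $\vect{S}_i$, so $\dim(\partial \vect{S}_i) < \dim(\vect{S}_i) \leq \dim(\vect{S})$. The induction hypothesis, applied to $\partial \vect{S}_i$ in place of $\vect{S}$ (with the same Petri set $\vect{X}$), yields an infinite linear set contained in $\partial \vect{S}_i \setminus \vect{X} \subseteq \vect{S} \setminus \vect{X}$, completing the induction.

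The main obstacle in this plan is not any individual step but making sure that Proposition \ref{PropositionBoundaryImpliesCone} applies cleanly in case (c): one has to identify $\partial \vect{S}_i \setminus \vect{X}$ with $\partial \vect{S}_i \setminus (\vect{X} \cap \vect{S}_i)$ (which is automatic because $\partial \vect{S}_i \subseteq \vect{S}_i$) and to verify that $\vect{X} \cap \vect{S}_i$ does admit the representation $\vect{c} + \Fill(\vect{P})$ assumed in that proposition, which is exactly what ``hybridization $\vect{S}_i$'' provides. All other steps — the dimension drop on the boundary of a full linear set, and the elementary observation that an infinite full linear set is an infinite linear set — are already established in the preliminaries.
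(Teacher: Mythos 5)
Your proof is correct and follows essentially the same route as the paper: induction on $\dim(\vect{S})$, the partition from Theorem \ref{TheoremFinalPartition}, and in the irreducible case a reduction to the lower-dimensional boundary $\partial \vect{S}_i$ via Proposition \ref{PropositionBoundaryImpliesCone}. The only difference is presentational: you use the contrapositive of Proposition \ref{PropositionBoundaryImpliesCone} directly, while the paper wraps the same argument in a proof by contradiction.
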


\begin{proof}
Proof by induction on \(\dim(\vect{S})\). If \(\dim(\vect{S})=0\), the property holds vacuously. Else consider the partition of Theorem \ref{TheoremFinalPartition}. Since \(\vect{S} \setminus \vect{X}\) is infinite, some \(\vect{S}_i \setminus \vect{X}\) is infinite. Fix such an \(i\). Because of Theorem 1.1, \(\vect{S}_i\subseteq \vect{X}\) or \(\vect{X}\cap \vect{S}_i = \emptyset\) or \(\vect{X}\cap \vect{S}_i\) is irreducible. In fact, only the third possibility is interesting. If \(\vect{S}_i\subseteq \vect{X}\), then \(\vect{S}_i \setminus \vect{X}\) can not be infinite. If \(\vect{S}_i \cap \vect{X}= \emptyset\) then \(\vect{S}_i=\vect{S}_i\setminus \vect{X}\), hence it contains a line. Let us consider the case when \(\vect{S}_i\cap \vect{X}$ is irreducible.
Assume for contradiction that \(\vect{S}_i \setminus \vect{X}\) does not contain an infinite linear set. Then in particular \(\partial(\vect{S}_i) \setminus \vect{X}\) does not. We have \(\dim(\partial(\vect{S_i}))<\dim(\vect{S_i})\), since the boundary is contained in the finite union of the facets. Hence \(|\partial(\vect{S}_i) \setminus \vect{X}|<\infty\) by induction. By Proposition \ref{PropositionBoundaryImpliesCone}, \(\vect{X} \cap \vect{S}_i\) is reducible. Contradiction.
\end{proof}

In the appendix we even prove another corollary of the partition. The proof is based on the existence of a partition as in Theorem \ref{TheoremFinalPartition}, which has the properties for two Petri sets \(\vect{X}_1\) and \(\vect{X}_2\) at once.

\begin{restatable}{corollary}{CorollarySeparatingPairOfPetriSets}
Let \(\vect{X}_1\) and \(\vect{X}_2\) be Petri sets with \(\vect{X}_1 \cap \vect{X}_2=\emptyset\). Then there exists a semilinear set \(\vect{S}'\) such that \(\vect{X}_1 \subseteq \vect{S}'\) and \(\vect{X}_2 \cap \vect{S}'=\emptyset\). \label{CorollarySeparatingPetriSets}
\end{restatable}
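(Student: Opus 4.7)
The plan is to build a joint refinement of Theorem \ref{TheoremFinalPartition} that classifies both Petri sets simultaneously, and then extract the separator through a case analysis that exploits disjointness.

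First, I would prove a two-set analogue of Theorem \ref{TheoremFinalPartition}: given disjoint Petri sets $\vect{X}_1, \vect{X}_2$ and a semilinear set $\vect{S}$, there is a partition $\vect{S} = \vect{S}_1 \cup \cdots \cup \vect{S}_k$ into pairwise disjoint full linear sets such that for every $i$ and every $j \in \{1, 2\}$ the intersection $\vect{X}_j \cap \vect{S}_i$ satisfies one of the three cases of Theorem \ref{TheoremFinalPartition}. This joint partition is obtained by applying Theorem \ref{TheoremFinalPartition} alternately to $\vect{X}_1$ and $\vect{X}_2$: each application refines the current partition, and Proposition \ref{PropositionPropertiesOfHybridization}(2) guarantees that the ``irreducible with hybridization $\vect{S}_i$'' classification is preserved when $\vect{S}_i$ is further refined into full linear sub-pieces of the same dimension. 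Pieces of strictly smaller dimension are handled recursively, and a dimension-decreasing argument ensures termination.

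Next, case-analyze each piece $\vect{S}_i$ by the pair $(a, b)$ of classifications of $(\vect{X}_1 \cap \vect{S}_i, \vect{X}_2 \cap \vect{S}_i)$, where $a, b$ range over the three possibilities of Theorem \ref{TheoremFinalPartition}: (i) empty, (ii) equal to $\vect{S}_i$, (iii) irreducible with hybridization $\vect{S}_i$. Disjointness $\vect{X}_1 \cap \vect{X}_2 = \emptyset$ rules out (ii, ii) outright, and also rules out (ii, iii) and (iii, ii) because (ii) would force $\vect{X}_{3 - j} \cap \vect{S}_i = \emptyset$, contradicting the non-emptiness implied by (iii). In all remaining cases except (iii, iii), include $\vect{S}_i$ in $\vect{S}'$ if and only if $\vect{X}_1 \cap \vect{S}_i \neq \emptyset$ and $\vect{X}_2 \cap \vect{S}_i = \emptyset$; this immediately gives $\vect{X}_1 \cap \vect{S}_i \subseteq \vect{S}'$ and $\vect{X}_2 \cap \vect{S}' \cap \vect{S}_i = \emptyset$.

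The main obstacle is the case (iii, iii). The plan is to argue that this case cannot arise from disjoint Petri sets. Let $\vect{P}_j$ be a smooth periodic set witnessing the hybridization of $\vect{X}_j \cap \vect{S}_i$, with $\Fill(\vect{P}_j) = \vect{Q}$ and $\vect{S}_i = \vect{c} + \vect{Q}$. By Lemma \ref{LemmaDirectionsContainInterior}, both $\dir(\vect{P}_j)$ contain $\interior(\overline{\Q_{\geq 0} \vect{Q}})$, so one can choose a finitely generated full-dimensional cone $\vect{C} \subseteq \dir(\vect{P}_1) \cap \dir(\vect{P}_2)$; the lattice $\vect{L} := (\vect{P}_1 - \vect{P}_1) \cap (\vect{P}_2 - \vect{P}_2)$ has full rank in $\VectorSpace(\vect{Q})$. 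Proposition \ref{PropositionAlmostPeriodicityPeriodicSet} applied to $\vect{C} \cap \vect{L}$ inside each $\vect{P}_j$ then yields $\vect{x}_j \in \vect{X}_j$ with $\vect{x}_j + (\vect{C} \cap \vect{L}) \subseteq \vect{X}_j$. The technical heart of the argument is a lattice-alignment step: using the freedom to translate each $\vect{x}_j$ by an arbitrary element of $\vect{P}_j$, one arranges $\vect{x}_1 \equiv \vect{x}_2 \pmod{\vect{L}}$; then sufficiently deep lattice points in $\vect{C}$ lie in both $\vect{x}_1 + (\vect{C} \cap \vect{L})$ and $\vect{x}_2 + (\vect{C} \cap \vect{L})$, and therefore in both $\vect{X}_1$ and $\vect{X}_2$, contradicting disjointness. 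The delicate step is carrying out this alignment while keeping the resulting shifted cones inside $\vect{S}_i$ and inside $\vect{X}_j$; if this purely algebraic contradiction proves too fragile, a fallback is to induct on $\dim(\vect{S})$, using the inductive hypothesis on the strictly lower-dimensional $\partial(\vect{S}_i)$ (to which $\vect{X}_1, \vect{X}_2$ restrict as disjoint Petri sets) to reduce the hard case.
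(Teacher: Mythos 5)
Your overall route is the paper's own: build a common partition satisfying Theorem \ref{TheoremFinalPartition} for \(\vect{X}_1\) and \(\vect{X}_2\) simultaneously (the paper's Corollary \ref{CorollaryCommonPartition}), let \(\vect{S}'\) be the union of the pieces meeting \(\vect{X}_1\), and kill the ``both have hybridization \(\vect{S}_i\)'' case by producing a common point, contradicting disjointness. Your contradiction in that case is sound and essentially the paper's: the paper takes \(\vect{P}:=\vect{P}_1\cap\vect{P}_2\) (smooth with \(\Fill(\vect{P})=\vect{Q}\) by Proposition \ref{PropositionIntersectionPeriodicSets}) and uses Lemma \ref{LemmaFinitePumping} on \(\{\vect{c}-\vect{x}_1,\vect{c}-\vect{x}_2\}\), whereas you take a full-dimensional f.g.\ cone \(\vect{C}\subseteq\interior(\Q_{\geq 0}\vect{Q})\cup\{\vect{0}\}\) and apply Proposition \ref{PropositionAlmostPeriodicityPeriodicSet}; both work. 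Also, the step you flag as the ``delicate heart'' is automatic: since \(\vect{x}_1,\vect{x}_2\in\vect{S}_i=\vect{c}+\vect{Q}\) and \(\vect{P}_j-\vect{P}_j=\vect{Q}-\vect{Q}\) (because \(\vect{P}_j\subseteq\Fill(\vect{P}_j)=\vect{Q}\subseteq\vect{P}_j-\vect{P}_j\)), the congruence \(\vect{x}_1\equiv\vect{x}_2\) modulo your lattice \(\vect{L}\) holds for free, and a deep lattice point of \(\vect{C}\) then lies in both shifted copies.

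Two points in your construction of the joint partition need repair. First, Proposition \ref{PropositionPropertiesOfHybridization}(2) does \emph{not} preserve irreducibility under refinement into full-dimensional sub-pieces; it only preserves the hybridization. For instance \(\{(x,y)\mid y\leq x^2\}\) is irreducible with hybridization \(\N^2\), yet its restriction to the full-dimensional piece \(\{(x,y)\mid y\leq x\}\) is reducible. The paper's Corollary \ref{CorollaryCommonPartition} handles exactly this by extracting a translated cone contained in \(\vect{X}_1\) and recursing on the lower-dimensional remainder. Fortunately this does not sink your proof, because your case-(iii,iii) contradiction never uses irreducibility — only that both intersections are non-empty with hybridization \(\vect{S}_i\) — so you should simply drop irreducibility from the joint partition's requirements (or add the paper's extra extraction step). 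Second, and more substantively, you must ensure the refined pieces carry \emph{true} hybridizations (singleton base, so \(\Fill(\vect{P}_j)=\vect{Q}\) where \(\vect{S}_i=\vect{c}+\vect{Q}\)), not merely weak ones: with weak hybridizations the disjointness contradiction is false, e.g.\ \((1+3\N)\cup(2+3\N)\) and \(3\N\) are disjoint yet both have weak hybridization \(\N\). Proposition \ref{PropositionPropertiesOfHybridization}(2) as stated yields weak hybridizations, and upgrading them (as in the proof of Proposition \ref{PropositionFullLinearAlmostHybridlinearPartition}, using that the intersection of the full linear representations is itself the full linear representation of the sub-piece) is a step your sketch omits but needs.
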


\begin{restatable}{corollary}{CorollarySeparatingPetriSet}
Let \(\mathcal{V}\) be a VAS, and \(\vect{X}\) a Petri set such that \(\Reach(\mathcal{V}) \cap \vect{X}=\emptyset\). Then there exists a semilinear inductive invariant \(\vect{S}'\) of \(\mathcal{V}\) such that \(\Reach(\mathcal{V}) \subseteq \vect{S}'\) and \(\vect{X} \cap \vect{S}'=\emptyset\).
\end{restatable}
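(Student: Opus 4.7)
The plan is to combine Corollary \ref{CorollarySeparatingPetriSets}, just established, with Leroux's classical separability theorem for VAS reachability sets from \cite{Leroux12} in its semilinear-set form: any semilinear set disjoint from $\Reach(\mathcal{V})$ can be separated from $\Reach(\mathcal{V})$ by a semilinear inductive invariant containing $\Reach(\mathcal{V})$.

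First I would invoke Corollary \ref{CorollarySeparatingPetriSets} with $\vect{X}_1 := \Reach(\mathcal{V})$ and $\vect{X}_2 := \vect{X}$. Both are Petri sets ($\Reach(\mathcal{V})$ by Theorem \ref{TheoremVASPetriSets}, $\vect{X}$ by hypothesis) and they are disjoint by assumption, so the corollary produces a semilinear set $\vect{S}$ with $\Reach(\mathcal{V}) \subseteq \vect{S}$ and $\vect{S} \cap \vect{X} = \emptyset$. This $\vect{S}$ is the desired semilinear separator, but it has no reason to be closed under the transitions of $\mathcal{V}$, so we cannot yet output it.

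To upgrade $\vect{S}$ to an inductive invariant, I would pass to the complement $\vect{T} := \vect{S}^C$, which is semilinear by closure of Presburger Arithmetic under negation and disjoint from $\Reach(\mathcal{V})$ because $\Reach(\mathcal{V}) \subseteq \vect{S}$. Applying Leroux's semilinear separability theorem to $\vect{T}$ yields a semilinear inductive invariant $\vect{S}'$ of $\mathcal{V}$ with $\Reach(\mathcal{V}) \subseteq \vect{S}'$ and $\vect{S}' \cap \vect{T} = \emptyset$, i.e.\ $\vect{S}' \subseteq \vect{S}$. Then $\vect{S}' \cap \vect{X} \subseteq \vect{S} \cap \vect{X} = \emptyset$, and $\vect{S}'$ has all the required properties.

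The main obstacle will be justifying the application of Leroux's theorem in the semilinear-set form rather than just the single-configuration form quoted in the introduction. The point version does not suffice to conclude in one shot, because $\vect{T}$ may contain infinitely many configurations and intersecting infinitely many semilinear inductive invariants need not yield a semilinear set. Fortunately, the strengthened form is well established in Leroux's line of work (it is precisely what powers the Presburger-invariant semi-algorithm for the reachability problem), so Step 2 is available as a black box and no further work is needed beyond the two invocations above.
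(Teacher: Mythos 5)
Your proposal is correct and follows essentially the same route as the paper: apply Corollary \ref{CorollarySeparatingPetriSets} to $\Reach(\mathcal{V})$ (a Petri set by Theorem \ref{TheoremVASPetriSets}) and $\vect{X}$ to get a semilinear separator $\vect{S}$, then invoke Leroux's result that a semilinear set disjoint from the reachability set can be separated by a semilinear inductive invariant, applied to the complement $\vect{S}^C$. The paper uses exactly this two-step argument, citing \cite{Leroux09} for the second step, so your black-box use of the semilinear-set form of the separability theorem matches the paper's own reliance on it.
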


\section{Conclusion}


We have introduced hybridizations, and used them to prove a powerful decomposition theorem for Petri sets. For VAS reachabillity sets the decomposition can be effectively computed. We have derived several geometric and computational results. We think that our decomposition can help to study the computational power of VAS. For example, it leads to this corollary:

\begin{corollary}
Let \(f \colon \N \to \N\) be a function whose graph does not contain an infinite line. Then either \(\{(x,y) \mid y < f(x)\}\) or \(\{(x,y) \mid y > f(x)\}\) is not a Petri set.
\end{corollary}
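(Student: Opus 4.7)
The plan is to prove the contrapositive: assuming both \(\vect{A} := \{(x,y) \in \N^2 \mid y < f(x)\}\) and \(\vect{B} := \{(x,y) \in \N^2 \mid y > f(x)\}\) are Petri sets, I will deduce that the graph \(\vect{G} := \{(x, f(x)) \mid x \in \N\}\) contains an infinite line. The two ingredients are Corollary \ref{CorollarySeparatingPetriSets} (a semilinear set separates any two disjoint Petri sets) and Corollary \ref{CorollaryNonreachableLine} (for semilinear \(\vect{S}\), Petri \(\vect{X}\) and infinite \(\vect{S} \setminus \vect{X}\), the difference contains an infinite linear set, which in turn contains the infinite line \(\vect{b}+\N\vect{d}\) obtained from any non-zero generator \(\vect{d}\)).

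First I invoke Corollary \ref{CorollarySeparatingPetriSets} on the disjoint Petri sets \(\vect{A}, \vect{B}\) to obtain a semilinear set \(\vect{T}\) with \(\vect{A} \subseteq \vect{T}\) and \(\vect{B} \cap \vect{T} = \emptyset\); so \(\vect{B} \subseteq \vect{T}^C\), and \(\vect{T}^C\) is semilinear as well. Intersecting the partition \(\N^2 = \vect{A} \sqcup \vect{G} \sqcup \vect{B}\) with \(\vect{T}\) and with \(\vect{T}^C\) yields the identities \(\vect{T} \setminus \vect{A} = \vect{T} \cap \vect{G}\) and \(\vect{T}^C \setminus \vect{B} = \vect{T}^C \cap \vect{G}\). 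Since \(\vect{G}\) contains one point per \(x \in \N\) it is infinite, and the disjoint decomposition \(\vect{G} = (\vect{T} \cap \vect{G}) \sqcup (\vect{T}^C \cap \vect{G})\) forces at least one of the two sides to be infinite.

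If \(\vect{T} \cap \vect{G}\) is infinite, I apply Corollary \ref{CorollaryNonreachableLine} to the Petri set \(\vect{A}\) and the semilinear set \(\vect{T}\): it produces an infinite linear set inside \(\vect{T} \setminus \vect{A} = \vect{T} \cap \vect{G} \subseteq \vect{G}\). Extracting the line from a non-zero generator of that linear set then places an infinite line in \(\vect{G}\), completing the contrapositive. The remaining case, in which \(\vect{T}^C \cap \vect{G}\) is infinite, is handled symmetrically by applying Corollary \ref{CorollaryNonreachableLine} to the Petri set \(\vect{B}\) and the semilinear set \(\vect{T}^C\), since \(\vect{T}^C \setminus \vect{B} = \vect{T}^C \cap \vect{G}\) by the identity above.

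There is no real obstacle beyond assembling the pieces: all the geometric work is absorbed into Corollaries \ref{CorollarySeparatingPetriSets} and \ref{CorollaryNonreachableLine}. The only care needed is the set-theoretic bookkeeping that \(\vect{T}\) peels off exactly the graph portion of itself after \(\vect{A}\) is removed (and dually for \(\vect{T}^C\)), which is immediate from the separation property of \(\vect{T}\).
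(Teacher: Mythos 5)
Your proof is correct, but it takes a genuinely different route from the paper. The paper's argument is a three-liner: if both \(\{(x,y) \mid y < f(x)\}\) and \(\{(x,y) \mid y > f(x)\}\) are Petri sets, then so is their union \(\{(x,y)\mid y \neq f(x)\}\), since finite unions of Petri sets are again Petri sets (finite unions of almost semilinear sets are almost semilinear, and intersection with a semilinear set distributes over the union); applying Corollary \ref{CorollaryNonreachableLine} with \(\vect{S}=\N^2\) and this union as \(\vect{X}\), the infinite complement --- which is exactly the graph of \(f\) --- must contain an infinite line, contradicting the hypothesis. You instead keep the two Petri sets separate, invoke the separation result (Corollary \ref{CorollarySeparatingPetriSets}) to produce a semilinear \(\vect{T}\) with \(\vect{A}\subseteq\vect{T}\) and \(\vect{B}\subseteq\vect{T}^C\), observe that \(\vect{T}\setminus\vect{A}=\vect{T}\cap\vect{G}\) and \(\vect{T}^C\setminus\vect{B}=\vect{T}^C\cap\vect{G}\), and apply Corollary \ref{CorollaryNonreachableLine} to whichever piece of the graph is infinite; your bookkeeping here is correct, and extracting the line from a non-zero period of the resulting infinite linear set is fine. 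What you lose relative to the paper is economy: Corollary \ref{CorollarySeparatingPetriSets} is itself a heavyweight consequence of the two-set partition machinery in the appendix, whereas the union-closure observation the paper uses is immediate from the definitions. What your route buys is that it never needs that closure property explicitly, and it localizes the application of Corollary \ref{CorollaryNonreachableLine} to a semilinear set tailored to each Petri set; but as a derivation of this particular corollary it is more roundabout than necessary.
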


\begin{proof}
Assume for contradiction that both are Petri sets. Then, since finite unions of Petri sets are again Petri sets, \(\{(x,y) \mid y \neq f(x)\}\) is a Petri set. Its complement is the graph of \(f\), which by assumption does not contain an infinite line. Contradiction to Corollary \ref{CorollaryNonreachableLine}.
\end{proof}

We plan to study other possible applications of our result, derived from the fact that the reachability relation of a VAS is also a Petri set.

\bibliography{geometry_of_VAS}

\begin{thebibliography}{10}

\bibitem{ChistikovH16}
Dmitry Chistikov and Christoph Haase.
\newblock The {T}aming of the {S}emi-{L}inear {S}et.
\newblock In {\em {ICALP}}, volume~55 of {\em LIPIcs}, pages 128:1--128:13.
  Schloss Dagstuhl - Leibniz-Zentrum f{\"{u}}r Informatik, 2016.

\bibitem{CzerwinskiLLLM19}
Wojciech Czerwinski, Slawomir Lasota, Ranko Lazic, J{\'{e}}r{\^{o}}me Leroux,
  and Filip Mazowiecki.
\newblock The {R}eachability {P}roblem for {P}etri {N}ets is not {E}lementary.
\newblock In {\em {STOC}}, pages 24--33. {ACM}, 2019.

\bibitem{CzerwinskiO21}
Wojciech Czerwinski and Lukasz Orlikowski.
\newblock {R}eachability in {V}ector {A}ddition {S}ystems is
  {A}ckermann-complete.
\newblock In {\em {FOCS}}, pages 1229--1240. {IEEE}, 2021.

\bibitem{ginsburg1963bounded}
Seymour Ginsburg and Edwin~H Spanier.
\newblock {\em {B}ounded {ALGOL}-like {L}anguages}.
\newblock SDC, 1963.

\bibitem{GuttenbergRE23}
Roland Guttenberg, Mikhail Raskin, and Javier Esparza.
\newblock {G}eometry of {R}eachability {S}ets of {V}ector {A}ddition {S}ystems,
  2023.
\newblock \href {https://arxiv.org/abs/2211.02889} {\path{arXiv:2211.02889}}.

\bibitem{Hauschildt90}
Dirk Hauschildt.
\newblock {\em {S}emilinearity of the {R}eachability {S}et is decidable for
  {P}etri {N}ets}.
\newblock PhD thesis, University of Hamburg, Germany, 1990.

\bibitem{JancarLS19}
Petr Jancar, J{\'{e}}r{\^{o}}me Leroux, and Gr{\'{e}}goire Sutre.
\newblock {C}o-{F}initeness and {C}o-{E}mptiness of {R}eachability {S}ets in
  {V}ector {A}ddition {S}ystems with {S}tates.
\newblock {\em Fundam. Informaticae}, 169(1-2):123--150, 2019.

\bibitem{Kosaraju82}
S.~Rao Kosaraju.
\newblock {D}ecidability of {R}eachability in {V}ector {A}ddition {S}ystems.
\newblock In {\em {STOC}}, pages 267--281. {ACM}, 1982.

\bibitem{Lambert92}
Jean{-}Luc Lambert.
\newblock A {S}tructure to {D}ecide {R}eachability in {P}etri {N}ets.
\newblock {\em Theor. Comput. Sci.}, 99(1):79--104, 1992.

\bibitem{Leroux09}
J{\'{e}}r{\^{o}}me Leroux.
\newblock {T}he {G}eneral {V}ector {A}ddition {S}ystem {R}eachability {P}roblem
  by {P}resburger {I}nductive {I}nvariants.
\newblock In {\em {LICS}}, pages 4--13. {IEEE} Computer Society, 2009.

\bibitem{Leroux11}
J{\'{e}}r{\^{o}}me Leroux.
\newblock {V}ector {A}ddition {S}ystem {R}eachability {P}roblem: {A} {S}hort
  {S}elf-{C}ontained {P}roof.
\newblock In {\em {LATA}}, volume 6638 of {\em Lecture Notes in Computer
  Science}, pages 41--64. Springer, 2011.

\bibitem{Leroux12}
J{\'{e}}r{\^{o}}me Leroux.
\newblock {V}ector {A}ddition {S}ystems {R}eachability {P}roblem ({A} {S}impler
  {S}olution).
\newblock In {\em Turing-100}, volume~10 of {\em EPiC Series in Computing},
  pages 214--228. EasyChair, 2012.

\bibitem{Leroux13}
J{\'{e}}r{\^{o}}me Leroux.
\newblock {P}resburger {V}ector {A}ddition {S}ystems.
\newblock In {\em {LICS}}, pages 23--32. {IEEE} Computer Society, 2013.
\newblock URL: \url{https://hal.science/hal-00780462v2}.

\bibitem{Leroux21}
J{\'{e}}r{\^{o}}me Leroux.
\newblock The {R}eachability {P}roblem for {P}etri {N}ets is not {P}rimitive
  {R}ecursive.
\newblock In {\em {FOCS}}, pages 1241--1252. {IEEE}, 2021.

\bibitem{LerouxS15}
J{\'{e}}r{\^{o}}me Leroux and Sylvain Schmitz.
\newblock {D}emystifying {R}eachability in {V}ector {A}ddition {S}ystems.
\newblock In {\em {LICS}}, pages 56--67. {IEEE} Computer Society, 2015.

\bibitem{LerouxS19}
J{\'{e}}r{\^{o}}me Leroux and Sylvain Schmitz.
\newblock {R}eachability in {V}ector {A}ddition {S}ystems is
  {P}rimitive-{R}ecursive in {F}ixed {D}imension.
\newblock In {\em {LICS}}, pages 1--13. {IEEE}, 2019.

\bibitem{Mayr81}
Ernst~W. Mayr.
\newblock {A}n {A}lgorithm for the {G}eneral {P}etri {N}et {R}eachability
  {P}roblem.
\newblock In {\em {STOC}}, pages 238--246. {ACM}, 1981.

\bibitem{NguyenP18}
Danny Nguyen and Igor Pak.
\newblock {E}numerating {P}rojections of {I}nteger {P}oints in {U}nbounded
  {P}olyhedra.
\newblock {\em {SIAM} J. Discret. Math.}, 32(2):986--1002, 2018.

\bibitem{LinearProgramming}
Alexander Schrijver.
\newblock {\em {T}heory of {L}inear and {I}nteger {P}rogramming}.
\newblock {W}iley-{I}nterscience {S}eries in {D}iscrete {M}athematics and
  {O}ptimization. Wiley, 1999.

\bibitem{Woods15}
Kevin Woods.
\newblock {P}resburger {A}rithmetic, {R}ational {G}enerating {F}unctions, and
  {Q}uasi-{P}olynomials.
\newblock {\em J. Symb. Log.}, 80(2):433--449, 2015.

\end{thebibliography}

\appendix


\section{Proofs of Section \ref{Sec:Preliminaries}}

\LemmaEquivalentDefinitionOfFull*

\begin{proof}
``\(\Leftarrow\)'': \(\vect{P}=\Fill(\vect{P})\) is by definition an intersection of a f.g. cone and a lattice.

``\(\Rightarrow\)'': Write \(\vect{P}=\vect{C} \cap \vect{L}\). We first claim that \(\Q_{\geq 0}\vect{P}=\vect{C} \cap \Q_{\geq 0}\vect{L}\). As intersection of f.g. cones, \(\Q_{\geq 0}\vect{P}\) is then finitely generated (and hence closed), use for example Lemma \ref{LemmaFinitelyGeneratedCones}.

\begin{proof}[Proof of Claim] ``\(\subseteq\)'' is clear. Hence let \(\vect{x} \in \vect{C} \cap \Q_{\geq 0}\vect{L}\). Then there exists \(\lambda \in \N\) such that \(\lambda \vect{x}\in \vect{L}\). Then \(\lambda \vect{x} \in \vect{C} \cap \vect{L}=\vect{P}\) as claimed.
\end{proof}

It is left to prove \(\Q_{\geq 0}\vect{P} \cap (\vect{P}-\vect{P})=\vect{P}\). ``\(\supseteq\)'' is clear, hence let \(\vect{x}\in \Q_{\geq 0}\vect{P} \cap (\vect{P}-\vect{P})\). It is enough to prove \(\vect{x} \in \vect{C}\) and \(\vect{x}\in \vect{L}\). To prove those inclusions, observe that \(\vect{x}\in \Q_{\geq 0}\vect{P} \subseteq \Q_{\geq 0} \vect{C} \subseteq \vect{C}\) and \(\vect{x}\in \vect{P}-\vect{P} \subseteq \vect{L}-\vect{L} \subseteq \vect{L}\).
\end{proof}

\LemmaFullThenLessPeriods*

\begin{proof}
Since \(\vect{P}\) is full, by Lemma \ref{LemmaEquivalentDefinitionOfFull} it is sufficient to prove \(\vect{Q} \subseteq \vect{P}-\vect{P}\) and \(\vect{Q} \subseteq \overline{\mathbb{Q}_{\geq 0}\vect{P}}\). 

To prove \(\vect{Q} \subseteq \vect{P}-\vect{P}\), observe that \(\vect{Q}=(\vect{c}+\vect{Q}) - \vect{c}  \subseteq (\vect{b}+\vect{P})-(\vect{b}+\vect{P})=\vect{P}-\vect{P}\). 

To prove \(\vect{Q} \subseteq \overline{\mathbb{Q}_{\geq 0}\vect{P}}\), write \(\overline{\Q_{\geq 0}\vect{P}}=\{\vect{x} \in \VectorSpace(\vect{P}) \mid A \vect{x} \geq 0\}\) for a matrix \(A\), as in Lemma \ref{LemmaFinitelyGeneratedCones}. Let \(A_k\) be the \(k\)-th row of \(A\). It suffices to show \(A_k \vect{x} \geq 0\) for all \(\vect{x} \in \vect{Q}\). If we had \(A_k \vect{x} <0\), then \(A_k (\vect{c}+\lambda \vect{x}) < A_k \vect{b}\) for large enough \(\lambda\), contradicting \(\vect{c}+\vect{Q} \subseteq \vect{b}+\vect{P}\).
\end{proof}

\LemmaRemovingInnerConeReducesDimension*

\begin{proof}
We only prove the case where \(\vect{P}\) is full, i.e.\ \(\Fill(\vect{P})=\vect{P}\), we do not need the full lemma. First of all, since semilinear sets are closed under all boolean operations, \(\vect{S}\) is semilinear. Secondly, we have \(\vect{P} \cap [\vect{x} + \mathbb{Q}_{\geq 0}\vect{P}] \subseteq \vect{x} + \Fill(\vect{P})\), i.e. every point of \(\vect{P}\) which is in the inner cone is removed for \(\vect{S}\). To see this, let \(\vect{y}\in \vect{x} + \mathbb{Q}_{\geq 0}\vect{P}\). Write \(\vect{y}=\vect{x} + \vect{v}\). We then have \(\vect{v}\in \mathbb{Q}_{\geq 0} \vect{P}\) by definition, and \(\vect{v}\in \vect{P}-\vect{P}\) since it is the difference of \(\vect{y}\in \vect{P}\) and \(\vect{x} \in \vect{P}\). Therefore \(\vect{v}\in \Fill(\vect{P})\) and \(\vect{y}\in \vect{x} + \Fill(\vect{P})\).

Since \(\mathbb{Q}_{\geq 0}\vect{P}\) is finitely generated, by Lemma \ref{LemmaFinitelyGeneratedCones}, there exists an integer matrix \(A\) such that \(\Q_{\geq 0} \vect{P}=\{\vect{y} \in \VectorSpace(\vect{P}) \mid A \vect{y} \geq 0\}\). Let \(A_i\) be the \(i\)-th row of \(A\). Since by the above, \(\vect{S} \cap \vect{x} + \mathbb{Q}_{\geq 0}\vect{P}=\emptyset\), every point \(y\in S\) fulfills \(0 \leq A_i \vect{y} \leq A_i \cdot \vect{x}\) for at least one \(i\). Since \(A_i \cdot \vect{y}\) can only take integer values, we define \(\vect{V}_{i,j}:=\{\vect{y} \in \VectorSpace(\vect{P}) \mid A_i \cdot \vect{y}=j\}\) for all \(j \in \{0, \dots, A_i \cdot \vect{x}\}\) and obtain that \(\vect{S} \subseteq \bigcup_{i=1}^r \bigcup_{j=0}^{A_i \cdot \vect{x}} \vect{V}_{i,j}\). Since every \(\vect{V}_{i,j}\) has co-dimension \(1\) in \(\VectorSpace(\vect{P})\), we obtain \(\dim(\vect{S})\leq \dim(\VectorSpace(\vect{P}))-1=\dim(\vect{P})-1\) as claimed.
\end{proof}

\BasicDimensionProperties*

\begin{proof}
1) We prove \(\dim(\vect{X}) \geq \dim(\vect{b}+\vect{X})\), the other direction follows by choosing \(-\vect{b}\). Let \(\vect{b}_i\) and \(\vect{V}_i\) such that \(\dim(\vect{V}_i)\leq \dim(\vect{X})\) and \(\vect{X} \subseteq \bigcup_{i=1}^r \vect{b}_i+\vect{V}_i\). Then \(\vect{b}+\vect{X} \subseteq \bigcup_{i=1}^r (\vect{b}+\vect{b}_i)+\vect{V}_i\), and hence \(\dim(\vect{b}+\vect{X})\leq \max_i \dim(\vect{V}_i)=\dim(\vect{X})\).

2) ``\(\geq\)'': Covering \(\vect{X} \cup \vect{X}'\) covers both \(\vect{X}\) and \(\vect{X}'\).

``\(\leq\)'': Let \(\vect{b}_i\) and \(\vect{V}_i\) such that \(\dim(\vect{V}_i) \leq \dim(\vect{X})\) and \(\vect{X} \subseteq \bigcup_{i=1}^r \vect{b}_i+\vect{V}_i\), and \(\vect{c}_i\) and \(\vect{W}_i\) such that \(\dim(\vect{W}_i) \leq \dim(\vect{X}')\) and \(\vect{X}' \subseteq \bigcup_{i=1}^s \vect{c}_i+\vect{W}_i\). Then \[\vect{X} \cup \vect{X}' \subseteq \bigcup_{i=1}^r \vect{b}_i+\vect{V}_i \cup \bigcup_{i=1}^s \vect{c}_i+\vect{W}_i.\]

3) Use 2), and observe that \(\vect{X} \cup \vect{X}'=\vect{X}'\).
\end{proof}

\section{Proofs of Section \ref{Sec:DirectionsOfAPeriodicSet}}

\PropositionIntersectionPeriodicSets*

We split the proof of Proposition \ref{PropositionIntersectionPeriodicSets} in a lemma for every part, with \(\Fill(\vect{P}\cap \vect{P}')\) having one lemma for the lattice and one lemma for the closed cone.

\subsubsection*{Proof of Proposition \ref{PropositionIntersectionPeriodicSets}(1)}

\begin{lemma}
Let \(\vect{P},\vect{P}'\) be smooth periodic sets such that
\[\dim(\Fill(\vect{P}) \cap \Fill(\vect{P}'))=\dim(\Fill(\vect{P}))=\dim(\Fill(\vect{P}')).\]

Then \(\dim(\vect{P} \cap \vect{P}')=\dim(\vect{P})=\dim(\vect{P}')\).\label{LemmaImproveDimensionCondition}
\end{lemma}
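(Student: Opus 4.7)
The plan is to reduce the claim to finding $d$ linearly independent vectors inside $\vect{P} \cap \vect{P}'$, where $d := \dim(\Fill(\vect{P})) = \dim(\Fill(\vect{P}'))$. The equalities $\dim(\vect{P}) = \dim(\vect{P}') = d$ come essentially for free: since $\vect{P} \subseteq \Fill(\vect{P}) \subseteq \vect{P}-\vect{P}$, the two periodic sets generate the same vector space, so Lemma \ref{LemmaFromJerome} gives $\dim(\vect{P}) = \dim(\Fill(\vect{P}))$, and symmetrically for $\vect{P}'$. Combined with monotonicity (Lemma \ref{BasicDimensionProperties}), this leaves only the lower bound $\dim(\vect{P} \cap \vect{P}') \geq d$ to prove.

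To produce $d$ independent vectors in the intersection, I would first exhibit a $d$-dimensional common cone and then descend to integer points inside $\vect{P}$ and $\vect{P}'$ simultaneously. Set $\vect{C} := \overline{\Q_{\geq 0}\vect{P}} \cap \overline{\Q_{\geq 0}\vect{P}'}$; smoothness of $\vect{P}$ and $\vect{P}'$ together with Proposition \ref{prop:Fillisfinitelygenerated} make both closed cones finitely generated, hence $\vect{C}$ is finitely generated as well. The hypothesis forces $\dim(\vect{C}) = d$, since on the one hand $\vect{C} \supseteq \Fill(\vect{P}) \cap \Fill(\vect{P}')$ already has dimension $d$, and on the other $\vect{C} \subseteq \Vectorspace(\vect{P})$ has dimension at most $d$. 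Consequently $\vect{C}$ has non-empty interior relative to its linear span, and this interior is a full-dimensional open convex cone.

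The key geometric observation is that $\interior(\vect{C}) \subseteq \Q_{\geq 0}\vect{P} \cap \Q_{\geq 0}\vect{P}'$: monotonicity of interior gives $\interior(\vect{C}) \subseteq \interior(\overline{\Q_{\geq 0}\vect{P}}) \subseteq \Q_{\geq 0}\vect{P}$ by Lemma \ref{LemmaDirectionsContainInterior}, and symmetrically for $\vect{P}'$. I would then pick $d$ linearly independent vectors $\vect{d}_1, \ldots, \vect{d}_d$ in $\interior(\vect{C})$, choose positive integers $M_i$ with $M_i \vect{d}_i \in \vect{P}$ and $M_i \vect{d}_i \in \vect{P}'$ (possible because each $\vect{d}_i$ lies in both $\Q_{\geq 0}\vect{P}$ and $\Q_{\geq 0}\vect{P}'$), and conclude by periodicity that $\{M_1 \vect{d}_1, \ldots, M_d \vect{d}_d\}^{\ast} \subseteq \vect{P} \cap \vect{P}'$; a final application of Lemma \ref{LemmaFromJerome} then yields $\dim(\vect{P} \cap \vect{P}') \geq d$. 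The only subtle point is the inclusion $\interior(\vect{C}) \subseteq \Q_{\geq 0}\vect{P}$, but once this is routed through the two monotonicity steps above, everything else is bookkeeping.
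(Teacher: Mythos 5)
Your proof is correct, and it shares the central idea with the paper's argument: both reduce everything to Lemma \ref{LemmaDirectionsContainInterior}, using the dimension hypothesis to see that the intersection of the closed cones $\overline{\Q_{\geq 0}\vect{P}} \cap \overline{\Q_{\geq 0}\vect{P}'}$ is full-dimensional, so that its (relative) interior lands inside $\Q_{\geq 0}\vect{P} \cap \Q_{\geq 0}\vect{P}'$. Where you diverge is the endgame. The paper works with $\interior(\overline{\Q_{\geq 0}\vect{P}}) \cap \interior(\overline{\Q_{\geq 0}\vect{P}'})$, argues via facets that removing the boundaries does not drop the dimension, and then closes the chain by citing the identity $\Q_{\geq 0}(\vect{P}\cap\vect{P}') = \Q_{\geq 0}\vect{P} \cap \Q_{\geq 0}\vect{P}'$ from Leroux. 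You instead construct an explicit witness: $d$ linearly independent interior vectors, scaled by integers into $\vect{P} \cap \vect{P}'$, generating a finitely generated periodic subset of dimension $d$, and conclude with Lemma \ref{LemmaFromJerome}. Your finish is more self-contained (no appeal to the external cone identity) and arguably more elementary, at the cost of the scaling bookkeeping; the paper's version is shorter once the cited identity is granted.

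One point you should make explicit: ``monotonicity of interior'' is false for relative interiors in general (a face of a cone is contained in the cone, yet its relative interior lies in the cone's boundary). The step $\interior(\vect{C}) \subseteq \interior(\overline{\Q_{\geq 0}\vect{P}})$ is valid here only because $\vect{C}-\vect{C} = \Vectorspace(\vect{P}) = \Vectorspace(\vect{P}')$, i.e.\ all three cones have the same linear span, which is exactly what your computation $\dim(\vect{C}) = d$ provides; without the dimension hypothesis this inclusion, and with it the whole argument, would fail. Since you already derived $\dim(\vect{C})=d$ and full-dimensionality, this is a presentational gap rather than a mathematical one, but the justification belongs in the proof.
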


\begin{proof}
We only argue \(\dim(\vect{P} \cap \vect{P}')=\dim(\vect{P})\), the other equality follows by symmetry. ``\(\leq\)'' is immediate.

By Lemma \ref{LemmaFromJerome} we have \(\dim(\vect{P})=\dim(\VectorSpace(\vect{P}))\), in particular also \(\dim(\vect{P})=\dim(\overline{\mathbb{Q}_{\geq 0}\vect{P}})=\dim(\Fill(\vect{P}))\). Hence \(\dim(\vect{P})=\dim(\Fill(\vect{P}))=\dim(\Fill(\vect{P}) \cap \Fill(\vect{P}'))\leq \dim(\overline{\mathbb{Q}_{\geq 0}\vect{P}} \cap \overline{\mathbb{Q}_{\geq 0}\vect{P}'})\). The goal is to arrive at \(\dim(\mathbb{Q}_{\geq 0}\vect{P} \cap \mathbb{Q}_{\geq 0}\vect{P}')\). Then, since \(\Q_{\geq 0}(\vect{P}\cap \vect{P}')=\Q_{\geq 0}\vect{P} \cap \Q_{\geq 0}\vect{P}'\) always holds \cite[Lemma 4.5]{Leroux11}, we would obtain \(\dim(\vect{P}) \leq \dim(\Q_{\geq 0}(\vect{P} \cap \vect{P}'))=\dim(\vect{P}\cap \vect{P}')\), and be done. 

One main tool is Lemma \ref{LemmaDirectionsContainInterior}. For definable cones \(\vect{C}\), we have \(\dim(\partial(\vect{C}))<\dim(\vect{C})\), since the boundary consists of finitely many facets, which are lower dimensional cones. In addition \(\vect{C}=\partial(\vect{C})\cup \interior(\vect{C})\), hence \(\dim(\vect{C})=\dim(\interior(\vect{C}))\) by Lemma \ref{BasicDimensionProperties}. We will use these two facts for \(\vect{C}:=\overline{\Q_{\geq 0}\vect{P}}\) and \(\vect{C}':=\overline{\Q_{\geq 0}\vect{P}'}\). 

We have \(\interior(\vect{C}) \cap \interior(\vect{C}')=(\vect{C} \setminus \partial(\vect{C})) \cap (\vect{C}' \setminus \partial(\vect{C}'))=(\vect{C} \cap \vect{C}') \setminus (\partial(\vect{C}) \cup \partial(\vect{C}'))\), which has dimension \(\dim(\vect{C} \cap \vect{C}')\), since we only remove a lower dimensional set. In total we can finish the chain from above:
\begin{align*}
\dim(\vect{P})&\leq \dim(\vect{C} \cap \vect{C}')= \dim(\interior(\vect{C}) \cap \interior(\vect{C}')) \leq \dim(\Q_{\geq 0}\vect{P} \cap \Q_{\geq 0}\vect{P}') = \dim(\vect{P} \cap \vect{P}'),
\end{align*}
where we had already argued the last step earlier.
\end{proof}

\subsubsection*{Proof of Proposition \ref{PropositionIntersectionPeriodicSets}(2)}

We first prove an auxiliary lemma.

\begin{lemma}
Let \(\vect{P} \subseteq \vect{P}'\) be periodic sets with \(\dim(\vect{P})=\dim(\vect{P}')\). Then for all \(\vect{p}_1 \in \vect{P}'\) there exists a \(\vect{p}_2 \in \vect{P}'\) such that \(\vect{p}_1+\vect{p}_2 \in \vect{P}\). \label{LemmaSameDimensionPeriodicSets}
\end{lemma}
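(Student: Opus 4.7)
The plan is to reduce the statement to a fact about the common $\Q$-vector space spanned by $\vect{P}$ and $\vect{P}'$, and then construct $\vect{p}_2$ explicitly from a rational decomposition of $\vect{p}_1$ in terms of elements of $\vect{P}$.

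First I would show that $\vect{P}$ and $\vect{P}'$ generate the same vector space. Since $\vect{P} \subseteq \vect{P}'$, we have $\VectorSpace(\vect{P}) \subseteq \VectorSpace(\vect{P}')$. By Lemma \ref{LemmaFromJerome}, $\dim(\vect{P}) = \dim(\VectorSpace(\vect{P}))$ and similarly for $\vect{P}'$, so the hypothesis yields $\dim(\VectorSpace(\vect{P})) = \dim(\VectorSpace(\vect{P}'))$. Two finite-dimensional $\Q$-vector spaces of equal dimension with one contained in the other must coincide, so $\VectorSpace(\vect{P}) = \VectorSpace(\vect{P}') =: \vect{V}$.

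Next I would fix $\vect{p}_1 \in \vect{P}' \subseteq \vect{V}$. Because $\vect{V} = \Q_{\geq 0}(\vect{P}-\vect{P})$, in particular $\vect{p}_1$ lies in the $\Q$-linear span of $\vect{P}$, so we can write $\vect{p}_1 = \sum_i \alpha_i \vect{r}_i$ with $\alpha_i \in \Q$ and $\vect{r}_i \in \vect{P}$. Clearing denominators by multiplying with a positive integer $N$ gives $N \vect{p}_1 = \sum_i \beta_i \vect{r}_i$ with $\beta_i \in \Z$. Separating the terms with $\beta_i > 0$ from those with $\beta_i < 0$ and using that $\vect{P}$ is closed under addition, I obtain $\vect{a}, \vect{b} \in \vect{P}$ with $N \vect{p}_1 = \vect{a} - \vect{b}$; equivalently $N \vect{p}_1 + \vect{b} = \vect{a} \in \vect{P}$.

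Finally, I would take $\vect{p}_2 := (N-1)\vect{p}_1 + \vect{b}$ and verify both requirements. Since $N - 1 \geq 0$ and $\vect{p}_1 \in \vect{P}'$, periodicity of $\vect{P}'$ gives $(N-1)\vect{p}_1 \in \vect{P}'$; combined with $\vect{b} \in \vect{P} \subseteq \vect{P}'$ and $\vect{P}' + \vect{P}' \subseteq \vect{P}'$, this yields $\vect{p}_2 \in \vect{P}'$. Moreover, $\vect{p}_1 + \vect{p}_2 = N \vect{p}_1 + \vect{b} = \vect{a} \in \vect{P}$. No genuine obstacle arises: the heart of the argument is simply that equality of the generated vector spaces lets us realize an arbitrary element of $\vect{P}'$ as an integer linear combination of elements of $\vect{P}$; the rest is routine bookkeeping based on periodicity and $N \geq 1$.
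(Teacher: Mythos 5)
Your proof is correct and follows essentially the same route as the paper: both use Lemma \ref{LemmaFromJerome} to conclude that \(\vect{P}\) and \(\vect{P}'\) span the same vector space, clear denominators to write \(N\vect{p}_1 = \vect{a}-\vect{b}\) with \(\vect{a},\vect{b}\in\vect{P}\), and then take \(\vect{p}_2 = (N-1)\vect{p}_1+\vect{b}\). No issues.
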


\begin{proof}
Let \(\vect{p}_1 \in \vect{P}'\). Since \(\dim(\vect{P})=\dim(\vect{P}')\) and \(\vect{P} \subseteq \vect{P}'\), by Lemma \ref{LemmaFromJerome} they generate the same vector space \(\vect{V}\). Hence there exists \(\lambda \in \N_{>0}\) such that \(\lambda \vect{p}_1 \in \vect{P}-\vect{P}\), namely \(\lambda\) is the common denominator for some rational linear combination of elements from \(\vect{P}\). By writing \(\lambda \vect{p}_1 =\vect{p}-\vect{p}'\) with \(\vect{p},\vect{p}' \in \vect{P}\) and choosing \(\vect{p}_2:=(\lambda-1) \cdot \vect{p}_1 +\vect{p}' \in \vect{P}'\), we obtain \(\vect{p}_1+\vect{p}_2=\vect{p} \in \vect{P}\).
\end{proof}
\begin{restatable}{lemma}{LemmaDirectionsIntersectCorrectly}
Let \(\vect{P}, \vect{P}'\) be periodic sets such that 

\(\dim(\vect{P} \cap \vect{P}')=\dim(\vect{P})=\dim(\vect{P}')\). 

Then \(\dir(\vect{P} \cap \vect{P}')=\dir(\vect{P}) \cap \dir(\vect{P}')\).

We even have: If \(\vect{x}+\N \cdot \vect{d} \subseteq \vect{P}\) and \(\vect{x}'+\N \cdot \vect{d} \subseteq \vect{P}'\) for some \(\vect{x},\vect{x}'\), then \(\vect{x}''+\N \cdot \vect{d} \subseteq \vect{P} \cap \vect{P}'\) for some \(\vect{x}''\). \label{LemmaPhysicalDirectionsIntersectCorrectly}
\end{restatable}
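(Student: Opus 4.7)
The plan is to prove the stronger refined statement first and then derive the main equality from it. The trivial inclusion $\dir(\vect{P} \cap \vect{P}') \subseteq \dir(\vect{P}) \cap \dir(\vect{P}')$ is immediate from monotonicity: any line $\vect{x} + \N \cdot m\vect{d}$ contained in $\vect{P} \cap \vect{P}'$ is a fortiori contained in each of $\vect{P}$ and $\vect{P}'$, so $\vect{d}$ is a direction of each. The work is in the reverse inclusion.

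For the refined statement, suppose $\vect{x} + \N \cdot \vect{d} \subseteq \vect{P}$ and $\vect{x}' + \N \cdot \vect{d} \subseteq \vect{P}'$. The dimension hypothesis $\dim(\vect{P} \cap \vect{P}') = \dim(\vect{P}) = \dim(\vect{P}')$ makes Lemma~\ref{LemmaSameDimensionPeriodicSets} applicable in two ways: to the inclusion $\vect{P} \cap \vect{P}' \subseteq \vect{P}$, giving some $\vect{y} \in \vect{P}$ with $\vect{x} + \vect{y} \in \vect{P} \cap \vect{P}'$; and to $\vect{P} \cap \vect{P}' \subseteq \vect{P}'$, giving some $\vect{y}' \in \vect{P}'$ with $\vect{x}' + \vect{y}' \in \vect{P} \cap \vect{P}'$. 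I then take $\vect{x}'' := (\vect{x} + \vect{y}) + (\vect{x}' + \vect{y}')$.

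The crux is that for every $k \in \N$ the point $\vect{x}'' + k\vect{d}$ admits two decompositions, one witnessing membership in $\vect{P}$ and the other in $\vect{P}'$:
\[
\vect{x}'' + k\vect{d} \;=\; (\vect{x} + k\vect{d}) \;+\; \vect{y} \;+\; (\vect{x}' + \vect{y}')
\;=\; (\vect{x}' + k\vect{d}) \;+\; \vect{y}' \;+\; (\vect{x} + \vect{y}).
\]
In the first expression all three summands lie in $\vect{P}$ (the first by hypothesis, the second by choice of $\vect{y}$, and the third since $\vect{P} \cap \vect{P}' \subseteq \vect{P}$), so periodicity of $\vect{P}$ gives $\vect{x}'' + k\vect{d} \in \vect{P}$. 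The second expression puts all three summands in $\vect{P}'$ by the symmetric reasoning, giving $\vect{x}'' + k\vect{d} \in \vect{P}'$. Hence $\vect{x}'' + \N \cdot \vect{d} \subseteq \vect{P} \cap \vect{P}'$, which is exactly the refined claim.

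For the remaining inclusion $\dir(\vect{P}) \cap \dir(\vect{P}') \subseteq \dir(\vect{P} \cap \vect{P}')$, given $\vect{d}$ in both direction cones, Definition~\ref{def:directions} yields $m, m' \in \N_{>0}$ and $\vect{x}, \vect{x}'$ with $\vect{x} + \N \cdot m\vect{d} \subseteq \vect{P}$ and $\vect{x}' + \N \cdot m'\vect{d} \subseteq \vect{P}'$. Setting $M := \operatorname{lcm}(m, m')$, the lines $\vect{x} + \N \cdot M\vect{d}$ and $\vect{x}' + \N \cdot M\vect{d}$ remain in $\vect{P}$ and $\vect{P}'$ respectively, so the refined statement applied to the vector $M\vect{d}$ produces $\vect{x}''$ with $\vect{x}'' + \N \cdot M\vect{d} \subseteq \vect{P} \cap \vect{P}'$, witnessing $\vect{d} \in \dir(\vect{P} \cap \vect{P}')$. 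No step here is a serious obstacle; the essential content is the observation that the dimension hypothesis unlocks Lemma~\ref{LemmaSameDimensionPeriodicSets} in both directions simultaneously, and the combined correction $\vect{y} + \vect{y}'$ can be rebracketed to exhibit the same point as a sum inside $\vect{P}$ or inside $\vect{P}'$ on demand.
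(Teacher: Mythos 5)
Your proof is correct and follows essentially the same route as the paper: both apply Lemma~\ref{LemmaSameDimensionPeriodicSets} to the inclusions \(\vect{P} \cap \vect{P}' \subseteq \vect{P}\) and \(\vect{P} \cap \vect{P}' \subseteq \vect{P}'\), set \(\vect{x}''\) to be the sum of the two corrected points, and rebracket the sum to witness membership in \(\vect{P}\) and in \(\vect{P}'\) separately. The only cosmetic difference is that you pass to \(\operatorname{lcm}(m,m')\,\vect{d}\) where the paper uses \(mm'\vect{d}\), which changes nothing.
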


\begin{proof}
``\(\subseteq\)'' is clear. Therefore let \(\vect{d} \in \dir(\vect{P}) \cap \dir(\vect{P}')\). Then there exist \(m, m' \in \N_{>0}\) and \(\vect{x},\vect{x}'\) such that \(\vect{x}+\N \cdot m \vect{d} \subseteq \vect{P}\) and \(\vect{x}' + \N \cdot m' \vect{d} \subseteq \vect{P}'\). Replacing \(\vect{d}\) by \(m m' \vect{d}\), we obtain \(\vect{x}+ \N \cdot \vect{d} \subseteq \vect{P}\) and \(\vect{x}' + \N \cdot \vect{d} \subseteq \vect{P}'\).

By Lemma \ref{LemmaSameDimensionPeriodicSets}, there exists \(\vect{p} \in \vect{P}\) such that \(\vect{x}+\vect{p} \in \vect{P} \cap \vect{P}'\), and again by the same Lemma \(\vect{p}' \in \vect{P}'\) such that \(\vect{x}'+\vect{p}' \in \vect{P} \cap \vect{P}'\). We choose \(\vect{x}'':=(\vect{x}+\vect{p})+(\vect{x}'+\vect{p}') \in \vect{P} \cap \vect{P}'\), and obtain that \(\vect{x}''+\N \cdot \vect{d} \subseteq \vect{P}\), because \[\vect{x}''+\N \cdot \vect{d}=((\vect{x}+\N \cdot \vect{d})+\vect{p})+(\vect{x}'+\vect{p}') \subseteq \vect{P}+(\vect{P} \cap \vect{P}') \subseteq \vect{P}.\] Symmetrically we also obtain \(\vect{x}''+\N \cdot \vect{d} \subseteq \vect{P}'\).
\end{proof}

\subsubsection*{Proof of Proposition \ref{PropositionIntersectionPeriodicSets}(3)}

Recall that \(\Fill(\vect{P}):=(\vect{P}-\vect{P}) \cap \overline{\mathbb{Q}_{\geq 0}\vect{P}}\). So it suffices to prove that the 
closed cone and the lattice of an intersection are equal to the intersection of the closed cones and the lattices respectively.

%
%
%

\begin{lemma}
Let \(\vect{P}, \vect{P}'\) be smooth periodic sets with 

\(\dim(\vect{P} \cap \vect{P}')=\dim(\vect{P})=\dim(\vect{P}')\). 

Then \(\overline{\mathbb{Q}_{\geq 0}(\vect{P} \cap \vect{P}')}=\overline{\mathbb{Q}_{\geq 0}\vect{P}} \cap \overline{\mathbb{Q}_{\geq 0}\vect{P}'}\).
\label{LemmaConesIntersectCorrectly}
\end{lemma}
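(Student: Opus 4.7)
The plan is to prove the nontrivial inclusion by exhibiting a dense subset of the right-hand cone that sits inside the left-hand one; the other inclusion $\overline{\Q_{\geq 0}(\vect{P}\cap \vect{P}')} \subseteq \overline{\Q_{\geq 0}\vect{P}} \cap \overline{\Q_{\geq 0}\vect{P}'}$ is immediate from monotonicity of $\Q_{\geq 0}(\cdot)$ and closure applied to $\vect{P}\cap \vect{P}' \subseteq \vect{P}, \vect{P}'$.

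First I would pin down the ambient vector space. By Lemma \ref{LemmaFromJerome} and the hypothesis, $\dim(\VectorSpace(\vect{P} \cap \vect{P}')) = \dim(\VectorSpace(\vect{P})) = \dim(\VectorSpace(\vect{P}'))$, and since $\vect{P}\cap \vect{P}' \subseteq \vect{P}, \vect{P}'$ gives $\VectorSpace(\vect{P} \cap \vect{P}') \subseteq \VectorSpace(\vect{P}) \cap \VectorSpace(\vect{P}')$, equality of dimensions forces all three vector spaces to coincide; call this common space $\vect{V}$. Smoothness gives via Proposition \ref{prop:Fillisfinitelygenerated} (more precisely its proof: $\overline{\Q_{\geq 0}\vect{P}}$ is the closure of the definable cone $\dir(\vect{P})$, hence finitely generated) that the cones $\vect{C} := \overline{\Q_{\geq 0}\vect{P}}$ and $\vect{C}' := \overline{\Q_{\geq 0}\vect{P}'}$ are f.g., and Lemma \ref{LemmaFinitelyGeneratedCones} yields integer matrices $A, A'$ with $\vect{C} = \{\vect{x}\in\vect{V} : A\vect{x} \geq \vect{0}\}$ and $\vect{C}' = \{\vect{x}\in\vect{V} : A'\vect{x} \geq \vect{0}\}$.

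The next and hardest step is density. From the matrix description, $\interior(\vect{C})\cap \interior(\vect{C}') = \{\vect{x}\in\vect{V} : A\vect{x} > \vect{0}, A'\vect{x} > \vect{0}\}$ is exactly the interior (relative to $\vect{V}$) of the f.g. cone $\vect{C}\cap \vect{C}'$. Because $\Fill(\vect{P})\cap \Fill(\vect{P}') \subseteq \vect{C}\cap \vect{C}'$ and the hypothesis gives $\dim(\Fill(\vect{P})\cap \Fill(\vect{P}')) = \dim(\vect{V})$, the cone $\vect{C}\cap \vect{C}'$ has full dimension in $\vect{V}$, and a convex combination $(1-\lambda)\vect{x} + \lambda\vect{y}$ of any $\vect{x}\in \vect{C}\cap \vect{C}'$ with an interior point $\vect{y}$ strictly satisfies both systems of inequalities for $\lambda \in (0,1]$ and converges to $\vect{x}$ as $\lambda \to 0$; so $\interior(\vect{C})\cap \interior(\vect{C}')$ is dense in $\vect{C}\cap \vect{C}'$.

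Finally, for any $\vect{d} \in \interior(\vect{C})\cap \interior(\vect{C}')$, Lemma \ref{LemmaDirectionsContainInterior} gives $\vect{d} \in \dir(\vect{P})\cap \dir(\vect{P}')$; Lemma \ref{LemmaPhysicalDirectionsIntersectCorrectly} (the part (2) of the current proposition, already proven) then yields $\vect{d} \in \dir(\vect{P}\cap \vect{P}') \subseteq \overline{\Q_{\geq 0}(\vect{P}\cap \vect{P}')}$, using the last inclusion of Lemma \ref{LemmaDirectionsContainInterior}. Taking closures and combining with density gives $\vect{C}\cap \vect{C}' \subseteq \overline{\Q_{\geq 0}(\vect{P}\cap \vect{P}')}$, which is the missing inclusion. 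The main obstacle is the density step, since it simultaneously needs the matrix characterization of f.g. cones, the identification of the three vector spaces, and the dimension hypothesis to guarantee that $\vect{C}\cap \vect{C}'$ is full-dimensional in $\vect{V}$ so that its interior is non-empty and dense.
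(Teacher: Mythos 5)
Your proof is correct and follows essentially the same route as the paper's: both reduce to directions via Lemma \ref{LemmaDirectionsContainInterior} and part (2) of the proposition (Lemma \ref{LemmaPhysicalDirectionsIntersectCorrectly}), and both hinge on the full-dimensionality supplied by the hypothesis to justify a closure/density step for the polyhedral cones. The paper phrases that step as relaxing the strict inequalities in the definable description of \(\dir(\vect{P})\), \(\dir(\vect{P}')\) and their intersection, which is the same convex-combination argument you make with interior points of \(\overline{\Q_{\geq 0}\vect{P}}\cap\overline{\Q_{\geq 0}\vect{P}'}\).
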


\begin{proof}
Define \(\vect{C}:=\dir(\vect{P})\) and \(\vect{C}':=\dir(\vect{P}')\). Since \(\vect{P}\) and \(\vect{P}'\) are asymptotically definable, we have \(\vect{C} \setminus \{\vect{0}\}=\{\vect{x} \in \Vectorspace(\vect{P}) \mid A_1 \vect{x} >\vect{0}, A_2 \vect{x} \geq \vect{0}\}\) and similarly \(\vect{C}' \setminus \{\vect{0}\}=\{\vect{x} \in \Vectorspace(\vect{P}) \mid A_1' \vect{x} >\vect{0}, A_2' \vect{x} \geq \vect{0}\}\). By Lemma \ref{LemmaFromJerome}, we have \(\dim(\Vectorspace(\vect{P}))=\dim(\vect{C})=\dim(\vect{C}')=\dim(\vect{C} \cap \vect{C}')\). For equations defining a set of full dimension, taking the closure of the cone is equivalent to changing all equations to \(\geq 0\). We hence have \(\overline{\vect{C}}=\{\vect{x} \in \Vectorspace(\vect{P}) \mid A_1 \vect{x} \geq \vect{0}, A_2 \vect{x} \geq \vect{0}\}\), as well as \(\overline{\vect{C}'}=\{\vect{x} \in \Vectorspace(\vect{P}) \mid A_1' \vect{x} \geq \vect{0}, A_2' \vect{x} \geq \vect{0}\}\) and \(\overline{\vect{C} \cap \vect{C}'}=\{\vect{x} \in \Vectorspace(\vect{P}) \mid A_1 \vect{x} \geq \vect{0}, A_2 \vect{x} \geq \vect{0}, A_1' \vect{x} \geq \vect{0}, A_2' \vect{x} \geq \vect{0}\}\).
\end{proof}

\begin{lemma}
Let \(\vect{P}, \vect{P}'\) be periodic sets with 

\(\dim(\vect{P} \cap \vect{P}')=\dim(\vect{P})=\dim(\vect{P}')\). 

Then \((\vect{P} \cap \vect{P}')-(\vect{P} \cap \vect{P}')=(\vect{P}-\vect{P}) \cap (\vect{P}'-\vect{P}')\).\label{LemmaLatticesIntersectCorrectly}
\end{lemma}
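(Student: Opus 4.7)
The ``$\subseteq$'' inclusion is immediate: if $\vect{x}=\vect{q}_1-\vect{q}_2$ with $\vect{q}_1,\vect{q}_2\in\vect{P}\cap\vect{P}'$, then $\vect{x}$ lies both in $\vect{P}-\vect{P}$ and in $\vect{P}'-\vect{P}'$. So the whole content is the ``$\supseteq$'' direction. Given $\vect{x}\in(\vect{P}-\vect{P})\cap(\vect{P}'-\vect{P}')$, the plan is to fix two representations, $\vect{x}=\vect{p}_1-\vect{p}_2=\vect{p}_1'-\vect{p}_2'$ with $\vect{p}_1,\vect{p}_2\in\vect{P}$ and $\vect{p}_1',\vect{p}_2'\in\vect{P}'$, and then to construct a single pair $\vect{q}_1,\vect{q}_2\in\vect{P}\cap\vect{P}'$ realizing $\vect{x}=\vect{q}_1-\vect{q}_2$. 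The key tool is Lemma~\ref{LemmaSameDimensionPeriodicSets}, applied to the inclusion $\vect{P}\cap\vect{P}'\subseteq\vect{P}$ (of the same dimension by assumption) and symmetrically to $\vect{P}\cap\vect{P}'\subseteq\vect{P}'$: it lets us, for every $\vect{p}\in\vect{P}$, find $\vect{a}\in\vect{P}$ with $\vect{p}+\vect{a}\in\vect{P}\cap\vect{P}'$, and analogously for $\vect{P}'$.

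Concretely, I would pick $\vect{a}\in\vect{P}$ with $\vect{p}_2+\vect{a}\in\vect{P}\cap\vect{P}'$, and $\vect{b}\in\vect{P}'$ with $\vect{p}_2'+\vect{b}\in\vect{P}\cap\vect{P}'$, and then set
\[
\vect{q}_2\;:=\;(\vect{p}_2+\vect{a})\;+\;(\vect{p}_2'+\vect{b}),
\qquad
\vect{q}_1\;:=\;\vect{q}_2+\vect{x}.
\]
Since $\vect{P}\cap\vect{P}'$ is periodic, $\vect{q}_2\in\vect{P}\cap\vect{P}'$ is automatic. The interesting step is showing $\vect{q}_1\in\vect{P}\cap\vect{P}'$, and this is the main obstacle, because the natural first regrouping only witnesses membership in one of the two sets. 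It is resolved by rewriting $\vect{q}_1$ in two different ways, using both representations $\vect{x}+\vect{p}_2=\vect{p}_1$ and $\vect{x}+\vect{p}_2'=\vect{p}_1'$:
\[
\vect{q}_1\;=\;(\vect{p}_1+\vect{a})+(\vect{p}_2'+\vect{b})\;=\;(\vect{p}_2+\vect{a})+(\vect{p}_1'+\vect{b}).
\]
In the first grouping the left summand is in $\vect{P}$ and the right one lies in $\vect{P}\cap\vect{P}'\subseteq\vect{P}$, so $\vect{q}_1\in\vect{P}$; in the second grouping the left summand lies in $\vect{P}\cap\vect{P}'\subseteq\vect{P}'$ and the right summand in $\vect{P}'$, so $\vect{q}_1\in\vect{P}'$.

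Thus $\vect{q}_1,\vect{q}_2\in\vect{P}\cap\vect{P}'$ and $\vect{x}=\vect{q}_1-\vect{q}_2$, establishing the required inclusion. The subtle point worth emphasising in the writeup is the role of the dimension hypothesis: it is used exactly to invoke Lemma~\ref{LemmaSameDimensionPeriodicSets} on both inclusions $\vect{P}\cap\vect{P}'\subseteq\vect{P}$ and $\vect{P}\cap\vect{P}'\subseteq\vect{P}'$; without it one cannot even begin the construction, since there is no guarantee that $\vect{p}_2$ (or $\vect{p}_2'$) can be pushed into the intersection while staying inside $\vect{P}$ (resp.\ $\vect{P}'$).
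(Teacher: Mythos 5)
Your proof is correct, but it takes a different route from the paper's. The paper first reduces to the case \(\vect{d} \in \dir(\vect{P} \cap \vect{P}')\) via the observation that two lattices agreeing on a full-dimensional cone agree everywhere, then invokes Proposition~\ref{PropositionAlmostPeriodicityPeriodicSet} to get lines \(\vect{x}+\N\vect{d} \subseteq \vect{P}\) and \(\vect{x}'+\N\vect{d} \subseteq \vect{P}'\), and uses the addendum of Lemma~\ref{LemmaPhysicalDirectionsIntersectCorrectly} to merge them into a single line \(\vect{x}''+\N\vect{d} \subseteq \vect{P}\cap\vect{P}'\), finally writing \(\vect{d}=(\vect{x}''+\vect{d})-\vect{x}''\). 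You bypass the direction/cone machinery entirely: you work directly with the two difference representations \(\vect{x}=\vect{p}_1-\vect{p}_2=\vect{p}_1'-\vect{p}_2'\), use Lemma~\ref{LemmaSameDimensionPeriodicSets} twice (on \(\vect{P}\cap\vect{P}'\subseteq\vect{P}\) and \(\vect{P}\cap\vect{P}'\subseteq\vect{P}'\), both legitimate thanks to the dimension hypothesis) to push \(\vect{p}_2\) and \(\vect{p}_2'\) into the intersection, and then the double regrouping of \(\vect{q}_1\) cleanly certifies membership in \(\vect{P}\) and in \(\vect{P}'\) separately. In effect you use the same key auxiliary ingredient that the paper uses indirectly inside Lemma~\ref{LemmaPhysicalDirectionsIntersectCorrectly}, but your argument is self-contained, avoids the slightly informal ``coincide on a full-dimensional cone'' reduction, and never needs \(\vect{x}\) to be a direction of anything; the paper's route, on the other hand, reuses machinery it has already set up for part (2) of Proposition~\ref{PropositionIntersectionPeriodicSets}, so within the paper it costs essentially nothing extra.
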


\begin{proof}
``\(\subseteq\)'' is clear, hence let \(\vect{d} \in (\vect{P}-\vect{P}) \cap (\vect{P}'-\vect{P}')\). If lattices coincide on a cone of full dimension like \(\dir(\vect{P} \cap \vect{P}')=\dir(\vect{P}) \cap \dir(\vect{P}')\), then they coincide everywhere. We can therefore assume \(\vect{d} \in \dir(\vect{P} \cap \vect{P}')\).

By Proposition \ref{PropositionAlmostPeriodicityPeriodicSet}, we have \(\vect{x}+\N \cdot \vect{d} \subseteq \vect{P}\) and \(\vect{x}' + \N \cdot \vect{d} \subseteq \vect{P}'\) for some \(\vect{x},\vect{x}'\). By the extra remark in Lemma \ref{LemmaPhysicalDirectionsIntersectCorrectly}, there exists \(\vect{x}''\) such that \(\vect{x}''+\N \cdot \vect{d} \subseteq \vect{P} \cap \vect{P}'\). Hence
\(\vect{d}=(\vect{x}''+\vect{d})-(\vect{x}'') \in (\vect{P} \cap \vect{P}')-(\vect{P} \cap \vect{P}').\)
\end{proof}

\begin{corollary}
Let \(\vect{P}, \vect{P}'\) be smooth periodic sets with 

\(\dim(\vect{P} \cap \vect{P}')=\dim(\vect{P})=\dim(\vect{P}')\). 

Then \(\Fill(\vect{P} \cap \vect{P}')=\Fill(\vect{P}) \cap \Fill(\vect{P}')\).
\end{corollary}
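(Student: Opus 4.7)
The plan is to derive this corollary as an almost-immediate combination of Lemma \ref{LemmaConesIntersectCorrectly} and Lemma \ref{LemmaLatticesIntersectCorrectly}, together with the defining formula for \(\Fill\). Concretely, I would unfold Definition \ref{def:fill} on the left-hand side to write
\[\Fill(\vect{P} \cap \vect{P}') = \bigl[(\vect{P} \cap \vect{P}') - (\vect{P} \cap \vect{P}')\bigr] \cap \overline{\mathbb{Q}_{\geq 0}(\vect{P} \cap \vect{P}')}.\]

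Next, I would invoke the dimension hypothesis, which is exactly what both preceding lemmas require. Lemma \ref{LemmaLatticesIntersectCorrectly} rewrites the lattice factor as \((\vect{P}-\vect{P}) \cap (\vect{P}'-\vect{P}')\), and Lemma \ref{LemmaConesIntersectCorrectly} rewrites the closed-cone factor as \(\overline{\mathbb{Q}_{\geq 0}\vect{P}} \cap \overline{\mathbb{Q}_{\geq 0}\vect{P}'}\). Substituting both:
\[\Fill(\vect{P} \cap \vect{P}') = \bigl[(\vect{P}-\vect{P}) \cap (\vect{P}'-\vect{P}')\bigr] \cap \bigl[\overline{\mathbb{Q}_{\geq 0}\vect{P}} \cap \overline{\mathbb{Q}_{\geq 0}\vect{P}'}\bigr].\]

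Finally, reorder the four factors (intersection is commutative and associative) to regroup them as
\[\bigl[(\vect{P}-\vect{P}) \cap \overline{\mathbb{Q}_{\geq 0}\vect{P}}\bigr] \cap \bigl[(\vect{P}'-\vect{P}') \cap \overline{\mathbb{Q}_{\geq 0}\vect{P}'}\bigr] = \Fill(\vect{P}) \cap \Fill(\vect{P}'),\]
where the last equality is again just Definition \ref{def:fill}, applied to \(\vect{P}\) and \(\vect{P}'\) separately. This closes the argument.

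There is essentially no obstacle at this stage: all the work has already been done in proving the two lemmas above (which in turn relied on Lemma \ref{LemmaImproveDimensionCondition} to produce the right dimension hypotheses and on Proposition \ref{PropositionAlmostPeriodicityPeriodicSet} to pass between lattice directions and actual lines in \(\vect{P}\cap\vect{P}'\)). The only thing worth double-checking is that the dimension hypothesis of the corollary really is the hypothesis that both lemmas demand verbatim, so that no separate invocation of Lemma \ref{LemmaImproveDimensionCondition} is needed inside this proof. Since the hypothesis of the corollary is stated as \(\dim(\vect{P}\cap\vect{P}')=\dim(\vect{P})=\dim(\vect{P}')\), it matches exactly, so the proof really is a two-line substitution.
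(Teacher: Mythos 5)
Your proposal is correct and is exactly the paper's argument: the paper derives this corollary by combining Lemma \ref{LemmaConesIntersectCorrectly} (for the closed cones) and Lemma \ref{LemmaLatticesIntersectCorrectly} (for the lattices) with Definition \ref{def:fill}, just as you do. Your observation that the corollary's dimension hypothesis matches the lemmas' hypotheses verbatim is also accurate, so nothing further is needed.
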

\begin{proof}
Follows from Lemmas \ref{LemmaConesIntersectCorrectly} and \ref{LemmaLatticesIntersectCorrectly}.
\end{proof}

\subsubsection*{Proof of Proposition \ref{PropositionIntersectionPeriodicSets}(4)}

\begin{lemma}
Let \(\vect{P}, \vect{P}'\) be smooth periodic sets with

\(\dim(\vect{P} \cap \vect{P}')=\dim(\vect{P})=\dim(\vect{P}')\). 

Then \(\vect{P} \cap \vect{P}'\) is smooth. \label{LemmaIntersectionAgainSmooth}
\end{lemma}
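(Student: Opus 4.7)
The plan is to verify the two conditions in Definition \ref{def:smooth} separately for $\vect{P} \cap \vect{P}'$, using the already-established results on $\dir(\vect{P} \cap \vect{P}')$.

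First I would show asymptotic definability. By Lemma \ref{LemmaPhysicalDirectionsIntersectCorrectly} (which is Proposition \ref{PropositionIntersectionPeriodicSets}(2) proved above under our dimension hypothesis), we have $\dir(\vect{P} \cap \vect{P}') = \dir(\vect{P}) \cap \dir(\vect{P}')$. Since $\vect{P}$ and $\vect{P}'$ are smooth, both $\dir(\vect{P})$ and $\dir(\vect{P}')$ are definable cones, each described by a system of strict and non-strict linear inequalities on $\VectorSpace(\vect{P})$ and $\VectorSpace(\vect{P}')$ respectively. Their intersection is then defined by the conjunction of these finitely many inequalities (restricted to the common vector space, which under the dimension assumption equals $\VectorSpace(\vect{P}) = \VectorSpace(\vect{P}')$ up to a lattice issue handled by Lemma \ref{LemmaLatticesIntersectCorrectly}), so it is a definable cone.

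Next I would verify well-directedness by a standard diagonal/subsequence argument. Let $(\vect{p}_m)_{m \in \N}$ be any sequence in $\vect{P} \cap \vect{P}'$. Applying well-directedness of $\vect{P}$ to this sequence (regarded as a sequence in $\vect{P}$), extract an infinite subsequence $(\vect{p}_{m_k})_{k \in \N}$ with $\vect{p}_{m_k} - \vect{p}_{m_j} \in \dir(\vect{P})$ whenever $k \geq j$. Now apply well-directedness of $\vect{P}'$ to $(\vect{p}_{m_k})_{k \in \N}$, extracting a further infinite subsequence $(\vect{p}_{m_{k_l}})_{l \in \N}$ with $\vect{p}_{m_{k_l}} - \vect{p}_{m_{k_i}} \in \dir(\vect{P}')$ whenever $l \geq i$. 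Since this sub-subsequence inherits the property from the first extraction, its differences lie in $\dir(\vect{P}) \cap \dir(\vect{P}')$, which by Lemma \ref{LemmaPhysicalDirectionsIntersectCorrectly} equals $\dir(\vect{P} \cap \vect{P}')$. This is exactly what well-directedness of $\vect{P} \cap \vect{P}'$ requires.

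Combined, $\vect{P} \cap \vect{P}'$ is asymptotically definable and well-directed, hence smooth. The only potentially subtle point is that the definable-cone description must be expressed over the correct ambient vector space; but since $\dim(\vect{P} \cap \vect{P}') = \dim(\vect{P}) = \dim(\vect{P}')$ forces the three vector spaces $\VectorSpace(\vect{P} \cap \vect{P}')$, $\VectorSpace(\vect{P})$, $\VectorSpace(\vect{P}')$ to coincide (via Lemma \ref{LemmaFromJerome} and the inclusion chain), no mismatch occurs. The main obstacle is therefore essentially already resolved by the preceding lemmas of Proposition \ref{PropositionIntersectionPeriodicSets}; the remaining work is the clean diagonal extraction of subsequences.
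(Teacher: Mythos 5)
Your proposal is correct and follows essentially the same route as the paper: asymptotic definability via \(\dir(\vect{P} \cap \vect{P}')=\dir(\vect{P}) \cap \dir(\vect{P}')\) (Lemma \ref{LemmaPhysicalDirectionsIntersectCorrectly}), and well-directedness by extracting a subsequence for \(\vect{P}\) and then a further subsequence for \(\vect{P}'\). In fact your version is cleaner than the paper's write-up, which applies the double-extraction to a sequence written as \((\vect{p}_m+\vect{p}_m')_m\) (a slip more appropriate to the sum \(\vect{P}+\vect{P}'\)), whereas you correctly work with a single sequence inside \(\vect{P}\cap\vect{P}'\).
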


\begin{proof}
\(\vect{P} \cap \vect{P}'\) is asymptotically definable by Lemma \ref{LemmaPhysicalDirectionsIntersectCorrectly}. Let \((\vect{p}_m + \vect{p}_m')_m\) with \(\vect{p}_m \in \vect{P}\) and \(\vect{p}_m' \in \vect{P}'\) be a sequence. Since \(\vect{P}\) is well-directed, there exists an infinite set of indices \(N_1 \subseteq \N\) such that \(\vect{p}_m-\vect{p}_k \in \dir(\vect{P})\) for all \(m>k\) in \(N_1\). Since \(\vect{P}'\) is well-directed, there exists an infinite set \(N_2 \subseteq N_1\) such that moreover \(\vect{p}_m'-\vect{p}_k' \in \dir(\vect{P}')\) for all \(m>k\) in \(N_2\). Hence for all \(m>k\) in \(N_2\), we have \((\vect{p}_m+\vect{p}_m')-(\vect{p}_k+\vect{p}_k') \in \dir(\vect{P})+\dir(\vect{P}') \subseteq \dir(\vect{P}+\vect{P}')\), since \(\dir(\vect{P}+\vect{P}')\) is closed under addition.
\end{proof}

\section{Proofs of Section \ref{Sec:NewPeriodicityProperty}}

Before we can prove Theorem \ref{TheoremEquivalentAlmostHybridlinearCondition}, we need preliminary lemmas.

\begin{restatable}{lemma}{LemmaSumPreservesSmooth}
Let \(\vect{P},\vect{P}'\) be smooth periodic sets. Then \(\vect{P}+\vect{P}'\) is smooth with \(\dir(\vect{P}+\vect{P}')=\dir(\vect{P})+\dir(\vect{P}')\). \label{LemmaSumPreservesSmooth}
\end{restatable}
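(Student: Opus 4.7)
The plan is to break the lemma into three tasks: prove the direction formula $\dir(\vect{P}+\vect{P}')=\dir(\vect{P})+\dir(\vect{P}')$, then use it to deduce asymptotic definability of $\vect{P}+\vect{P}'$, and finally verify well-directedness. The well-directedness argument will closely mirror the subsequence extraction used in the proof of Lemma \ref{LemmaIntersectionAgainSmooth}, so the real work is on the direction formula.

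For the direction formula, the inclusion $\dir(\vect{P})+\dir(\vect{P}') \subseteq \dir(\vect{P}+\vect{P}')$ is a direct unfolding: given $\vect{d}\in\dir(\vect{P})$ and $\vect{d}'\in\dir(\vect{P}')$ witnessed by $\vect{x}+\N\cdot m\vect{d}\subseteq\vect{P}$ and $\vect{x}'+\N\cdot m'\vect{d}'\subseteq\vect{P}'$, one checks that $(\vect{x}+\vect{x}')+\N\cdot mm'(\vect{d}+\vect{d}')\subseteq\vect{P}+\vect{P}'$. The reverse inclusion is the main obstacle. I would start from $\vect{d}\in\dir(\vect{P}+\vect{P}')$ witnessed by some $\vect{y}+\N\cdot m\vect{d}\subseteq \vect{P}+\vect{P}'$, decompose each point on this line as $\vect{y}+km\vect{d}=\vect{p}_k+\vect{p}_k'$ with $\vect{p}_k\in\vect{P}$ and $\vect{p}_k'\in\vect{P}'$, then apply well-directedness of $\vect{P}$ to extract an infinite $K_1$ with $\vect{p}_i-\vect{p}_j\in\dir(\vect{P})$ for $i>j$ in $K_1$, and inside $K_1$ apply well-directedness of $\vect{P}'$ to extract an infinite $K_2\subseteq K_1$ with $\vect{p}_i'-\vect{p}_j'\in\dir(\vect{P}')$ for $i>j$ in $K_2$. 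Picking two consecutive $k_1<k_2$ in $K_2$ yields
\[(k_2-k_1)m\,\vect{d}=(\vect{p}_{k_2}-\vect{p}_{k_1})+(\vect{p}_{k_2}'-\vect{p}_{k_1}')\in\dir(\vect{P})+\dir(\vect{P}'),\]
and since $\dir(\vect{P})+\dir(\vect{P}')$ is a cone (sum of cones is a cone) it is closed under multiplication by the positive rational $1/((k_2-k_1)m)$, so $\vect{d}\in\dir(\vect{P})+\dir(\vect{P}')$.

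For asymptotic definability, $\dir(\vect{P})$ and $\dir(\vect{P}')$ are definable cones in $\FO(\Q,+,\geq)$ by hypothesis, and their Minkowski sum is definable by the quantifier $\{\vect{z}\mid\exists\vect{x}\vect{y}:\vect{z}=\vect{x}+\vect{y}\wedge\vect{x}\in\dir(\vect{P})\wedge\vect{y}\in\dir(\vect{P}')\}$, which simplifies to a quantifier-free formula by quantifier elimination in $\FO(\Q,+,\geq)$; combined with the direction formula this gives asymptotic definability of $\vect{P}+\vect{P}'$.

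Finally, for well-directedness, I would take any sequence $(\vect{q}_m)_m$ in $\vect{P}+\vect{P}'$, fix decompositions $\vect{q}_m=\vect{p}_m+\vect{p}_m'$, and perform exactly the same two-stage subsequence extraction as above. The resulting infinite subsequence satisfies $\vect{q}_i-\vect{q}_j\in\dir(\vect{P})+\dir(\vect{P}')=\dir(\vect{P}+\vect{P}')$ for $i>j$, establishing well-directedness. The main subtlety — and the reason the proof works at all — is that both parts share the same extraction trick; the hard step is the cancellation/division step in the reverse direction inclusion, which crucially relies on rational cones being closed under positive rational scaling.
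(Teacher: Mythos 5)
Your proposal is correct and follows essentially the same route as the paper's proof: the reverse inclusion of the direction formula via decomposing points on the line and a two-stage subsequence extraction using well-directedness of \(\vect{P}\) and then \(\vect{P}'\), with the same trick reused for well-directedness of \(\vect{P}+\vect{P}'\), and definability of the sum of the two definable direction cones. The only (trivial) omission is the remark that \(\vect{P}+\vect{P}'\) is itself a periodic set, which the paper checks in one line.
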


\begin{proof}
Let \(\vect{p}_1+\vect{p}_1', \vect{p}_2+\vect{p}_2' \in \vect{P}+\vect{P}'\). Then \((\vect{p}_1+\vect{p}_1')+(\vect{p}_2+\vect{p}_2')=(\vect{p}_1+\vect{p}_2)+(\vect{p}_1'+\vect{p}_2')\in \vect{P}+\vect{P}'\), i.e. \(\vect{P}+\vect{P}'\) is a periodic set. Next we show \(\dir(\vect{P})+\dir(\vect{P}') = \dir(\vect{P}+\vect{P}')\), where \(\subseteq\) is clear. Hence let \(\vect{d}\in \dir(\vect{P}+\vect{P}')\). Then there exists \(\vect{x}\in \vect{P}+\vect{P}'\) and \(\lambda \in \N_{>0}\) such that \(\vect{x}+\N \cdot \lambda \vect{d} \subseteq \vect{P}+\vect{P}'\). We write \(\vect{x}+m \lambda \vect{d}=\vect{p}_m+\vect{p}_m'\) with \(\vect{p}_m \in \vect{P}\) and \(\vect{p}_m' \in \vect{P}'\) to obtain the sequences \((\vect{p}_m)_m\) and \((\vect{p}_m')_m\). Since \(\vect{P}\) is well-directed, there exists an infinite set of indices \(N_1 \subseteq \N\) such that \(\vect{p}_m - \vect{p}_k \in \dir(\vect{P})\) for all \(m>k\) in \(N_1\). Now consider the sequence \((\vect{p}_m')_{m \in N_1}\). Since \(\vect{P}'\) is well-directed, there exists an infinite set of indices \(N_2 \subseteq N_1\) such that furthermore \(\vect{p}_m'-\vect{p}_k' \in \dir(\vect{P}')\) for all \(m>k\) in \(N_2\). Choose \(m>k \in N_2\). We obtain that 
\begin{align*}
(m-k) \lambda \vect{d} &=(\vect{p}_m+\vect{p}_m')-(\vect{p}_k+\vect{p}_k') \\
&=(\vect{p}_m-\vect{p}_k)+(\vect{p}_m'-\vect{p}_k') \in \dir(\vect{P})+\dir(\vect{P}').
\end{align*}

Hence also \(\vect{d}\in \dir(\vect{P})+\dir(\vect{P}')\). Proving that \(\vect{P}+\vect{P}'\) is well-directed similarly relies upon \(N_2 \subseteq N_1 \subseteq \N\).
\end{proof}

\begin{restatable}{lemma}{FullAfterFill}
Let \(\vect{P}\) be a smooth periodic set. Then \(\Fill(\Fill(\vect{P}))=\Fill(\vect{P})\). \label{LemmaFullAfterFill}
\end{restatable}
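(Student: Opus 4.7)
The key observation is that $\Fill(\vect{P})$ is itself a full periodic set whenever $\vect{P}$ is smooth, and a full periodic set is always fixed by the fill operation. So the entire proof is a two-step combination of Proposition~\ref{prop:Fillisfinitelygenerated} and the ``only if'' direction of Lemma~\ref{LemmaEquivalentDefinitionOfFull}.

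My plan is as follows. First I would verify, as a routine preliminary, that $\Fill(\vect{P})$ is a periodic set sitting inside $\N^n$, so that the fill operation may legitimately be applied to it a second time. Indeed, $\vect{0}$ belongs to both $\vect{P}-\vect{P}$ and $\overline{\mathbb{Q}_{\geq 0}\vect{P}}$, and both of these sets are closed under addition, so $\Fill(\vect{P})$ is a periodic set; it lies in $\N^n$ because $\vect{P}-\vect{P} \subseteq \mathbb{Z}^n$ and $\overline{\mathbb{Q}_{\geq 0}\vect{P}} \subseteq \mathbb{Q}_{\geq 0}^n$.

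Next, since $\vect{P}$ is smooth, Proposition~\ref{prop:Fillisfinitelygenerated} tells us that $\vect{Q} := \Fill(\vect{P})$ is full. By the ``only if'' direction of Lemma~\ref{LemmaEquivalentDefinitionOfFull}, every full periodic set satisfies $\vect{Q} = \Fill(\vect{Q})$. Substituting back $\vect{Q} = \Fill(\vect{P})$ yields $\Fill(\vect{P}) = \Fill(\Fill(\vect{P}))$, which is the desired equality.

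There is essentially no obstacle here: all of the substantive geometric content (the fact that smoothness turns $\overline{\mathbb{Q}_{\geq 0}\vect{P}}$ into a finitely generated cone via Lemma~\ref{LemmaDirectionsContainInterior}, and the idempotence of intersecting with a lattice and with a f.g.\ cone) has already been packaged into the two cited results. If one preferred a fully self-contained argument, the only thing to unfold is the elementary computation that for a full periodic set $\vect{Q} = \vect{C} \cap \vect{L}$, one has $\mathbb{Q}_{\geq 0}\vect{Q} = \vect{C} \cap \mathbb{Q}_{\geq 0}\vect{L}$ (already shown inside the proof of Lemma~\ref{LemmaEquivalentDefinitionOfFull}), so that $\overline{\mathbb{Q}_{\geq 0}\vect{Q}} = \vect{C}$ and $\vect{Q} - \vect{Q} \subseteq \vect{L}$, whence $(\vect{Q}-\vect{Q}) \cap \overline{\mathbb{Q}_{\geq 0}\vect{Q}} = \vect{C} \cap \vect{L} = \vect{Q}$.
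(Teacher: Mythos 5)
Your proof is correct and is essentially the paper's own argument: the paper likewise deduces from smoothness (via Proposition~\ref{prop:Fillisfinitelygenerated}) that \(\Fill(\vect{P})\) is full, and then applies the ``only if'' direction of Lemma~\ref{LemmaEquivalentDefinitionOfFull} to conclude \(\Fill(\Fill(\vect{P}))=\Fill(\vect{P})\). Your extra preliminary check that \(\Fill(\vect{P})\) is a periodic subset of \(\N^n\) is a harmless (and valid) addition.
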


\begin{proof}
Since \(\vect{P}\) is smooth, \(\Fill(\vect{P})\) is full. Hence, by Lemma \ref{LemmaEquivalentDefinitionOfFull}, \(\vect{P}':=\Fill(\vect{P})\) fulfills \(\Fill(\vect{P}')=\vect{P}'\), as required.
%
\end{proof}

\TheoremEquivalentAlmostHybridlinearCondition*

\begin{proof}
``\(\Rightarrow\)'': Write \(\vect{X}=\bigcup_{i=1}^r \vect{b}_i + \vect{P}_i\) where the \(\vect{P}_i\) are smooth periodic sets with \(\Fill(\vect{P}_i)=\Fill(\vect{P}_j)\). Define \(\vect{P}:= \bigcap_{i=1}^r \vect{P}_i\), which is smooth by Proposition \ref{PropositionIntersectionPeriodicSets}, and we have \[ \Fill(\vect{P})=\bigcap_{i=1}^r \Fill(\vect{P}_i)=\bigcap_{i=1}^r \Fill(\vect{P}_1)=\Fill(\vect{P}_1).\] Since \(\vect{P}_i+\vect{P} \subseteq \vect{P}_i\) for all \(i\), we have \(\vect{X}+\vect{P} \subseteq \vect{X}\). We also have \(\vect{X} \subseteq \vect{H}:=\{\vect{b}_1, \dots, \vect{b}_r\}+\Fill(\vect{P})\).

``\(\Leftarrow\)'': Let \(\vect{P}\) smooth such that \(\vect{X}+\vect{P} \subseteq \vect{X}\) and the weak hybridization is \(\vect{H}=\{\vect{b}_1, \dots, \vect{b}_r\}+\Fill(\vect{P})\). 
Since \(\vect{X}\) is a Petri set, the sets \(\vect{X} \cap [\vect{b}_i+\Fill(\vect{P})]\) are almost semilinear, and so \(\vect{X} \cap [\vect{b}_i + \Fill(\vect{P})] =\bigcup_{j=1}^{r_i} \vect{b}_{i,j} + \vect{P}_{i,j}\) for smooth periodic sets \(\vect{P}_{i,j}\). Since \(\vect{X}+\vect{P} \subseteq \vect{X} \), we have \(\vect{X}=\bigcup_{i=1}^r \bigcup_{j=1}^{r_i} \vect{b}_{i,j} + (\vect{P}_{i,j} +\vect{P})\). We prove that this is an almost hybridlinear representation of $\vect{X}$ (see Definition \ref{def:almosthybridlinear}). By Lemma \ref{LemmaSumPreservesSmooth}, all sets $(\vect{P}_{i,j} +\vect{P})$ are smooth. 
We prove that all their fills are equal to $\Fill(\vect{P})$. 

It suffices to show \(\Fill(\vect{P}_{i,j}+\vect{P})\subseteq \Fill(\vect{P})\), the other inclusion is trivial. 

By Lemma \ref{LemmaFullThenLessPeriods}, we have \(\vect{P}_{i,j} \subseteq \Fill(\vect{P})\). By Lemma \ref{LemmaFullAfterFill}, we then also have $\Fill(\vect{P}_{i,j}) \subseteq \Fill(\vect{P})$. Since $\Fill(\vect{P})$ is periodic,
$\Fill(\vect{P}_{i,j}) + \Fill(\vect{P}) \subseteq \Fill(\vect{P})$ and so, again by Lemma \ref{LemmaFullAfterFill}, $\Fill(\Fill(\vect{P}_{i,j})+\Fill(\vect{P})) \subseteq \Fill(\vect{P})$.
Since $\vect{P}_{i,j} + \vect{P} \subseteq \Fill(\vect{P}_{i,j})+\Fill(\vect{P})$, we are done.
\end{proof}

\section{Proofs of Section \ref{Sec:PetriSetsAndDimension}}

\subsection{Proof of Theorem \ref{TheoremReducibilityIsDecidable}}

The starting point for this section is Theorem \ref{TheoremEquivalentAlmostHybridlinearCondition}, which allows us to consider only the case of almost hybridlinear sets. We prove an equivalent condition of reducibility for this case. As running example for this section we consider \(\vect{X}:=\vect{X}_1 \cup \vect{X}_2:=\vect{P}_1 \cup \vect{P}_2\) for \(\vect{P}_1=\{(x,y) \in \N^2 \mid y \leq x^2\}\) and \(\vect{P}_2=\{(x,y) \in \N^2 \mid y \geq \log_2(x+1)+3\} \cup \{(0,0)\}\) (see the middle of Figure \ref{FigureIntuitionAlmostHybridlinear}). Together, these two almost linear components have the hybridization \(\N^2\) and even though neither of the two components is reducible, the union is. 

We will provide an equivalent definition of reducibility in terms of a concept from \cite{Leroux13} called complete extraction. This definition will be more suited for an algorithmic check. 

\begin{definition}
\label{def:CompleteExtraction}
Let \(\mathcal{K}=\{\vect{K}_1, \dots, \vect{K}_r\}\) be a finite set of cones. A complete extraction of \(\mathcal{K}\) is a set of finitely generated cones \(\{\vect{C}_1, \dots, \vect{C}_r\}\) such that \(\vect{C}_i \subseteq \vect{K}_i\) for all \(i\) and \(\bigcup_{i=1}^r \vect{C}_i =\bigcup_{i=1}^r \vect{K}_i\). 
\end{definition}

Intuitively,  we try to replace the non-finitely generated \(\vect{K}_i\) by smaller cones \(\vect{C}_i\) which are finitely generated, but whose union is still the same. This is of course only possible if the cones \(\vect{K}_i\) have an overlap. The following lemma, proved in \cite{Leroux13}, in some sense formalizes this intuition and will help us prove that the existence of a complete extraction is decidable.

\begin{lemma}{\cite[App. E]{Leroux13}}
A finite set \(\mathcal{K}=\{\vect{K}_1, \dots, \vect{K}_r\}\) of cones has a complete extraction if and only if for all vectors \(\vect{v}_1, \dots, \vect{v}_s\) such that \(\mathbb{Q}_{>0}\vect{v}_1 + \dots + \mathbb{Q}_{>0}\vect{v}_s \subseteq \vect{K}_i\) for some \(i\), there exists a \(j \in \{1, \dots, r\}\) such that \(\vect{K}_j \cap (\mathbb{Q}_{>0}\vect{v}_1 + \dots + \mathbb{Q}_{>0}\vect{v}_k) \neq \emptyset\) for all \(1 \leq k \leq s\).\label{CharOfCompleteExtraction}
 \end{lemma}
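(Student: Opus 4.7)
The proof is an equivalence, which I would handle in two directions.

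For $(\Rightarrow)$, fix a complete extraction $\{\vect{C}_1, \dots, \vect{C}_r\}$ and vectors $\vect{v}_1, \dots, \vect{v}_s$ with $\vect{W}_s := \mathbb{Q}_{>0}\vect{v}_1 + \dots + \mathbb{Q}_{>0}\vect{v}_s \subseteq \vect{K}_i$, writing $\vect{W}_k$ for the analogous sum of the first $k$ rays. The approximating point $\vect{x}_n^{(k)} := \vect{v}_1 + \dots + \vect{v}_k + \tfrac{1}{n}(\vect{v}_{k+1} + \dots + \vect{v}_s)$ lies in $\vect{W}_s \subseteq \bigcup_j \vect{C}_j$ and tends to the prefix sum $\vect{v}_1 + \dots + \vect{v}_k \in \vect{W}_k$ as $n \to \infty$. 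Each $n$ induces a coloring $k \mapsto j_n(k) \in \{1, \dots, r\}$ with $\vect{x}_n^{(k)} \in \vect{C}_{j_n(k)}$; pigeonhole on the $r^s$ possible colorings produces an infinite subsequence of constant coloring $k \mapsto j(k)$, and closedness of the finitely generated cones (Lemma \ref{LemmaFinitelyGeneratedCones}) yields $\vect{v}_1 + \dots + \vect{v}_k \in \vect{C}_{j(k)} \subseteq \vect{K}_{j(k)}$. To collapse the $s$ indices $j(k)$ into one uniform $j$, I would replace the separate families by a single continuous path $\vect{z}_\epsilon(t) := \sum_k (\epsilon + \max(0, \min(1, t-k+2)))\vect{v}_k$ in $\vect{W}_s$ (parametrized by small $\epsilon > 0$), whose limit as $\epsilon \to 0$ visits every prefix sum at integer $t$, and apply pigeonhole on the finite coloring $t \mapsto j_\epsilon(t)$ together with closedness to catch all prefix-sum limits inside a single $\vect{C}_{j_*}$.

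For $(\Leftarrow)$, I would argue by contrapositive and inductively build the violating chain. If no complete extraction exists, some $\vect{K}_i$ contains a point missed by any fixed choice of finitely generated $\vect{C}_i \subseteq \vect{K}_i$; take such a point as $\vect{v}_1$. Given $\vect{v}_1, \dots, \vect{v}_k$ inside $\vect{K}_i$ with $\vect{W}_k \subseteq \vect{K}_i$, the goal at step $k+1$ is to pick $\vect{v}_{k+1}$ keeping $\vect{W}_{k+1} \subseteq \vect{K}_i$ while removing at least one further index $j$ from the ``surviving'' set $\{j : \vect{K}_j \cap \vect{W}_{k'} \neq \emptyset \text{ for all } k' \leq k+1\}$. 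After at most $r$ extensions no $j$ survives, contradicting the combinatorial condition, so the original chain witnesses the failure of the condition.

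The main obstacle is this inductive step: simultaneously maintaining $\vect{W}_{k+1} \subseteq \vect{K}_i$ and strictly shrinking the surviving index set. This is where the structural content of Lemma \ref{LemmaFinitelyGeneratedCones} must be leveraged. Since each $\vect{K}_j$ is (essentially) the solution set of a system of linear inequalities, the next vector $\vect{v}_{k+1}$ should be locatable on a hyperplane that strictly separates $\vect{W}_{k+1}$ from at least one $\vect{K}_j$ while remaining inside $\vect{K}_i$; the failure of complete extraction guarantees such a direction exists, because otherwise each remaining $\vect{K}_j$ would admit a finitely generated sub-cone $\vect{C}_j$ large enough to assemble into a complete extraction. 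A careful diagonalization across the $r$ cones then forces the construction to terminate within $r$ steps, producing the desired violating $\vect{v}_1, \dots, \vect{v}_s$.
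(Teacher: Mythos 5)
Your proposal is not a proof of the paper's statement in the same sense the paper has one: the paper does not prove this lemma at all, it imports it from \cite[App.\ E]{Leroux13}, so your argument has to stand on its own — and in both directions it has a genuine gap. In the forward direction, the step you yourself flag (collapsing the indices \(j(k)\) into one) is not just unproven, it aims at a false intermediate claim: for a given complete extraction there need not be a single \(\vect{C}_{j_*}\) containing all prefix sums \(\vect{v}_1+\dots+\vect{v}_k\). Concretely, in \(\mathbb{Q}^2\) take \(\vect{K}_1=\{(x,y): x>0,\ y\ge 0\}\cup\{(0,0)\}\) and \(\vect{K}_2=\{(x,y): 0\le x\le 2y\}\); then \(\vect{C}_1=\{(x,y): x\ge 2y\ge 0\}\) and \(\vect{C}_2=\vect{K}_2\) form a complete extraction (their union is the closed first quadrant, which equals \(\vect{K}_1\cup\vect{K}_2\)). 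With \(\vect{v}_1=(1,0)\), \(\vect{v}_2=(0,1)\) we have \(\mathbb{Q}_{>0}\vect{v}_1+\mathbb{Q}_{>0}\vect{v}_2\subseteq\vect{K}_1\), yet \(\vect{v}_1\in\vect{C}_1\setminus\vect{C}_2\) and \(\vect{v}_1+\vect{v}_2\in\vect{C}_2\setminus\vect{C}_1\); moreover every point of your path near \(\vect{v}_1\) (of the form \((1+\epsilon,\epsilon')\), small \(\epsilon,\epsilon'>0\)) lies only in \(\vect{C}_1\) and every point near \(\vect{v}_1+\vect{v}_2\) lies only in \(\vect{C}_2\), so no pigeonholing over \(\epsilon\) or over the path parameter can place both limits in one cone. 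The lemma's conclusion still holds here (with \(j=1\): \((1,0)\in\vect{K}_1\) and \((m,1)\in\vect{K}_1\cap\vect{W}_2\)), which shows the correct witnesses in the prefix cones are not the prefix sums. A repair that keeps your pigeonhole idea is to change the test points: take \(\vect{q}_n:=n^{s-1}\vect{v}_1+n^{s-2}\vect{v}_2+\dots+\vect{v}_s\in\bigcup_j\vect{C}_j\), fix one \(\vect{C}_j\) containing infinitely many \(\vect{q}_n\), write \(\vect{C}_j\) by finitely many non-strict inequalities via Lemma \ref{LemmaFinitelyGeneratedCones}, note that each defining row \(b\) must then satisfy \((b\cdot\vect{v}_1,\dots,b\cdot\vect{v}_s)\ge_{\mathrm{lex}}0\), and conclude \(m^{k-1}\vect{v}_1+\dots+\vect{v}_k\in\vect{C}_j\cap(\mathbb{Q}_{>0}\vect{v}_1+\dots+\mathbb{Q}_{>0}\vect{v}_k)\) for every \(k\) and all large \(m\).

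The backward direction is only a plan, not a proof. Its opening claim — ``some \(\vect{K}_i\) contains a point missed by any fixed choice of finitely generated \(\vect{C}_i\subseteq\vect{K}_i\)'' — is not meaningful as stated: the missed point depends on the choice, and any single point \(\vect{x}\in\vect{K}_i\) already lies in the finitely generated cone \(\mathbb{Q}_{\ge 0}\vect{x}\subseteq\vect{K}_i\), so what must be constructed is a whole violating chain, not a point. The inductive step you identify as ``the main obstacle'' — picking \(\vect{v}_{k+1}\) that keeps \(\mathbb{Q}_{>0}\vect{v}_1+\dots+\mathbb{Q}_{>0}\vect{v}_{k+1}\subseteq\vect{K}_i\) while strictly shrinking the surviving index set, and the claim that this terminates within \(r\) steps — is exactly the content of this direction, and ``a careful diagonalization'' does not discharge it; nothing in the sketch explains why failure of the combinatorial condition for all chains would let one assemble finitely generated \(\vect{C}_j\subseteq\vect{K}_j\) covering \(\bigcup_j\vect{K}_j\). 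As it stands, neither direction is established; for a complete argument you would need to work out the construction along the lines of \cite[App.\ E]{Leroux13}, since this paper itself supplies no proof to lean on.
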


\begin{example}
The set \(\vect{X}=\vect{X}_1 \cup \vect{X}_2\) fulfils \(\dir(\vect{P}_1)=\{(x,y) \mid x>0\}\) and \(\dir(\vect{P}_2)=\{(x,y) \mid y>0\}\). The set \(\{\dir(\vect{P}_1), \dir(\vect{P}_2)\}\) has a complete extraction, for example \(\vect{C}_1:=\{(x,y)\mid x \geq y\}\) and \(\vect{C}_2:=\{(x,y) \mid y \geq x\}\). If \(\dir(\vect{P}_2)\) were only \(\{(x,y) \mid x=0\}\), then \(\vect{K}_1 \cup \vect{K}_2=\mathbb{Q}_{\geq 0}^2\) would still hold, but there would be no complete extraction. Intuitively, \(\vect{C}_1\) would have to contain the ``open border'' \(x>0\) in that case. Remember that finitely generated cones have to be closed however.

The complete extraction will translate to 

\((10,10)+(\vect{C}_1 \cap \N^2) \subseteq \vect{X}_1\) and \((10,10)+(\vect{C}_2 \cap \N^2) \subseteq X_2\), 

i.e. \(\{(x,y) \in \N^2 \mid x\geq 10, y \geq 10\} \subseteq \vect{X}\), proving that \(\vect{X}\) is reducible.
\end{example}

The equivalent definition of reducibility is as follows.

\begin{restatable}{theorem}{TheoremEquivalentReducibilityCondition}
Let \(\vect{X}\) be almost hybridlinear with hybridization \(\vect{c}+\Fill(\vect{P})\). Write \(\vect{X}=\bigcup_{i=1}^r \vect{b}_i+\vect{P}_i\), where the sets $\vect{P}_1, \ldots, \vect{P}_r$ satisfy $\Fill(\vect{P})=\Fill(\vect{P}_1) = \cdots = \Fill(\vect{P}_r)$. Define \(\vect{K}_i:=\dir(\vect{P}_i)\). Then \(\vect{X}\) is reducible if and only if the set of cones \(\mathcal{K}=\{\vect{K}_1, \dots, \vect{K}_r\}\) has a complete extraction. \label{TheoremEquivalentReducibilityCondition}
\end{restatable}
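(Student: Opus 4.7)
The plan is to prove both directions of the stated equivalence, using Lemma \ref{CharOfCompleteExtraction} for \((\Rightarrow)\) and an explicit construction of a witness \(\vect{x}\) for \((\Leftarrow)\). A preparatory observation used throughout is that all lattices coincide:
\(\vect{L} := \vect{P}_i - \vect{P}_i = \Fill(\vect{P}_i) - \Fill(\vect{P}_i) = \Fill(\vect{P}) - \Fill(\vect{P}) = \vect{P} - \vect{P}\).
This follows from the elementary identity \(\vect{R} - \vect{R} = \Fill(\vect{R}) - \Fill(\vect{R})\) together with the almost-hybridlinearity assumption that all \(\Fill(\vect{P}_i)\) agree with \(\Fill(\vect{P})\). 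In particular every \(\vect{v} \in \vect{K}_i = \dir(\vect{P}_i)\) has a positive integer multiple in \(\vect{L} \cap \overline{\mathbb{Q}_{\geq 0} \vect{P}} = \Fill(\vect{P})\).

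For \((\Rightarrow)\), assume \(\vect{x}\) witnesses reducibility: \(\vect{x} + \Fill(\vect{P}) \subseteq \vect{X}\). To verify the combinatorial criterion of Lemma \ref{CharOfCompleteExtraction}, fix \(\vect{v}_1, \ldots, \vect{v}_s\) with \(\mathbb{Q}_{>0}\vect{v}_1 + \cdots + \mathbb{Q}_{>0}\vect{v}_s \subseteq \vect{K}_i\); after rescaling each \(\vect{v}_t\) lies in \(\Fill(\vect{P})\). For every \(\vect{N} = (N_1, \ldots, N_s) \in \N^s\) the point \(\vect{x} + \sum_t N_t \vect{v}_t\) lies in \(\vect{X}\), hence in some \(\vect{b}_{j(\vect{N})} + \vect{P}_{j(\vect{N})}\), giving an \(r\)-coloring of \(\N^s\). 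A product-Ramsey argument, obtained by applying infinite Ramsey to increasing \(s\)-tuples of \(\N\) and partitioning the resulting monochromatic set into \(s\) consecutive blocks, yields infinite \(A_1 < A_2 < \cdots < A_s\) and a single color \(j\) such that \(j(\vect{N}) = j\) for all \(\vect{N} \in A_1 \times \cdots \times A_s\). For each \(k \leq s\) we enumerate the first \(k\) coordinates of \(\vect{N}\) strictly increasingly (freezing the tail) to obtain a sequence in \(\vect{b}_j + \vect{P}_j\); well-directedness of \(\vect{P}_j\) then supplies a consecutive difference that is a strictly positive combination of \(\vect{v}_1, \ldots, \vect{v}_k\) lying in \(\dir(\vect{P}_j) = \vect{K}_j\), exactly as required.

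For \((\Leftarrow)\), set \(\vect{Q}_i := \vect{C}_i \cap \vect{L}\); this is full (hence finitely generated) and contained in \(\dir(\vect{P}_i) \cap (\vect{P}_i - \vect{P}_i)\), so Proposition \ref{PropositionAlmostPeriodicityPeriodicSet} supplies \(\vect{z}_i \in \vect{P}_i\) with \(\vect{z}_i + \vect{Q}_i \subseteq \vect{P}_i\). The complete-extraction hypothesis together with \(\Fill(\vect{P}) = \vect{L} \cap \overline{\mathbb{Q}_{\geq 0}\vect{P}}\) gives \(\Fill(\vect{P}) = \bigcup_i \vect{Q}_i\). Because \(\vect{b}_i \in \vect{X} \subseteq \vect{c} + \Fill(\vect{P})\) and \(\vect{z}_i \in \vect{L}\), the shifts \(\vect{a}_i := \vect{b}_i + \vect{z}_i\) all lie in one coset modulo \(\vect{L} = \vect{P} - \vect{P}\); so for \(i \geq 2\) we can decompose \(\vect{a}_1 - \vect{a}_i = \vect{s}_i - \vect{t}_i\) with \(\vect{s}_i, \vect{t}_i \in \vect{P}\). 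Setting \(\vect{x} := \vect{a}_1 + \vect{t}_2 + \cdots + \vect{t}_r\), a direct substitution shows \(\vect{x} - \vect{a}_i = \vect{s}_i + \sum_{j \geq 2,\ j \neq i} \vect{t}_j \in \vect{P} \subseteq \vect{P}_i\) for \(i \geq 2\), while \(\vect{x} - \vect{a}_1 \in \vect{P}\) by definition. For any \(\vect{y} \in \Fill(\vect{P})\) pick \(i\) with \(\vect{y} \in \vect{Q}_i\); then
\(\vect{x} + \vect{y} = \vect{b}_i + (\vect{z}_i + \vect{y}) + (\vect{x} - \vect{a}_i) \in \vect{b}_i + \vect{P}_i + \vect{P}_i \subseteq \vect{X}\),
establishing reducibility.

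The main obstacle I anticipate is in \((\Rightarrow)\): obtaining a single color \(j\) that works simultaneously for every prefix length \(k \leq s\) cannot be done by a separate pigeonhole per \(k\) (which only yields \(k\)-dependent indices), and requires upgrading to the product-Ramsey statement above; the companion use of well-directedness to ensure that the \emph{consecutive} differences really fall in \(\vect{K}_j\) is the other delicate ingredient. The potentially awkward step in \((\Leftarrow)\), namely showing \(\bigcap_i (\vect{a}_i + \vect{P}_i) \neq \emptyset\), turns out to be painless once one exploits the coset identity and defines \(\vect{x}\) as the bundled sum \(\vect{a}_1 + \sum_{i \geq 2} \vect{t}_i\).
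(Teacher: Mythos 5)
Your ``\(\Leftarrow\)'' direction is essentially correct and takes a genuinely different (and somewhat leaner) route than the paper: you apply Proposition \ref{PropositionAlmostPeriodicityPeriodicSet} directly to the finitely many periodic-set generators of the full set \(\vect{Q}_i=\vect{C}_i\cap\vect{L}\), which bypasses the paper's detour through Lemma \ref{LemmaSameConeThenFactorLambda} and Proposition \ref{PropositionModuloIssuesRemoved}, and your explicit coset construction of \(\vect{x}\) replaces Proposition \ref{PropositionOnePointWithCombinedPeriodicityPower}. Two repairs are needed, though. First, \(\Fill(\vect{P})=\bigcup_i\vect{Q}_i\) is not an immediate consequence of the definition of complete extraction: you must argue, as the paper does, that \(\bigcup_i\vect{K}_i\supseteq\mathbb{Q}_{\geq 0}\Fill(\vect{P})\), using that each \(\vect{C}_i\) is finitely generated hence closed, that \(\interior(\mathbb{Q}_{\geq 0}\Fill(\vect{P}))\subseteq\dir(\vect{P}_i)\) by Lemma \ref{LemmaDirectionsContainInterior}, and that the closed set \(\bigcup_i\vect{C}_i=\bigcup_i\vect{K}_i\) therefore contains the closure of that interior, i.e.\ the whole cone. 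Second, the inclusion \(\vect{P}\subseteq\vect{P}_i\) you invoke at the end is false in general (nothing forces the hybridization period \(\vect{P}\) to be contained in any \(\vect{P}_i\)); this is harmless, since \(\vect{x}+\vect{y}\in\vect{b}_i+\vect{P}_i+\vect{P}\subseteq\vect{X}+\vect{P}\subseteq\vect{X}\) already gives the conclusion.

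The ``\(\Rightarrow\)'' direction, which the paper does not reprove but delegates to \cite[Lemmas F.5, F.6]{Leroux13}, contains a genuine gap. The ``product-Ramsey'' statement you rely on is false: an infinite subset of \(\N\) cannot be split into \(s\geq 2\) consecutive infinite blocks, and, more importantly, a finite coloring of \(\N^s\) need not admit infinite \(A_1,\dots,A_s\) with \(A_1\times\cdots\times A_s\) monochromatic (for \(s=2\), color \((m,n)\) according to whether \(m<n\); no infinite product is monochromatic). What Ramsey's theorem really gives is homogeneity on the \emph{increasing} \(s\)-tuples from an infinite set \(H\), and then your step ``enumerate the first \(k\) coordinates strictly increasingly, freezing the tail'' fails: in an increasing tuple the first \(k\) coordinates are bounded above by the frozen tail, so no infinite strictly increasing sequence with the required color is available. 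The argument can be salvaged by reversing the correspondence between coordinates and the \(\vect{v}_t\), so that the frozen coefficients (those of \(\vect{v}_{k+1},\dots,\vect{v}_s\)) are taken among the smallest elements of \(H\) and the coefficients of \(\vect{v}_1,\dots,\vect{v}_k\) run through successive disjoint blocks of \(H\) above them; then all tuples used are increasing tuples from \(H\), consecutive differences are strictly positive combinations of \(\vect{v}_1,\dots,\vect{v}_k\), and well-directedness of \(\vect{P}_j\) yields the element of \(\vect{K}_j\cap(\mathbb{Q}_{>0}\vect{v}_1+\cdots+\mathbb{Q}_{>0}\vect{v}_k)\) needed for Lemma \ref{CharOfCompleteExtraction}. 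As written, however, the key combinatorial claim is wrong, so this direction is not yet proved.
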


The theorem is proved in Section \ref{subsec:ProofofEquivalentReducibilityCondidion}.

Once the theorem is proved, the algorithm for checking if an almost hybridlinear set $\vect{X}$ is reducible first computes those cones, and afterwards searches for a complete extraction. For correctness, the algorithm also relies upon the following lemma:

\begin{lemma}
Let \(\vect{Q}\) be a full periodic set. Then there exists a smooth periodic set \(\vect{P}'\) such that \(\Fill(\vect{P}')=\vect{Q}\) and \(\dir(\vect{P}')=\interior(\Q_{\geq 0} \vect{Q})\). \label{LemmaCanRequireInteriorAsDirections}
\end{lemma}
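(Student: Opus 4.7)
The plan is to build $\vect{P}'$ as a ``thinned'' version of $\vect{Q}$ that imposes a slowly growing forbidden margin along every facet of $\overline{\Q_{\geq 0}\vect{Q}}$. Since $\vect{Q}$ is full, Lemma~\ref{LemmaEquivalentDefinitionOfFull} combined with Lemma~\ref{LemmaFinitelyGeneratedCones} supplies an integer matrix $A$ with rows $A_1,\dots,A_s$ such that $\overline{\Q_{\geq 0}\vect{Q}}=\{\vect{x}\in\Vectorspace(\vect{Q}) : A\vect{x}\geq 0\}$, and $\interior(\Q_{\geq 0}\vect{Q})=\{\vect{x}\in\Vectorspace(\vect{Q}) : A\vect{x}>0\}$. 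I would define
\[
\vect{P}' := \{\vect{0}\}\cup\bigSetBuilder{\vect{p}\in\vect{Q}}{A_i\vect{p}\geq \log_2(\|\vect{p}\|_1+1)\ \text{for all }i}.
\]
Periodicity of $\vect{P}'$ follows from the linearity of each $A_i$, together with subadditivity of $a\mapsto\log_2(a+1)$ on $\N$ and the triangle inequality for $\|\cdot\|_1$.

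Next I would establish $\dir(\vect{P}')=\interior(\Q_{\geq 0}\vect{Q})$. For $\subseteq$, Lemma~\ref{LemmaDirectionsContainInterior} already yields $A\vect{d}\geq 0$; and given a witness $\vect{x}+\N m\vect{d}\subseteq\vect{P}'$, if some $A_i\vect{d}$ were $0$ then the linear expression $A_i(\vect{x}+km\vect{d})=A_i\vect{x}$ would stay constant while the required lower bound $\log_2(\|\vect{x}+km\vect{d}\|_1+1)$ tends to infinity, a contradiction. For $\supseteq$, I would use fullness of $\vect{Q}$ to promote a rational interior direction $\vect{d}$ (lying in $\Vectorspace(\vect{Q})$) to a lattice point $m\vect{d}\in (\vect{Q}-\vect{Q})\cap\overline{\Q_{\geq 0}\vect{Q}}=\vect{Q}$; then the linear growth of $km A_i\vect{d}$ eventually dominates the logarithmic threshold, so a tail of the progression $k_0 m\vect{d},(k_0{+}1)m\vect{d},\dots$ sits inside $\vect{P}'$.

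Smoothness of $\vect{P}'$ is then quick. Asymptotic definability is immediate because $\interior(\Q_{\geq 0}\vect{Q})$ is cut out by strict linear inequalities. For well-directedness, an unbounded sequence $(\vect{p}_m)$ in $\vect{P}'$ forces $A_i\vect{p}_m\to\infty$ for every $i$, so iteratively extracting strictly increasing subsequences along each $A_i$ yields a subsequence whose consecutive differences $\vect{p}_m-\vect{p}_k$ all lie in $\{A\cdot>0\}=\dir(\vect{P}')$; a bounded sequence admits a constant subsequence. For $\Fill(\vect{P}')=\vect{Q}$, the inclusion $\subseteq$ comes from $\vect{P}'\subseteq\vect{Q}$ combined with fullness of $\vect{Q}$. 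For $\supseteq$, I would pick an interior lattice point $\vect{v}\in\vect{Q}\cap\interior(\Q_{\geq 0}\vect{Q})$, for example the sum of a finite generating set of $\vect{Q}$; large multiples $k\vect{v}$ lie in $\vect{P}'$, so $\overline{\Q_{\geq 0}\vect{P}'}=\overline{\Q_{\geq 0}\vect{Q}}$. For any $\vect{q}\in\vect{Q}$, $A_i(k\vect{v}+\vect{q})=kA_i\vect{v}+A_i\vect{q}$ dominates $\log_2(\|k\vect{v}+\vect{q}\|_1+1)$ for $k$ large (linear vs logarithmic), hence $k\vect{v}+\vect{q}\in\vect{P}'$ and $\vect{q}=(k\vect{v}+\vect{q})-k\vect{v}\in\vect{P}'-\vect{P}'$.

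The main delicacy is calibrating the margin function: it must be subadditive to preserve periodicity, unbounded along the boundary to suffocate every boundary arithmetic progression, yet weak enough globally so that $\Fill(\vect{P}')$ recovers the whole of $\vect{Q}$. Fullness of $\vect{Q}$ is used crucially at exactly the point where an interior rational direction must be realized by an actual lattice point of $\vect{Q}$.
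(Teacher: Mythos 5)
Your construction is correct, and it rests on the same underlying idea as the paper's proof -- thin \(\vect{Q}\) out near the boundary by a margin that grows so slowly that the fill is preserved while every boundary direction is destroyed -- but you implement it differently. The paper only treats the special case \(\vect{Q}=\N^n\), with the explicit set \(\{(x_1,\dots,x_n)\mid x_i\leq 2^{x_j}-1 \ \forall i\neq j\}\) (each coordinate at least \(\log_2\) of every other), and then appeals to ``an appropriate linear map'' to transfer the result to an arbitrary full \(\vect{Q}\), leaving that reduction sketchy. You instead work intrinsically with the facet functionals \(A_1,\dots,A_s\) of \(\overline{\Q_{\geq 0}\vect{Q}}\) and require \(A_i\vect{p}\geq\log_2(\|\vect{p}\|_1+1)\), which handles the general full case uniformly: the subadditivity check for periodicity, the ``linear beats logarithmic'' argument showing interior lattice directions survive, the constancy-versus-unbounded-threshold argument killing boundary directions, and the extraction of a subsequence with all \(A_i\vect{p}_m\) strictly increasing for well-directedness are all sound (the differences are integer points of \(\Vectorspace(\vect{Q})\) with \(A\vect{x}>0\), hence directions by your \(\supseteq\) step, and \(\vect{0}\) is trivially a direction). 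The one place where your wording is too quick is the cone half of \(\Fill(\vect{P}')\supseteq\vect{Q}\): from ``large multiples \(k\vect{v}\) lie in \(\vect{P}'\)'' alone one only gets a single ray of \(\overline{\Q_{\geq 0}\vect{P}'}\), not the equality \(\overline{\Q_{\geq 0}\vect{P}'}=\overline{\Q_{\geq 0}\vect{Q}}\). But this is a one-line repair rather than a gap: since you have already shown \(\dir(\vect{P}')=\interior(\Q_{\geq 0}\vect{Q})\), Lemma \ref{LemmaDirectionsContainInterior} gives \(\interior(\Q_{\geq 0}\vect{Q})\subseteq\overline{\Q_{\geq 0}\vect{P}'}\), and taking closures yields the equality (alternatively, apply your domination computation to \(N\vect{q}+\lceil\sqrt{N}\rceil\,\vect{v}\) and divide by \(N\) to get \(\vect{q}\in\overline{\Q_{\geq 0}\vect{P}'}\) directly).
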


\begin{proof}
We prove this in the special case of \(\vect{Q}=\N^n\), via an appropriate linear map we then obtain the result for all full periodic sets whose cone has \(\dim(\vect{Q})\) many generators, and can extend to the general case.

Hence let \(\vect{Q}=\N^n\). Define \(\vect{P}':=\{(x_1, \dots, x_n) \mid x_i \leq 2^{x_j}-1 \forall i \neq j\}\). First of all, let \(\vect{v}=(10, \dots, 10)\) be the vector with all components set to \(10\). Then \(\vect{v}+\vect{e}_i \in \vect{P}'\) for all unit vectors \(\vect{e}_i\), and hence \(\vect{e}_i \in \vect{P}'-\vect{P}'\). Considering the asymptotics, we have that \(\overline{\Q_{\geq 0}\vect{P}'}=\Q_{\geq 0}^n\). Hence \(\Fill(\vect{P}')=\N^n\). Since every vector in \(\partial(\Q_{\geq 0}^n)\) leaves some component unchanged, none of those can be directions of \(\vect{P}'\). Hence \(\dir(\vect{P}')=\interior(\Q_{\geq 0}^n)\) using Lemma \ref{LemmaDirectionsContainInterior}. \(\dir(\vect{P}')\) is definable by definition, and well-directed is similarly obvious, since any line parallel to the boundary can only contain finitely many points of \(\vect{P}'\), i.e. any infinite sequence \((\vect{p}_m)_m \subseteq \vect{P}'\) has to contain a subsequence with \(p_m - p_k \in \interior(\Q_{\geq 0}^n)=\dir(\vect{P}')\) for all \(m>k\). Hence we have found our choice of \(\vect{P}'\).
\end{proof}

\TheoremReducibilityIsDecidable*

\begin{proof}
Write \(\vect{X}:=\vect{R} \cap \vect{S}\). The algorithm and its proof are split into two parts: First obtain a representation \(\vect{X}=\bigcup_{i=1}^r \vect{b}_i + \vect{P}_i\) with \(\Fill(\vect{P}_i)=\Fill(\vect{P}_j)\) for all \(i,j\), and then check whether \(\{\dir(P_i) \mid 1 \leq i \leq r\}\) has a complete extraction.

By Proposition \ref{PropositionPropertiesOfHybridization}(5), we can decompose \(\vect{X}=\vect{X}_1 \cup \dots \cup \vect{X}_r\) via KLMST-decomposition. We only used that the full linear hybridizations can be computed, but in fact even more is true: In \cite{Hauschildt90}, Hauschildt shows that whether a given vector \(\vect{d}\) is an element of \(\dir(\vect{X}_i)\) can be decided. Though we will not explain this in detail, it can be upgraded to compute a representation of the cone \(\dir(\vect{X}_i)\). Instead we deal with the second problem: This representation \(\vect{X}=\vect{X}_1 \cup \dots \cup \vect{X}_r\) might not be an almost hybridlinear representation, i.e. the fills might differ. 

Write \(\vect{S}=\vect{c}+\vect{Q}\). The algorithmic solution to problem 2 is simple: Add \(\interior(\vect{Q})\) to all cones \(\dir(\vect{X}_i)\), i.e. consider \(\mathcal{K}=\{\vect{K}_i:=\dir(\vect{X}_i)+\interior(\vect{Q}) \mid 1 \leq i \leq r\}\).

Check for a complete extraction of \(\mathcal{K}\) using two semi-algorithms: One to check whether the set of cones \(\mathcal{K}\) does not fulfil the property of Lemma \ref{CharOfCompleteExtraction}, and one that searches for a complete extraction. Output the answer of the semi-algorithm which terminates.

\smallskip\noindent Termination: By Lemma \ref{CharOfCompleteExtraction}.

\smallskip\noindent Correctness: Let \(\vect{P}\) smooth such that \(\vect{X}+\vect{P} \subseteq \vect{X}\) and \(\Fill(\vect{P})=\vect{Q}\). Let \(\vect{P}'\) be the smooth periodic as in Lemma \ref{LemmaCanRequireInteriorAsDirections}. Then \(\vect{P}'':=\vect{P} \cap \vect{P}'\) is smooth with \(\Fill(\vect{P}'')=\vect{Q}\) and \(\dir(\vect{P}'')=\interior(\Q_{\geq 0}\vect{Q})\) by Proposition \ref{PropositionIntersectionPeriodicSets}. Furthermore, we have \(\vect{X}+\vect{P}'' \subseteq \vect{X}\). Inspecting the proof of Theorem \ref{TheoremEquivalentAlmostHybridlinearCondition}, \(\vect{X}\) has an almost hybridlinear representation with periodic sets \(\vect{P}_i + \vect{P}''\). By Lemma \ref{LemmaSumPreservesSmooth}, these fulfil \(\dir(\vect{P}_i + \vect{P}'')=\dir(\vect{P}_i)+\dir(\vect{P}'')\), i.e. there is an almost hybridlinear representation with the cones considered by the algorithm. Correctness then follows by Theorem \ref{TheoremEquivalentReducibilityCondition}.
\end{proof}

\subsection{Proof of Theorem \ref{TheoremEquivalentReducibilityCondition}} 
\label{subsec:ProofofEquivalentReducibilityCondidion}

We require two geometric properties of almost hybridlinear sets. We start with an auxiliary lemma.

\begin{lemma}
Let \(\vect{P}\) be a periodic set, and \(\vect{F} \subseteq \vect{P}-\vect{P}\) finite. Then there exists \(\vect{p} \in \vect{P}\) such that \(\vect{p}+\vect{F} \subseteq \vect{P}\). \label{LemmaFinitePumping}
\end{lemma}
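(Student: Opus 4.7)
The plan is a clean pumping argument based on the fact that a periodic set contains $\vect{0}$ and is closed under addition. The only information available about $\vect{F}$ is that each $\vect{f} \in \vect{F}$ can be written as a difference $\vect{p}_{\vect{f}} - \vect{q}_{\vect{f}}$ with $\vect{p}_{\vect{f}}, \vect{q}_{\vect{f}} \in \vect{P}$. The idea is to choose a single basepoint $\vect{p} \in \vect{P}$ that already ``pays'' for all the negative parts $\vect{q}_{\vect{f}}$ at once.

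Concretely, I would enumerate $\vect{F} = \{\vect{f}_1, \dots, \vect{f}_k\}$ and, using $\vect{F} \subseteq \vect{P} - \vect{P}$, fix for each $i$ a decomposition $\vect{f}_i = \vect{p}_i - \vect{q}_i$ with $\vect{p}_i, \vect{q}_i \in \vect{P}$. Then define $\vect{p} := \vect{q}_1 + \vect{q}_2 + \dots + \vect{q}_k$. Since $\vect{P}$ is periodic ($\vect{P}+\vect{P} \subseteq \vect{P}$ and $\vect{0} \in \vect{P}$, so $\vect{P}$ is closed under finite sums), we have $\vect{p} \in \vect{P}$.

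It remains to check that $\vect{p} + \vect{f}_i \in \vect{P}$ for every $i$. For fixed $i$, I compute
\[
\vect{p} + \vect{f}_i = \vect{q}_i + \vect{f}_i + \sum_{j \neq i} \vect{q}_j = \vect{p}_i + \sum_{j \neq i} \vect{q}_j,
\]
which is a finite sum of elements of $\vect{P}$ and hence lies in $\vect{P}$ by periodicity. This finishes the proof.

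There is no real obstacle here: the only subtlety is that a naive attempt ``for each $\vect{f}$ pick its own basepoint'' fails because the lemma demands a \emph{single} $\vect{p}$ that works simultaneously for all $\vect{f} \in \vect{F}$; this is resolved precisely by summing the $\vect{q}_i$'s so that the remaining $\vect{q}_j$ terms keep the result inside $\vect{P}$.
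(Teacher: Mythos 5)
Your proof is correct and is essentially identical to the paper's: both decompose each $\vect{f}_i \in \vect{F}$ as a difference of two elements of $\vect{P}$ and take $\vect{p}$ to be the sum of the subtrahends, so that adding any $\vect{f}_i$ lands back in $\vect{P}$ by closure under addition. No further comments needed.
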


\begin{proof}
Write \(\vect{F}=\{\vect{p}_1, \dots, \vect{p}_s\}\). Write \(\vect{p}_i=\vect{p}_{i,1}-\vect{p}_{i,2}\) with \(\vect{p}_{i,1}, \vect{p}_{i,2} \in \vect{P}\) for every \(i\). Define \(\vect{p}:=\sum_{i=1}^s \vect{p}_{i,2}\). Then \(\vect{p}+\vect{p}_i \in \vect{P}\) for all \(i\).
\end{proof}

The first property essentially allows us to ignore the lattice and only consider cones.

\begin{restatable}{proposition}{PropositionModuloIssuesRemoved}
Let \(\vect{X}\) be a set with weak hybridization \(\vect{B}+\Fill(\vect{P})\). Let \(\vect{Q} \subseteq \Fill(\vect{P})\) be a finitely generated periodic set. If \(\vect{x}+\lambda \vect{Q} \subseteq \vect{X}\) for some \(\vect{x} \in \vect{X}, \lambda \in \N_{>0}\), then \(\vect{x}' + \vect{Q} \subseteq \vect{X}\) for some \(\vect{x}' \in \vect{X}\). \label{PropositionModuloIssuesRemoved}
\end{restatable}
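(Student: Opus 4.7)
The plan is to reduce any element of $\vect{Q}$ modulo $\lambda$ in the coefficient system of a finite generating set of $\vect{Q}$, so that the hypothesis $\vect{x}+\lambda\vect{Q}\subseteq\vect{X}$ takes care of the ``bulk'' and a single carefully chosen element of $\vect{P}$ absorbs the finitely many possible remainders via $\vect{X}+\vect{P}\subseteq\vect{X}$.

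Concretely, since $\vect{Q}$ is finitely generated as a periodic set, fix generators $\vect{q}_1,\ldots,\vect{q}_s$ so that $\vect{Q}=\{\vect{q}_1,\ldots,\vect{q}_s\}^{\ast}$. Given any $\vect{q}=\sum_i n_i\vect{q}_i\in\vect{Q}$, perform integer division $n_i=\lambda m_i+r_i$ with $0\leq r_i<\lambda$ to obtain $\vect{q}=\lambda\vect{q}^{\ast}+\vect{r}$, where $\vect{q}^{\ast}:=\sum_i m_i\vect{q}_i\in\vect{Q}$ and $\vect{r}$ lies in the fixed finite ``remainder'' set $R:=\bigSetBuilder{\sum_{i=1}^s r_i\vect{q}_i}{0\leq r_i<\lambda}$.

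Because $\vect{Q}\subseteq\Fill(\vect{P})=(\vect{P}-\vect{P})\cap\overline{\Q_{\geq 0}\vect{P}}$, every generator $\vect{q}_i$ belongs to the lattice $\vect{P}-\vect{P}$. Since a lattice is closed under integer scalar multiplication and addition, the whole finite set $R$ is contained in $\vect{P}-\vect{P}$. Applying Lemma \ref{LemmaFinitePumping} to $\vect{F}:=R$ yields a point $\vect{p}\in\vect{P}$ such that $\vect{p}+R\subseteq\vect{P}$. Set $\vect{x}':=\vect{x}+\vect{p}$; then $\vect{x}'\in\vect{X}$ because $\vect{X}+\vect{P}\subseteq\vect{X}$, and for any $\vect{q}\in\vect{Q}$ decomposed as above we have $\vect{x}'+\vect{q}=(\vect{x}+\lambda\vect{q}^{\ast})+(\vect{p}+\vect{r})$, where the first bracket is in $\vect{X}$ by the hypothesis applied to $\vect{q}^{\ast}\in\vect{Q}$ and the second is in $\vect{P}$ by choice of $\vect{p}$; once more using $\vect{X}+\vect{P}\subseteq\vect{X}$ we conclude $\vect{x}'+\vect{q}\in\vect{X}$.

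The only subtle point is the passage from $\vect{Q}\subseteq\Fill(\vect{P})$ to $R\subseteq\vect{P}-\vect{P}$: one must exploit that the definition of $\Fill(\vect{P})$ is designed so that its elements lie in the difference group $\vect{P}-\vect{P}$ (not merely in $\Q_{\geq 0}\vect{P}$), and that this group is closed under integer multiples. With that observation in place, Lemma \ref{LemmaFinitePumping} makes the argument essentially mechanical; notably the overapproximation $\vect{B}+\Fill(\vect{P})$ of $\vect{X}$ is never actually needed in the proof, only the invariance $\vect{X}+\vect{P}\subseteq\vect{X}$ and the containment $\vect{Q}\subseteq\Fill(\vect{P})$.
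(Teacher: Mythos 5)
Your proof is correct and follows essentially the same route as the paper: decompose coefficients modulo $\lambda$, note the finite remainder set lies in $\vect{P}-\vect{P}$ (your $R$ is exactly the paper's $\vect{F}$), apply Lemma \ref{LemmaFinitePumping} to absorb the remainders, and shift by the resulting element of $\vect{P}$. Your closing observation that only $\vect{X}+\vect{P}\subseteq\vect{X}$ and $\vect{Q}\subseteq\Fill(\vect{P})$ are used is also accurate.
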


\begin{proof}
Write \(\vect{Q}=\{\vect{d}_1', \dots, \vect{d}_s'\}^{\ast}\) and assume that \(\vect{x}+\lambda \{\vect{d}_1', \dots, \vect{d}_s'\}^{\ast} \subseteq \vect{X}\). Define \(\vect{F}:=\{0, \dots, \lambda -1\}\vect{d}_1' + \dots + \{0, \dots, \lambda -1\}\vect{d}_s' \subseteq \Fill(\vect{P}) \subseteq \vect{P}-\vect{P}\). By Lemma \ref{LemmaFinitePumping} there exists \(\vect{d} \in \vect{P}\) such that \(\vect{d}+\vect{F} \subseteq \vect{P}\). Choose \(\vect{x}':=\vect{x}+\vect{d}\). Then we claim \(\vect{x}' + \vect{Q} \subseteq \vect{X}\). To see this, let \(\vect{y}=\vect{x}' + \lambda_1 \vect{d}_1' + \dots + \lambda_s \vect{d}_s' \in \vect{x}' + \vect{Q}\). Let \(\lambda_i':=\lambda_i \mod \lambda\), and observe that \(\vect{w}':=\lambda_1' \vect{d}_1' + \dots + \lambda_s' \vect{d}_s' \in \vect{F}\), and \(\vect{y}-\vect{w}' \in \vect{x}+\vect{d}+\lambda \vect{Q}\). In total we obtain 
\begin{align*}
\vect{y}&=\vect{w}'+(\vect{y}-\vect{w}')\in \vect{F}+\vect{x}+\vect{d}+\lambda \vect{Q}=(\vect{x}+\lambda \vect{Q})+(\vect{d}+\vect{F}) \subseteq \vect{X}+\vect{P} \subseteq \vect{X}.
\end{align*}

\end{proof}

Next the second property. Intuitively, we can move starting points of lines, planes, etc.\ together.

\begin{proposition}
Let \(\vect{X}\) be a set with hybridization \(\vect{c}+\Fill(\vect{P})\). Assume that there exist finitely many sets \(\vect{G}_1, \dots, \vect{G}_s \subseteq \Fill(\vect{P})\) and points \(\vect{x}_1, \dots, \vect{x}_s \in \vect{X}\) such that \(\vect{x}_i+\vect{G}_i \subseteq \vect{X}\) for all \(i\). Then there exists \(\vect{x}'\) such that \(\vect{x}'+(\bigcup_{i=1}^s \vect{G}_i) \subseteq \vect{X}\). \label{PropositionOnePointWithCombinedPeriodicityPower}
\end{proposition}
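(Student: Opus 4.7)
The plan is to exhibit a single point $\vect{x}'$ obtained by shifting $\vect{x}_1$ by an appropriately chosen element of $\vect{P}$, so that $\vect{x}'$ simultaneously sits ``above'' every $\vect{x}_i$ in the sense that $\vect{x}' - \vect{x}_i \in \vect{P}$. Once this is arranged, the desired inclusion follows from the defining property $\vect{X} + \vect{P} \subseteq \vect{X}$ of the hybridization.

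First I would note that since $\vect{X} \subseteq \vect{c} + \Fill(\vect{P})$, every difference $\vect{x}_1 - \vect{x}_i$ lies in $\Fill(\vect{P}) - \Fill(\vect{P})$. A short check shows $\Fill(\vect{P}) - \Fill(\vect{P}) = \vect{P} - \vect{P}$: the inclusion $\supseteq$ holds because $\vect{P} \subseteq \Fill(\vect{P})$ (both $\vect{P} \subseteq \vect{P}-\vect{P}$ and $\vect{P} \subseteq \overline{\Q_{\geq 0}\vect{P}}$), and the inclusion $\subseteq$ holds because $\Fill(\vect{P}) \subseteq \vect{P}-\vect{P}$ by Definition \ref{def:fill}. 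Hence the finite set $\vect{F} := \{\vect{x}_1 - \vect{x}_i : 1 \le i \le s\}$ is contained in $\vect{P} - \vect{P}$.

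Next I would apply Lemma \ref{LemmaFinitePumping} to this finite set $\vect{F}$, obtaining a single element $\vect{p} \in \vect{P}$ with $\vect{p} + (\vect{x}_1 - \vect{x}_i) \in \vect{P}$ for every $i \in \{1,\dots,s\}$. Define $\vect{x}' := \vect{x}_1 + \vect{p}$. Then for each $i$ we have
\[
\vect{x}' - \vect{x}_i \;=\; \vect{p} + (\vect{x}_1 - \vect{x}_i) \;\in\; \vect{P}.
\]
Consequently,
\[
\vect{x}' + \vect{G}_i \;=\; \vect{x}_i + (\vect{x}' - \vect{x}_i) + \vect{G}_i \;\subseteq\; (\vect{x}_i + \vect{G}_i) + \vect{P} \;\subseteq\; \vect{X} + \vect{P} \;\subseteq\; \vect{X},
\]
where the first inclusion uses $\vect{x}' - \vect{x}_i \in \vect{P}$, the second uses the hypothesis $\vect{x}_i + \vect{G}_i \subseteq \vect{X}$, and the last uses the periodicity property of the hybridization. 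Taking the union over $i$ yields $\vect{x}' + \bigcup_{i=1}^s \vect{G}_i \subseteq \vect{X}$, as required.

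There is no real obstacle here; the proof is a direct packaging of Lemma \ref{LemmaFinitePumping}. The only point requiring some care is the identity $\Fill(\vect{P}) - \Fill(\vect{P}) = \vect{P} - \vect{P}$, which is what lets us invoke the pumping lemma on vectors originally only known to lie in $\Fill(\vect{P})-\Fill(\vect{P})$. Note also that the assumption $\vect{G}_i \subseteq \Fill(\vect{P})$ is not actually used in the argument; what matters is only that the $\vect{x}_i$ lie in the hybridization $\vect{c} + \Fill(\vect{P})$.
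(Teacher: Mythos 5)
Your proof is correct and follows essentially the same route as the paper's: apply Lemma \ref{LemmaFinitePumping} to a finite set of differences lying in \(\vect{P}-\vect{P}\), add the resulting \(\vect{p}\in\vect{P}\) to an anchor point, and conclude via \(\vect{X}+\vect{P}\subseteq\vect{X}\). The only (immaterial) difference is that you anchor at \(\vect{x}_1\) and use the identity \(\Fill(\vect{P})-\Fill(\vect{P})=\vect{P}-\vect{P}\), whereas the paper anchors at \(\vect{c}\) and uses \(\vect{c}-\vect{x}_i\in-\Fill(\vect{P})\subseteq\vect{P}-\vect{P}\) directly.
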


\begin{proof}
Define \(\vect{F}:=\{\vect{c}-\vect{x}_1, \dots, \vect{c}-\vect{x}_s\} \subseteq -\Fill(\vect{P}) \subseteq \vect{P}-\vect{P}\). By Lemma \ref{LemmaFinitePumping}, there exists \(\vect{p}\in \vect{P}\) such that \(\vect{p}+\vect{F} \subseteq \vect{P}\). We define \(\vect{x}':=\vect{c}+\vect{p}\), and claim that \(\vect{x}'+(\bigcup_{i=1}^s \vect{G}_i) \subseteq \vect{X}\). To see this, let \(\vect{x}'+\vect{y} \in \vect{x}' + (\bigcup_{i=1}^s \vect{G}_i)\). We have \(\vect{y}\in \vect{G}_i\) for some \(i\), and obtain the required
\begin{align*}
\vect{x}'+\vect{y}&=(\vect{c}-\vect{x}_i+\vect{x}_i+\vect{p})+\vect{y}=(\vect{x}_i+\vect{y})+(\vect{p}+(\vect{c}-\vect{x}_i)) \in \vect{X} + \vect{P} \subseteq \vect{X}.
\end{align*}
\end{proof}

The main use case of Proposition \ref{PropositionModuloIssuesRemoved} is obtained from the following lemma.

\begin{lemma}
Let \(\vect{Q},\vect{Q}'\) be periodic sets with the same finitely generated cone \(\vect{C}=\mathbb{Q}_{\geq 0}\vect{Q}=\mathbb{Q}_{\geq 0}\vect{Q}'\). Then \(\lambda \cdot \vect{Q} \subseteq \vect{Q}'\) for some \(\lambda \in \N_{>0}\). \label{LemmaSameConeThenFactorLambda}
\end{lemma}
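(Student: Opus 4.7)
The plan is straightforward: exploit Lemma \ref{LemmaCharactizeFinitelyGeneratedPeriodicSets} to reduce to a generator-level argument, then clear denominators uniformly.

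First I would observe that since $\vect{C} = \mathbb{Q}_{\geq 0}\vect{Q} = \mathbb{Q}_{\geq 0}\vect{Q}'$ is a finitely generated cone, Lemma \ref{LemmaCharactizeFinitelyGeneratedPeriodicSets} implies both $\vect{Q}$ and $\vect{Q}'$ are finitely generated as periodic sets. So write $\vect{Q} = \{\vect{q}_1, \ldots, \vect{q}_r\}^{\ast}$ and $\vect{Q}' = \{\vect{q}_1', \ldots, \vect{q}_s'\}^{\ast}$. For each generator $\vect{q}_i$ of $\vect{Q}$, since $\vect{q}_i \in \vect{C} = \mathbb{Q}_{\geq 0}\vect{Q}'$, it can be expressed as a non-negative rational combination $\vect{q}_i = \sum_{j=1}^s \alpha_{ij} \vect{q}_j'$ with $\alpha_{ij} \in \mathbb{Q}_{\geq 0}$.

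Next I would clear the denominators. For each $i$, choose a positive integer $\lambda_i$ such that $\lambda_i \alpha_{ij} \in \mathbb{N}$ for every $j$ (take the least common multiple of the denominators). Then $\lambda_i \vect{q}_i = \sum_j (\lambda_i \alpha_{ij}) \vect{q}_j' \in \vect{Q}'$ since $\vect{Q}'$ is closed under sums. Setting $\lambda := \mathrm{lcm}(\lambda_1, \ldots, \lambda_r)$, periodicity of $\vect{Q}'$ yields $\lambda \vect{q}_i = (\lambda/\lambda_i) \cdot (\lambda_i \vect{q}_i) \in \vect{Q}'$ for every $i$ (each factor $\lambda/\lambda_i$ being a positive integer, this is just iterated addition).

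Finally I would promote this from generators to the whole set. Any element $\vect{x} \in \vect{Q}$ decomposes as $\vect{x} = \sum_i n_i \vect{q}_i$ with $n_i \in \mathbb{N}$, so $\lambda \vect{x} = \sum_i n_i (\lambda \vect{q}_i) \in \vect{Q}'$, again by periodicity of $\vect{Q}'$. This gives $\lambda \vect{Q} \subseteq \vect{Q}'$ as required. There is no real obstacle here: the only subtlety worth flagging explicitly is the initial appeal to Lemma \ref{LemmaCharactizeFinitelyGeneratedPeriodicSets} to guarantee that $\vect{Q}$ itself (not just its cone) admits a finite generating set, which is what licenses the generator-by-generator argument.
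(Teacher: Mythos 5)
Your proposal is correct and follows essentially the same route as the paper's proof: invoke Lemma \ref{LemmaCharactizeFinitelyGeneratedPeriodicSets} to get finite generation, scale each generator of \(\vect{Q}\) by an integer clearing denominators so it lands in \(\vect{Q}'\), and combine via a common multiple (the paper takes the product of the \(\lambda_i\) rather than the lcm, and uses \(\vect{q}_i \in \Q_{\geq 0}\vect{Q}'\) directly as a single rational scaling instead of a combination of \(\vect{Q}'\)-generators — immaterial differences).
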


\begin{proof}
By Lemma \ref{LemmaCharactizeFinitelyGeneratedPeriodicSets}, both \(\vect{Q}\) and \(\vect{Q}'\) are finitely generated. Write \(\vect{Q}=\{\vect{p}_1, \dots, \vect{p}_s\}^{\ast}\). Since \(\vect{p}_i \in \mathbb{Q}_{\geq 0}\vect{Q}'\), there exists \(\lambda_i \in \N_{>0}\) such that \(\lambda_i \cdot \vect{p}_i \in \vect{Q}'\). It follows that \(\lambda:=\prod_{i=1}^s \lambda_i\) fulfills \(\lambda \cdot \vect{Q} \subseteq \vect{Q}'\).
\end{proof}

Now we are finally ready to prove Theorem \ref{TheoremEquivalentReducibilityCondition}. 

\TheoremEquivalentReducibilityCondition*

\begin{proof}
``\(\Leftarrow\)'': Let \(\vect{C}_1, \dots, \vect{C_r}\) be a complete extraction of \(\mathcal{K}\). We first claim that \(\bigcup_{i=1}^r \vect{K}_i=\mathbb{Q}_{\geq 0}\Fill(\vect{P})\).

Proof of claim: \(\subseteq\) is clear, for the other direction first use Lemma \ref{LemmaDirectionsContainInterior} to obtain that \(\interior(\mathbb{Q}_{\geq 0}\Fill(\vect{P}))=\interior(\overline{\mathbb{Q}_{\geq 0}\vect{P}_i}) \subseteq \dir(\vect{P}_i)=\vect{K}_i\). Since every \(\vect{C}_i\) is finitely generated and hence closed, we have that \(\bigcup_{i=1}^r \vect{K}_i=\bigcup_{i=1}^r \vect{C}_i\) is closed. Therefore we obtain that \(\mathbb{Q}_{\geq 0}\Fill(\vect{P})=\overline{\interior(\mathbb{Q}_{\geq 0}\Fill(\vect{P}))} \subseteq \bigcup_{i=1}^r \vect{K}_i\) as claimed.

The idea for the rest is as follows: If \(\vect{F}_i=\{\vect{d}_1, \dots, \vect{d}_s\}\) is a set of directions of a periodic set \(\vect{P}_i\), then \(\vect{x}_i+\lambda_i \vect{F}_i^{\ast} \subseteq \vect{P}_i\) for some starting point \(\vect{x}_i\). We will remove the factor \(\lambda_i\), and then use use \(\vect{F}_i^{\ast}\) as \(\vect{G}_i\) as in Proposition \ref{PropositionOnePointWithCombinedPeriodicityPower} to finish the proof. 

Formally: For every \(i \in \{1, \dots, r\}\), we do the following. Let \(\vect{F}_i\) be a finite set of generators of \(\vect{C}_i\). Since \(\vect{F}_i \subseteq \VectorSpace(\vect{P_i})=\Q_{\geq 0}(\vect{P}-\vect{P})\), by replacing \(\vect{F}_i\) by multiples we can assume \(\vect{F}_i \subseteq \vect{P}_i-\vect{P}_i\). Then by Proposition \ref{PropositionAlmostPeriodicityPeriodicSet}, there exists \(\vect{x}_i'\) such that \(\vect{x}_i'+\vect{F}_i^{\ast} \subseteq \vect{P}_i\) and hence \((\vect{b}_i+\vect{x}_i') + \vect{F}_i^{\ast} \subseteq \vect{b}_i+\vect{P}_i \subseteq \vect{X}\). 

Observe that \(\vect{F}_i^{\ast}\) is a finitely generated periodic set with the same cone as \(\vect{C}_i \cap \Fill(\vect{P})\). Hence by Lemma \ref{LemmaSameConeThenFactorLambda} there exists \(\lambda_i\) such that \(\lambda_i \cdot (\vect{C}_i \cap \Fill(\vect{P})) \subseteq \vect{F}_i^{\ast}\). By Proposition \ref{PropositionModuloIssuesRemoved}, there exists \(\vect{x}_i\) such that \(\vect{x}_i+(\vect{C}_i \cap \Fill(\vect{P})) \subseteq \vect{X}\). By Proposition \ref{PropositionOnePointWithCombinedPeriodicityPower}, there exists \(\vect{x}\) such that \(\vect{x}+\bigcup_{i=1}^r (\vect{C}_i \cap \Fill(\vect{P})) \subseteq \vect{X}\). Since \(\bigcup_{i=1}^r (\vect{C}_i \cap \Fill(P))=\left (\bigcup_{i=1}^r \vect{C}_i \right) \cap \Fill(\vect{P})=\mathbb{Q}_{\geq 0}\Fill(\vect{P}) \cap \Fill(\vect{P})=\Fill(\vect{P})\), we have \(\vect{x}+\Fill(\vect{P})\subseteq \vect{X}\) and hence \(\vect{X}\) is reducible.

``\(\Rightarrow\)'': This direction follows from \cite[Lemma F.5, F.6]{Leroux13}. Their argument was slightly more involved because they did not assume that all \(\vect{P}_i\) define the same lattice \(\vect{P}'-\vect{P}'\), accordingly they state that all \(\mathcal{K}_{V,z}\) fulfill the property of Lemma \ref{CharOfCompleteExtraction}, i.e. have a complete extraction. In our case there is exactly one \(\mathcal{K}_{V,z}\) and that is \(\mathcal{K}\).
\end{proof}

\section{Proofs of Section \ref{SectionFinalPartitionAndCorollaries}}

\subsection{Proofs of Section \ref{subsec:SemilinearityIsDecidbale}}

\LemmaSemilinearThenReducible*

\begin{proof}
To complete the proof idea of Section \ref{subsec:SemilinearityIsDecidbale}, we need to argue that our intuitive reasoning of every limit being attained is correct, and that this actually implies reducibility. We start with the second part.

Since \(\vect{X}\) is semilinear, we have \(\vect{X}=\bigcup_{i=1}^r \vect{b}_i+\vect{P}_i\) with full periodic sets \(\vect{P}_i\). In particular the cones \(\mathbb{Q}_{\geq 0}\vect{P}_i\) are finitely generated. We cannot simply use Theorem \ref{TheoremEquivalentReducibilityCondition} immediately, since the semilinear representation will almost definitely not fulfill \(\Fill(\vect{P}_1)=\dots=\Fill(\vect{P}_r)\). Instead, as in the proof of Theorem \ref{TheoremEquivalentAlmostHybridlinearCondition}, we have \(\vect{X}=\bigcup_{i=1}^r \vect{b}_i+(\vect{P}_i+\vect{P})\), and the \(\vect{P}_i':=\vect{P}_i+\vect{P}\) fulfill \(\Fill(\vect{P}_i')=\Fill(\vect{P})\). By Theorem \ref{TheoremEquivalentReducibilityCondition} it hence suffices to show that \(\{\dir(\vect{P}_1'), \dots, \dir(\vect{P}_r')\}\) has a complete extraction. We do this by showing \(\bigcup_{i=1}^r \mathbb{Q}_{\geq 0}\vect{P}_i=\mathbb{Q}_{\geq 0}\Fill(\vect{P})\), at which point the cones \(\vect{C}_i:=\mathbb{Q}_{\geq 0}\vect{P}_i \subseteq \dir(\vect{P}_i')\) form a complete extraction of the required set. This claim about the cones basically corresponds to the intuition of ``every direction is attained''.

Claim 1: \(\Q_{\geq 0}\vect{P} \subseteq \bigcup_{i=1}^r \Q_{\geq 0}\vect{P}_i\). 

Proof of claim 1: Let \(\vect{p} \in \vect{P}\). Since \(\vect{X} \neq \emptyset\), there exists \(\vect{x} \in \vect{X}\). Since \(\vect{x} \in \vect{X}\) and \(\vect{X}+\vect{P} \subseteq \vect{X}\), the sequence \((\vect{x}+\lambda \vect{p})_{\lambda \in \N} \subseteq \vect{X}\). Hence all of these points are in some \(\vect{c}_i+\vect{P}_i\). By pigeonhole principle, some \(\vect{P}_i\) contains infinitely many, in particular some \(\lambda_1 \vect{p} \in \vect{P}_i \) and \(\lambda_2 \vect{p} \in \vect{P}_i\). Then \((\lambda_1-\lambda_2) \vect{p} \in \vect{P}_i - \vect{P}_i\), and furthermore \(\vect{p} \in \overline{\mathbb{Q}_{\geq 0}\vect{P}_i}\), since infinitely many elements from the sequence are contained in \(\vect{P}_i\). In total \((\lambda_1-\lambda_2)\vect{p} \in \Fill(\vect{P}_i)=\vect{P}_i\), since \(\vect{P}_i\) is full. Then \(\vect{p} \in \Q_{\geq 0}\vect{P}_i\) as claimed.

Claim 2: \(\overline{\mathbb{Q}_{\geq 0}\vect{P}}=\bigcup_{i=1}^r \mathbb{Q}_{\geq 0}\vect{P}_i\), which would finish the proof by observing \(\overline{\mathbb{Q}_{\geq 0}\vect{P}}=\mathbb{Q}_{\geq 0}\Fill(\vect{P})\). 

Proof of claim 2: ``\(\supseteq\)'' follows from Lemma \ref{LemmaFullThenLessPeriods}. Hence let \(d \in \overline{\mathbb{Q}_{\geq 0}\vect{P}}\). Then there exists a sequence \((\vect{d}_n)_{n\in \N} \subseteq \mathbb{Q}_{\geq 0}\vect{P}\) converging to \(\vect{d}\). By Claim 1, \(\mathbb{Q}_{\geq 0}\vect{P} \subseteq \bigcup_{i=1}^r \mathbb{Q}_{\geq 0}\vect{P}_i\), hence infinitely many of the \(\vect{d}_n\) are in the same \(\mathbb{Q}_{\geq 0}\vect{P}_i\), and we obtain \(\vect{d}\in \overline{\mathbb{Q}_{\geq 0}\vect{P}_i}\) for some \(i\). Since \(\vect{P}_i\) is finitely generated, \(\mathbb{Q}_{\geq 0}\vect{P}_i\) is closed, and hence \(\vect{d}\in \mathbb{Q}_{\geq 0}\vect{P}_i\).
\end{proof}

\subsection{Proofs of Section \ref{SubsectionInfiniteLineCorollary}}

We want to follow the intuition depicted in the Figure in Section \ref{SubsectionInfiniteLineCorollary}. The main difficulty is to define a ``broad enough'' cone \(\vect{C}\), then we simply use Proposition \ref{PropositionAlmostPeriodicityPeriodicSet} on the finite set \(\vect{F}\) generating the full periodic set \(\vect{C} \cap (\vect{P}-\vect{P})\) to obtain \(\vect{X}+\vect{v}+\vect{C} \subseteq \vect{X}\). Furthermore, in order to ensure that \(\vect{F}\) contains directions, we may only choose interior vectors for \(\vect{F}\). For example, if the hybridization is \(\N^n\), then \(\vect{F}\) will only contain vectors with every coordinate strictly positive. In general, for a cone given via hyperplanes, we need to define a distance from those hyperplanes.

\begin{lemma}
Let \(\vect{Q}\) be a full periodic set. Then there exists a finite set \(\vect{F} \subseteq \interior(\vect{Q})\) such that \(\vect{Q} \subseteq \partial(\vect{Q})+\vect{F}^{\ast}\). \label{LemmaFillOutTheBoundary}
\end{lemma}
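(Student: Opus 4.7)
The plan is to reduce the statement to Dickson's lemma applied to the defining inequalities of the cone. First I would fix, using fullness of $\vect{Q}$, a presentation $\vect{Q} = \vect{C} \cap \vect{L}$ with $\vect{C}$ a finitely generated cone and $\vect{L}$ a lattice in $V := \Vectorspace(\vect{Q})$, and via Lemma \ref{LemmaFinitelyGeneratedCones} pick an integer matrix $A$ with rows $A_1, \dots, A_r$ so that $\vect{C} = \{\vect{x} \in V \mid A\vect{x} \geq \vect{0}\}$. The central observation is that $\vect{Q} \subseteq \N^n$ forces $\vect{C} \subseteq V \cap \Q_{\geq 0}^n$, so the lineality space $\vect{C} \cap (-\vect{C})$ lies in $\Q_{\geq 0}^n \cap (-\Q_{\geq 0}^n) = \{\vect{0}\}$. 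Equivalently, the linear map $A \colon V \to \Q^r$ is injective, and in particular so is its restriction to $\vect{L}$.

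Next I would construct $\vect{F}$ via Dickson's lemma. The image $M := A(\vect{L})$ is a sublattice of $\Z^r$, and $M \cap \N^r_{>0}$ is nonempty because $\vect{C}$ is full-dimensional in $V$, so $\interior(\vect{C})$ contains some $\vect{y}$, which can be rescaled by a common denominator so that $\vect{y} \in \vect{L}$ while keeping all $A_k$-coordinates positive. Dickson's lemma then yields finitely many coordinatewise-minimal elements $\vect{y}_1, \dots, \vect{y}_s$ of $M \cap \N^r_{>0}$. Using injectivity of $A|_{\vect{L}}$, I would take the unique preimages $\vect{f}_i \in \vect{L}$ with $A\vect{f}_i = \vect{y}_i$; since $\vect{y}_i \in \N^r_{>0}$, each $\vect{f}_i$ lies in $\interior(\vect{C}) \cap \vect{L} = \interior(\vect{Q})$, and I set $\vect{F} := \{\vect{f}_1, \dots, \vect{f}_s\}$.

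I would then verify $\vect{Q} \subseteq \partial(\vect{Q}) + \vect{F}^*$ by induction on the nonnegative integer $\sum_{k=1}^r A_k \vect{v}$ for $\vect{v} \in \vect{Q}$. If some $A_k \vect{v} = 0$, in particular in the base case $\vect{v} = \vect{0}$, then $\vect{v} \in \partial(\vect{Q})$ and the decomposition is $\vect{v} + \vect{0}$. Otherwise $A\vect{v} \in M \cap \N^r_{>0}$, so by minimality there exists $i$ with $\vect{y}_i \leq A\vect{v}$ coordinatewise; then $\vect{v} - \vect{f}_i \in \vect{L}$ and $A(\vect{v} - \vect{f}_i) = A\vect{v} - \vect{y}_i \geq \vect{0}$, hence $\vect{v} - \vect{f}_i \in \vect{C} \cap \vect{L} = \vect{Q}$. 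Since each coordinate of $\vect{y}_i$ is a positive integer, $\sum_k A_k(\vect{v} - \vect{f}_i) < \sum_k A_k \vect{v}$, and the inductive hypothesis gives $\vect{v} - \vect{f}_i = \vect{w} + \vect{f}'$ with $\vect{w} \in \partial(\vect{Q})$ and $\vect{f}' \in \vect{F}^*$; this yields $\vect{v} = \vect{w} + (\vect{f}_i + \vect{f}') \in \partial(\vect{Q}) + \vect{F}^*$.

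The only conceptually non-routine step I anticipate is the pointedness/injectivity observation, which lets me transport the problem from $\vect{Q}$ to a clean question about a sublattice of $\N^r$; once that is in place Dickson's lemma does all the real work. A minor edge case is $V = \{\vect{0}\}$ (which happens if $\vect{C}$ has no facets), where $\vect{Q} = \{\vect{0}\}$ and the statement holds trivially with $\vect{F} = \emptyset$ under a mild convention on $\partial(\{\vect{0}\})$.
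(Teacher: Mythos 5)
Your proposal is correct, and its skeleton matches the paper's proof: pick a finite set of interior elements that are minimal for a suitable order, then induct on an integer measure of the distance to the boundary, peeling off one element of \(\vect{F}\) per step. The difference is how finiteness of \(\vect{F}\) is obtained. The paper invokes the fact (cited from Leroux) that a finitely generated periodic set is well-quasi-ordered by \(\vect{x}\leq_{\vect{Q}}\vect{y}\iff\vect{y}-\vect{x}\in\vect{Q}\), takes the \(\leq_{\vect{Q}}\)-minimal elements of \(\interior(\vect{Q})\), and inducts on \(\min_k A_k\vect{x}\). You instead note that \(\vect{Q}\subseteq\N^n\) forces the cone to be pointed, hence \(A\) is injective on \(\Vectorspace(\vect{Q})\), transport the problem along \(A\) to the sublattice \(A(\vect{L})\subseteq\Z^r\), and apply Dickson's lemma, inducting on \(\sum_k A_k\vect{v}\). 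Since \(\vect{v}-\vect{f}_i\in\vect{L}\) and \(A\) is injective on the span, your coordinatewise condition \(\vect{y}_i\leq A\vect{v}\) is equivalent to \(\vect{f}_i\leq_{\vect{Q}}\vect{v}\), so in effect you reprove the well-quasi-order fact for full periodic sets rather than citing it; this buys a more self-contained and elementary argument at the cost of the pointedness/injectivity observation, which is where \(\vect{Q}\subseteq\N^n\) is genuinely used. Two details worth pinning down, neither a gap: fix the canonical presentation \(\vect{C}=\Q_{\geq 0}\vect{Q}\) and \(\vect{L}=\vect{Q}-\vect{Q}\) (available by Lemma \ref{LemmaEquivalentDefinitionOfFull}), so that \(\interior(\vect{C})\cap\vect{L}=\interior(\vect{Q})\) and the case ``some \(A_k\vect{v}=0\)'' lands in \(\partial(\vect{Q})\) exactly as the paper defines it; and, as in the paper's own proof, take a non-redundant defining matrix \(A\) whose rows cut out the facets, so that \(\{\vect{x}\in\Vectorspace(\vect{Q})\mid A\vect{x}>0\}\) is the (nonempty) relative interior.
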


\begin{proof}
Setup: A relation \(\leq\) on \(\vect{X}\) is called a quasi-order if it is reflexive and transitive. A preorder is called well-quasi-order if every upward-closed subset of \(\vect{X}\) has a finite basis, i.e. finitely many minimal elements. By \cite[Lemma V.5]{Leroux13}, if a periodic set \(\vect{Q}\) is finitely generated, then \(\vect{Q}\) is well-preordered by \(\leq_{\vect{Q}}\) defined via \(\vect{x} \leq_{\vect{Q}} \vect{y} \iff \vect{y}-\vect{x} \in \vect{Q}\). Hence in particular the set \(\interior(\vect{Q})\) has a finite set of minimal elements \(\vect{F}\) w.r.t. \(\leq_{\vect{Q}}\).

Proof of lemma: Since \(\vect{C}:=\Q_{\geq 0} \vect{Q}\) is finitely generated, by Lemma \ref{LemmaFinitelyGeneratedCones} there exists an integer matrix \(A\) such that \(\vect{C}=\{\vect{x} \in \VectorSpace(Q) \mid A \vect{x} \geq \vect{0}\}\), and the faces are \(\vect{C}_i:=\{\vect{x} \in \vect{C} \mid A_i \vect{x} = 0\}\) for the row \(A_i\). 

Since \(\vect{Q}\subseteq \N^n\), we have \(A_i \vect{x} \in \N\) for all \(\vect{x}\in \vect{Q}\). To every point \(\vect{x}\in \vect{Q}\), we can hence assign \(\invariant(\vect{x}):=\min_i (A_i \vect{x}) \in \N\). This measures distance to the closest boundary. The proof is by induction on this distance.

If \(\invariant(\vect{x})=0\), then \(\vect{x}\in \vect{C}_i\) for some \(i\), and hence \(\vect{x}\in \partial(\vect{Q})\) as claimed.

Otherwise \(\vect{x}\in \interior(\vect{Q})\), and hence by definition of \(\leq_{\vect{Q}}\) and \(\vect{F}\), there exist \(\vect{f}\in \vect{F}\) and \(\vect{q}\in \vect{Q}\) such that \(\vect{x}=\vect{f}+\vect{q}\). Since \(\vect{f}\in \interior(\vect{Q})\), we have \(A_i \vect{f} >0\) for every \(i\). Hence \(\invariant(\vect{q})<\invariant(\vect{x})\). By induction, \(\vect{q}\in \partial(\vect{Q})+\vect{F}^{\ast}\), and hence \(\vect{x}=\vect{q}+\vect{f} \in \partial(\vect{Q})+\vect{F}^{\ast}\).
\end{proof}

\PropositionBoundaryImpliesCone*

\begin{proof}
Since \(\vect{X}\) contains almost the whole boundary of \(\vect{c}+\Fill(\vect{P})\), in particular for every facet \(\vect{G}_i\) of \(\Fill(\vect{P})\), there exists an \(\vect{x}_i\) such that \(\vect{x}_i+\vect{G}_i \subseteq \vect{X}\). By Proposition \ref{PropositionOnePointWithCombinedPeriodicityPower}, there exists \(\vect{x}'\) such that \(\vect{x}'+\partial(\Fill(\vect{P})) \subseteq \vect{X}\). By Lemma \ref{LemmaFillOutTheBoundary}, there exists \(\vect{F} \subseteq \interior(\Fill(\vect{P}))\) such that \(\Fill(\vect{P}) \subseteq \partial(\Fill(\vect{P}))+\vect{F}^{\ast}\). By Lemma \ref{LemmaDirectionsContainInterior}, we have \(\interior(\Fill(\vect{P})) \subseteq \dir(\vect{P})\). We furthermore have \(\vect{F} \subseteq (\vect{P}-\vect{P})\) by definition. Hence, by Proposition \ref{PropositionAlmostPeriodicityPeriodicSet}, there exists a \(\vect{d}\) such that \(\vect{d}+\vect{F}^{\ast} \subseteq \vect{P}\). We define \(\vect{x}:=\vect{x}'+\vect{d}\), and obtain
\begin{align*}
\vect{x}+\Fill(\vect{P})&\subseteq (\vect{x}'+\vect{d})+(\partial(\Fill(\vect{P}))+\vect{F}^{\ast}) \\
&=(\vect{x}'+\partial(\Fill(\vect{P})))+(\vect{d}+\vect{F}^{\ast})\subseteq \vect{X}+\vect{P} \subseteq \vect{X}.
\end{align*}
\end{proof}

\subsection{Separating a target Petri set}

We show that if a VAS reachability set does not intersect a target Petri set, then there exists a semilinear inductive invariant separating them. We start by proving that for two given Petri sets \(\vect{X}_1\) and \(\vect{X}_2\) and a semilinear set \(\vect{S}\), there is a common partition \(\vect{S}=\vect{S}_1 \cup \dots \cup \vect{S}_k\) which fulfills the conditions of Theorem \ref{TheoremFinalPartition} with respect to both \(\vect{X}_1\) and \(\vect{X}_2\).

\begin{corollary}
Let \(\vect{X}_1\), \(\vect{X}_2\) be Petri sets. For every semilinear set \(\vect{S}\) there exists a partition \(\vect{S}=\vect{S}_1 \cup \dots \cup \vect{S}_k\) into pairwise disjoint full linear sets such that for all \(i \in \{1, \dots, k\}\) and \(j\in \{1,2\}\) either \(\vect{X}_j \cap \vect{S}_i=\emptyset\), \(\vect{S}_i \subseteq \vect{X}_j\) or \(\vect{X}_j \cap \vect{S}_i\) is an irreducible almost hybridlinear set with hybridization $\vect{S}_i$. Further, if \(\vect{X}_1\) and \(\vect{X}_2\) are reachability sets of VASs \(\mathcal{V}_1\) and \(\mathcal{V}_2\), then the partition is computable. \label{CorollaryCommonPartition}
\end{corollary}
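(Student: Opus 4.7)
The plan is to proceed by induction on $\dim(\vect{S})$, alternately invoking Theorem \ref{TheoremFinalPartition} for the two Petri sets. The base case $\dim(\vect{S})=0$ is trivial, since $\vect{S}$ is then a finite set of points, each singleton is full linear, and membership in $\vect{X}_j$ is decidable in the VAS case by the reachability problem. For the inductive step, I will first apply Theorem \ref{TheoremFinalPartition} to $(\vect{X}_1, \vect{S})$ to obtain a partition $\vect{S} = \bigcup_i \vect{A}_i$ into pairwise disjoint full linear sets satisfying the trichotomy for $\vect{X}_1$, and then, for each $\vect{A}_i$, apply Theorem \ref{TheoremFinalPartition} with $\vect{X}_2$ to obtain a refinement $\vect{A}_i = \bigcup_j \vect{B}_{i,j}$ satisfying the trichotomy for $\vect{X}_2$.

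The remaining task is to ensure that each $\vect{B}_{i,j}$ still satisfies the trichotomy for $\vect{X}_1$. If $\vect{A}_i$ fell into the empty or contained case for $\vect{X}_1$, this is automatic, since $\vect{B}_{i,j} \subseteq \vect{A}_i$. If $\vect{X}_1 \cap \vect{A}_i$ was irreducible with hybridization $\vect{A}_i$, I split into two subcases. When $\dim(\vect{B}_{i,j}) < \dim(\vect{A}_i)$, the induction hypothesis applied to $(\vect{X}_1, \vect{X}_2, \vect{B}_{i,j})$ yields a further common partition. When $\dim(\vect{B}_{i,j}) = \dim(\vect{A}_i)$, Proposition \ref{PropositionPropertiesOfHybridization}(2) tells us that $\vect{X}_1 \cap \vect{B}_{i,j}$ is either empty or has \emph{weak} hybridization $\vect{B}_{i,j}$. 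In the empty case we are done; otherwise I promote the weak hybridization to a true one using the subroutine of Proposition \ref{PropositionFullLinearAlmostHybridlinearPartition}, which splits $\vect{B}_{i,j}$ into finitely many full linear subpieces of the same dimension plus a strictly lower-dimensional remainder. On each same-dimension subpiece I check reducibility of $\vect{X}_1 \cap \cdot$; if reducible, I extract an inner full linear set $\vect{x} + \vect{Q}' \subseteq \vect{X}_1$ exactly as in the proof of Theorem \ref{TheoremFinalPartition}, which reduces the leftover to strictly smaller dimension.

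The main obstacle, and the reason I use induction on dimension rather than a flat two-pass construction, is that every refinement introduced to repair the $\vect{X}_1$-trichotomy may destroy the $\vect{X}_2$-trichotomy on the resulting subpieces. I will handle this as follows: the lower-dimensional leftovers produced at any stage are all passed to the inductive hypothesis, while on each same-dimension extracted piece $\vect{x} + \vect{Q}' \subseteq \vect{X}_1$ I reapply Theorem \ref{TheoremFinalPartition} \emph{only} with $\vect{X}_2$ to recover its trichotomy there. Since $\vect{x} + \vect{Q}' \subseteq \vect{X}_1$, every subpiece produced in this final step lies trivially in the contained case for $\vect{X}_1$. Consequently every recursive step either strictly decreases dimension or lands in a configuration where one of the two trichotomies is immediate, and termination follows from the induction on $\dim(\vect{S})$.

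Computability in the VAS case is then straightforward to assemble: Theorem \ref{TheoremFinalPartition} is effective, the subroutine of Proposition \ref{PropositionFullLinearAlmostHybridlinearPartition} manipulates representations explicitly, reducibility of an almost hybridlinear intersection is decidable by Theorem \ref{TheoremReducibilityIsDecidable}, and the extraction of a witness $\vect{x}$ with $\vect{x}+\vect{Q}' \subseteq \vect{X}_j$ relies on the decidability of semilinear containment in a VAS reachability set via flatability, as used already in the proof of Theorem \ref{TheoremFinalPartition}. Hence the recursive procedure is algorithmic when $\vect{X}_1$ and $\vect{X}_2$ are reachability sets of VASs $\mathcal{V}_1$ and $\mathcal{V}_2$.
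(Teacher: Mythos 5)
Your proposal is correct and follows essentially the same route as the paper's proof: apply Theorem \ref{TheoremFinalPartition} first for \(\vect{X}_1\), refine each piece via Theorem \ref{TheoremFinalPartition} for \(\vect{X}_2\), and on full-dimensional subpieces where the \(\vect{X}_1\)-trichotomy is in doubt use Proposition \ref{PropositionPropertiesOfHybridization}(2) plus a reducibility check, extracting \(\vect{x}+\vect{Q}'\subseteq\vect{X}_1\) (re-partitioned for \(\vect{X}_2\) only) and recursing on the lower-dimensional remainder, with the same computability ingredients (Theorem \ref{TheoremReducibilityIsDecidable} and flatability). The only cosmetic difference is your ``promotion'' step via the subroutine of Proposition \ref{PropositionFullLinearAlmostHybridlinearPartition}, which is a no-op here since each \(\vect{B}_{i,j}\) is already full linear with a single base point, so it returns the piece unchanged --- exactly the observation the paper uses to get a true (non-weak) hybridization directly from Proposition \ref{PropositionPropertiesOfHybridization}(2).
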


\begin{proof}
The following procedure computes such a partition.

Step 1: Use Theorem \ref{TheoremFinalPartition} with \(\vect{X}=\vect{X}_1\) and \(\vect{S}=\vect{S}\) to compute a partition \(\vect{S}=\vect{S}_1 \cup \dots \cup \vect{S}_r\) fulfilling the properties for \(\vect{X}_1\). For every \(i\), we compute a subpartition of \(\vect{S}_i\) as follows.

Step 2: Use Theorem \ref{TheoremFinalPartition} with \(\vect{X}=\vect{X}_2\) and \(\vect{S}=\vect{S}_i\) to compute a partition \(\vect{S}_i=\bigcup_{j=1}^{k_i} \vect{S}_{i,j}\) fulfilling the properties for \(\vect{X}_2\). If \(\vect{X}_1 \cap \vect{S}_i=\emptyset\) or \(\vect{S}_i \subseteq \vect{X}_1\), then end step 2.

Otherwise \(\vect{X}_1 \cap \vect{S}_i\) is irreducible almost hybridlinear. For every \(j\) do the following:

Case 1: \(\dim(\vect{S}_{i,j})<\dim(\vect{S}_i)\): Perform a recursive call with \(\vect{X}_1, \vect{X}_2\) and \(\vect{S}=\vect{S}_{i,j}\) to obtain an appropriate partition of \(\vect{S}_{i,j}\).

Case 2: \(\dim(\vect{S}_{i,j})=\dim(\vect{S}_i)\): Decide whether \(\vect{X}_1 \cap \vect{S}_{i,j}\) is reducible, and whether \(\vect{X}_1 \cap \vect{S}_{i,j}=\emptyset\):

Case 2.1: Irreducible or empty: Then leave \(\vect{S}_{i,j}\) as is.

Case 2.2: \(\dim(\vect{S}_{i,j})=\dim(\vect{S}_i)\) and \(\vect{X}_1 \cap \vect{S}_{i,j}\) is reducible (possibly, but not necessarily entire $\vect{S}_{i,j}$): Then write \(\vect{S}=\vect{c}+\vect{Q}\) and find \(\vect{x}\) such that \(\vect{x}+\vect{Q} \subseteq \vect{X}_1 \cap \vect{S}_{i,j}\). Afterwards do a recursive call with \(\vect{X}_1=\vect{X}_1\), \(\vect{X}_2=\vect{X}_2\) and \(\vect{S}=\vect{S}_{i,j} \setminus (\vect{x}+\vect{Q})\), and use Theorem \ref{TheoremFinalPartition} with \(\vect{X}=\vect{X}_2\) and \(\vect{S}=\vect{x}+\vect{Q}\) to obtain a partition of \(\vect{x}+\vect{Q}\). Combine the two partitions.

Set the partition of \(\vect{S}_i\) to the union of the partitions of all the \(\vect{S}_{i,j}\).

Step 3: Now simply return the union of the subpartitions for all the \(\vect{S}_i\).

Termination: By Lemma \ref{LemmaRemovingInnerConeReducesDimension}, we only perform recursion on sets \(\vect{S}'\) with \(\dim(\vect{S}')<\dim(\vect{S})\). Hence recursion depth is at most \(\dim(\vect{S})\) and termination immediate.

Correctness: In case the ``Otherwise'' does not occur: For \(\vect{X}_2\) the properties follow from Theorem \ref{TheoremFinalPartition}. For \(\vect{X}_1\), we have that \(\vect{X}_1 \cap \vect{S}_{i,j}\) is still empty or \(\vect{S}_{i,j} \subseteq \vect{S}_i \subseteq \vect{X}_1\).

Case 1: Correct by induction/recursion.

Case 2: By Proposition \ref{PropositionPropertiesOfHybridization}(2), \(\vect{X}_1 \cap S_{i,j}\) is almost hybridlinear with hybridization \(\vect{S}_{i,j}\), hence reducibility is defined.

Case 2.1: By definition of the case, we have \(\vect{X}_1 \cap \vect{S}_{i,j}\) is either empty or irreducible almost hybridlinear with hybridization \(\vect{S}_{i,j}\). For \(\vect{X}_2\), correctness follows from Theorem \ref{TheoremFinalPartition}.

Case 2.2: For the partition parts \(\vect{S}_{i,j,k}\) of \(\vect{x}+\vect{Q}\), we have \(\vect{S}_{i,j,k} \subseteq \vect{X}_1\). For \(\vect{X}_2\), the properties hold by correctness of Theorem \ref{TheoremFinalPartition}. For the partition parts \(\vect{S}_{i,j,k}'\) of \(\vect{S}_{i,j} \setminus (\vect{x}+\vect{Q})\), we have correctness by induction.
\end{proof}

\CorollarySeparatingPairOfPetriSets*

\begin{proof}
Let \(\vect{S}=\N^n\). Let \(\vect{S}=\vect{S}_1 \cup \dots \cup \vect{S}_k\) be the partition of Corollary \ref{CorollaryCommonPartition}. Let \(I \subseteq \{1, \dots, k\}\) be the set of indices such that \(\vect{X}_1 \cap \vect{S}_i \neq \emptyset\). We claim that \(\vect{S}'=\bigcup_{i\in I} \vect{S}_i\) fulfills the result. \(\vect{X}_1 \subseteq \vect{S}'\) is obvious by construction. Hence let \(i \in I\). We are going to show that \(\vect{S}_i \cap \vect{X}_2=\emptyset\).

If \(i\) is an index with \(\vect{S}_i \subseteq \vect{X}_1\), then we clearly have \(\vect{X}_2 \cap \vect{S}_i=\emptyset\) since \(\vect{X}_1 \cap \vect{X}_2=\emptyset\).

Hence let \(i\) be an index such that \(\vect{X}_1 \cap \vect{S}_i\) is almost hybridlinear with hybridization \(\vect{S}_i\). We will prove \(\vect{X}_2 \cap \vect{S}_i=\emptyset\) by contradicting the other cases. 

If \(\vect{S}_i \subseteq \vect{X}_2\), then we have a contradiction to \(\vect{X}_1 \cap \vect{X}_2 =\emptyset\), since \(\vect{X}_1 \cap \vect{S}_i \neq \emptyset\) by definition of almost hybridlinear.

Now assume for contradiction that \(\vect{X}_2 \cap \vect{S}_i\) is almost hybridlinear with hybridization \(\vect{S}_i\). Let \(\vect{x}_1 \in \vect{X}_1 \cap \vect{S}_i\) and \(\vect{x}_2 \in \vect{X}_2 \cap \vect{S}_i\). Write \(\vect{S}_i=\vect{c}+\vect{Q}\). Then let \(\vect{P}_1, \vect{P}_2\) smooth such that \(\vect{X}_1+\vect{P}_1 \subseteq \vect{X}_1\), \(\vect{X}_2+\vect{P}_2 \subseteq \vect{X}_2\) and \(\Fill(\vect{P}_1)=\vect{Q}=\Fill(\vect{P}_2)\). By Proposition \ref{PropositionIntersectionPeriodicSets}, \(\vect{P}:=\vect{P}_1 \cap \vect{P}_2\) is smooth and \(\Fill(\vect{P})=\vect{Q}\). Similar to the proof of Proposition \ref{PropositionOnePointWithCombinedPeriodicityPower}, there exists \(\vect{x}'\) such that \(\vect{x}'-\vect{x}_i \in \vect{P}\) for both \(i\). Hence \(\vect{x}'=\vect{x}_i+(\vect{x}'-\vect{x}_i) \in \vect{X}_i + \vect{P}_i \subseteq \vect{X}_i\) for both \(i\), contradiction to \(\vect{X}_1 \cap \vect{X}_2=\emptyset\).

Therefore \(\vect{S}_i \cap \vect{X}_2=\emptyset\) is the only possibility left, and \(\vect{S} \cap \vect{X}_2\) is empty as claimed.
\end{proof}

\CorollarySeparatingPetriSet*

\begin{proof}
Let \(\vect{X}_1:=\Reach(\mathcal{V})\). By Theorem \ref{TheoremVASPetriSets}, \(\vect{X}_1\) is a Petri set. By Corollary \ref{CorollarySeparatingPetriSets}, there exists a semilinear set \(\vect{S}\) with \(\vect{X}_1 \subseteq \vect{S}\) and \(\vect{X}_2 \cap \vect{S}=\emptyset\). Let \(\vect{S}^C\) be the complement of \(\vect{S}\). By \cite{Leroux09}, there exists a semilinear inductive invariant \(\vect{S}'\) separating \(\vect{X}_1\) from \(\vect{S}^C\).
\end{proof}

\section{Example that an almost-linear partition does not exist in general} \label{SectionNoAlmostLinearPartition}

In this section we give an example that there exists a Petri set \(\vect{X}\), in fact we provide a VAS definable set \(\vect{X}\), such that there does not exist a partition \(\N^n=\vect{S}_1 \cup \dots \cup \vect{S}_r\) with the property that \(\vect{X} \cap \vect{S}_i\) is either empty or almost linear for all \(i\).

The example is 3-dimensional and defined as follows: Define \(\vect{P}_1:=\{(x,y,z) \in \N^3 \mid z \leq x \cdot y\}\), \(\vect{P}_2:=\{(x,y,z) \in \N^3 \mid y \leq 2^x, x \leq 2^y, x \leq 2^z\}\) and \(\vect{X}:=\vect{P}_1 \cup \vect{P}_2\). Since both \(\vect{P}_1\) and \(\vect{P}_2\) are smooth with \(\Fill(\vect{P}_i)=\N^3\), \(\vect{X}\) is almost-hybridlinear.

Let us first explain why we chose this example, followed by a proof sketch.

The reason for the example is that we have \(\dir(\vect{P}_1) \cup \dir(\vect{P}_2)=\Q_{\geq 0}^3\), i.e.\ \(\vect{X}\) contains a line in every direction, but \(\vect{X}\) is not reducible. That \(\vect{X}\) is not reducible follows from Theorem \ref{TheoremEquivalentReducibilityCondition}, or simply by observing that the points \((n, 2^{2^n}, 2^{2^{2^n}})\) are not in \(\vect{X}\). For the sets of directions, the important directions are boundary vectors, i.e. pumping vectors with at least one coordinate equal to \(0\). Regarding boundary vectors, \(\vect{P}_2\) is built to only allow \(z\) alone to be pumped, while \(\vect{P}_1\) allows any vector to be pumped which, if it pumps \(z\), also pumps either \(x\) or \(y\). I.e.\ the only impossible vector is \(z\) alone, which is pumpable in \(\vect{P}_2\).

In some sense, a set \(\vect{X}\) which is not reducible but does have every direction can be seen as a ``true'' irreducible almost hybridlinear set.

Proof sketch: Assume there exists a partition \(\N^3=\vect{S}_1 \cup \dots \cup \vect{S}_r\) such that \(\vect{X} \cap \vect{S}_i\) is almost linear for all \(i\). We will prove that this implies that \(\vect{X}\) is reducible and hence contradicts the above. The line of argument is as follows:

\begin{enumerate}
\item Prove that we can w.l.o.g. assume that \(\vect{S}_i\) is the fill of \(\vect{X} \cap \vect{S}_i\), where the fill of an almost linear set \(\vect{b}+ \vect{P}\) is defined as \(\vect{b}+\Fill(\vect{P})\).
\item Prove that \(\vect{X} \cap \vect{S}_i\) is reducible for all \(i\), where it is non-empty. This splits into two cases: \(\dim(\vect{S}_i)\leq 2\) and \(\dim(\vect{S}_i)=3\). In the first case we show that \(\vect{X} \cap \vect{S}_i\) is semilinear, and hence reducible by Lemma \ref{LemmaSemilinearThenReducible}, and in case 2 we will prove that \(\dir(\vect{X} \cap \vect{S}_i)\) is the whole cone of \(\vect{S}_i\), similar to how this holds for \(\vect{S}=\N^3\), the whole space, as mentioned above. As a side note, the fact that non-semilinearity of \(\vect{X}\) can only be shown here by considering three dimensional sections of \(\vect{X}\), i.e.\ that every intersection \(\vect{X} \cap \vect{S}\) for \(\dim(\vect{S}) \leq 2\) is semilinear, is interesting in and of itself.
\item Prove that if \(\vect{X} \cap \vect{S}_i\) is reducible for all \(i\) where it is non-empty, then \(\vect{X}\) is reducible.
\end{enumerate}

For geometric intuition, it is important to understand that everything we have to prove only has to do with directions, i.e.\ cones, and hence can be imagined one dimension lower, here in 2D. This allows us to give a visual explanation of the arguments in Figure \ref{FigureExampleAppendix}.

\begin{figure}[t]
\begin{minipage}{4.6cm}
\begin{tikzpicture}
\begin{axis}[
    xlabel={ },
    ylabel={ },
    xmin=0, xmax=4,
    ymin=0, ymax=4,
    xtick={0,1,2,3,4},
    ytick={0,1,2,3,4},
    ymajorgrids=true,
    xmajorgrids=true,
]

\addplot+[
    name path=AA,
    color=black,
    no marks,
    very thick,
    ]
    coordinates {
    (0.1,0.1)(2,3.4)
    };
    
\addplot+[
    name path=BA,
    color=black,
    no marks,
    very thick,
    ]
    coordinates {
    (0.1,0.1)(2,0.1)
    };
    
\addplot[
    color=blue!40,
]
fill between[of=AA and BA];

\addplot+[
    name path=AB,
    color=black,
    no marks,
    very thick,
    ]
    coordinates{
    (2,3.4)(3.9,0.1)
    };
    
\addplot+[
    name path=BB,
    color=black,
    no marks,
    very thick,
    ]
    coordinates {
    (2,0.1)(3.9,0.1)
    };
    
\addplot[
    color=blue!40,
    domain=1.5:4,
]
fill between[of=AB and BB];

\addplot[
    fill=red,
    fill opacity=1,
    only marks,
]
    coordinates{
    (0.1,0.1)
    };

\end{axis}
\end{tikzpicture}
\end{minipage}%
\begin{minipage}{4.6cm}
\begin{tikzpicture}
\begin{axis}[
    xlabel={ },
    ylabel={ },
    xmin=0, xmax=4,
    ymin=0, ymax=4,
    xtick={0,1,2,3,4},
    ytick={0,1,2,3,4},
    ymajorgrids=true,
    xmajorgrids=true,
]

\addplot+[
    name path=AA,
    color=black,
    no marks,
    very thick,
    ]
    coordinates {
    (0.1,0.1)(2,3.4)
    };
    
\addplot+[
    name path=BA,
    color=black,
    no marks,
    very thick,
    ]
    coordinates {
    (0.1,0.1)(2,0.1)
    };
    
\addplot[
    color=blue!40,
]
fill between[of=AA and BA];

\addplot+[
    name path=AB,
    color=black,
    no marks,
    very thick,
    ]
    coordinates{
    (2,3.4)(3.9,0.1)
    };
    
\addplot+[
    name path=BB,
    color=black,
    no marks,
    very thick,
    ]
    coordinates {
    (2,0.1)(3.9,0.1)
    };
    
\addplot[
    color=blue!40,
    domain=1.5:4,
]
fill between[of=AB and BB];

\addplot[
    color=black,
    no marks,
    very thick,
    ]
    coordinates {
    (1,0.1)(1,1.7)
    };
    
    \addplot[
    color=black,
    no marks,
    very thick,
    ]
    coordinates {
    (2,0.1)(3,1.7)
    };
    
    \addplot[
    color=black,
    no marks,
    very thick,
    ]
    coordinates {
    (1,0.1)(2,3.4)
    };
    
    \addplot[
    color=black,
    no marks,
    very thick,
    ]
    coordinates {
    (1,1.7)(1.6,2)
    };

\end{axis}
\end{tikzpicture}
\end{minipage}
\begin{minipage}{4.6cm}
\begin{tikzpicture}
\begin{axis}[
    xlabel={ },
    ylabel={ },
    xmin=0, xmax=4,
    ymin=0, ymax=4,
    xtick={0,1,2,3,4},
    ytick={0,1,2,3,4},
    ymajorgrids=true,
    xmajorgrids=true,
]

\addplot+[
    name path=AA,
    color=black,
    no marks,
    very thick,
    ]
    coordinates {
    (0.1,0.1)(2,3.4)
    };
    
\addplot+[
    name path=BA,
    color=black,
    no marks,
    very thick,
    ]
    coordinates {
    (0.1,0.1)(2,0.1)
    };
    
\addplot[
    color=blue!40,
]
fill between[of=AA and BA];

\addplot+[
    name path=AB,
    color=black,
    no marks,
    very thick,
    ]
    coordinates{
    (2,3.4)(3.9,0.1)
    };
    
\addplot+[
    name path=BB,
    color=black,
    no marks,
    very thick,
    ]
    coordinates {
    (2,0.1)(3.9,0.1)
    };
    
\addplot[
    color=blue!40,
    domain=1.5:4,
]
fill between[of=AB and BB];

\addplot[
    color=black,
    no marks,
    very thick,
    ]
    coordinates {
    (1,1.7)(3,1.7)
    };
    
    \addplot[
    color=black,
    no marks,
    very thick,
    ]
    coordinates {
    (0.6,1)(3.3,1)
    };
    
    \addplot[
    color=black,
    no marks,
    very thick,
    ]
    coordinates {
    (2,1)(2,1.7)
    };
    
    \addplot[
    color=black,
    no marks,
    very thick,
    ]
    coordinates {
    (2,0.1)(1.5,1)
    };

\end{axis}
\end{tikzpicture}
\end{minipage}

\caption{Cones in 3D are determined by a convex subset of the triangle with vertices at \((1,0,0)\), \((0,1,0)\) and \((0,0,1)\). Namely, they are the closure of such a convex set under scalar multiplication. In these plots we draw this triangle in the plane, for visual purposes. \newline
\textit{Left}: The convex set determining \(\dir(\vect{P}_1)\). Directions always contain the interior, hence only the boundary of the triangle is interesting. Only the point \((0,0,1)\), corresponding to pumping only \(z\), is missing. However, this fact suffices such that this convex set is no longer closed, and hence the corresponding cone is not finitely generated, where finitely generated here equivalently means that the convex subset determining this cone is a (closed) polygon. \label{FigureExampleAppendix} \newline
\textit{Middle + Right}: The cone of every full periodic set is finitely generated, i.e.\  its corresponding convex subset is a polygon. Hence every partition of \(\N^3\) into full linear sets \(\vect{S}_1, \dots, \vect{S}_r\) induces a covering of the triangle by polygons/''tiles'' \(T(\vect{S}_i)\), two possible examples depicted here. It is obviously enough to consider the 2D polygons, i.e.\ 3D \(\vect{S}_i\), to cover the triangle. Though this is hard to formalize, and hence we prefer to avoid such an argument in the formal proof. In every part of this tiling into 2D polygons, \(\vect{X}\) is by assumption almost linear, i.e.\ has some set of directions. We will prove these directions to equal the cone corresponding to \(T(\vect{S}_i)\) (Step 2). Hence the fact that the \(T(\vect{S}_i)\) are a complete tiling of the triangle then gives us the desired complete extraction. (Step 3)}
\end{figure}

\subsection{Importance of Complete Extraction vs. having every direction.}

This whole example and especially the proof are built around a fundamental understanding of why we need a complete extraction in Theorem \ref{TheoremEquivalentReducibilityCondition}, and having every direction is not enough. Let us elaborate a bit on this condition. By definition, \(\vect{v}_1\) and \(\vect{v}_2\) being directions in \(\dir(\vect{P}_i)\) means there exist points \(\vect{x}_1\) and \(\vect{x}_2\) such that \(\vect{x}_1 + \N \vect{v}_1 \subseteq \vect{P}_i\), and \(\vect{x}_2 + \N \vect{v}_2 \subseteq \vect{P}_i\). But in case they belong to the same \(\vect{P}_i\), then also \(\vect{x}+\N \vect{v}_1 + \N \vect{v}_2 \subseteq \vect{P}_i\) for some \(\vect{x}\): I.e.\ containing the two lines implies containing a plane between them. The same of course for any finite set \(\vect{F}\) of vectors \(\vect{v}_i\), and then containing \(\vect{x}+\vect{F}^{\ast}\) generated by \(\vect{F}\).

This is fundamentally wrong in case that \(\vect{v}_1\) and \(\vect{v}_2\) are directions in different \(\vect{P}_i\) and \(\vect{P}_j\), even if their union has a hybridization. Let us apply this idea to our example of this section itself, before afterwards generalizing to sections in a partition. Assume for contradiction that \(\vect{X}\) is almost linear. Then, since it has all three directions \((1,0,0)\), \((0,1,0)\) and \((0,0,1)\), it has to contain \(\vect{x}+\N^3\) for some point \(\vect{x}\) by the above argument, i.e.\ has to be reducible. Contradiction. In the formal proof, intuitively, we apply this same argument in every part \(\vect{S}_i\) of the partition, and then use Theorem \ref{TheoremEquivalentReducibilityCondition} to prove that if we contain a shifted version of every \(\vect{S}_i\), then we contain a shifted version of \(\N^3\).

\subsection{Formalizing the Steps}

Proof of step 1: Assume there exists a partition \(\N^3=\vect{S}_1 \cup \dots \cup \vect{S}_r\) into full linear \(\vect{S}_i=\vect{c}_i+\vect{Q}_i\) such that \(\vect{X} \cap \vect{S}_i\) is almost linear if it is non-empty. Hence for every \(i\), there exists a smooth \(\vect{P}_i\) such that \(\vect{b}_i + \vect{P}_i = \vect{X} \cap \vect{S}_i\). By Lemma \ref{LemmaFullThenLessPeriods}, we have \(\vect{P}_i \subseteq \vect{Q}_i\), and then in particular also \(\Fill(\vect{P}_i) \subseteq \Fill(\vect{Q}_i)=\vect{Q}_i\). Hence the full linear set \(\vect{S}_i':=\vect{b}_i + \Fill(\vect{P}_i) \subseteq \vect{S}_i\). Furthermore, \(\vect{X} \cap \vect{S}_i \subseteq \vect{S}_i'\), since \(\Fill(\vect{P}_i)\) overapproximates \(\vect{P}_i\). Hence \(\vect{X} \cap (\vect{S}_i \setminus \vect{S}_i')=\emptyset\). Replacing for every \(i\) the part \(\vect{S}_i\) of the partition by \(\vect{S}_i'\) and (a full linear partition of) \(\vect{S}_i \setminus \vect{S}_i'\), we obtain a partition where \(\vect{X} \cap \vect{S}_i'\) is almost linear with Fill \(\vect{S}_i'\) if it is non-empty.

Proof of step 2: Case 1: Let \(\vect{S}\) full linear with \(\dim(\vect{S}) \leq 2\), we have to prove that \(\vect{X} \cap \vect{S}\) is semilinear. If \(\dim(\vect{S}) \leq 1\), then this is automatic because all almost semilinear sets in dimension 1 are semilinear. 

Otherwise write \(\vect{S}=\vect{c}+\vect{Q}\). We make the simplifying assumption that \(\vect{Q}\) has \(2\) generators, written \((x,y,z)\) and \((x',y',z')\), and do a case distinction on them:

Case 1.1: Neither vector is collinear to \((0,0,1)\): Then for every large \((x_1,y_1,z_1) \in \vect{P}_1\), also \((x_1,y_1,z_1)+\N (x,y,z)+\N (x',y',z') \subseteq \vect{P}_1\). For example, if \((x',y',z')=(0,1,5)\), then every vector with \(x_1 \geq 5\) is big enough for pumping \((x',y',z')\). This is easily obtained from the defining inequality \(z \leq x \cdot y\), since for every increase of \(1\) in \(z\), we pump \(y\) by \(\frac{1}{5}\). Together with \(x_1 \geq 5\), this means the right hand side increases more than the left hand side.

Since both vectors being collinear would imply \(\dim(\vect{S}) \leq 1\), the only subcase left is 

Case 1.2: One vector is collinear to \((0,0,1)\), w.l.o.g. \((x,y,z)=(0,0,1)\): 

Case 1.2.1: If \((x',y',z')\) is an interior vector, i.e.\ \(x' \neq 0, y' \neq 0, z' \neq 0\), then already \(\vect{P}_2 \cap \vect{S}\) is reducible: Namely \(\vect{P}_2\) basically states that \(x,y\) are not more than exponentially different (which only becomes easier by pumping both of them) and \(z \geq \log(x)\), which follows from every pumping vector increasing \(z\).

Case 1.2.2: Otherwise \((x',y',z')\) has some zero component. If it is either \(x'\) or \(y'\), then since only one of them is getting pumped, \(\vect{P}_2 \cap \vect{S}\) projected to \(\N^2\), the first two components, is finite. Hence \(\vect{P}_2 \cap \vect{S}\) is at most 1-dimensional, and hence semilinear. On the other hand, \(\vect{P}_1 \cap \vect{S}\) is semilinear, since one of the two coordinates \(x,y\) is now a constant, and \(z \leq xy\) hence a semilinear condition.

Case 1.2.3: Hence \(z'=0\), and both \(x' \neq 0\) and \(y' \neq 0\). Since we either pump both or none of the first two coordinates, the conditions \(y \leq 2^x\) and \(x \leq 2^y\) of \(\vect{P}_2\) are automatically fulfilled for almost all points in \(\vect{S}\). Furthermore, almost all points will fulfill either the condition \(z \leq x \cdot y\) or \(x \leq 2^z \Leftrightarrow z \geq \log(x)\). Namely, this shape is similar to the ``above logarithm'' union ``below parabola'' set in the middle of Figure \ref{FigureIntuitionAlmostHybridlinear}. Hence \(\vect{X} \cap \vect{S}\) is reducible.

Case 2: \(\dim(\vect{S})=3\). Here we have to show that \(\dir(\vect{X} \cap \vect{S})=\Q_{\geq 0} \vect{Q}\). The main observation for this case is that for every bound \(B \in \N\) there exists a point \(\vect{v} \in \vect{P}_1 \cap \vect{P}_2 \cap \vect{S}\), such that every coordinate of \(\vect{v}\) is \(\geq B\). With this fact, the argument is mainly a repeat of the argument for \(\vect{X}\) itself, and we skip this case.

Proof of step 3: Write \(\vect{X} \cap \vect{S}_i=\vect{b}_i + \vect{P}_i\) for all \(i\) where it is non-empty, with w.l.o.g. \(\vect{S}_i=\vect{b}_i+\Fill(\vect{P}_i)\) by step 1. Since \(\vect{b}_i + \vect{P}_i\) is reducible by step 2, we have \(\dir(\vect{P}_i)=\Q_{\geq 0} \Fill(\vect{P}_i)\) by Proposition \ref{PropositionAlmostPeriodicityPeriodicSet}. In particular, the sets of directions are finitely generated cones. We have \(\bigcup_{i=1}^r \Q_{\geq 0} \Fill(\vect{P}_i)=\Q_{\geq 0}^3\), since \(\N^3=\vect{S}_1 \cup \dots \cup \vect{S}_r\). [This corresponds to the intuition that the \(\vect{S}_i\) give rise to a complete tiling.] We trivially have \(\vect{X}=\vect{X} \cap \N^3=\bigcup_{i=1}^r \vect{b}_i+\vect{P}_i\). 

We would want to simply use Theorem \ref{TheoremEquivalentReducibilityCondition}, since the \(\dir(\vect{P}_i)\) are all finitely generated and hence themselves form a complete extraction. However, this might not be an almost hybridlinear representation. Hence similar to the proof of Theorem \ref{TheoremEquivalentAlmostHybridlinearCondition}, let \(\vect{P}\) be smooth such that \(\vect{X}=\bigcup_{i=1}^r \vect{b}_i + (\vect{P}_i + \vect{P})\) is an almost hybridlinear representation. We have \(\dir(\vect{P}_i + \vect{P}) \subseteq \Q_{\geq 0}^3\) for all \(i\), since all periodic sets only contain non-negative vectors. Together with the above, we obtain \(\bigcup_{i=1}^r \dir(\vect{P}_i + \vect{P})=\Q_{\geq 0}^3=\bigcup_{i=1}^r \dir(\vect{P}_i)\). Remember that these later cones are finitely generated. Hence the new almost hybridlinear representation has a complete extraction, and \(\vect{X}\) is therefore reducible by Theorem \ref{TheoremEquivalentReducibilityCondition}. This finishes the proof by contradiction.

\end{document}